\numberwithin{equation}{section}
\declaretheorem[name=Theorem, numberwithin=section]{thm}
\declaretheorem[name=Proposition,sibling=thm]{prop}
\declaretheorem[name=Lemma,sibling=thm]{lem}
\declaretheorem[name=Remark,sibling=thm]{rem}
\theoremstyle{definition}
\declaretheorem[name=Definition,sibling=thm]{defi}
\newcommand{\nn}{\nonumber}
\newcommand{\anshatm}{\frac{\Delta^2}{\ubar^7}\sum_{|m|\leq2}Q_{m,\sfrak}Y_{m,\sfrak}^{-\sfrak}(\cos\theta)e^{im\phi_+}}
\newcommand{\ansatzp}{\frac{1}{\ubar^{7}}\sum_{|m|\leq2}A_m(r)Q_{m,\sfrak}Y_{m,\sfrak}^{+\sfrak}(\cos\theta)e^{im\phi_{+}}}
\newcommand{\teukm}{\teuk_{-2}}
\newcommand{\teukhat}{\widehat{\teuk}}
\newcommand{\intS}{\int_{S(u,\ubar)}}
\newcommand{\errhatm}{\mathrm{Err}[\Psihatm]
}
\newcommand{\drin}{\partial_{r_{\mathrm{in}}}}
\newcommand{\drout}{\partial_{r_{\mathrm{out}}}}
\newcommand{\Real}{\mathfrak{R}}
\newcommand{\mch}{\mathcal{H}}
\newcommand{\mcq}{\mathcal{Q}}
\newcommand{\mcu}{\mathcal{U}}
\newcommand{\parr}[1]{\frac{\partial #1}{\partial r}\!}
\newcommand{\parrdeux}[1]{\frac{\partial^2 #1}{\partial r^2}\!}
\newcommand{\parubar}[1]{\frac{\partial #1}{\partial \ubar}\!}
\newcommand{\parubarj}[1]{\frac{\partial^j #1}{\partial \ubar^j}\!}
\newcommand{\wbar}{\underline{w}}
\newcommand{\ubar}{{\underline{u}}}
\newcommand{\errp}{\mathrm{Err}[\Psip]}
\newcommand{\errm}{\mathrm{Err}[\Psim]}
\newcommand{\carterm}{\mcq_{-2}}
\newcommand{\nablabar}{\overline{\nabla}}
\newcommand{\R}{\mathbb R}
\newcommand{\dee}{\mathrm{d}}
\newcommand{\ch}{\mathcal{CH_+}}
\newcommand{\eh}{\mathcal{H_+}}
\newcommand{\ener}{\mathbf{e}}
\newcommand{\drond}{\mathring{\eth}}
\newcommand{\teuk}{\mathbf{T}}
\newcommand{\Psihat}{\widehat{\psi}}
\newcommand{\psihat}{\widehat{\psi}}
\newcommand{\ethat}{\widehat{e_3}}
\newcommand{\eqhat}{\widehat{e_4}}
\newcommand{\un}{\mathbf{I}}
\newcommand{\deux}{\mathbf{II}}
\newcommand{\trois}{\mathbf{III}}
\newcommand{\rb}{r_\mathfrak{b}}
\newcommand{\Psip}{\psi_{+2}}
\newcommand{\Psim}{\psi_{-2}}
\newcommand{\psim}{\psi_{-2}}
\newcommand{\Psihatp}{\Psihat_{+2}}
\newcommand{\psihatp}{\psihat_{+2}}
\newcommand{\Psihatm}{\Psihat_{-2}}
\newcommand{\psihatm}{\Psihat_{-2}}
\newcommand{\sfrak}{2}
\newcommand{\ansatzm}{\frac{1}{\ubar^7}\sum_{|m|\leq2}Q_{m,\sfrak}Y_{m,\sfrak}^{-\sfrak}(\cos\theta)e^{im\phi_{+}}}
\newcommand{\ansatzhatp}{\frac{\Delta^{-2}(u,\ubar)}{\ubar^{7}}\sum_{|m|\leq2}A_m(r_-)Q_{m,\sfrak}e^{2imr_{mod}}Y_{m,\sfrak}^{+\sfrak}(\cos\theta)e^{im\phi_{-}}}
\newcommand{\sph}{\mathbb S}
\title{Generic linearized curvature singularity at the perturbed\\Kerr Cauchy horizon}
\author{Sebastian Gurriaran\footnote{Laboratoire Jacques-Louis Lions, Sorbonne Universit\'e, 75005 Paris, France \par \indent\hspace{0.26cm} Email adress: sebastian.gurriaran@sorbonne-universite.fr}}
\date{March 31, 2025}
\begin{document}
\maketitle
\begin{abstract}
We prove the precise asymptotics of the spin $-2$ Teukolsky field in the interior and along the Cauchy horizon of a subextremal Kerr black hole. Together with the oscillatory blow-up asymptotics of the spin $+2$ Teukolsky field proven in our previous work \cite{G24}, our result suggests that generic perturbations of a Kerr black hole build up to form a coordinate-independent curvature singularity at the Cauchy horizon. This supports the Strong Cosmic Censorship conjecture in Kerr spacetimes. Unlike in the spin $+2$ case, the spin $-2$ Teukolsky field is regular on the Cauchy horizon and the first term in its asymptotic development vanishes. As a result, the derivation of a precise lower bound for the spin $-2$ field is more delicate than in the spin $+2$ case, and relies on a novel ODE method based on a decomposition of the Teukolsky operator between radial and time derivatives.

\end{abstract}
{
  \hypersetup{linkcolor=black}
  \tableofcontents

}
\section{Introduction}
\subsection{Kerr black holes and Strong Cosmic Censorship}
\subsubsection{The Kerr black hole interior}

The Kerr metric describes a spacetime around and inside a non-charged rotating black hole. Its expression in Boyer-Lindquist coordinates $(t,r,\theta,\phi)$ is 
$$\mathbf{g}_{a,M}=-\frac{(\Delta-a^2\sin^2\theta)}{\Sigma}\dee t^2-\frac{4aMr}{\Sigma}\sin^2\theta \dee t \dee \phi+\frac{\Sigma}{\Delta}\dee r^2+\Sigma\dee \theta^2+\frac{(r^2+a^2)^2-a^2\sin^2\theta\Delta}{\Sigma}\sin^2\theta \dee \phi^2\:,$$
where $M>0$ and $aM$ are respectively the mass and the angular momentum of the black hole, and where  $$\Delta:=r^2-2rM+a^2,\quad \Sigma:=r^2+a^2\cos^2\theta.$$
The main features of the Kerr metric is that it is a 2-parameter family of stationary, axisymmetric, and asymptotically flat solutions of the Einstein vacuum equation of general relativity that writes
\begin{align}\label{eq:eve}
    \mathbf{R}_{\mu\nu}[\mathbf{g}]=0,
\end{align}
where $\mathbf{R}_{\mu\nu}[\mathbf{g}]$ is the Ricci tensor of the Lorentzian metric $\mathbf{g}$. 

In the context of this work, we will only consider subextremal Kerr black holes with non-zero angular momentum, i.e. such that 
$0<|a|<M$. In this setting, $\Delta=(r-r_+)(r-r_-)$ has two distincts real roots 
$$r_\pm=M\pm\sqrt{M^2-a^2},\quad 0<r_-<r_+.$$
The sets $\{r=r_\pm\}$ are in Boyer-Lindquist coordinate degeneracies. Defining instead Eddington-Finkelstein coordinates
$$u=r^*+t,\quad\ubar=r^*-t,\quad\phi_{\pm}=\phi\pm r_{mod},$$
where $\dee r^*/\dee r=(r^2+a^2)/\Delta$, $\dee r_{mod}/\dee r=a/\Delta$, one can prove\footnote{See for example \cite[Chapt. 2]{oneill}.} that the hypersurfaces
$$\mch'_+:=\{\ubar=-\infty\},\quad\eh:=\{u=-\infty\},\quad\ch:=\{\ubar=+\infty\},\quad\mathcal{CH}'_+:=\{u=+\infty\}$$
can be properly attached to the Lorentzian manifold $$(\R_u\times\R_\ubar\times\mathbb{S}^2, \mathbf{g}_{a,M})\simeq((r_-,r_+)\times\R_t\times\mathbb{S}^2, \mathbf{g}_{a,M}).$$ We call the resulting manifold the Kerr black hole interior, that admits as boundaries the event horizon of the black hole
$$\eh\cup\mch'_+=\{u=-\infty\}\cup\{\ubar=+\infty\}=\{r=r_+\},$$
and the Cauchy horizon 
$$\ch\cup\mathcal{CH}'_+=\{u=+\infty\}\cup\{\ubar=+\infty\}=\{r=r_-\}.$$
\subsubsection{Strong Cosmic Censorship conjecture}\label{section:scccccc}
The Kerr spacetime is globally hyperbolic only up to the Cauchy horizon. Indeed, it admits infinitely many extensions across the Cauchy horizon as a smooth solution of the Einstein vacuum equations \eqref{eq:eve}, see Figure \ref{fig:penrosediagram}. This gives rise to a failure of determinism as in that case, the data on any complete spacelike hypersurface inside an exact Kerr black hole does not determine its future development.

\begin{figure}[h!]
    \centering
    \includegraphics[scale=0.43]{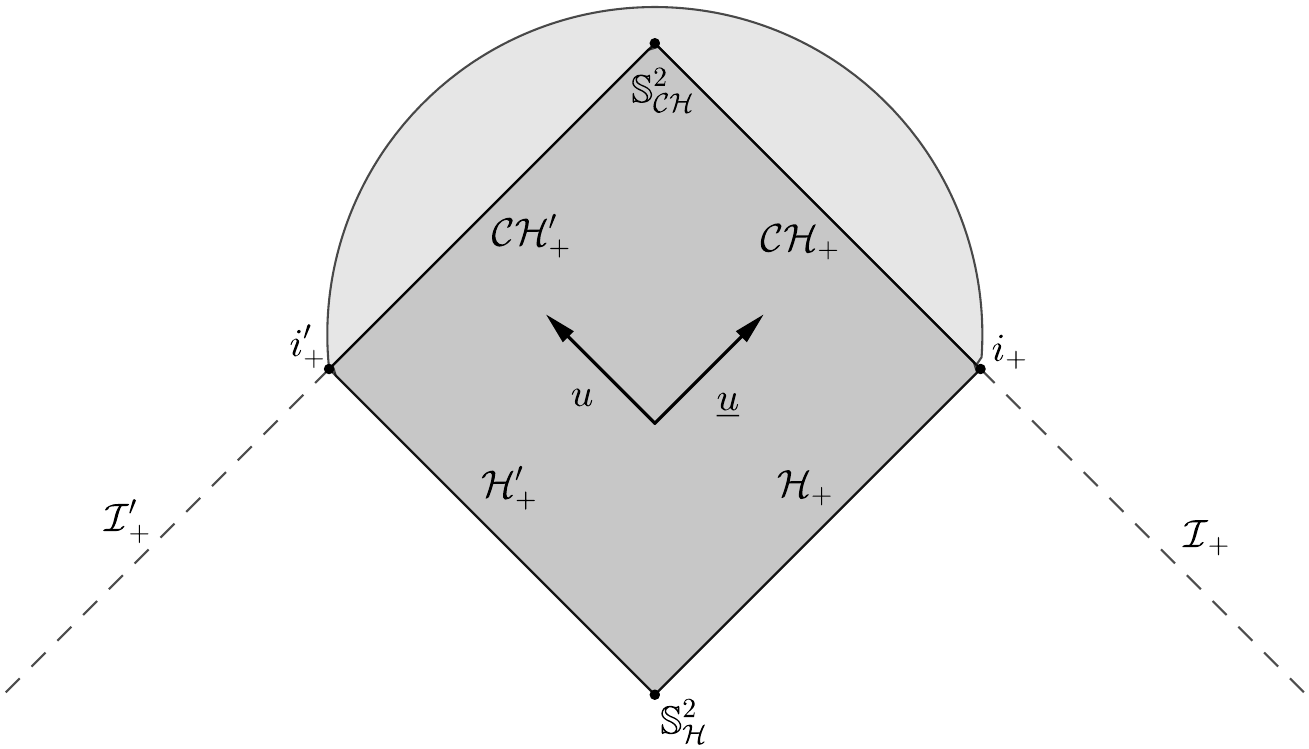}
    \caption{The Penrose diagram of Kerr spacetime, with the exterior of the black hole in white, the interior of the black hole in the darker shaded grey region, and the region with infinitely many extensions of Kerr in the lightest shaded grey region.}
    \label{fig:penrosediagram}
\end{figure}

It is expected that this behavior is an artifact of exact Kerr spacetimes, and vanishes upon small perturbations. This is the statement of the Strong Cosmic Censorship (SCC) conjecture in the case of Kerr. First formulated by Penrose \cite{sccpenrose}\footnote{For more modern versions of the Strong Cosmic Censorship conjecture, see \cite{christo, chruscc}.}, the SCC conjecture roughly states that the maximal globally hyperbolic development of generic initial data for the Einstein equations is inextendible as a \textit{regular} Lorentzian manifold. In other words, it states that the non-deterministic behavior of spacetimes with non-empty Cauchy horizons, like Kerr and Reissner-Nordström, is non-generic and vanishes upon small perturbations. Note that the $C^0$ version of SCC was disproved in Kerr by Dafermos and Luk \cite{stabC0}. In fact, they prove the $C^0$ stability of the Cauchy horizon and the proof suggests that the perturbed Cauchy horizon should be a weak null singularity as in \cite{weakluk}.

 The results of the present paper\footnote{Namely, the precise asymptotics of the spin $-2$ Teukolsky field near $\ch$.}, together with the results of our previous work \cite{G24}, suggest a precise and coordinate-invariant linearized version of this expected instability of the Kerr Cauchy horizon, see the discussion in Section \ref{section:curvatureinstab} for more details.

\subsection{Teukolsky equations}

In this work, we consider a linearized gravitational perturbation of a Kerr black hole. A classical result by Teukolsky \cite{teukolsky} states that two linearized curvature components decouple from the linearized gravity system in the Newman-Penrose formalism and satisfy wave equations called the \textit{Teukolsky equations}. 

To write the Teukolsky equations, we first define the pair of null vector fields 
\begin{align}\label{eq:gaugeteuk}
    n:=\left[(r^2+a^2)\partial_t-\Delta\partial_r+a\partial_\phi\right]/(2\Sigma),\quad l:=\frac{r^2+a^2}{\Delta}\partial_t+\partial_r+\frac{a}{\Delta}\partial_\phi,
\end{align}
which are aligned with the principal null directions of Kerr, and the complex vector field
\begin{align}\label{eq:mframe}
    m:=\frac{1}{\sqrt{2}(r+ia\cos\theta)}\left(ia\sin\theta\partial_t+\partial_\theta+\frac{i}{\sin\theta}\partial_\phi\right).
\end{align}
They satisfy
\begin{align}\label{eq:prodln}
    \mathbf{g}_{a,M}(l,n)=-1,\quad\mathbf{g}_{a,M}(m,\overline{m})=1,
\end{align}
and all other scalar products are zero. Then, denoting by $\dot{\mathbf{W}}$ the linearized Weyl tensor, Teukolsky discovered that the scalars
$$\psihatp:=\dot{\mathbf{W}}_{lmlm},\quad\psihatm:=(r-ia\cos\theta)^4\dot{\mathbf{W}}_{n\overline{m}n\overline{m}},$$
satisfy the following wave equations, for $s=\pm 2$:
\begin{align}
    & -\left[\frac{\left(r^2+a^2\right)^2}{\Delta}-a^2 \sin ^2 \theta\right] \partial_t^2\psihat_s -\frac{4 M a r}{\Delta}\partial_t\partial_\phi\psihat_s  -\left[\frac{a^2}{\Delta}-\frac{1}{\sin ^2 \theta}\right] \partial_\phi^2\psihat_s \nonumber \\
    & +\Delta^{-s} \partial_r\left(\Delta^{s+1} \partial_r \psihat_s\right)+\frac{1}{\sin \theta} \partial_\theta\left(\sin \theta \partial_\theta \psihat_s\right)+2 s\left[\frac{a(r-M)}{\Delta}+\frac{i \cos \theta}{\sin ^2 \theta}\right] \partial_\phi\psihat_s  \nonumber\\
    & +2 s\left[\frac{M\left(r^2-a^2\right)}{\Delta}-r-i a \cos \theta\right]\partial_t\psihat_s -\left[\frac{s^2 \cos ^2 \theta}{\sin ^2 \theta}-s\right]\psihat_s=0.\label{eq:premteuk}
\end{align}
Note that $n$ and $l$ are regular on\footnote{This can be seen using outgoing Eddington-Finkelstein coordinates.} $\ch$, thus $\Psihatm$ and $\Psihatp$ are projections of the linearized curvature tensor on a tetrad $(n, l,m,\overline{m})$ which is regular on $\ch$. We also define the rescaled scalars 
$$\Psip:=\Delta^2\Psihatp, \quad\Psim:=\Delta^{-2}\Psihatm,$$
which are projections of the linearized Weyl tensor on the tetrad $(\Delta^{-1}n,\Delta l,m,\overline{m})$ that is regular on $\eh$. These rescaled scalars satisfy a rescaled version of the Teukolsky equation, see Section \ref{section:system}.
 
Note that in \cite{KS21}, the proof of the non-linear stability of the exterior of slowly rotating Kerr black holes relied on decay estimates for solutions of the Teukolsky equations in the non-linear setting proven in \cite{KSwaveeq}. For a review of the literature concerning the Teukolsky equations, see the introduction in \cite{KSwaveeq}. In the present work, we will use the Teukolsky equations to prove a precise quantitative and coordinate-independent curvature blow-up on the perturbed Kerr Cauchy horizon, in the linearized setting.

\subsection{Coordinate-independent curvature singularity}
\label{section:curvatureinstab}
As mentioned in Section \ref{section:scccccc}, the Kerr Cauchy horizon was proven to be $C^0$ stable\footnote{More precisely, it was proven that small perturbations of the Kerr interior have a future null boundary (the perturbed Cauchy horizon) in a neighborhood of timelike infinity, across which the metric extends continuously. Moreover the perturbed metric is $C^0$-close to the Kerr metric.} in \cite{stabC0}. Thus, the expected instability of the Kerr Cauchy horizon must be at a higher regularity level. In the recent paper by Sbierski \cite{sbierskiinextdernier}, it is shown that if one manages to prove, in the non-linear setting, the blow-up of the curvature tensor contracted with small perturbations of some continuous fixed vector fields\footnote{These contractions of the curvature tensor should be thought as perturbations of $\Psihatp$ in the non-linear setting.}, integrated on a spacetime region approaching the perturbed Cauchy horizon, then the spacetime is $C^{0,1}_{loc}$-inextendible across the Cauchy horizon. This result suggests that a proof of the $C^{0,1}_{loc}$ version of SCC conjecture in Kerr spacetimes could be obtained by proving the generic (integrated) blow-up of the non-linear analog of the spin $+2$ Teukolsky scalar (which is a projection of the curvature on some fixed vector fields, and where the blow-up would have to hold when projecting the curvature tensor on perturbations of these vector fields). 

\textbf{In this section, we outline a different strategy to address the instability of the Kerr Cauchy horizon, by obtaining a generic coordinate-invariant curvature blow-up result at the perturbed Cauchy horizon, from the precise control of both spin $\pm 2$ Teukolsky scalars.}

The most classical way to detect coordinate-invariant singularities in general relativity is by proving blow-up of the Kretschmann scalar\footnote{The most famous example is the blow-up of the Kretschmann scalar of Schwarzschild spacetime at $r=0$, proving that the singularity $\{r=0\}$ of Schwarzschild is a true singularity.}, which is the coordinate-independent curvature scalar
$$\mathbf{K}:=\mathbf{R}^{\alpha\beta\gamma\delta}\mathbf{R}_{\alpha\beta\gamma\delta},$$
where $\mathbf{R}_{\alpha\beta\gamma\delta}$ is the curvature tensor of $\mathbf{g}$. One can express $\mathbf{K}$ as projections of the curvature tensor on a null frame, and this expression is by definition \textit{frame independent}.

Let $(n,l,m,\overline{m})$ be any null tetrad satisfying \eqref{eq:prodln}. Then defining the Weyl scalars
\begin{align*}
    \widehat{\Psi}_{-2}:=\mathbf{W}_{n\overline{m}n\overline{m}},\quad\widehat{\Psi}_{-1}:=\mathbf{W}_{ln\overline{m}n},\quad\widehat{\Psi}_0:=\mathbf{W}_{lm\overline{m}n},\quad\widehat{\Psi}_{+1}:=\mathbf{W}_{lnlm},\quad\widehat{\Psi}_{+2}:=\mathbf{W}_{lmlm},
\end{align*}
where $\mathbf{W}$ is the Weyl tensor, we have in a vacuum spacetime\footnote{Note that if the metric solves the Einstein vacuum equation \eqref{eq:eve}, then $\mathbf{R}_{\alpha\beta\gamma\delta}=\mathbf{W}_{\alpha\beta\gamma\delta}$.}
$$\mathbf{K}=16\Real(\widehat{\Psi}_{+2}\widehat{\Psi}_{-2}+3\widehat{\Psi}_0^2-4\widehat{\Psi}_{+1}\widehat{\Psi}_{-1}).$$
Moreover, we expect that in generic perturbations of the Kerr black hole interior, for a frame $(n,l,m,\overline{m})$ which is a perturbation of \eqref{eq:gaugeteuk}-\eqref{eq:mframe}, the curvature terms $\widehat{\Psi}_0^2$ and $\widehat{\Psi}_{+1}\widehat{\Psi}_{-1}$ are negligible\footnote{See \cite{stabC0} for estimates supporting this claim in the non-linear setting, and \cite{orikretschmann} for a discussion in the linearized setting.} compared to $\widehat{\Psi}_{+2}\widehat{\Psi}_{-2}$ near $\ch$ so that we have
$$\mathbf{K}= 16\Real(\widehat{\Psi}_{+2}\widehat{\Psi}_{-2}(1+o(1)))$$
near the Cauchy horizon. In the linearized setting, we have by definition $\dot{\widehat{\Psi}}_{+2}=\Psihatp$, $\dot{\widehat{\Psi}}_{-2}=(r-ia\cos\theta)^{-4}\Psihatm$. \textbf{The product $\Psihatp\Psihatm$ should thus be thought as the leading-order term of the linearized Kretschmann curvature invariant.} Although we know from \cite{G24} that in the gauge \eqref{eq:gaugeteuk}, $\Psihatp$ generically blows-up at $\ch$ while violently oscillating, this discussion highlights the importance of proving in addition that $\Psihatm$ generically does not vanish on $\ch$. This fact is a direct consequence of the main result of this work, which implies that the product $\Psihatp\Psihatm$ generically blows-up at $\ch$ while oscillating violently, see \eqref{eq:anstK}. 

In conclusion, the linearized analysis of \cite{G24} for $\Psihatp$ and of the present paper for $\Psihatm$ thus supports the conjecture that generic perturbations of Kerr black holes present a Kretschmann curvature singularity on the perturbed Cauchy horizon, see \cite{orikretschmann,simunum} for corresponding heuristics and numerical results. We give more details on the expected oscillating behavior of the curvature blow-up at the end of Section \ref{subsection:mainthm}.

\subsection{Rough version of the main theorem and consequences}\label{subsection:mainthm}

We now state the simplified version of our main result, see Theorem \ref{thm:main} for the precise formulation.
\begin{thm}[Main theorem, rough version]\label{thm:mainrough} We consider an initially smooth and compactly supported linearized gravitational perturbation of a Kerr black hole such that $0<|a|<M$. We denote $\Psihatm$ the spin $-2$ Teukolsky scalar obtained in a principal null frame regular at $\ch$. Then $\Psihatm$ is regular on $\ch$, and satisfies the following asymptotic behavior near $\ch$: 
\begin{align}\label{eq:roughasympt}
        \Psihatm\sim\frac{1}{u^8}\sum_{|m|\leq 2}C_mY_{m,2}^{-2}(\cos\theta)e^{im\phi_-},
    \end{align}
where the constants $C_m$ are defined in \eqref{eq:cm}, depend only on the initial data and black hole parameters and are generically non-zero, and the functions $Y_{m,2}^{-2}(\cos\theta)e^{im\phi_-}$ are the spin $-2$ spherical harmonics which are regular on $\ch$ (see Sections \ref{section:spinweighted} and \ref{section:mainthm} for more details). 
\end{thm}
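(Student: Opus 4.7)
The strategy is to work with the rescaled field $\Psim = \Delta^{-2}\Psihatm$, which satisfies a Teukolsky equation regular at the event horizon $\eh$, and then read off the behavior of $\Psihatm = \Delta^{2}\Psim$ on $\ch$. The expected leading interior profile $\Psim \sim \ansatzm$ gives $\Psihatm \sim \anshatm$, and the $\Delta^{2}$ factor kills this naive leading term on $\ch$; the $u^{-8}$ profile of the theorem must therefore live at the next order. As a first step I would import from the exterior analysis sharp polynomial decay along $\eh$ at rate $\ubar^{-7}$, with leading angular content confined to the spin-$(-2)$, $\ell = 2$ spherical harmonics and $\ell \geq 3$ modes decaying strictly faster. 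The redshift at $\eh$ makes this transport standard, and it fixes the coefficients $Q_{m,2}$ as explicit linear functionals of the initial data.

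For the upper bound $|\Psihatm| \lesssim u^{-8}$ on $\ch$, I would run a hierarchy of weighted $L^{2}$ energy estimates in $u$ on constant-$r$ hypersurfaces to propagate decay from $\eh$ across the interior, gaining the extra factor $u^{-1}$ by integrating a transport equation along outgoing null generators between $\eh$ and $\ch$. Projection onto spin-weighted spherical harmonics then confines the leading profile to the finite-dimensional $\ell = 2$, $|m| \leq 2$ subspace, reducing the remaining analysis to this five-dimensional sector.

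The sharp lower bound, i.e.\ generic non-vanishing of $C_m$, is the delicate point. Because $\anshatm$ vanishes identically on $\ch$, one cannot simply integrate a single transport equation along characteristics to extract $C_m$; this is where the announced ODE method must enter. I would decompose the Teukolsky operator as $\teuk = \teuk_{\mathrm{rad}} + \teuk_{\mathrm{t}}$ into its radial and time/azimuthal derivative parts, plug in a refined ansatz carrying an $A_m(r)$-dependent $u^{-1}\ubar^{-7}$ correction, and treat $\teuk_{\mathrm{t}}$ as a perturbation on late-$u$ slices since each $\partial_t$ gains an additional $u^{-1}$. This should reduce the problem to an inhomogeneous linear ODE in $r$ for $A_m(r)$ with source driven by $Q_{m,2}$; integrating from $r = r_+$ to $r = r_-$ expresses $C_m$ as an explicit linear functional of the initial data, and generic non-vanishing would follow from genericity of $Q_{m,2}$ combined with non-degeneracy of the transport coefficients in the $\ell = 2$ sector.

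The hardest single step, I expect, is to rigorously control the remainder at a level strictly below $u^{-8}$, so that the ODE computation of $C_m$ transfers into a pointwise lower bound rather than just an identity for a formal leading coefficient. Unlike in \cite{G24}, where the naive transport integral already produced a non-zero leading contribution, here one must close weighted estimates that see the next-order correction, combine them with the radial Green's-function analysis, and rule out any hidden cancellation that would force $C_m = 0$ for non-trivial data.
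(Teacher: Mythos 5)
Your core strategy coincides with the paper's: Price's law on $\eh$ as input, energy estimates to propagate (imprecise) decay into the blueshift region, and then a decomposition $\teuk_{-2}=\teuk_{-2}^{[\partial_r]}+\teuk_{-2}^{[\partial_t]}$ that turns the equation for the error into an ODE in $r$ along slices, with a computable $\ubar^{-8}$ source coming from $\teuk_{-2}^{[\partial_t]}$ applied to the $\ubar^{-7}$ ansatz; variation of constants and integration of the radial Green's function from $r_+$ to $r_-$, with matching at an intermediate sphere, then produce $C_m$ essentially as you describe for the $\ell=2$ sector.

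However, there is a genuine gap in your treatment of the higher angular modes. You assert that faster decay of the $\ell\geq 3$ modes on $\eh$ confines the leading $u^{-8}$ profile at $\ch$ to the $\ell=2$ sector. This inference fails because of mode coupling: the terms $a^2\sin^2\theta T^2$ and $4ia\cos\theta T$ in $\teuk_{-2}^{[\partial_t]}$ transfer the $\ell=2$ ansatz $\ubar^{-7}Q_{m,2}Y_{m,2}^{-2}e^{im\phi_+}$ into a source for the $\ell=3$ radial ODE of size exactly $\ubar^{-8}$ --- the same order as the claimed leading term. A priori the $\ell=3$ mode could therefore carry a non-zero $u^{-8}$ contribution at $\ch$ driven by the $\ell=2$ data through coupling, and the statement (leading profile purely $\ell=2$) would be false. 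The paper rules this out only through the explicit cancellation
\begin{align*}
\int_{r_+}^{r_-}\Delta^2(r')e^{2imr_{mod}(r')}(r'-M+aim/3)\,\dee r'=\frac{1}{6}\int_{r_+}^{r_-}\frac{\dee}{\dee r'}\Big[\Delta^3(r')e^{2imr_{mod}(r')}\Big]\dee r'=0,
\end{align*}
which your argument does not see and which must be verified by carrying out the $\ell=3$ ODE analysis in full. Relatedly, for $\ell\geq 4$ the mode-by-mode radial analysis must be summed over $\ell$, which requires uniform-in-$\ell$ control of the homogeneous solutions of the radial ODE (the paper proves $|{}^{(m,\ell)}\!v_1(r)|\leq|{}^{(m,\ell)}\!v_1(r_-)|$ precisely for this) together with an $\ell^{-2}$ gain extracted from the Carter operator; projecting onto ``a five-dimensional sector'' does not by itself address summability. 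Finally, the paper deliberately does not assume improved decay of $(\Psim)_{\ell\geq 3}$ on $\eh$; your stronger input is unnecessary and, for the reasons above, would not close the argument on its own. Your description of the upper bound is also optimistic: the energy method only yields $|\Psihatm-\Delta^2\cdot(\text{ansatz})|\lesssim|u|^{-7-\delta/2}$ in the blueshift region, and the sharp $u^{-8}$ bound (for every mode) is itself an output of the radial ODE analysis rather than of a transport estimate.
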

\begin{rem} A few remarks are in order: 
\begin{enumerate}
    \item At first glance, the regularity of $\Psihatm$ at $\ch$ may seem to support a curvature \textit{stability} statement for the Kerr Cauchy horizon. But the asymptotic behavior \eqref{eq:roughasympt} implies in particular that $\Psihatm$ is generically non-zero near $\ch\cap\{u\ll -1\}$. As discussed in Section \ref{section:curvatureinstab}, together with the blow-up asymptotics of $\Psihatp$ near $\ch$  proven in \cite{G24}, this suggests the presence of a coordinate-independent curvature singularity at the Cauchy horizon of a generically perturbed Kerr black hole.
    \item From Price's law results in \cite{MZ23}, \cite{Millet23}, which predict a generic asymptotic behavior in $1/\ubar^7$ on the event horizon for $\Psim$, and the analysis in \cite{G24} for $\Psihatp$, one could expect that the generic asymptotic behavior for $\Psihatm$ in the black hole interior would be in $1/u^7$. But it turns out that, unlike for the spin $+2$ case \cite{G24}, the first term in the asymptotic development of $\Psihatm$ in inverse powers of $u$ vanishes at $\ch$. The crucial part of the proof of Theorem \ref{thm:mainrough} is thus to track the next term in the asymptotic development of $\Psihatm$, and to prove that it is generically non-zero at $\ch$. Note that the asymptotic behavior $\Psihatm\sim1/u^8$ was first heuristically predicted by Ori \cite{ori}.
    \item The proof is based on the energy method introduced in \cite{scalarMZ} and on a novel ODE method that relies on a decomposition of the Teukolsky operator between radial and time derivatives. For more details and for a comparison with the method used for $\Psihatp$, see Section \ref{section:structure}.
    \item Just like in the spin $+2$ case in \cite{G24}, we do not assume higher order decay of the higher modes $(\Psim)_{\ell\geq 3}$ on the event horizon. 
\end{enumerate}
\end{rem}

We now complete the discussion of Section \ref{section:curvatureinstab} by analyzing the strength of the curvature singularity on the Cauchy horizon suggested by the asymptotic behaviors of $\Psihatp$ and $\Psihatm$. The main result of \cite{G24} yields
\begin{align}\label{eq:twomainresults1}
    \Psihatp\sim\ansatzhatp
\end{align}
near $\ch$, where $r_{mod}$, $Q_{m,2}$, $A_m(r)$ are defined in \cite{G24}, and satisfy $r_{mod}\sim a\log(r-r_-)$ near $\ch$, $A_m(r_-)\neq 0$ for $m=\pm 1,\pm2$ and the constants $Q_{m,2}$ depend on the initial data and are generically non-zero. Using \eqref{eq:twomainresults1} and \eqref{eq:roughasympt} we get 
\begin{align}\label{eq:anstK}
    \Psihatp\Psihatm\sim\frac{\Delta^{-2}(u,\ubar)}{\ubar^{7}u^8}\left(\sum_{|m|\leq2}A_m(r_-)Q_{m,\sfrak}e^{2imr_{mod}}Y_{m,\sfrak}^{+\sfrak}(\cos\theta)e^{im\phi_{-}}\right)\left(\sum_{|m|\leq 2}C_mY_{m,2}^{-2}(\cos\theta)e^{im\phi_-}\right),
\end{align}
near $\ch$, which generically blows-up at $\ch$ (because $\Delta^{-2}\sim e^{\frac{r_+-r_-}{2Mr_-}\ubar}$ on $\{u=cst\}$) while oscillating violently. In view of the discussion in Section \ref{section:curvatureinstab}, this supports the following predictions by Ori \cite{orikretschmann}:
\begin{enumerate}
    \item Generic perturbations of a Kerr black hole have a Kretschmann scalar which is singular on the perturbed Cauchy horizon.
    \item On every slice $\{u=u_0\}$ for sufficiently negative values of $u_0\ll-1$, the generic curvature blow-up profile is a violently oscillating exponential factor in $\ubar$ divided by a polynomial in $\ubar$.
\end{enumerate}
\subsection{Perturbations of black hole interiors}
\subsubsection{Price's law-type results}
To prove the precise asymptotic behavior of solutions of the spin $-2$ Teukolsky equation in the Kerr black hole interior, the starting point is a precise version of Price's law for Teukolsky.
This law gives polynomial upper and lower bounds on the event horizon for solutions of wave equations arising from compactly supported initial data, see the original works on Price's law \cite{price1, price2, price3, price4}. A version of Price's law for Teukolsky equations in Kerr was heuristically found by Barack and Ori in \cite{barackori}. 

In this paper we use the Price's law-type precise asymptotics of the spin $-2$ Teukolsky field on $\mch_+$ proven by Ma and Zhang \cite{MZ23}\footnote{The Price's law in \cite{MZ23} holds for $|a|\ll M$, and for $|a|<M$ conditionally on an energy-Morawetz bound. This energy-Morawetz estimate has since been proved for $|a|<M$ by Teixeira da Costa and Shlapentokh-Rothman in \cite{TDCSR2, TDCSR1}, so that the Price's law in \cite{MZ23} holds for the full subextremal range $|a|<M$.}. For a proof of the precise asymptotics of solutions to the Teukolsky equations in the full subextremal range $|a|<M$ with a spectral point of view, see \cite{Millet23}. For a complete review of Price's law-type results, see the introduction in \cite{MZ23}.

\subsubsection{Previous results on the interior of perturbed black holes}

The first studies of linear perturbations of Kerr and Reissner-Nordström black holes interiors explicited specific perturbations that become unbounded in some way at the Cauchy horizon, see for example \cite{mac}. A heuristic power law for scalar waves in Kerr spacetimes was obtained by Ori in \cite{oriscalar}. For the Teukolsky equations, the oscillatory blow-up asymptotic of the spin $+2$ Teukolsky scalar in the interior of Kerr black holes was first predicted heuristically by Ori \cite{ori}. In the same paper, the asymptotic behavior $\Psihatm\sim 1/u^8$ for the spin $-2$ Teukolsky scalar was also predicted, writing the azimuthal $m$-mode of the solution as a late-time expansion ansatz. 

Next, inside Reissner-Nordström black holes, a rigorous boundedness statement for solutions of the scalar wave equation  was proven in \cite{franzen}. For a scattering approach to Cauchy horizon instability in Reissner-Nordström, as well as an application to mass inflation, see \cite{lukk}. The blow-up of the energy of generic scalar waves inside Reissner-Nordström black holes was obtained in \cite{RNscalar}. The non-linear problem of Cauchy horizon instability in spherical symmetry was treated in \cite{dafscc}, with results later extended in \cite{RNNLI, RNNLII}. Analog results for the Einstein-Maxwell-Klein-Gordon system were proven in \cite{vdmmmm, vdminsta}, see also \cite{weakvdm}.

Now we presents results for the scalar wave in the interior of Kerr black holes. A generic blow-up result on the Cauchy horizon for the energy of scalar waves was obtained in \cite{kerrwave}. It was proven in \cite{hintzkerrwave} in the slowly rotating case that scalar waves remain bounded up to the Cauchy horizon. This result was then extended to the full subextremal range in \cite{franzen2}. The paper \cite{daf3} constructs solutions that remain bounded but have infinite energy at the Cauchy horizon. More recently, \cite{scalarMZ} introduced a robust physical-space method to obtain the precise asymptotics of the scalar field in the interior of Kerr.

Concerning the Teukolsky equations in the Kerr black hole interior, the method of proof of \cite{kerrwave} was extended to the spin $+2$ Teukolsky equation in \cite{sbierski}, where the generic blow-up of a weighted $L^2$ norm on a hypersurface transverse to the Cauchy horizon was proven, relying on frequency analysis. The precise pointwise oscillatory blow-up asymptotic behavior of the spin $+2$ Teukolsky scalar was proven by the author in \cite{G24}, extending the method of proof of \cite{scalarMZ}.

For a more complete account of the results related to black hole interior perturbations, for example in the cosmological setting or in Schwarzschild spacetime, we refer to the introduction in \cite{sbierski}.

\subsection{Structure of the proof}\label{section:structure}
The proof of \eqref{eq:roughasympt} is based on energy estimates, on a precise decomposition of the Teukolsky operator, and on the following Price's law-type result on the event horizon:
\begin{align}
    \left|\partial_t^j\left(\Psim-\ansatzm\right)\right|\lesssim\ubar^{-7-j-\delta}\label{eq:decayerrhorizon}
\end{align}
on $\eh\cap\{\ubar\geq 1\}$, where the constants $Q_{m,2}$ depend on the initial data, and where the functions $Y^{-2}_{m,2}(\cos\theta)e^{im\phi_+}$ are the spin $-2$ spherical harmonics. It was proven in \cite{MZ23} that \eqref{eq:decayerrhorizon} holds for initially smooth and compactly supported solutions of the spin $-2$ Teukolsky equation in Kerr spacetime. On top of \eqref{eq:decayerrhorizon}, fixing $\rb\in(r_-,r_+)$ close to $r_-$ and $\gamma>0$ small, we will use a decomposition of the Kerr black hole interior in different regions: 
\begin{itemize}
    \item The redshift region $\un=\{\rb\leq r\leq r_+\}\cap\{\ubar\geq 1\}$.
    \item The blueshift region $\deux\cup\trois=\{r_-\leq r\leq \rb\}\cap\{w\leq w_{\rb,\gamma}\}$, where $w=u-r+r_-$,
    $$\deux=\{r_-\leq r\leq \rb\}\cap\{2r^*\leq\ubar^\gamma\}\cap\{w\leq w_{\rb,\gamma}\},\quad\trois=\{r_-\leq r\leq \rb\}\cap\{2r^*\geq\ubar^\gamma\}\cap\{w\leq w_{\rb,\gamma}\},$$
    and where $w_{\rb,\gamma}=2\rb^*-(2\rb^*)^{1/\gamma}-\rb+r_-$, see Figure \ref{fig:regions-2} which illustrates regions $\un,\deux,\trois$. 
\end{itemize}
The proof is then structured as follows:
\begin{enumerate}
    \item First, in Proposition \ref{prop:lapremiere} we use the redshift energy estimates results already proven in \cite{G24} for the spin $-2$ Teukolsky equation in redshift region $\un$, to propagate the decay \eqref{eq:decayerrhorizon} from $\eh$ to the whole region $\un$.\label{step:un}
    \item Next, in Section \ref{section:upperbounds} we implement an energy method for the rescaled spin $-2$ Teukolsky operator in the blueshift region $\deux\cup\trois$. Combining the symmetries of Kerr spacetime and these energy estimates yields pointwise upper bound results for the spin $-2$ Teukolsky field near the Kerr Cauchy horizon.
    \item Using this energy method, in Propositions \ref{prop:upperboundII} and \ref{prop:almostsharpdrout} we obtain a bound of the type, for $j\geq 0$:
    \begin{align}\label{eq:controlstructure}
        \left|\partial_t^j\left(\Psihatm-\anshatm\right)\right|\lesssim|u|^{-7-j-\delta},
    \end{align}
    in the blueshift region $\deux\cup\trois$. Like in \cite{G24}, this relies on the following key computational fact:
    $$\teuk_{-2}\left(\Psim-\ansatzm\right)=-\teuk_{-2}\left(\ansatzm\right)=O(\ubar^{-8}),$$
    where $\teuk_{-2}$ is the spin $-2$ Teukolsky operator. We also use in particular the fact that the scalar $-\Delta$ decays exponentially in $\ubar$ in region $\trois$. The bound \eqref{eq:controlstructure} is precise in any region with $r$ bounded away from $r_-$, but is imprecise near $\ch$ (as $\Delta^2/\ubar^7$ becomes negligible compared to $|u|^{-7-\delta}$ and vanishes on $\ch$).
    \item\label{step:4} The crux of the proof of the main theorem is to then re-inject the bounds \eqref{eq:decayerrhorizon} in $\un$, and \eqref{eq:controlstructure} in $\deux\cup\trois$, in the Teukolsky equation to recover a lower bound for $\Psihatm$ which is precise at $\ch$. We do this in Section \ref{section:precise} in the following way: we first rewrite the Teukolsky equation in $\un$ as
    $$\mathbf{T}_{-2}^{[\partial_r]}\Psim+\mathbf{T}_{-2}^{[\partial_t]}\Psim=0,$$
    where $\mathbf{T}_{-2}^{[\partial_t]}$ is the part of the Teukolsky operator which contains $\partial_t$ derivatives, and $\mathbf{T}_{-2}^{[\partial_r]}$ is the remaining part, see Section \ref{section:decdtdr} for the precise expressions of $\mathbf{T}_{-2}^{[\partial_r]},\mathbf{T}_{-2}^{[\partial_t]}$. We then define 
    $$\errm:=\Psim-\ansatzm.$$
    It can be checked using \eqref{eq:decayerrhorizon} (which holds in $\un$ by Step \ref{step:un}) that the Teukolsky equation implies
    \begin{align}\label{eq:donneedo}
        \mathbf{T}_{-2}^{[\partial_r]}\errm=-\mathbf{T}_{-2}^{[\partial_t]}\left(\ansatzm\right)+O(\ubar^{-8-\delta}),\quad\text{ in }\un.
    \end{align}
    The first term on the right-hand-side of \eqref{eq:donneedo} is a source term which can be precisely computed and which behaves like $1/\ubar^8$. It turns out that by using the precise expression of the operator $\mathbf{T}_{-2}^{[\partial_r]}$, when projecting \eqref{eq:donneedo} on a spin weighted $(\ell,m)$ mode, we obtain an ODE in $r$ for the $(\ell,m)$ component of $\errm$, with a precise source term in $1/\ubar^8$. This ODE can then be integrated to obtain a precise expression for $(\errm)_{\ell,m}$ in $\un$. Then, we do a similar procedure in region $\deux\cup\trois$ for the quantity $\errhatm=\Delta^2\errm$ using \eqref{eq:controlstructure}, and matching the solutions of the ODEs on the boundary $\{r=\rb\}$ between $\un$ and $\deux\cup\trois$, we obtain a precise expression in $1/u^8$ for $(\errhatm)_{\ell=2}$ (and thus also for $(\Psihatm)_{\ell=2}$) near $\ch$, and the following precise upper bound:
    $$|(\Psihatm)_{\ell\geq 3}|\lesssim |u|^{-8-\delta/2}+r-r_-.$$
    Note that for technical reasons, we deal with the $\ell=3$ mode and each $\ell\geq 4$ mode of $\Psihatm$ separately. Moreover, the summation of the bounds obtained for each $\ell\geq 4$ mode must be done with care, see the first item of Remark \ref{rem:jaimebien}.
\end{enumerate}
We conclude this section by outlining the difference of treatments between $\Psihatm$ in this paper and $\Psihatp$ in \cite{G24}. It was proven by the author in \cite{G24} that the quantities
\begin{align*}
    \errm\quad\text{and}\quad\errp:=\Psip-\ansatzp,
\end{align*}
are bounded by $\ubar^{-7-\delta}$ on $\{r=\rb\}\cap\{\ubar\geq 1\}$. Roughly stated, the energy method adapted from \cite{scalarMZ} yields
$$|\errp|\lesssim\ubar^{-7-\delta},\quad|\Delta^2\errm|\lesssim|u|^{-7-\delta},$$
in $\deux\cup\trois$ (see \cite{G24} for the first bound and Section \ref{section:upperbounds} for the second bound) which implies in $\deux\cup\trois$, after renormalization:
\begin{align}
    &\Psihatp=\ansatzhatp+O(\Delta^{-2}\ubar^{-7-\delta}),\label{eq:depoin}\\
    &\Psihatm=\anshatm+O(|u|^{-7-\delta}).\label{eq:troipoin}
\end{align}
Statement \eqref{eq:depoin} is the main theorem of \cite{G24} and implies oscillatory blow-up of $\Psihatp$ at $\ch$, but on the other hand we see that \eqref{eq:troipoin} is not precise enough, as near $\ch$ the error term $O(|u|^{-7-\delta})$ becomes the leading-order term in \eqref{eq:troipoin}. This is why we need a much finer analysis based on the ODE method in Section \ref{section:precise} (outlined in Step \ref{step:4} above), to prove the precise $1/u^8$ asymptotics of this error term. Moreover, unlike in the spin $+2$ case in \cite{G24} where there was no need for mode projection, in the present paper we need to track the decay of the different angular modes of $\Psihatm$. More precisely, we have to provide a separate analysis for the modes $\ell=2$, $\ell=3$, and $\ell\geq 4$ of $\Psihatm$.

\subsection{Overview of the paper}
In Section \ref{section:preliminaries}, we introduce the Kerr black hole interior geometry, as well as some preliminary results. Next, in Section \ref{section:mainthm}, we state the assumptions and the precise version of the main theorem of the paper, Theorem \ref{thm:main}. We begin the proof of Theorem \ref{thm:main} in Section \ref{section:upperbounds} by implementing the energy method for the spin $-2$ Teukolsky equation near $\ch$. Finally, in Section \ref{section:precise}, we find the precise asymptotics of $(\Psihatm)_{\ell=2}$, and precise upper bounds for $(\Psihatm)_{\ell=3}$ and $(\Psihatm)_{\ell\geq 4}$ near $\ch$.

\subsection{Acknowledgments}
The author would like to thank Jérémie Szeftel for helpful suggestions and for encouraging him to work on this problem. The author is partially supported by ERC-2023 AdG 101141855 BlaHSt.
\section{Preliminaries}\label{section:preliminaries}
\subsection{Notations}
We begin by introducing some notations.
\begin{itemize}
    \item Throughout the paper, ‘RHS' and ‘LHS' refer respectively to ‘right-hand side' and ‘left-hand side'.
    \item  If $L$ is an operator acting on a spin-weighted scalar $\psi$, and if $N$ is any norm, for $k\in\mathbb{N}$ we use the convention
    $$N(L^{\leq k}\psi):=\sum_{i=0}^kN(L^{i}\psi),$$
    so that having a bound for $N(L^{\leq k}\psi)$ is equivalent\footnote{In this paper, $k$ will be chosen such that $k\leq k_0$ for some fixed large enough $k_0$.} to having a bound for $N(L^i\psi)$, for all $0\leq i\leq k$.
    \item If $f$ and $g$ are two non-negative scalars, we write $f\lesssim g$ if there is a constant $C>0$ which depends only on the black hole parameters $a,M$, on the initial data, and on the constants $\rb$, $\gamma$, such that $f\leq Cg$ in the region considered. We write $f=O(g)$ when $|f|\lesssim|g|$. We write $f\sim g$ when $f\lesssim g$ and $g\lesssim f$.
\end{itemize}
\subsection{Geometric preliminaries}\label{section:geometricprel}
\subsubsection{Vector fields}
We begin by defining a rescaled and more convenient version of the null pair $(n,l)$ defined in \eqref{eq:gaugeteuk}:
\begin{align*}
    e_3:=\frac{1}{2}\left(-\partial_r+\frac{r^2+a^2}{\Delta}\partial_t+\frac{a}{\Delta}\partial_\phi\right),\quad\quad e_4:=\frac{1}{2}\left(\frac{\Delta}{r^2+a^2}\partial_r+\partial_t+\frac{a}{r^2+a^2}\partial_\phi\right).
\end{align*}
This pair satisfies
$$\mathbf{g}_{a,M}(e_3,e_3)=\mathbf{g}_{a,M}(e_4,e_4)=0,\quad \mathbf{g}_{a,M}(e_3,e_4)=-\frac{\Sigma}{2(r^2+a^2)},$$
and is regular on $\mch_+$, as can be seen by expressing $e_3$ and $e_4$ with respect to the ingoing Eddington-Finkelstein coordinate vector fields, see Section \ref{section:coordsss}. Let $\mu$ be the scalar function
$$\mu:=\frac{\Delta}{r^2+a^2}$$
where $\Delta=r^2-2Mr+a^2=(r-r_+)(r-r_-)$. Notice that both $\mu$ and $\Delta$ are non-positive in the Kerr black hole interior and vanish only on the event and Cauchy horizons.  We now define the following rescaled null pair, which is regular on $\ch$:
$$\ethat:=(-\mu)e_3,\quad\quad\eqhat:=(-\mu)^{-1}e_4.$$
For convenience, we introduce the following notations for the Kerr Killing vector fields, expressed in Boyer-Lindquist coordinates,
$$T:=\partial_t,\quad\quad\Phi:=\partial_\phi.$$
\subsubsection{Scalar functions and coordinates}\label{section:coordsss}
The Boyer-Lindquist (B-L) coordinates are singular at both the event and Cauchy horizons. In this paper, we will use both the Eddington-Finkelstein coordinates and double null-like coordinates $(u,\ubar,\theta,\phi_{\pm})$. The classical definition of the tortoise coordinate $r^*$ in the Kerr black hole interior is:
$$r^*(r)=\int_M^r\mu^{-1}(r')\dee r'.$$
We have $r^*\to -\infty$ as $r\to r_+$ and $r^*\to +\infty$ as $r\to r_-$. Indeed, defining $\kappa_+$ and $\kappa_-$  the surface gravities of the event and Cauchy horizons:
$$\kappa_+:=\frac{r_+-r_-}{4Mr_+}>0,\quad\kappa_-:=\frac{r_--r_+}{4Mr_-}<0,$$
then the asymptotics of $r^*$ at the horizons are given by
\begin{align}\label{eqn:rstar}
    r^*=\frac{1}{2\kappa_+}\ln(r_+-r)+h_+(r)=\frac{1}{2\kappa_-}\ln(r-r_-)+h_-(r),
\end{align}
where $h_\pm(r)$ has a finite limit as $r\to r_\pm$. Notice that \eqref{eqn:rstar} implies that, for $r$ close to $r_+$, 
$$-\Delta\sim\exp(-2\kappa_+|r^*|),$$
while for $r$ close to $r_-$, 
$$-\Delta\sim\exp(-2|\kappa_-|r^*).$$
Next, we define the coordinates
$$\ubar:=r^*+t,\quad u:=r^*-t.$$
The range of the coordinates $u,\ubar,r^*$ is indicated on Figure \ref{fig:range}.
\begin{figure}[h!]
    \centering
    \includegraphics[scale=0.55]{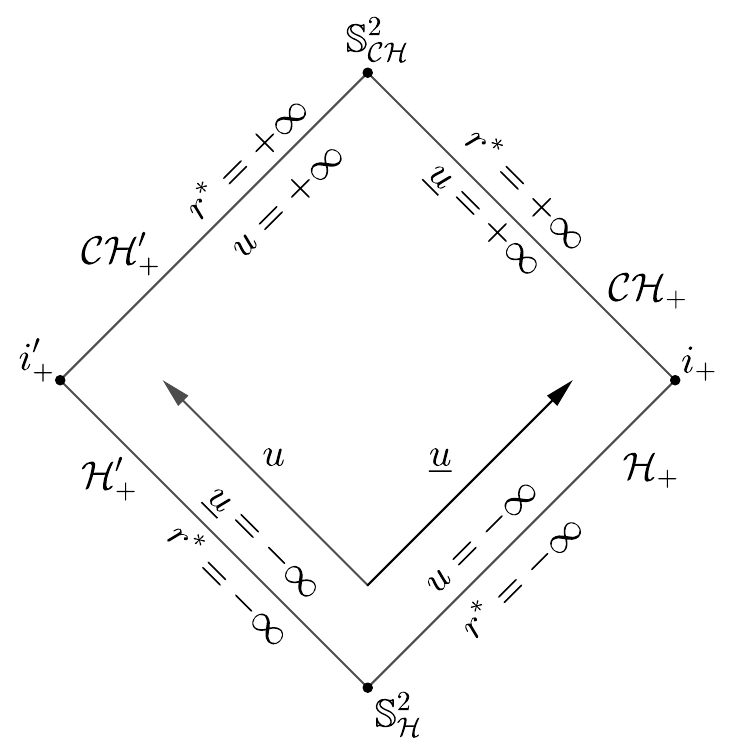}
    \caption{The range of the coordinates $u,\ubar,r^*$ in the Kerr black hole interior.}
    \label{fig:range}
\end{figure}

As in \cite{scalarMZ},  we define the function $r_{mod}$ by 
$$r_{mod}(r)=\int_M^r\frac{a}{\Delta(r')}\dee r'.$$
We then define the ingoing and outgoing angular coordinates
$$\phi_{+}:=\phi+r_{mod}\:\:\text{mod}\:2\pi,\quad \phi_{-}:=\phi-r_{mod}\:\:\text{mod}\:2\pi.$$
The coordinates $\ubar$ and $\phi_+$ are regular on $\mch_+=\{u=-\infty\}$ and $\mathcal{CH}'_+=\{u=+\infty\}$, while $u$ and $\phi_-$ are regular on $\mch'_+=\{\ubar=-\infty\}$ and $\ch=\{\ubar=+\infty\}$.

\textbf{Ingoing Eddington-Finkelstein coordinates.} The ingoing Eddington-Finkelstein coordinates are $(\ubar_{\mathrm{in}}=\ubar, r_{\mathrm{in}}=r, \theta_{\mathrm{in}}=\theta, \phi_{\mathrm{in}}=\phi_{+})$. It is a set of coordinates regular on $\eh$ and $\mathcal{CH}'_+$. The coordinate vector fields are 
\begin{align}\label{eqn:coordEFin}
    \partial_{r_\mathrm{in}}=-2e_3,\quad\partial_{\ubar_\mathrm{in}}=T,\quad  \partial_{\theta_{\mathrm{in}}}=\partial_\theta,\quad \partial_{\phi_{\mathrm{in}}}=\Phi.
\end{align}

\textbf{Outgoing Eddington-Finkelstein coordinates.} The outgoing Eddington-Finkelstein coordinates are $(u_{\mathrm{out}}=u, r_{\mathrm{out}}=r, \theta_{\mathrm{out}}=\theta,\phi_{\mathrm{out}}= \phi_{-})$. It is a set of coordinates regular on $\ch$ and $\mathcal{H}'_+$. The coordinate vector fields are 
\begin{align}\label{eqn:coordEFout}
    \partial_{r_\mathrm{out}}=-2\eqhat,\quad\partial_{u_\mathrm{out}}=T,\quad  \partial_{\theta_{\mathrm{out}}}=\partial_\theta,\quad \partial_{\phi_{\mathrm{out}} }=\Phi.
\end{align}

\textbf{Double null-like coordinate systems.} We use the ingoing double null-like coordinates $(\ubar,u,\theta,\phi_{+})$, as in \cite{scalarMZ}. The coordinate vector fields are respectively
\begin{align}\label{eqn:coordDNLin}
    \partial_\ubar=e_4-\frac{a}{r^2+a^2}\Phi,\quad\partial_u=-\mu e_3, \quad\partial_\theta,\quad\partial_{\phi_{+}}=\Phi.
\end{align}
The equivalent outgoing double null-like coordinates are $(\ubar,u,\theta,\phi_{-})$, with coordinate vector fields
\begin{align}\label{eqn:coordDNLout}
    \partial_\ubar=e_4,\quad\partial_u=-\mu e_3+\frac{a}{r^2+a^2}\Phi, \quad\partial_\theta,\quad\partial_{\phi_{-}}=\Phi.
\end{align}
We will only use the ingoing double null-like coordinate system in region $\un$, and the outgoing double null-like coordinate system in region $\deux\cup\trois$ so we use the same notations for the ingoing and outgoing double null-like coordinate vector fields $\partial_u$, $\partial_\ubar$, as there is no danger of confusion. See Section \ref{section:structure} for the definitions of regions $\un,\deux,\trois$.

\textbf{Constant $\wbar$ and $w$ spacelike hypersurfaces.} We will use two families of hypersurfaces defined as the level sets of the scalar functions
$$\wbar:=\ubar-r+r_+,\quad w:=u-r+r_-.$$
The constant $\wbar$ and $w$ hypersurfaces are spacelike, unlike the constant $\ubar, u$ hypersurfaces\footnote{We have 
$$\mathbf{g}_{a,M}(\nabla\ubar,\nabla\ubar)=\mathbf{g}_{a,M}(\nabla u,\nabla u)=\frac{a^2\sin^2\theta}{\Sigma}.$$}. Indeed, see \cite{scalarMZ}, we have
$$\mathbf{g}_{a,M}(\nabla\wbar,\nabla\wbar)=\mathbf{g}_{a,M}(\nabla w,\nabla w)=-\frac{r^2+2Mr+a^2\cos^2\theta}{\Sigma}<-1.$$
Notice that in a region with $\ubar$ large enough, we have $\wbar\sim\ubar.$ Recall that in the Kerr black hole interior, the hypersurfaces $\{t=t_0\}$ are timelike and the hypersurfaces $\{r=r_0\}$ are spacelike for $r_0\in(r_-,r_+)$ and null for $r_0=r_{\pm}$.
\begin{figure}[h!]
    \centering
    \includegraphics[scale=0.5]{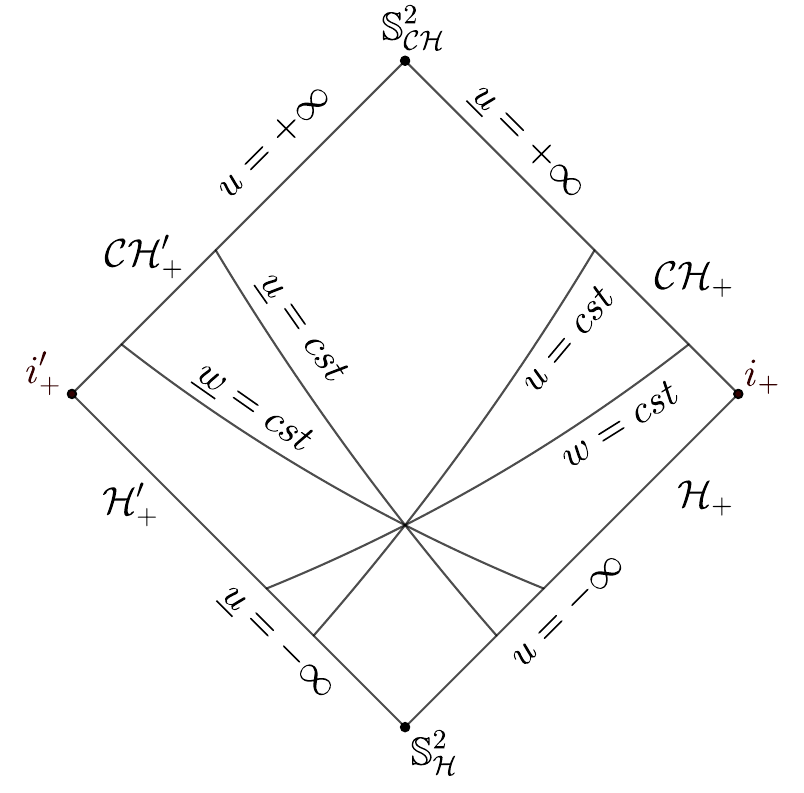}
    \caption{Examples of constant $u,\ubar,w,\wbar$ hypersurfaces and their causal nature.}
    \label{fig:range2}
\end{figure}
\subsection{Spin weighted scalars and mode projection}\label{section:spinweighted}
\subsubsection{Spin-weighted scalars and spin-weighted spherical operators.} In this section, we consider the round sphere $\sph^2$ equipped with its volume element, in coordinates $(\theta,\phi)\in(0,\pi)\times[0,2\pi)$:  
$$\dee\nu:= \sin\theta\dee\theta\dee\phi.$$

Let $s$ be an integer. Note that in this work we will only consider $s=-2$. A spin $s$ scalar is a scalar function that has zero boost weight and proper spin weight, as defined by Geroch, Held and Penrose \cite{spin}. Roughly speaking, a spin $-2$ scalar is a the contraction $\alpha(\overline{\eta},\overline{\eta})$ of a symmetric tensor $\alpha$ with $(\overline{\eta},\overline{\eta})$ where $\eta=\partial_\theta+\frac{i}{\sin\theta}\partial_\phi$. See \cite{sbierski} for a rigorous presentation of spin-weighted scalars spaces in the Kerr interior, and  \cite{millet2} for a precise review of the geometric background of the Teukolsky equation.

We define the topological spheres of Kerr spacetime by:
$$S(u,\ubar):=\{r=r(u,\ubar)\}\cap\{t=t(u,\ubar)\}.$$
The volume element induced on $S(u,\ubar)$ by the metric $\mathbf{g}_{a,M}$ is 
$$\sqrt{(r^2+a^2)^2-a^2\sin^2\theta\Delta}\sin\theta\dee\theta\dee\phi$$
where $\sqrt{(r^2+a^2)^2-a^2\sin^2\theta\Delta}\sim 1$ in the Kerr black hole interior. Thus, although they are not round, we still use the round volume element $\dee\nu$ on the Kerr spheres $S(u,\ubar)$ to define $L^2(S(u,\ubar))$ norms. 
\begin{defi}
    For $\psi$ a spin-weighted scalar in the Kerr black hole interior, we define
$$\|\psi\|_{L^2(S(u,\ubar))}:=\left(\int_{S(u,\ubar)}|\psi^2|\dee\nu\right)^{1/2}.$$
\end{defi}
Notice that in the ingoing double null-like coordinates, as $\phi_+-\phi=r_{mod}$ is constant on $S(u,\ubar)$, the definition of the $L^2(S(u,\ubar))$ norm gives: 
$$\|\psi\|_{L^2(S(u,\ubar))}^2=\int_0^\pi\int_0^{2\pi}|\psi|^2(u,\ubar,\theta,\phi_+)\sin\theta\dee\theta\dee\phi_+.$$
For the same reason, in outgoing double null-like coordinates we also have: 
$$\|\psi\|_{L^2(S(u,\ubar))}^2=\int_0^\pi\int_0^{2\pi}|\psi|^2(u,\ubar,\theta,\phi_-)\sin\theta\dee\theta\dee\phi_-.$$
\begin{defi}
    We recall the definition of the following standard spin-weighted differential operators, called the spherical eth operators:
    \begin{align}
        \drond&:=\partial_\theta+\frac{i}{\cos\theta}\partial_\phi-s\cot\theta,\quad \drond':=\partial_\theta-\frac{i}{\cos\theta}\partial_\phi+s\cot\theta.
    \end{align}
\end{defi}
A classical fact is that the spherical eth operators modify the spin when applied to a spin-weighted scalar. More precisely, $\drond'$ decreases the spin by $1$ while $\drond$ increases the spin by $1$.
\begin{defi}
    We define the spin-weighted Laplacian as
    \begin{align}
        \drond\drond'=\frac{1}{\sin\theta}\partial_\theta(\sin\theta\partial_\theta)+\frac{1}{\sin^2\theta}\partial_\phi^2+\frac{2is\cos\theta}{\sin^2\theta}\partial_\phi-(s^2\cot^2\theta+s).
    \end{align}
\end{defi}
\begin{rem}
    We also have the following identity
        \begin{align}
        \drond'\drond=\frac{1}{\sin\theta}\partial_\theta(\sin\theta\partial_\theta)+\frac{1}{\sin^2\theta}\partial_\phi^2+\frac{2is\cos\theta}{\sin^2\theta}\partial_\phi-(s^2\cot^2\theta-s)=\drond\drond'+2s.
    \end{align}
\end{rem}

\subsubsection{Spin-weighted spherical harmonics.} Let $s$ be a fixed spin. The spin-weighted spherical harmonics are the eigenfunctions of the spin-weighted Laplacian, which is self-adjoint on $L^2(\sph^2)$. They are given by the following family:
$$Y_{m,\ell}^s(\cos\theta)e^{im\phi},\quad\ell\geq |s|, -\ell\leq m\leq \ell$$
of spin $s$ scalars on the sphere $\sph^2$, and form a complete orthonormal basis of the space of spin $s$ scalars on $\sph^2$. When considering the spheres of Kerr interior, as the B-L coordinate $\phi$ is singular at the horizons, we need a slightly modified family. Let $r_0\in(r_-,r_+)$. Then for $(u,\ubar)$ such that $r(u,\ubar)\in [r_0,r_+]$, the family
$$Y_{m,\ell}^s(\cos\theta)e^{im\phi_+},\quad\ell\geq |s|, -\ell\leq m\leq \ell$$
is a complete orthonormal basis of the space of spin $s$ scalars on $S(u,\ubar)$. Similarly, if $r(u,\ubar)\in [r_-,r_0]$, the family
$$Y_{m,\ell}^s(\cos\theta)e^{im\phi_-},\quad\ell\geq |s|, -\ell\leq m\leq \ell$$
form a complete orthonormal basis of the space of spin $s$ scalars on $S(u,\ubar)$. We recall the following standard facts:
\begin{align}
    \drond'\drond(Y_{m,\ell}^s(\cos\theta)e^{im\phi_\pm})&=-(\ell-s)(\ell+s+1)Y_{m,\ell}^s(\cos\theta)e^{im\phi_\pm},\label{eq:dronddrond'spin}\\
    \drond\drond'(Y_{m,\ell}^s(\cos\theta)e^{im\phi_\pm})&=-(\ell+s)(\ell-s+1)Y_{m,\ell}^s(\cos\theta)e^{im\phi_\pm},\label{eq:annulemode}\\
    \drond(Y_{m,\ell}^s(\cos\theta)e^{im\phi_\pm})&=-\sqrt{(\ell-s)(\ell+s+1)}Y_{m,\ell}^{s+1}(\cos\theta)e^{im\phi_\pm},\label{eq:drooo}\\
    \drond'(Y_{m,\ell}^s(\cos\theta)e^{im\phi_\pm})&=\sqrt{(\ell+s)(\ell-s+1)}Y_{m,\ell}^{s-1}(\cos\theta)e^{im\phi_\pm}.
\end{align}
\begin{defi}
    Let $\psi$ be a spin $-2$ scalar. We define
    $$(\psi)^{\pm}_{m,\ell}:=\int_{S(u,\ubar)}\psi\cdot \overline{Y_{m,\ell}^{-2}(\cos\theta)e^{im\phi_\pm}}\dee\nu.$$
\end{defi}
\begin{defi}We denote by $\mathbf{P}_{\ell}$
    the orthogonal projection on the $\ell$ mode for spin $-2$ scalars. It is the linear operator defined by: 
    $$\mathbf{P}_{\ell}(\psi)=\sum_{|m|\leq\ell}(\psi)^{\pm}_{m,\ell}Y_{m,\ell}^{-2}(\cos\theta)e^{im\phi_\pm}.$$
    Note that this definition does not depend on the chosen sign $\pm$. Next, for $\ell'\geq 2$ we define 
$$\mathbf{P}_{\ell\geq\ell'}:=\sum_{\ell\geq\ell'}\mathbf{P}_{\ell}=\mathrm{Id}-\sum_{2\leq\ell<\ell'}\mathbf{P}_{\ell}.$$
     We will also use the notations 
    $$(\psi)_{\ell}=\mathbf{P}_{\ell}(\psi),\quad(\psi)_{\ell\geq\ell'}=\mathbf{P}_{\ell\geq\ell'}(\psi) .$$
\end{defi}
The following result is a standard fact, and is responsible for the mode coupling that appears when projecting the Teukolsky equation onto $\ell$ modes in Kerr spacetimes.
\begin{prop}\label{prop:modecoupling}
    There are constants $\{c_{m,\ell}^{-2}\}_{|m|\leq\ell}$ and $\{b_{m,\ell}^{-2}\}_{|m|\leq\ell}$ such that for any spin ${-2}$ scalar $\psi$ and any $\ell\geq 2$, we have 
\begin{align}
    &(\cos\theta\psi)^\pm_{m,\ell}=\sum_{\ell'=\ell-1}^{\ell+1}b_{m,\ell'}^{-2}\psi^\pm_{m,\ell'},\label{eq:projcostheta}\\
    &(\sin^2\theta\psi)^\pm_{m,\ell}=\sum_{\ell'=\ell-2}^{\ell+2}c_{m,\ell'}^{-2}\psi^\pm_{m,\ell'}.
\end{align}

\end{prop}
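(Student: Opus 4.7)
The plan is to reduce both identities to the classical product rule for spin-weighted spherical harmonics. First, I would expand
$$\psi=\sum_{\ell'\geq 2}\sum_{|m'|\leq\ell'}\psi^{\pm}_{m',\ell'}\,Y^{-2}_{m',\ell'}(\cos\theta)e^{im'\phi_{\pm}}$$
in the orthonormal basis introduced just above the proposition. Since $\cos\theta$ and $\sin^{2}\theta$ are $\phi$-independent, multiplication by either commutes with the azimuthal projector onto $e^{im'\phi_\pm}$, which already forces the resulting couplings to preserve $m'$. The proposition then reduces to establishing the pointwise product expansions
\begin{align*}
\cos\theta\,Y^{-2}_{m,\ell'}(\cos\theta)&=\sum_{\ell=\max(\ell'-1,\,2)}^{\ell'+1} \beta_{m,\ell',\ell}\,Y^{-2}_{m,\ell}(\cos\theta),\\
\sin^{2}\theta\,Y^{-2}_{m,\ell'}(\cos\theta)&=\sum_{\ell=\max(\ell'-2,\,2)}^{\ell'+2} \gamma_{m,\ell',\ell}\,Y^{-2}_{m,\ell}(\cos\theta),
\end{align*}
and then taking the $L^{2}(\sph^{2})$ inner product against $Y^{-2}_{m,\ell}(\cos\theta)e^{im\phi_\pm}$ to read off the constants $b^{-2}_{m,\ell'}$, $c^{-2}_{m,\ell'}$ of the claim.

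The structural input is the general multiplication rule
$$Y^{0}_{0,L}(\cos\theta)\cdot Y^{-2}_{m,\ell'}(\cos\theta)=\sum_{\ell\geq 2}C_{L,\ell,\ell',m}\,Y^{-2}_{m,\ell}(\cos\theta),$$
valid for axisymmetric spin-$0$ factors, where the sum is restricted by the triangle inequality $|\ell'-L|\leq\ell\leq \ell'+L$ and the coefficients $C_{L,\ell,\ell',m}$ are explicit products of Clebsch-Gordan coefficients. Combined with $\cos\theta\propto Y^{0}_{0,1}(\cos\theta)$ and the fact that $\sin^{2}\theta=1-\cos^{2}\theta$ lies in the span of $Y^{0}_{0,0}$ and $Y^{0}_{0,2}$, the triangle inequality immediately restricts the coupling to $\ell\in\{\ell'-1,\ell',\ell'+1\}$ in the $\cos\theta$ case (take $L=1$) and to $\ell\in\{\ell'-2,\ldots,\ell'+2\}$ in the $\sin^{2}\theta$ case (combining $L=0$ and $L=2$), which is precisely the claim.

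The multiplication rule itself is standard and can be proved either by identifying $Y^{-s}_{m,\ell}$ with Wigner $D$-matrix elements of $\mathrm{SU}(2)$ and invoking the Clebsch-Gordan decomposition for tensor products, or alternatively by iterated application of the raising and lowering operators $\drond,\drond'$ recalled in the preliminary subsection (using that $\cos\theta$, viewed as a spin-$0$ $\ell=1$ object, shifts the polynomial degree of the Jacobi weight by one, so it can raise or lower $\ell$ by at most one). There is no significant analytic difficulty—everything is purely algebraic and representation-theoretic—and since only the mode-coupling \emph{range}, and not the explicit values of $b^{-2}_{m,\ell'}$ and $c^{-2}_{m,\ell'}$, is used in the rest of the paper, the proof reduces to verifying the stated supports. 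The only mild subtlety worth noting is the edge case $\ell'\in\{2,3\}$, where some nominally allowed $\ell$-values fall below $2$ and the corresponding coefficients are understood to vanish.
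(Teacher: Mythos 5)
Your argument is correct: this is the standard Clebsch--Gordan (or raising/lowering) proof that multiplication by $\cos\theta\propto Y^{0}_{0,1}$ couples $\ell'$ only to $\{\ell'-1,\ell',\ell'+1\}$ and $\sin^2\theta\in\mathrm{span}(Y^0_{0,0},Y^0_{0,2})$ only to $|\ell-\ell'|\le 2$, with the spin and azimuthal number preserved. The paper does not prove this itself but simply cites \cite[Prop.~2.13]{MZ23}, so you are supplying the expected argument rather than diverging from it. One minor remark: your coefficients $\beta_{m,\ell',\ell}$ correctly carry a dependence on both $\ell$ and $\ell'$, whereas the paper's notation $b^{-2}_{m,\ell'}$ suppresses the $\ell$-dependence; this is a harmless abuse since only the coupling range (and, for the diagonal uses in the paper, a fixed pair $(\ell,\ell')$) is ever needed.
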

\begin{proof}
    See for example \cite[Prop. 2.13]{MZ23}.
\end{proof}
\begin{defi}
    We define the following spin-weighted operator:
\begin{align}\label{eq:defU}
    \mcu:=\frac{1}{\sin\theta}\Phi+a\sin\theta T+is\cot\theta\frac{\Sigma}{r^2+a^2}.
\end{align}
\end{defi}
This operator $\mcu$ is useful to write a Poincaré-type inequality in Kerr, see \cite{G24}. For a spin-weighted scalar $\psi$ we have $\mcu\psi\in L^\infty(S(u,\ubar))$ (see for example \cite[(2.34)]{sbierski}). The following integration by parts lemma is \cite[Lem. 2.10]{G24}:
\begin{lem}\label{lem:U}
    Let $\psi,\varphi$ be spin $s$ scalars. Then 
    $$\intS\overline{\varphi}\:\mcu\psi\dee\nu=T\left(\intS a\sin\theta\overline{\varphi}\psi\dee\nu\right)-\intS\overline{\mcu\varphi}\psi\dee\nu.$$
\end{lem}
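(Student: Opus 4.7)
The plan is to decompose $\mcu$ according to its three defining terms and handle each contribution separately. Writing
$$\intS\overline{\varphi}\,\mcu\psi\,\dee\nu = \intS\overline{\varphi}\,\frac{1}{\sin\theta}\Phi\psi\,\dee\nu + \intS\overline{\varphi}\,a\sin\theta\,T\psi\,\dee\nu + \intS\overline{\varphi}\,is\cot\theta\,\tfrac{\Sigma}{r^2+a^2}\psi\,\dee\nu,$$
I would treat these three integrals using integration by parts in $\phi_\pm$, differentiation under the integral sign in $t$, and a conjugation identity, respectively.

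For the first term, I expand the volume form $\dee\nu=\sin\theta\,\dee\theta\,\dee\phi_\pm$ so that the $1/\sin\theta$ cancels, giving $\int_0^\pi\!\int_0^{2\pi}\overline{\varphi}\,\partial_{\phi_\pm}\psi\,\dee\theta\,\dee\phi_\pm$. Integrating by parts in $\phi_\pm$ produces no boundary contribution by periodicity on $\sph^2$, and after re-inserting $\sin\theta/\sin\theta$ it yields
$$\intS\overline{\varphi}\,\tfrac{1}{\sin\theta}\Phi\psi\,\dee\nu = -\intS\overline{\tfrac{1}{\sin\theta}\Phi\varphi}\,\psi\,\dee\nu.$$
For the third term, the coefficient $is\cot\theta\,\Sigma/(r^2+a^2)$ is purely imaginary and real-valued on the sphere, so complex conjugation flips its sign, yielding
$$\intS\overline{\varphi}\,is\cot\theta\,\tfrac{\Sigma}{r^2+a^2}\psi\,\dee\nu = -\intS\overline{is\cot\theta\,\tfrac{\Sigma}{r^2+a^2}\varphi}\,\psi\,\dee\nu.$$

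The second term is the only one requiring care. The idea is to write $\overline{\varphi}\,T\psi = T(\overline{\varphi}\,\psi) - \overline{T\varphi}\,\psi$ and then pull $T$ out of the spherical integral. To justify this commutation, I would use that in double null-like coordinates $(\ubar,u,\theta,\phi_\pm)$, the Killing field $T=\partial_t$ expressed via the change of variables $\ubar=r^*+t$, $u=r^*-t$, $\phi_\pm=\phi\pm r_{mod}$ reads $T=\partial_\ubar-\partial_u$, whose integral curves preserve $(\theta,\phi_\pm)$ and whose coefficients are independent of the sphere coordinates. Since $S(u,\ubar)$ is parametrized by $(\theta,\phi_\pm)$ with fixed limits, differentiation in $\ubar$ and $u$ commutes with the $\theta$, $\phi_\pm$ integration. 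Combined with $T(a\sin\theta)=0$, this gives
$$\intS\overline{\varphi}\,a\sin\theta\,T\psi\,\dee\nu = T\!\left(\intS a\sin\theta\,\overline{\varphi}\,\psi\,\dee\nu\right) - \intS\overline{a\sin\theta\,T\varphi}\,\psi\,\dee\nu.$$

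Summing the three contributions collapses the three bulk terms into $-\intS\overline{\mcu\varphi}\,\psi\,\dee\nu$, leaving the announced $T$-boundary term and yielding the identity. The main (mild) obstacle is bookkeeping: verifying that the $\phi_\pm$-integration by parts produces no boundary contribution (periodicity), that the $\theta$-integration has no boundary issue despite the apparent $\cot\theta$ and $1/\sin\theta$ factors (they are absorbed into smooth quantities because $\varphi,\psi$ are spin-weighted scalars, so that $\mcu\psi\in L^\infty(S(u,\ubar))$ as recalled just before the lemma), and that $T$ genuinely commutes with integration over $S(u,\ubar)$ in the chosen coordinates.
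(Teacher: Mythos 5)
Your proof is correct. The paper does not actually prove this lemma — it cites it as [Lem.~2.10] of \cite{G24} — but your argument is the natural direct computation that such a proof consists of: term 3 is handled by pure conjugation of the imaginary zeroth-order coefficient, term 1 by $\phi_\pm$-periodicity after the $1/\sin\theta$ cancels against the volume form, and term 2 by writing $T=\partial_{\ubar}-\partial_u$ in double-null-like coordinates so that it commutes with integration over $S(u,\ubar)$ and annihilates $a\sin\theta$ and $\dee\nu$. Your closing remarks on regularity are also the right ones to make: each of the three integrands is separately absolutely integrable against $\sin\theta\,\dee\theta\,\dee\phi_\pm$ (the apparent $1/\sin\theta$ and $\cot\theta$ singularities are killed by the volume form, and $\overline{\varphi}\psi$ is a genuine spin-$0$ function on the sphere), so the term-by-term treatment is legitimate.
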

We can express the Killing vector fields $T$ and $\Phi$ with $e_3$, $e_4$ and $\mcu$ in the following way:
\begin{align}\label{eq:expreT}
        T&=\frac{r^2+a^2}{\Sigma}(\mu e_3+e_4)-\frac{a\sin\theta}{\Sigma}\mcu+\frac{ias\cos\theta}{r^2+a^2},\\        \Phi&=\frac{\sin\theta(r^2+a^2)}{\Sigma}(\mcu-a\sin\theta e_4-a\sin\theta\mu e_3)-is\cos\theta.\label{eq:exprephi}
    \end{align}
    In particular, $T=O(\mu)e_3+O(1)e_4+O(\sin\theta)\mcu+O(1)$, and $\Phi=O(\mu)e_3+O(1)e_4+O(\sin\theta)\mcu+O(1)$.
\subsection{Analytic preliminaries}
\begin{defi}
We define the spin-weighted Carter operator 
\begin{align}\label{eq:carter}
    \mcq_s:=a^2\sin^2\theta T^2-2ias\cos\theta T +\drond'\drond.
\end{align}
\end{defi}
This operator is crucial in our analysis because it commutes with the Teukolsky operator (see \eqref{eq:commutateurs}), and thus will help control angular derivatives of solutions of the Teukolsky equation. We now state a Sobolev inequality on $S(u,\ubar)$ for spin-weighted scalars. It will be used throughout the paper to deduce pointwise decay from $L^2(S(u,\ubar))$ decay of Carter derivatives and $T$ derivatives.
\begin{prop}\label{prop:sobolev}
    Let $\psi$ be a spin $\pm 2$ scalar. For any $(u,\ubar)$ we have
\begin{align}\label{eq:sobolev}
    \|\psi\|^2_{L^\infty(S(u,\ubar))}{\lesssim}\int_{S(u,\ubar)}\left(|T^{\leq 2}\psi|^2+|\mcq_s\psi|^2\right)\:\dee\nu.
\end{align}
\end{prop}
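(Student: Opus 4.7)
The plan is to combine a standard Sobolev embedding for spin-weighted scalars on the round sphere with the identity expressing the spin-weighted Laplacian $\drond'\drond$ in terms of the Carter operator $\mcq_s$ and $T$-derivatives, which can be read off directly from \eqref{eq:carter}. Note that since $\|\cdot\|_{L^\infty(S(u,\ubar))}$ is intrinsic to the set $S(u,\ubar)$ and $\|\cdot\|_{L^2(S(u,\ubar))}$ is defined with the round volume element $\dee\nu$, the problem reduces to a purely round-sphere statement parametrised by $(u,\ubar)$.

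The first step is to establish the following spin-weighted Sobolev embedding: for any spin $s=\pm 2$ scalar $\psi$,
\begin{align}\label{eq:plan-sobolev}
\|\psi\|^2_{L^\infty(S(u,\ubar))}\lesssim \|\psi\|^2_{L^2(S(u,\ubar))}+\|\drond'\drond\psi\|^2_{L^2(S(u,\ubar))}.
\end{align}
I would prove \eqref{eq:plan-sobolev} by expanding $\psi=\sum_{\ell\geq |s|, |m|\leq\ell}c_{\ell,m}Y_{m,\ell}^{s}e^{im\phi_{\pm}}$ and using \eqref{eq:dronddrond'spin} to compute
$$\|\drond'\drond\psi\|^2_{L^2(S(u,\ubar))}=\sum_{\ell,m}(\ell-s)^2(\ell+s+1)^2|c_{\ell,m}|^2.$$
Since $(\ell-s)(\ell+s+1)\sim\ell^2$ for $\ell\geq |s|+1$, this together with the $L^2$-norm controls $\sum_{\ell,m}(1+\ell^4)|c_{\ell,m}|^2$ (the $\ell=|s|$ kernel mode for $s=+2$ is handled by the $L^2$-term). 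Combining with the classical pointwise bound $\|Y_{m,\ell}^{s}\|_{L^\infty(\sph^2)}\lesssim \ell^{1/2}$, Cauchy–Schwarz yields
$$|\psi|^2\lesssim \left(\sum_{\ell\geq|s|}\frac{2\ell+1}{(1+\ell^4)\ell^{-1}}\right)\sum_{\ell,m}(1+\ell^4)|c_{\ell,m}|^2,$$
where the prefactor sum converges, giving \eqref{eq:plan-sobolev}.

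The second step is purely algebraic: from the definition \eqref{eq:carter} we rewrite
$$\drond'\drond\psi=\mcq_s\psi-a^2\sin^2\theta\, T^2\psi+2ias\cos\theta\, T\psi,$$
and since $a^2\sin^2\theta$ and $2as\cos\theta$ are bounded by constants depending only on $a$ and $s$, the triangle inequality yields
$$\|\drond'\drond\psi\|^2_{L^2(S(u,\ubar))}\lesssim \|\mcq_s\psi\|^2_{L^2(S(u,\ubar))}+\|T^{\leq 2}\psi\|^2_{L^2(S(u,\ubar))}.$$
Plugging this into \eqref{eq:plan-sobolev} and absorbing $\|\psi\|_{L^2(S(u,\ubar))}^2$ into $\|T^{\leq 2}\psi\|_{L^2(S(u,\ubar))}^2$ (by the $T^{\leq 2}$ convention which includes $i=0$) gives the claim \eqref{eq:sobolev}.

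The main obstacle is the first step: verifying the spin-weighted Sobolev embedding \eqref{eq:plan-sobolev}. This is a standard elliptic/spectral statement for sections of the relevant line bundles on $\sph^2$, but requires careful bookkeeping of the spin-weighted spherical harmonic expansion and the pointwise sup-norm bound on $Y_{m,\ell}^{s}$; everything else is a direct computation from \eqref{eq:carter}.
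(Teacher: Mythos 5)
Your proof is correct and follows essentially the same route as the paper, which simply invokes the standard spin-weighted Sobolev embedding on $\mathbb{S}^2$ (via \cite[Prop.~2.16]{G24}) and then rewrites $\drond'\drond$ in terms of $\mcq_s$ and $T$ using \eqref{eq:carter}. Your spectral derivation of the embedding \eqref{eq:plan-sobolev} — including the correct treatment of the $\ell=|s|$ kernel mode for $s=+2$ and the $\|Y^s_{m,\ell}\|_{L^\infty}\lesssim\ell^{1/2}$ bound — is a valid instantiation of the "standard Sobolev inequality" the paper cites.
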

\begin{proof}
     The proof is a simple re-writing of the standard Sobolev inequality on $\mathbb{S}^2$, see \cite[Prop. 2.16]{G24}.
\end{proof}
\begin{prop}\label{prop:salva}
    Let $\psi$ be a spin $-2$ scalar. We have for $\ell\geq 2$, $|m|\leq\ell$,
    $$|(\psi)_{m,\ell}^\pm|\lesssim\ell^{-2}\|\mcq_{-2}^{\leq 1}T^{\leq 2}\psi\|_{L^\infty(S(u,\ubar))}.$$
\end{prop}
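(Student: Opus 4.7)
The approach is to extract the $\ell^{-2}$ factor from the angular Laplacian part of the Carter operator, which is the only part of $\mcq_{-2}$ that acts diagonally on the spin-weighted spherical harmonic basis with an eigenvalue that grows like $\ell^2$; the remaining terms of $\mcq_{-2}$ only produce finite-range mode coupling and will contribute lower-order control in terms of $T$-derivatives of $\psi$.

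Concretely, I would rearrange the definition \eqref{eq:carter} with $s=-2$ as
$$\drond'\drond\psi = \mcq_{-2}\psi - a^2\sin^2\theta\, T^2\psi - 4ia\cos\theta\, T\psi,$$
and project both sides onto the mode $Y_{m,\ell}^{-2}(\cos\theta)e^{im\phi_\pm}$ in $L^2(S(u,\ubar))$. Self-adjointness of $\drond'\drond$ combined with the eigenvalue identity \eqref{eq:dronddrond'spin} turns the left-hand side into $-(\ell+2)(\ell-1)(\psi)_{m,\ell}^\pm$, and since $(\ell+2)(\ell-1) = \ell^2 + \ell - 2 \geq \ell^2$ for all $\ell\geq 2$, one can divide and pick up the desired $\ell^{-2}$ factor. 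Applying Proposition \ref{prop:modecoupling} to the two terms containing the $\sin^2\theta$ and $\cos\theta$ multipliers expresses them as finite linear combinations of $(T^2\psi)_{m,\ell'}^\pm$ and $(T\psi)_{m,\ell'}^\pm$ for $|\ell-\ell'|\leq 2$, with coefficients $c_{m,\ell'}^{-2}, b_{m,\ell'}^{-2}$ that are bounded by a universal constant (they are matrix elements of multiplication by functions of modulus at most $1$ between orthonormal spin-weighted spherical harmonics, hence bounded by $1$ via Cauchy-Schwarz).

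To conclude, I would bound each individual Fourier coefficient appearing on the right-hand side by the corresponding $L^2(S(u,\ubar))$ norm via Cauchy-Schwarz, and then by the corresponding $L^\infty(S(u,\ubar))$ norm using the finite area of the sphere. Since there are only finitely many such coefficients (at most five for the $T^2$ term, three for the $T$ term, and one $(\mcq_{-2}\psi)_{m,\ell}^\pm$ term), this yields
$$|(\psi)_{m,\ell}^\pm| \lesssim \ell^{-2}\bigl(\|T^{\leq 2}\psi\|_{L^\infty(S(u,\ubar))} + \|\mcq_{-2}\psi\|_{L^\infty(S(u,\ubar))}\bigr),$$
which is absorbed into the right-hand side of the claim, since $\|\mcq_{-2}^{\leq 1}T^{\leq 2}\psi\|_{L^\infty(S(u,\ubar))}$ contains both of these terms by the summation convention from the Notations subsection. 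There is no substantive obstacle here; the argument is essentially a projection identity combined with the spectral gap of $\drond'\drond$ on spin $-2$ scalars, and the only point requiring some attention is the uniform-in-$(\ell,m)$ boundedness of the coupling coefficients in Proposition \ref{prop:modecoupling}, which is immediate from orthonormality and $|\sin^2\theta|,|\cos\theta|\leq 1$.
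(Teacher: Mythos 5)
Your argument is correct and is essentially the paper's proof: project, use self-adjointness and the eigenvalue $-(\ell+2)(\ell-1)$ of $\drond'\drond$ to extract the $\ell^{-2}$ factor, then rewrite $\drond'\drond$ via \eqref{eq:carter} and bound the resulting Fourier coefficients by $L^2(S(u,\ubar))$ and hence $L^\infty(S(u,\ubar))$ norms. The only difference is your detour through Proposition \ref{prop:modecoupling} for the $\sin^2\theta$ and $\cos\theta$ terms, which is harmless but unnecessary, since $|(\sin^2\theta\, T^2\psi)_{m,\ell}^\pm|\leq\|T^2\psi\|_{L^2(S(u,\ubar))}$ follows directly from Cauchy--Schwarz and $|\sin^2\theta|\leq 1$.
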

\begin{proof}
    We use the following identity, which is deduced from \eqref{eq:dronddrond'spin}, and the fact that $\drond'\drond$ is self-adjoint on the sphere:
    $$(\psi)_{m,\ell}^\pm=\frac{1}{(\ell+2)(\ell-1)}(\drond'\drond\psi)_{m,\ell}^\pm,$$
    and all is left to do is to replace the spin-weighted Laplacian $\drond'\drond$ with the Carter operator and $T$ derivatives using \eqref{eq:carter}.
\end{proof}
Next, we state a Poincaré-type inequality which will be used to absorb the $0$ order terms in the energy estimate for Teukolsky.
\begin{defi}\label{defi:energy}
    Let $\psi$ be a spin weighted scalar. We define respectively the energy density and degenerate energy density of $\psi$ as 
    $$\ener[\psi]:=|e_3\psi|^2+|e_4\psi|^2+|\mcu\psi|^2+|\partial_\theta\psi|^2,$$
    $$\ener_{deg}[\psi]:=\mu^2|e_3\psi|^2+|e_4\psi|^2+|\mcu\psi|^2+|\partial_\theta\psi|^2.$$
\end{defi}
\begin{prop}
Let $\psi$ be a spin $-2$ scalar. For any $(u,\ubar)$ we have the Poincaré inequality
\begin{align}\label{eq:poincare}
    \|\psi\|^2_{L^2(S(u,\ubar))}{\lesssim}\int_{S(u,\ubar)}\mathbf{e}_{deg}[\psi]\dee\nu.
\end{align}
\end{prop}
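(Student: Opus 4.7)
The plan is to combine the spin-weighted spectral gap on the round sphere with an explicit rewriting of the eth operator $\drond$ in Kerr-adapted notation, followed by an absorption argument. For the spectral gap, I will use that the spin $-2$ spherical harmonics $\{Y^{-2}_{m,\ell}(\cos\theta)e^{im\phi_\pm}\}_{\ell\geq 2, |m|\leq\ell}$ form an orthonormal basis of spin $-2$ scalars on $S(u,\ubar)$ and, by \eqref{eq:dronddrond'spin}, diagonalize $-\drond'\drond$ with eigenvalues $(\ell+2)(\ell-1)\geq 4$. Integrating by parts on the sphere (where $\drond$ and $-\drond'$ are formal adjoints) then gives the sharp estimate
$\|\drond\psi\|^2_{L^2(S(u,\ubar))} \geq 4\|\psi\|^2_{L^2(S(u,\ubar))}$.

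Next, I will rewrite $\drond\psi = \partial_\theta\psi + \frac{i}{\sin\theta}\Phi\psi + 2\cot\theta\psi$ in terms of $(\partial_\theta, \mcu, T)$. Substituting the definition \eqref{eq:defU} of $\mcu$ to eliminate $\frac{1}{\sin\theta}\Phi$, and using the algebraic identity $\Sigma/(r^2+a^2) = 1 - a^2\sin^2\theta/(r^2+a^2)$ to cancel the singular $\cot\theta$ piece, one finds
$$\drond\psi = \partial_\theta\psi + i\mcu\psi - ia\sin\theta\, T\psi + \frac{2a^2\sin\theta\cos\theta}{r^2+a^2}\psi.$$
Using \eqref{eq:expreT} to express $T = \frac{r^2+a^2}{\Sigma}(\mu e_3 + e_4) - \frac{a\sin\theta}{\Sigma}\mcu + \text{(bounded)}$, the term $|a\sin\theta\, T\psi|^2$ is pointwise controlled by $\mu^2|e_3\psi|^2 + |e_4\psi|^2 + |\mcu\psi|^2 + \sin^2\theta|\psi|^2$; the precise matching of the $\mu$-weight in front of $e_3$ is crucial and reflects why the energy density is degenerate. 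Together this gives the pointwise bound $|\drond\psi|^2 \lesssim \ener_{deg}[\psi] + \sin^2\theta|\psi|^2$.

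Combining the two steps yields a preliminary inequality of the form $\|\psi\|^2_{L^2(S)} \leq C_1\int_S\ener_{deg}[\psi]\dee\nu + C_2\|\psi\|^2_{L^2(S)}$. The hard part is to ensure $C_2<1$ so the $\|\psi\|^2$ term can be absorbed on the left, and this will be the main obstacle. I will achieve it by sharpening the argument via a direct computation of the cross term appearing in $\|\partial_\theta\psi + i\mcu\psi\|^2_{L^2(S)} = \|\partial_\theta\psi\|^2 + \|\mcu\psi\|^2 + 2\Re\langle \partial_\theta\psi, i\mcu\psi\rangle_{L^2(S)}$: integration by parts in $\theta$ (the boundary terms at the poles vanish because spin $-2$ scalars decay to order $\sin^2\theta$ there) gives the identity $2\Re\int\cos\theta\,\partial_\theta\psi\,\overline{\psi}\,d\theta d\phi = \|\psi\|^2_{L^2(S)}$, while a Fourier argument in $\phi$ shows that $\int\partial_\theta\psi\,\partial_\phi\overline{\psi}\,d\theta d\phi$ is real. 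This yields a sharper equality $\|\partial_\theta\psi\|^2 + \|\mcu\psi\|^2 = \|\drond\psi\|^2 - 2\|\psi\|^2 + O(|a|)(\text{terms})$, where the $O(|a|)$ remainders are Cauchy--Schwarzed with a small parameter $\varepsilon$ and absorbed into $\int_S\ener_{deg}[\psi]\dee\nu$. Combined with the spectral gap $\|\drond\psi\|^2\geq 4\|\psi\|^2$, this yields the desired $\|\psi\|^2_{L^2(S)} \lesssim \int_S\ener_{deg}[\psi]\dee\nu$.
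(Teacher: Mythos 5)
Your first two steps are sound: the spectral gap $\intS|\drond\psi|^2\dee\nu\geq 4\|\psi\|^2_{L^2(S(u,\ubar))}$ for spin $-2$ scalars, and the rewriting $\drond\psi=\partial_\theta\psi+i\mcu\psi-ia\sin\theta T\psi+\frac{2a^2\sin\theta\cos\theta}{r^2+a^2}\psi$, are both correct. The gap is in your last paragraph. The zeroth-order remainders you generate cannot be ``absorbed into $\intS\ener_{deg}[\psi]\dee\nu$'', since $\ener_{deg}$ contains no $|\psi|^2$ term; they must instead be dominated by the margin left by the spectral gap, and that fails on the full subextremal range. Concretely, the cross term $2\Real\intS\partial_\theta\psi\,\overline{i\mcu\psi}\,\dee\nu$ is not $+2\|\psi\|^2$ when $a\neq0$: carrying the weight $\Sigma/(r^2+a^2)$ through the integration by parts in $\theta$ produces $2\intS\frac{r^2+3a^2\cos^2\theta}{r^2+a^2}|\psi|^2\dee\nu$, whose coefficient can reach $2+\frac{4a^2}{r_-^2+a^2}$. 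Since $a>r_-$ on most of the range $0<|a|<M$ (e.g.\ $a=M/2$ gives $r_-\approx 0.13M$), this exceeds $4$ and already swamps the spectral gap, before one even counts the $\varepsilon^{-1}a^2(\cdots)|\psi|^2$ contributions produced by Cauchy--Schwarzing the $a\sin\theta\,T\psi$ cross terms. Your scheme closes only for $|a|$ small, whereas the proposition is needed for all $0<|a|<M$.

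The fix --- and the reason the paper can call this a ``simple re-writing'' of the spherical Poincar\'e inequality --- is an exact cancellation you stop just short of: substitute \eqref{eq:expreT} into your formula for $\drond\psi$ \emph{before} squaring. The zeroth-order part of $-ia\sin\theta\,T\psi$ is exactly $-\frac{2a^2\sin\theta\cos\theta}{r^2+a^2}\psi$, which cancels the explicit $+\frac{2a^2\sin\theta\cos\theta}{r^2+a^2}\psi$ identically, while the two $\mcu$ contributions combine via $1+\frac{a^2\sin^2\theta}{\Sigma}=\frac{r^2+a^2}{\Sigma}$ and the $T$ contributions cancel. One obtains the exact identity
\begin{align*}
    \drond\psi=\partial_\theta\psi+\frac{i(r^2+a^2)}{\Sigma}\,\mcu\psi-\frac{ia\sin\theta(r^2+a^2)}{\Sigma}\left(\mu e_3\psi+e_4\psi\right),
\end{align*}
whose coefficients are uniformly bounded in the black hole interior. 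Hence $|\drond\psi|^2\lesssim\ener_{deg}[\psi]$ pointwise, with no $|\psi|^2$ remainder and with the correct $\mu$-weight on $e_3$, and \eqref{eq:poincare} follows immediately from the spectral gap with no absorption argument at all.
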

\begin{proof}
    The proof is a simple re-writing of the standard Poincaré inequality on $\mathbb{S}^2$, see \cite[Prop. 2.14]{G24}.
\end{proof}
The following result is used to deduce decay of the energy from an energy estimate.
\begin{lem}\label{lem:decay}
    Let $p>1$ and $0\leq\gamma<1$ and let $f:[1,+\infty) \rightarrow[0,+\infty)$ be a continuous function. Assume that there are constants $C_0>0, C_1>0, C_2 \geq 0$, $C_3 \geq 0$ such that for $1 \leq x_1<x_2$,
$$
f(x_2)+C_1 \int_{x_1}^{x_2}  x^{-\gamma}f(x) \mathrm{d} x \leq C_0 f(x_1)+C_2 \int_{x_1}^{x_2} x^{-p} \mathrm{~d} x+C_3 x_1^{-p} .
$$

Then for any $x_1 \geq 1$,
$$
f(x_1) \leq C x_1^{-p+\gamma}
$$
where $C$ is a constant that depends only on $f(1), C_0, C_1, C_2, C_3,p$, and $\gamma$.
\end{lem}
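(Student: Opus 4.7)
The plan is to extract pointwise decay of $f$ from the weighted integral on the left-hand side of the hypothesis, which is the only mechanism in the inequality that forces $f$ to decay. Since $p > 1$, I would first take $x_2 \to \infty$ at fixed $x_1$ to conclude that the tail integral $G(x_1) := \int_{x_1}^\infty y^{-\gamma} f(y) \, dy$ is finite and satisfies
\[
G(x_1) \leq \frac{C_0}{C_1} f(x_1) + C x_1^{-p+1}
\]
for some constant $C$ depending on $C_1, C_2, C_3, p$.

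Next, since $f$ is continuous, $G$ is $C^1$ with $G'(x) = -x^{-\gamma} f(x)$, so substitution of $f = -x^\gamma G'$ converts the above into a Grönwall-type linear differential inequality
\[
G'(x) + \frac{C_1}{C_0 x^\gamma} G(x) \leq C' x^{-p+1-\gamma}.
\]
Multiplying by the integrating factor $\mu(x) := \exp\bigl(\frac{C_1}{C_0(1-\gamma)} x^{1-\gamma}\bigr)$, well-defined because $\gamma < 1$, and integrating from $1$ to $x$, I would estimate the resulting integral $\int_1^x y^{-p+1-\gamma} \mu(y) \, dy$ by integration by parts using $\mu'(y) = \frac{C_1}{C_0 y^\gamma} \mu(y)$; this yields $\int_1^x y^{-p+1-\gamma} \mu(y) \, dy \sim C x^{-p+1} \mu(x)$ as $x \to \infty$, and hence the integrated decay $G(x) \lesssim x^{-p+1}$ uniformly for $x \geq 1$.

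Given $G(x) \lesssim x^{-p+1}$, a dyadic mean-value argument on $[x, 2x]$ gives $\int_x^{2x} f(y) \, dy \leq (2x)^\gamma G(x) \lesssim x^{-p+1+\gamma}$, so the mean value theorem produces some $y \in [x, 2x]$ with $f(y) \lesssim x^{-p+\gamma}$. To upgrade this to a pointwise bound at every $x_1 \geq 1$, I would propagate via the hypothesis applied as a forward-in-$x$ inequality between nearby points, combined with a bootstrap that re-injects the improved pointwise bound into the first estimate for $G$ above. Each cycle of mean-value plus propagation sharpens the decay exponent by a fixed amount, and in finitely many steps the rate saturates at the claimed value $x^{-p+\gamma}$.

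The main obstacle is this final bootstrap. A naive single application of the hypothesis between two nearby points gives $f(x_2) \leq C_0 f(y) + C y^{-p+1}$, whose source error $y^{-p+1}$ is strictly larger than the target $y^{-p+\gamma}$ since $\gamma < 1$; this reflects that the raw bound $\int_{x_1}^{x_2} x^{-p} dx \lesssim x_1^{-p+1}$ is inherently slower than the claimed decay. That the correct saturation rate is indeed $p - \gamma$ can be confirmed directly by substituting $f(x) = x^{-\beta}$ into the hypothesis and requiring it to hold for large $x_2$, which forces $\beta \geq p - \gamma$; the bootstrap must therefore converge to exactly this exponent.
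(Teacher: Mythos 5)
Your first two steps are sound: the Gr\"onwall argument for $G(x_1):=\int_{x_1}^\infty y^{-\gamma}f(y)\,\mathrm{d}y$ correctly yields $G(x)\lesssim x^{1-p}$ (the integration by parts against $\mu'(y)=\tfrac{C_1}{C_0}y^{-\gamma}\mu(y)$ does give $\int_1^x y^{1-p-\gamma}\mu(y)\,\mathrm{d}y\lesssim x^{1-p}\mu(x)$ because $y^{\gamma-1}\to 0$, and $\mu(x)^{-1}$ decays super-polynomially), and the mean-value step then produces, in every dyadic interval $[x,2x]$, a point $y$ with $f(y)\lesssim x^{\gamma-p}$. (For reference, the paper does not prove this lemma at all; it cites \cite[Lemma 3.4]{scalarMZ}.) The problem is the final step, which you honestly flag but do not close, and whose proposed mechanism would not work as described. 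Propagating from the good point $y\in[x_1/2,x_1]$ to $x_1$ costs $C_2\int_y^{x_1}s^{-p}\,\mathrm{d}s\sim x_1^{1-p}$, and no amount of iterating the dyadic mean-value-plus-propagation cycle removes this: if you assume $f\lesssim x^{-q}$, the dyadic cycle returns $f\lesssim x^{-q-(1-\gamma)}+x^{\gamma-p}+x^{1-p}$, so the exponent stalls at $p-1$, strictly short of $p-\gamma$. Your claim that ``each cycle sharpens the decay exponent by a fixed amount and in finitely many steps the rate saturates at $p-\gamma$'' is therefore not correct, and the consistency check with $f=x^{-\beta}$ only shows that $p-\gamma$ is the right candidate exponent, not that the bootstrap reaches it.

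The missing idea is to run the argument on intervals $[x_2-L,x_2]$ of an \emph{intermediate, optimized} length $1\le L\le x_2/2$, exploiting that over such an interval the source term is $C_2\int_{x_2-L}^{x_2}s^{-p}\,\mathrm{d}s\lesssim Lx_2^{-p}$ rather than the wasteful $x_2^{1-p}$. Applying the hypothesis on $[x_2-L,x_2]$ and the mean value theorem gives a point $z\in[x_2-L,x_2]$ with $f(z)\lesssim x_2^{\gamma}f(x_2-L)/L+x_2^{\gamma-p}$, and propagating from $z$ to $x_2$ costs an additional $\lesssim Lx_2^{-p}$. Under an inductive hypothesis $f(x)\le D_nx^{-q_n}$, balancing $x_2^{\gamma-q_n}D_n/L$ against $Lx_2^{-p}$ selects $L\sim\sqrt{D_n}\,x_2^{(p+\gamma-q_n)/2}$ and yields $f(x_2)\le D_{n+1}x_2^{-q_{n+1}}$ with $q_{n+1}=\tfrac{1}{2}(q_n+p-\gamma)$ and $D_{n+1}\lesssim\sqrt{D_n}+1$. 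The exponents converge to $p-\gamma$ only in the limit, never in finitely many steps, so one must also verify that the constants $D_n$ stay uniformly bounded (they do, because of the square-root recursion) in order to pass to the limit $n\to\infty$ at each fixed $x$ and conclude $f(x)\le Cx^{-p+\gamma}$. Without this two-scale optimization and the limiting argument, your proof establishes only $f\lesssim x^{1-p}$, which is weaker than the statement whenever $\gamma<1$.
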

\begin{proof}
    See \cite[Lemma 3.4]{scalarMZ}.
\end{proof}

\subsection{The Teukolsky equations}\label{section:system}
\subsubsection{Definition and expressions of the Teukolsky operators}
We define the Teukolsky operator $\teukhat_s$ as the operator on the LHS of the Teukolsky equation \eqref{eq:premteuk}, namely
\begin{align*}
     \teukhat_s:=-&\left[\frac{\left(r^2+a^2\right)^2}{\Delta}-a^2 \sin ^2 \theta\right] \partial_t^2 -\frac{4 M a r}{\Delta}\partial_t\partial_\phi  -\left[\frac{a^2}{\Delta}-\frac{1}{\sin ^2 \theta}\right] \partial_\phi^2 \nonumber \\
    & +\Delta^{-s} \partial_r\left(\Delta^{s+1} \partial_r \right)+\frac{1}{\sin \theta} \partial_\theta\left(\sin \theta \partial_\theta \right)+2 s\left[\frac{a(r-M)}{\Delta}+\frac{i \cos \theta}{\sin ^2 \theta}\right] \partial_\phi  \nonumber\\
    & +2 s\left[\frac{M\left(r^2-a^2\right)}{\Delta}-r-i a \cos \theta\right]\partial_t -\left[\frac{s^2 \cos ^2 \theta}{\sin ^2 \theta}-s\right].
\end{align*}
We also define the rescaled Teukolsky operator $\teuk_s$ by:
\begin{align}
    \teuk_s:=\Delta^{s}\teukhat_s(\Delta^{-s}\cdot).\label{eq:rescaledequation}
\end{align}
In the rest of the paper, $\Psihatm$ satisfies the Teukolsky equation $\teukhat_{-2}\Psihatm=0$. Thus, denoting $$\Psim:=\Delta^{-2}\Psihatm,$$
the Teukolsky equation rewrites $\teuk_{-2}\Psim=0$, where the expression of $\teuk_s$ in B-L coordinates is:
\begin{align*}
\teuk_s=  -&\left[\frac{\left(r^2+a^2\right)^2}{\Delta}-a^2 \sin ^2 \theta\right] T^2 -\frac{4 M a r}{\Delta} T \Phi -\left[\frac{a^2}{\Delta}-\frac{1}{\sin ^2 \theta}\right] \Phi^2  \\
& +\Delta^{-s} \partial_r\left(\Delta^{s+1} \partial_r \right)+\frac{1}{\sin \theta} \partial_\theta\left(\sin \theta \partial_\theta \right)+2 s\left[\frac{a(r-M)}{\Delta}+\frac{i \cos \theta}{\sin ^2 \theta}\right] \Phi  \\
& +2 s\left[\frac{M\left(r^2-a^2\right)}{\Delta}-r-i a \cos \theta\right] T -\left[\frac{s^2 \cos ^2 \theta}{\sin ^2 \theta}+s\right] -4 s(r-M) \partial_r.
\end{align*}
The Teukolsky operators can also be written as, see for example \cite[(3.3)]{MZ23},
\begin{align}
    &\teuk_s=-\frac{(r^2+a^2)^2-a^2\Delta\sin^2\theta}{\Delta}T^2+\partial_r(\Delta\partial_r)-\frac{4aMr}{\Delta}T\Phi-\frac{a^2}{\Delta}\Phi^2\nonumber\\
    &\quad\quad\quad+\drond\drond'-2ias\cos\theta T-2s[(r-M)\drin+2rT],\label{eq:opteuk}\\
    &\teukhat_s=-\frac{(r^2+a^2)^2-a^2\Delta\sin^2\theta}{\Delta}T^2+\partial_r(\Delta\partial_r)-\frac{4aMr}{\Delta}T\Phi-\frac{a^2}{\Delta}\Phi^2\nonumber\\
    &\quad\quad\quad+\drond'\drond-2ias\cos\theta T+2s[(r-M)\drout-2rT].\label{eq:opteukhat}
\end{align}
These expressions directly prove that $T$, $\Phi$, and the Carter operator $\mcq_s$ commute with the Teukolsky equation, i.e.
\begin{align}\label{eq:commutateurs}
    [T,\teukhat_s]=[T,\teuk_s]=[\mcq_s,\teukhat_s]=[\mcq_s,\teuk_s]=[\Phi,\teukhat_s]=[\Phi,\teuk_s]=0.
\end{align}
The following expressions of the Teukolsky operators will be convenient in the energy estimates:
\begin{prop}\label{prop:}
We have:
    \begin{align}\label{eq:teuke3e4}
    \teuk_s=&-4(r^2+a^2)e_3e_4+ \mcu^2+\dfrac{1}{\sin\theta}\partial_\theta(\sin\theta\partial_\theta)-4 ias\cos\theta T+2is\dfrac{a^2\sin\theta\cos\theta}{(r^2+a^2)}\mcu\nonumber\\
    &+2 r (e_4-\mu e_3)+4 s[(r-M)e_3- rT]+\dfrac{2ar }{r^2+a^2}\Phi- s-s^2\dfrac{a^4\sin^2\theta\cos^2\theta}{(r^2+a^2)^2},
\end{align}
    \begin{align}\label{eq:teukhate3e4}
    \teukhat_s=&-4(r^2+a^2)e_3e_4+ \mcu^2+\frac{1}{\sin\theta}\partial_\theta(\sin\theta\partial_\theta)-4 ias\cos\theta T+2is\frac{a^2\sin\theta\cos\theta}{(r^2+a^2)}\mcu\nonumber\\
    &+2 r (e_4-\mu e_3)+4 s[(r-M)\mu^{-1}e_4- rT]+\frac{2ar }{r^2+a^2}\Phi+ s-s^2\frac{a^4\sin^2\theta\cos^2\theta}{(r^2+a^2)^2}.
\end{align}
\end{prop}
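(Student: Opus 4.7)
The proof is a direct verification that \eqref{eq:teuke3e4} and \eqref{eq:teukhate3e4} match \eqref{eq:opteuk} and \eqref{eq:opteukhat} as differential operators in Boyer--Lindquist coordinates. The plan is to organize the right-hand side into four blocks---the principal null block $-4(r^2+a^2)e_3e_4$, the $\mcu^2$ block together with $\frac{1}{\sin\theta}\partial_\theta(\sin\theta\partial_\theta)$ and the zeroth-order corrections $-4ias\cos\theta T$, $2is\frac{a^2\sin\theta\cos\theta}{r^2+a^2}\mcu$, $-s$, $-s^2\frac{a^4\sin^2\theta\cos^2\theta}{(r^2+a^2)^2}$, the regular radial block $2r(e_4-\mu e_3)+4s[(r-M)e_3-rT]$ (resp.\ with $\mu^{-1}e_4$ instead of $e_3$), and the cancellation piece $\frac{2ar}{r^2+a^2}\Phi$---then to expand each block in terms of $\partial_r, T, \Phi, \partial_\theta$ and match coefficients against the left-hand side.

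For \eqref{eq:teuke3e4}, the expansion of $-4(r^2+a^2)e_3e_4$ using the definitions from Section \ref{section:geometricprel} yields the principal symbol $\Delta\partial_r^2-\frac{(r^2+a^2)^2}{\Delta}T^2-\frac{a^2}{\Delta}\Phi^2-\frac{2a(r^2+a^2)}{\Delta}T\Phi$ together with two commutator first-order remainders $\frac{2M(r^2-a^2)}{r^2+a^2}\partial_r$ and $-\frac{2ar}{r^2+a^2}\Phi$, produced by the action of $\partial_r$ on the coefficients $\frac{\Delta}{r^2+a^2}$ and $\frac{a}{r^2+a^2}$ inside $e_4$. Writing $\mcu=V+w$ with $V=\frac{1}{\sin\theta}\Phi+a\sin\theta T$ and $w=is\cot\theta\frac{\Sigma}{r^2+a^2}$, one notes that $V(w)=0$, so $\mcu^2=V^2+2wV+w^2$, whose principal part $V^2=\frac{1}{\sin^2\theta}\Phi^2+2aT\Phi+a^2\sin^2\theta T^2$ closes the $T\Phi$-gap via the algebraic identity $-\frac{4aMr}{\Delta}=-\frac{2a(r^2+a^2)}{\Delta}+2a$ (which follows from $\Delta=r^2-2Mr+a^2$), supplies the $a^2\sin^2\theta T^2$ piece of \eqref{eq:opteuk}, and cooperates with the $-\frac{a^2}{\Delta}\Phi^2$ from $-4(r^2+a^2)e_3e_4$ to reconstruct $\drond\drond'$'s $\Phi^2$-part. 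The sub-principal piece $2wV$, the zeroth-order $w^2$, and the blocks $2is\frac{a^2\sin\theta\cos\theta}{r^2+a^2}\mcu$ and $-s^2\frac{a^4\sin^2\theta\cos^2\theta}{(r^2+a^2)^2}$ then collapse, using $\frac{\Sigma}{r^2+a^2}=1-\frac{a^2\sin^2\theta}{r^2+a^2}$ and the crucial identity $\Sigma+a^2\sin^2\theta=r^2+a^2$, into precisely $\frac{2is\cos\theta}{\sin^2\theta}\Phi+2ias\cos\theta T-s^2\cot^2\theta$; together with $-4ias\cos\theta T$, $\frac{1}{\sin\theta}\partial_\theta(\sin\theta\partial_\theta)$ and $-s$ this reproduces $\drond\drond'-2ias\cos\theta T$ from \eqref{eq:opteuk}. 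The radial remainder $\frac{2M(r^2-a^2)}{r^2+a^2}\partial_r$ combines with $2r(e_4-\mu e_3)=\frac{2r\Delta}{r^2+a^2}\partial_r$ to give $2(r-M)\partial_r$ (via $2M(r^2-a^2)+2r\Delta=2(r-M)(r^2+a^2)$), matching $\partial_r(\Delta\partial_r)=\Delta\partial_r^2+2(r-M)\partial_r$; the spin-radial block $4s[(r-M)e_3-rT]$ equals $-2s[(r-M)\drin+2rT]$ via $\drin=-2e_3$ from \eqref{eqn:coordEFin}; and finally the cancellation piece $+\frac{2ar}{r^2+a^2}\Phi$ absorbs the opposite-sign commutator remainder from $-4(r^2+a^2)e_3e_4$.

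The computation for \eqref{eq:teukhate3e4} is identical except for two modifications: the identity $\drond'\drond=\drond\drond'+2s$ from the remark after \eqref{eq:dronddrond'spin} flips the zeroth-order $-s$ to $+s$, and the replacement $\drout=2\mu^{-1}e_4$ from \eqref{eqn:coordEFout} converts $+2s[(r-M)\drout-2rT]$ into $4s[(r-M)\mu^{-1}e_4-rT]$. The main obstacle is not conceptual but purely bookkeeping: the verification involves half a dozen sources of first-order $T,\Phi,\partial_r$ terms (commutators inside $e_3e_4$, the cross-terms $2wV$ inside $\mcu^2$, the expansion of $2is\frac{a^2\sin\theta\cos\theta}{r^2+a^2}\mcu$, the bulk term $\partial_r(\Delta\partial_r)$, the spin-radial coefficients of $\drin$ or $\drout$, and the explicit $\frac{2ar}{r^2+a^2}\Phi$), each producing $\Delta^{-1}$-singular or $\sin^{-2}\theta$-singular intermediate coefficients, and the identity relies on the algebraic collapses $\Sigma+a^2\sin^2\theta=r^2+a^2$, $2M(r^2-a^2)+2r\Delta=2(r-M)(r^2+a^2)$, and $-\frac{4aMr}{\Delta}=-\frac{2a(r^2+a^2)}{\Delta}+2a$ happening at several places in concert to produce the clean coefficients of the final expression.
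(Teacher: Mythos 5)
Your verification is correct: I checked the key coefficient collapses (in particular $(r^2+a^2)\mu'=\tfrac{2M(r^2-a^2)}{r^2+a^2}$, $2M(r^2-a^2)+2r\Delta=2(r-M)(r^2+a^2)$, $\Sigma+a^2\sin^2\theta=r^2+a^2$ in the $\Phi$, $T$ and zeroth-order blocks, and the $T\Phi$ identity $-\tfrac{2a(r^2+a^2)}{\Delta}+2a=-\tfrac{4aMr}{\Delta}$), and they all hold, so the expansion reproduces \eqref{eq:opteuk} and \eqref{eq:opteukhat} exactly. The paper itself simply cites \cite[Prop. 2.17]{G24} for this identity, so your direct coefficient-matching argument is precisely the computation that citation stands in for; there is nothing to add.
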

\begin{proof}
See \cite[Prop. 2.17]{G24}.
\end{proof}
The spin $+2$ equivalent of the following identity was crucial in \cite{G24}.
\begin{prop}\label{prop:voila}
    For any spin $-2$ scalar $\Psihat$, we have
    \begin{align*}
\ethat((r^2+a^2)\Delta^{-2}e_4\Psihat)=\frac{1}{4}\mu\Delta^{-2}\left[\teukhat_{-2}-\carterm-2aT\Phi-6rT\right]\Psihat.
\end{align*}
\end{prop}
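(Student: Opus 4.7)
The identity is a pointwise algebraic relation between two second-order linear differential operators, and the plan is to verify it by direct expansion in Boyer-Lindquist coordinates, matching coefficients on both sides.

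\textbf{Step 1: Unpack the LHS.} I will use $\ethat = -\mu e_3$ with $\mu = \Delta/(r^2+a^2)$, together with the Leibniz rule for $e_3$ applied to the three-factor product $(r^2+a^2)\Delta^{-2}e_4\Psihat$. Since $T$ and $\Phi$ annihilate $r^2+a^2$ and $\Delta$, the computations reduce to $e_3(r^2+a^2) = -\tfrac{1}{2}\partial_r(r^2+a^2) = -r$ and $e_3(\Delta^{-2}) = -\tfrac{1}{2}\partial_r(\Delta^{-2}) = 2(r-M)\Delta^{-3}$. Substituting and using $\mu\Delta^{-2} = \Delta^{-1}/(r^2+a^2)$, one obtains
\begin{equation*}
\ethat\bigl((r^2+a^2)\Delta^{-2}e_4\Psihat\bigr) = -\Delta^{-1}e_3e_4\Psihat + \frac{r}{(r^2+a^2)\Delta}e_4\Psihat - \frac{2(r-M)}{\Delta^2}e_4\Psihat.
\end{equation*}

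\textbf{Step 2: Reduce to an operator identity.} Multiplying both sides by $4\Delta(r^2+a^2) = (\tfrac14\mu\Delta^{-2})^{-1}$, the claim becomes
\begin{equation*}
-4(r^2+a^2)e_3e_4\Psihat + 4r e_4\Psihat - \frac{8(r-M)(r^2+a^2)}{\Delta}e_4\Psihat = \bigl(\teukhat_{-2} - \mcq_{-2} - 2aT\Phi - 6rT\bigr)\Psihat.
\end{equation*}
For the RHS, I start from formula \eqref{eq:opteukhat} for $\teukhat_{-2}$ and subtract $\mcq_{-2} = a^2\sin^2\theta\, T^2 + 4ia\cos\theta\, T + \drond'\drond$. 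The $\drond'\drond$ and $4ia\cos\theta T$ terms cancel, and the $T^2$ coefficient simplifies via $(r^2+a^2)^2 - a^2\Delta\sin^2\theta + a^2\Delta\sin^2\theta = (r^2+a^2)^2$. Using the crucial identity $2Mr + \Delta = r^2+a^2$, the $T\Phi$ coefficient becomes $-2a(r^2+a^2)/\Delta$ after subtracting $2aT\Phi$. Finally, using $\drout = \partial_r + \tfrac{r^2+a^2}{\Delta}T + \tfrac{a}{\Delta}\Phi$ (from $\drout = -2\eqhat = 2\mu^{-1}e_4$) to expand $-4(r-M)\drout$, the RHS becomes an explicit expression in $\partial_r^2,\partial_r,T^2,T\Phi,\Phi^2,T,\Phi$.

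\textbf{Step 3: Match coefficients.} For the LHS, I expand $e_3$ and $e_4$ via \eqref{eq:gaugeteuk} (or rather their rescaled versions in Section \ref{section:geometricprel}), carefully applying $e_3$ to the coefficients of $e_4\Psihat$. Two simplifications are useful: the $T\partial_r$ and $\Phi\partial_r$ cross-terms from $-\partial_r e_4$ and $\tfrac{r^2+a^2}{\Delta}T e_4$, respectively $\tfrac{a}{\Delta}\Phi e_4$, cancel exactly; and the $\partial_r$ coefficient from $-4(r^2+a^2)e_3 e_4$ simplifies to $\tfrac{2M(r^2-a^2)}{r^2+a^2}$ using $\partial_r\bigl(\tfrac{\Delta}{2(r^2+a^2)}\bigr) = \tfrac{M(r^2-a^2)}{(r^2+a^2)^2}$. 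Combining the three LHS contributions and using the algebraic identity $2M(r^2-a^2) + 2r\Delta - 4(r-M)(r^2+a^2) = -2(r-M)(r^2+a^2)$, one verifies the equality of every coefficient with those obtained for the RHS in Step 2. The main obstacle is purely one of bookkeeping: the number of $\partial_r$, $T$, $\Phi$ terms produced by the second-order operator $e_3 e_4$ is substantial, but there are no subtle cancellations beyond the two highlighted above and the elementary identity $2Mr + \Delta = r^2+a^2$.
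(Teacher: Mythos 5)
Your proposal is correct: I checked the coefficient matching and every claimed identity holds (in particular $2M(r^2-a^2)+2r\Delta=2(r-M)(r^2+a^2)$, which gives both the $\partial_r$-coefficient $-2(r-M)$ on the left and the simplification $2Mr+\Delta=r^2+a^2$ for the $T\Phi$-coefficient, and the exact cancellation of the $T\partial_r$ and $\Phi\partial_r$ cross-terms in $e_3e_4$). Your route is, however, genuinely different from the paper's. The paper never expands $e_3e_4$ in coordinates: it applies the Leibniz rule to get $\ethat((r^2+a^2)\Delta^{-2}e_4\Psihat)=\tfrac14\mu\Delta^{-2}(-4(r^2+a^2)e_3e_4-8(r-M)\mu^{-1}e_4+4re_4)\Psihat$, rewrites $4re_4=2r(e_4-\mu e_3)+2rT+\tfrac{2ra}{r^2+a^2}\Phi$, and then compares directly with the frame expression \eqref{eq:teukhate3e4} of $\teukhat_{-2}$, whose angular and $\mcu$ terms it reassembles into $\carterm+2aT\Phi+8rT$ via the definition \eqref{eq:carter} of the Carter operator. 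You instead start from the Boyer--Lindquist form \eqref{eq:opteukhat}, subtract $\carterm$ at the level of $\drond'\drond$, expand $\drout$ and $e_3e_4$ fully, and match coefficients in the basis $\{\partial_r^2,\partial_r,T^2,T\Phi,\Phi^2,T,\Phi\}$. What your approach buys is self-containedness: you bypass the frame identity \eqref{eq:teukhate3e4} (itself imported from \cite{G24}) and the nontrivial recombination of $\mcu^2$ with the spherical Laplacian into $\drond'\drond+a^2\sin^2\theta T^2+2aT\Phi+\cdots$, at the cost of more bookkeeping in the second-order expansion. Both verifications are complete and yield the same identity.
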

\begin{proof}
    We have 
    \begin{align*}
        \ethat((r^2+a^2)\Delta^{-2}e_4\psihat)&=r\mu\Delta^{-2} e_4\psihat-2\Delta^{-2}(r-M)e_4\psihat+(r^2+a^2)\Delta^{-2} \ethat e_4\psihat\\
        &=\frac{1}{4}\mu\Delta^{-2}(-4(r^2+a^2)e_3e_4-8(r-M)\mu^{-1}e_4+4re_4)\psihat.
    \end{align*}
Using $e_4-\mu e_3=\mu\partial_r=2e_4-T-a/(r^2+a^2)\Phi$ we get
    \begin{align}
        \ethat((r^2+a^2)&\Delta^{-2} e_4\psihat)=\label{eq:tocomb}\\
        &\frac{1}{4}\mu\Delta^{-2}\Big(-4(r^2+a^2)e_3e_4-8(r-M)\mu^{-1}e_4+2r(e_4-\mu e_3)+2r T+\frac{2ra}{r^2+a^2}\Phi\Big)\psihat.\nn
    \end{align}
Next, using \eqref{eq:teukhate3e4} we get
    \begin{align*}
    \teukhat_{-2}=&-4(r^2+a^2)e_3e_4+ \widetilde{U}^2+\frac{1}{\sin\theta}\partial_\theta(\sin\theta\partial_\theta)+8ia\cos\theta T\\
    &+2 r (e_4-\mu e_3)-8[(r-M)\mu^{-1}e_4- rT]+\frac{2ar }{r^2+a^2}\Phi -2,
\end{align*}
where 
$$\widetilde{U}^2:=\frac{1}{\sin^2\theta}\Phi^2+a^2\sin^2 T^2+2a T\Phi-4ai\cos\theta T-\frac{4i\cos\theta}{\sin^2\theta}\Phi-4\cot^2\theta.$$
We infer, using the definition of the Carter operator \eqref{eq:carter},
    \begin{align*}
    \teukhat_{-2}=&-4(r^2+a^2)e_3e_4+2 r (e_4-\mu e_3)-8(r-M)\mu^{-1}e_4+\frac{2ar }{r^2+a^2}\Phi\\
    &\quad\quad+\drond'\drond+a^2\sin^2 T^2+2aT\Phi+4ai\cos\theta T+8rT\\
    &=-4(r^2+a^2)e_3e_4+2 r (e_4-\mu e_3)-8(r-M)\mu^{-1}e_4+\frac{2ar }{r^2+a^2}\Phi\\
    &\quad\quad+\mcq_{-2}+2aT\Phi+8rT.
\end{align*}
Combining this identity with \eqref{eq:tocomb} concludes the proof.
\end{proof}
\subsubsection{Decomposition $(\partial_t,\partial_r)$ of the Teukolsky operators}\label{section:decdtdr} The following $(\partial_t,\partial_r)$ decompositions of the Teukolsky operators will be crucial in this paper. We will use them to reduce the Teukolsky equation to an ODE along hypersurfaces of constant $u$ and $\ubar$.
\begin{prop}
    We have the decomposition of the Teukolsky operator
    \begin{align}
        \label{eq:decteuk}\teuk_s=\teuk_s^{[\partial_r]}+\teuk_s^{[\partial_t]},
    \end{align}
    where, recalling the definitions of the coordinate vector fields $\drin$, $\drout$ in Section \ref{section:coordsss},
\begin{align}\label{eq:teukdr}
    \teuk_s^{[\partial_r]}&:=\Delta\drin^2+2a\Phi\drin-2(r-M)(s-1)\drin+\drond\drond',\\
    \teuk_s^{[\partial_t]}&:=a^2\sin^2\theta T^2+2(r^2+a^2)T\drin+2aT\Phi+(2r(1-2s)-2ias\cos\theta)T.\label{eq:teukdt}
\end{align}
Similarly we have 
\begin{align}\label{eq:decteukhat}
    \teukhat_s=\teukhat_s^{[\partial_r]}+\teukhat_s^{[\partial_t]},
\end{align}
where
\begin{align}\label{eq:teukhatdr}
    \teukhat_s^{[\partial_r]}&:=\Delta\drout^2-2a\Phi\drout+2(r-M)(s+1)\drout+\drond'\drond,\\
    \teukhat_s^{[\partial_t]}&:=a^2\sin^2\theta T^2-2(r^2+a^2)T\drout+2aT\Phi-(2r(1+2s)+2ias\cos\theta)T.\label{eq:teukhatdt}
\end{align}
\end{prop}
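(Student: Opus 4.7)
The plan is to prove the two decompositions \eqref{eq:decteuk}–\eqref{eq:teukdt} and \eqref{eq:decteukhat}–\eqref{eq:teukhatdt} by direct computation, starting from the Boyer-Lindquist expressions \eqref{eq:opteuk} and \eqref{eq:opteukhat}, and re-expressing the $\partial_r$ derivative in the appropriate Eddington-Finkelstein frame. The key preliminary identity for the rescaled operator $\teuk_s$ will be
\begin{equation*}
\partial_r = \drin + \frac{r^2+a^2}{\Delta}T + \frac{a}{\Delta}\Phi,
\end{equation*}
which follows immediately from the change of variables $(t,r,\theta,\phi)\mapsto(\ubar_{\mathrm{in}},r_{\mathrm{in}},\theta_{\mathrm{in}},\phi_{\mathrm{in}}) = (r^*+t,\,r,\,\theta,\,\phi+r_{mod})$ together with $\tfrac{dr^*}{dr}=(r^2+a^2)/\Delta$ and $\tfrac{dr_{mod}}{dr}=a/\Delta$. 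The analogous identity in the outgoing frame reads $\partial_r = \drout - \tfrac{r^2+a^2}{\Delta}T - \tfrac{a}{\Delta}\Phi$. Crucially, $\drin$ (resp.\ $\drout$) is a coordinate vector field in the Eddington-Finkelstein chart, so it commutes with the Killing vector fields $T$ and $\Phi$, which eliminates any commutator terms.

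Next, I would substitute this identity into the second-order term $\partial_r(\Delta\partial_r)$ appearing in \eqref{eq:opteuk}. Writing $\Delta\partial_r = \Delta\drin + (r^2+a^2)T + a\Phi$ and applying $\partial_r$ once more, using $\drin(\Delta)=2(r-M)$, $\drin(r^2+a^2)=2r$, and the commutation of $T,\Phi$ with everything, yields
\begin{equation*}
\partial_r(\Delta\partial_r) = \Delta\drin^2 + 2(r-M)\drin + 2(r^2+a^2)T\drin + 2a\Phi\drin + 2rT + \frac{(r^2+a^2)^2}{\Delta}T^2 + \frac{2a(r^2+a^2)}{\Delta}T\Phi + \frac{a^2}{\Delta}\Phi^2.
\end{equation*}
Plugging this into \eqref{eq:opteuk} and collecting terms, the $1/\Delta$ singularities will cancel exactly: the $T^2$ coefficient becomes $a^2\sin^2\theta$, the $\Phi^2$ terms cancel completely, and the $T\Phi$ coefficient reduces to $2a$ via the identity $r^2+a^2-2Mr=\Delta$. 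After regrouping the remaining $\drin$ and $T$ terms and inserting the lower-order contributions $-2ias\cos\theta\,T$ and $-2s[(r-M)\drin+2rT]$, one recovers exactly $\teuk_s^{[\partial_r]}+\teuk_s^{[\partial_t]}$ as stated. The computation for $\teukhat_s$ is entirely parallel, starting from \eqref{eq:opteukhat}, using the outgoing identity for $\partial_r$, and noting the sign flips that turn $2a\Phi\drin$ into $-2a\Phi\drout$ and $2(r^2+a^2)T\drin$ into $-2(r^2+a^2)T\drout$; the opposite sign of the $s$-dependent lower-order term in \eqref{eq:opteukhat} (namely $+2s[(r-M)\drout-2rT]$ instead of $-2s[(r-M)\drin+2rT]$) accounts for the coefficients $2(r-M)(s+1)$ and $-(2r(1+2s)+2ias\cos\theta)$ in \eqref{eq:teukhatdr}–\eqref{eq:teukhatdt}.

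The proof is entirely algebraic and contains no analytic obstacle; the only place where care is needed is in tracking the cancellation of the $\Delta^{-1}$ terms, which relies crucially on $r^2+a^2-2Mr=\Delta$ and on the fact that $\drin,\drout$ commute with $T,\Phi$. One should also be mindful of the sign conventions for the outgoing coordinate $u = r^*-t$, which is responsible for the sign differences between \eqref{eq:teukdr}–\eqref{eq:teukdt} and \eqref{eq:teukhatdr}–\eqref{eq:teukhatdt}, and of the fact that in the unrescaled operator \eqref{eq:opteukhat} the angular piece is $\drond'\drond$ whereas in the rescaled one \eqref{eq:opteuk} it is $\drond\drond'$, consistently with the formulas stated in the proposition.
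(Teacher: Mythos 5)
Your proposal is correct and follows essentially the same route as the paper: both start from the Boyer--Lindquist form \eqref{eq:opteuk}, substitute $\partial_r=\drin+\frac{a}{\Delta}\Phi+\frac{r^2+a^2}{\Delta}T$ into $\partial_r(\Delta\partial_r)$, and verify the cancellation of the $\Delta^{-1}$ terms via $r^2+a^2-2Mr=\Delta$ (the paper likewise treats the outgoing case as "similar"). Your expanded expression for $\partial_r(\Delta\partial_r)$ and the resulting coefficient bookkeeping agree with the paper's computation.
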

\begin{proof}
    
Using \eqref{eq:opteuk} and
$$\partial_r=\drin+\frac{a}{\Delta}\Phi+\frac{r^2+a^2}{\Delta}T,$$
we get 
\begin{align*}
    \teuk_s&=-\frac{(r^2+a^2)^2-a^2\Delta\sin^2\theta}{\Delta}T^2+\left(\drin+\frac{a}{\Delta}\Phi+\frac{r^2+a^2}{\Delta}T\right)(\Delta\drin+a\Phi+(r^2+a^2)T)\\
    &\quad\quad\quad-\frac{4aMr}{\Delta}T\Phi-\frac{a^2}{\Delta}\Phi^2+\drond\drond'-2ias\cos\theta T-2s[(r-M)\drin+2rT]\\
    &=\Delta \drin^2+a^2\sin^2\theta T^2+2(r-M)\drin+2a\Phi\drin+2(r^2+a^2)T\drin+\frac{a^2}{\Delta}\Phi^2+\frac{2a(r^2+a^2)}{\Delta}T\Phi\\
    &\quad\quad\quad+2r T-\frac{4aMr}{\Delta}T\Phi-\frac{a^2}{\Delta}\Phi^2+\drond\drond'-2ias\cos\theta T-2s[(r-M)\drin+2rT]\\
    &=\teuk_s^{[\partial_r]}+\teuk_s^{[\partial_t]},
\end{align*}
as stated in \eqref{eq:decteuk}. The proof of \eqref{eq:decteukhat} is similar.
\end{proof}

\begin{prop}\label{prop:commdrout}
    We have the commutator identities
    $$[\teuk_s,\drin]=[4(r-M)e_3-4rT+2(s-1)]\drin-2(1-2s)T,$$
    $$[\teukhat_s,\drout]=-[4(r-M)\mu^{-1}e_4-4rT+2(s+1)]\drout+2(1+2s)T.$$
\end{prop}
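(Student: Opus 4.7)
The plan is to exploit the decompositions \eqref{eq:decteuk}--\eqref{eq:decteukhat}, together with the fact that in the ingoing Eddington--Finkelstein coordinates $(\ubar_\mathrm{in},r_\mathrm{in},\theta_\mathrm{in},\phi_\mathrm{in})$ we have $\drin=\partial_{r_\mathrm{in}}$, $T=\partial_{\ubar_\mathrm{in}}$, $\Phi=\partial_{\phi_\mathrm{in}}$ by \eqref{eqn:coordEFin}, and analogously in the outgoing coordinates for $\drout$ by \eqref{eqn:coordEFout}. In these coordinates, $\drin$ (resp. $\drout$) trivially commutes with $T$, $\Phi$, $\partial_\theta$, and with the spin-weighted eth operators $\drond,\drond'$ (whose coefficients depend only on $\theta$). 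Consequently, the only nontrivial contributions to $[\teuk_s,\drin]$ and $[\teukhat_s,\drout]$ come from the $r$-dependent coefficients in the decompositions, via the elementary relation $[f(r),X]=-X(f)$ for any $r$-derivation $X$.

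\textbf{Computation for the first identity.} Applying this principle to \eqref{eq:teukdr}, the only $r$-dependent coefficients in $\teuk_s^{[\partial_r]}$ are $\Delta$ and $-2(r-M)(s-1)$, so
\begin{align*}
[\teuk_s^{[\partial_r]},\drin] = [\Delta,\drin]\drin^2 + [-2(r-M)(s-1),\drin]\drin = -2(r-M)\drin^2 + 2(s-1)\drin.
\end{align*}
Similarly, from \eqref{eq:teukdt}, the only $r$-dependent coefficients in $\teuk_s^{[\partial_t]}$ are $2(r^2+a^2)$ and $2r(1-2s)$, which yield
\begin{align*}
[\teuk_s^{[\partial_t]},\drin] = [2(r^2+a^2),\drin]T\drin + [2r(1-2s),\drin]T = -4rT\drin - 2(1-2s)T.
\end{align*}
Summing and rewriting $-2(r-M)\drin^2=[4(r-M)e_3]\drin$ via $\drin=-2e_3$ produces the first identity.

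\textbf{Computation for the second identity and main issue.} The same strategy applied to \eqref{eq:teukhatdr}--\eqref{eq:teukhatdt} yields
\begin{align*}
[\teukhat_s^{[\partial_r]},\drout] &= -2(r-M)\drout^2 - 2(s+1)\drout,\\
[\teukhat_s^{[\partial_t]},\drout] &= 4rT\drout + 2(1+2s)T,
\end{align*}
and summing, using $\drout=-2\eqhat=2\mu^{-1}e_4$ to recast $-2(r-M)\drout^2=-[4(r-M)\mu^{-1}e_4]\drout$, produces the second identity. There is no genuine difficulty in this proposition beyond bookkeeping; the main point is to correctly identify which coefficients in \eqref{eq:teukdr}--\eqref{eq:teukhatdt} depend on $r$, and to apply the sign conventions $\drin=-2e_3$ and $\drout=2\mu^{-1}e_4$ at the end to recover the stated form.
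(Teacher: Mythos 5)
Your proposal is correct and follows essentially the same route as the paper, which likewise deduces the identities directly from the expressions \eqref{eq:teukdr}, \eqref{eq:teukdt}, \eqref{eq:teukhatdr}, \eqref{eq:teukhatdt} together with $\drin=-2e_3$ and $\drout=2\mu^{-1}e_4$. Your version merely writes out the bookkeeping (only the $r$-dependent coefficients contribute, via $[f(r),X]=-X(f)$) that the paper leaves implicit, and all signs check out.
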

\begin{proof}
    We simply use the above expressions \eqref{eq:teukdt}, \eqref{eq:teukdr}, \eqref{eq:teukhatdr}, \eqref{eq:teukhatdt}, as well as $\drin=-2e_3$ and $\drout=2\mu^{-1}e_4$.
\end{proof}
\begin{lem}\label{lem:commproj}
    We have, for any $\ell\geq 2$:$$[\drin,\mathbf{P}_{\ell}]=[\drout,\mathbf{P}_{\ell}]=[\teuk_{-2}^{[\partial_r]},\mathbf{P}_{\ell}]=[\teukhat_{-2}^{[\partial_r]},\mathbf{P}_{\ell}]=0.$$
\end{lem}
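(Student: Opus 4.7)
The heart of the lemma is the observation that in the Eddington--Finkelstein coordinate systems introduced in Section \ref{section:coordsss}, the radial coordinate vector fields $\drin$ and $\drout$ fix all the variables that appear in the mode projection $\mathbf{P}_\ell$. More precisely, in ingoing EF coordinates $(\ubar_{\mathrm{in}},r_{\mathrm{in}},\theta_{\mathrm{in}},\phi_{\mathrm{in}})=(\ubar,r,\theta,\phi_+)$ the sphere $S(u,\ubar)$ is the set $\{\ubar_{\mathrm{in}}=\mathrm{cst}\}\cap\{r_{\mathrm{in}}=\mathrm{cst}\}$, the $L^2(S(u,\ubar))$ pairing recorded in Section \ref{section:spinweighted} becomes integration in $\sin\theta_{\mathrm{in}}\dee\theta_{\mathrm{in}}\dee\phi_{\mathrm{in}}$, and the projection reads
\begin{align*}
\mathbf{P}_\ell\psi=\sum_{|m|\leq\ell}\left(\int_0^\pi\!\!\int_0^{2\pi}\psi\,\overline{Y^{-2}_{m,\ell}(\cos\theta_{\mathrm{in}})e^{im\phi_{\mathrm{in}}}}\sin\theta_{\mathrm{in}}\dee\theta_{\mathrm{in}}\dee\phi_{\mathrm{in}}\right)Y^{-2}_{m,\ell}(\cos\theta_{\mathrm{in}})e^{im\phi_{\mathrm{in}}}.
\end{align*}
Neither the basis functions nor the integration measure depend on $r_{\mathrm{in}}$, and $\drin=\partial_{r_{\mathrm{in}}}$ fixes the remaining variables $\ubar_{\mathrm{in}},\theta_{\mathrm{in}},\phi_{\mathrm{in}}$; differentiation under the integral sign then gives $[\drin,\mathbf{P}_\ell]=0$. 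A literally identical computation in outgoing EF coordinates, using the companion identity $\|\psi\|_{L^2(S(u,\ubar))}^2=\int_0^\pi\!\int_0^{2\pi}|\psi|^2\sin\theta\dee\theta\dee\phi_-$ from Section \ref{section:spinweighted}, yields $[\drout,\mathbf{P}_\ell]=0$.

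The commutators involving the radial Teukolsky operators then reduce, by linearity, to checking that each elementary building block appearing in the decompositions \eqref{eq:teukdr} and \eqref{eq:teukhatdr} commutes with $\mathbf{P}_\ell$. With $s=-2$,
\begin{align*}
\teuk_{-2}^{[\partial_r]}=\Delta\drin^2+2a\Phi\drin+6(r-M)\drin+\drond\drond',\qquad \teukhat_{-2}^{[\partial_r]}=\Delta\drout^2-2a\Phi\drout-2(r-M)\drout+\drond'\drond.
\end{align*}
The coefficients $\Delta$ and $r-M$ are scalars in $r$ alone, hence constant on each sphere $S(u,\ubar)$ and automatically commute with all sphere operators. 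The factors of $\drin$ and $\drout$ commute with $\mathbf{P}_\ell$ by the first step. The Killing field $\Phi=\partial_\phi$ coincides with $\partial_{\phi_{\mathrm{in}}}$ and with $\partial_{\phi_{\mathrm{out}}}$ because $\phi_\pm-\phi=\pm r_{mod}(r)$ depends only on $r$, so $\Phi$ acts diagonally on the basis functions as multiplication by $im$ and is therefore block-diagonal in the mode decomposition. Finally, identities \eqref{eq:annulemode} and \eqref{eq:dronddrond'spin} show that $\drond\drond'$ and $\drond'\drond$ act on every $(m,\ell)$ spin $-2$ mode as the scalars $-(\ell-2)(\ell+3)$ and $-(\ell+2)(\ell-1)$ respectively, so both commute with $\mathbf{P}_\ell$. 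Assembling these pieces delivers the two remaining identities.

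There is no genuine obstacle here: the content of the lemma is a structural verification that the $(\partial_t,\partial_r)$--split of $\teuk_s$ and $\teukhat_s$ has been arranged so that the radial halves only involve operators that are compatible with the spin-weighted spherical harmonic decomposition, while all objects that genuinely mix radial and angular dependence (namely the $T^2$, $T\drin$, $T\drout$ and $T$ terms) have been packaged into the $\partial_t$-halves. The only subtle point to keep in mind throughout is the identification $\phi_\pm=\phi\pm r_{mod}(r)$, which guarantees that working with the regular horizon angles $\phi_\pm$ rather than the singular Boyer--Lindquist angle $\phi$ does not introduce spurious $r$-dependence into the basis functions $Y^{-2}_{m,\ell}(\cos\theta)e^{im\phi_\pm}$.
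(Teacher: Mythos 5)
Your proof is correct and follows essentially the same route as the paper: the key facts are that $\drin$ and $\drout$ annihilate the ingoing and outgoing basis functions $Y^{-2}_{m,\ell}(\cos\theta)e^{im\phi_\pm}$ (equivalently, fix the angular variables in the respective Eddington--Finkelstein charts), and that every remaining ingredient of $\teuk_{-2}^{[\partial_r]}$ and $\teukhat_{-2}^{[\partial_r]}$ --- the $r$-only coefficients, $\Phi$, and the spin-weighted Laplacians via \eqref{eq:dronddrond'spin}--\eqref{eq:annulemode} together with their self-adjointness --- acts mode-diagonally. Your write-up is merely more explicit than the paper's about the $\Phi$ term and the differentiation under the integral sign; there is no substantive difference.
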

\begin{proof}
Using the two expressions
$$\mathbf{P}_{\ell}(\psi)=\sum_{|m|\leq\ell}(\psi)^{+}_{m,\ell}Y_{m,\ell}^{-2}(\cos\theta)e^{im\phi_+}=\sum_{|m|\leq\ell}(\psi)^-_{m,\ell}Y_{m,\ell}^{-2}(\cos\theta)e^{im\phi_-},$$
as well as $\drin(Y_{m,\ell}^{-2}(\cos\theta)e^{im\phi_+})=\drout(Y_{m,\ell}^{-2}(\cos\theta)e^{im\phi_-})=0$, we deduce respectively $[\drin,\mathbf{P}_{\ell}]=0$ and $[\drout,\mathbf{P}_{\ell}]=0$. The two remaining identities come from \eqref{eq:teukdr} and \eqref{eq:teukhatdr}, where we also use the fact that $\drond\drond'$ and $\drond'\drond$ are self-adjoint on $\mathbb{S}^2$.
\end{proof}

\subsection{Subregions of the Kerr black hole interior}

We will divide the Kerr black hole interior $\mathbb{R}_\ubar\times\mathbb{R}_u\times\mathbb{S}^2$ into different regions. We fix $\rb>r_-$ close to $r_-$ and $\gamma>0$ small. The final values of $\gamma$ and $\rb$ will be chosen in the energy estimates, respectively in Section \ref{section:energymethod} and in Appendix \ref{appendix:bulkII}. We define the hypersurface\footnote{The hypersurface $\Gamma$ is spacelike for $r$ is sufficiently close to $r_-$.}
$$\Gamma:=\{2r^*=\ubar^\gamma\}\cap\{\ubar\geq 1\},$$
and the following subregions of the Kerr black hole interior: 
\begin{align*}
    &\mathbf{I}:=\{ \rb\leq r\leq r_+\}\cap\{\ubar\geq 1\},\\
    &\mathbf{II}:=\{r_-\leq r\leq \rb\}\cap\{2r^*\leq\ubar^\gamma\}\cap\{w\leq w_{\rb,\gamma}\},\\
    &\mathbf{III}:=\{r_-\leq r\leq \rb\}\cap\{2r^*\geq\ubar^\gamma\}\cap\{w\leq w_{\rb,\gamma}\},
\end{align*}
where $w_{\rb,\gamma}:=2\rb^*-(2\rb^*)^{1/\gamma}-\rb+r_-$ is such that $\{w=w_{\rb,\gamma}\}$ intersects $\{r=\rb\}$ at $\Gamma\cap\{r=\rb\}$ (recall the definition of $w$ in Section \ref{section:coordsss}). See Figure \ref{fig:regions-2} for an illustration of regions $\un$, $\deux$, $\trois$.
\begin{figure}[h!]
    \centering
    \includegraphics[scale=0.4]{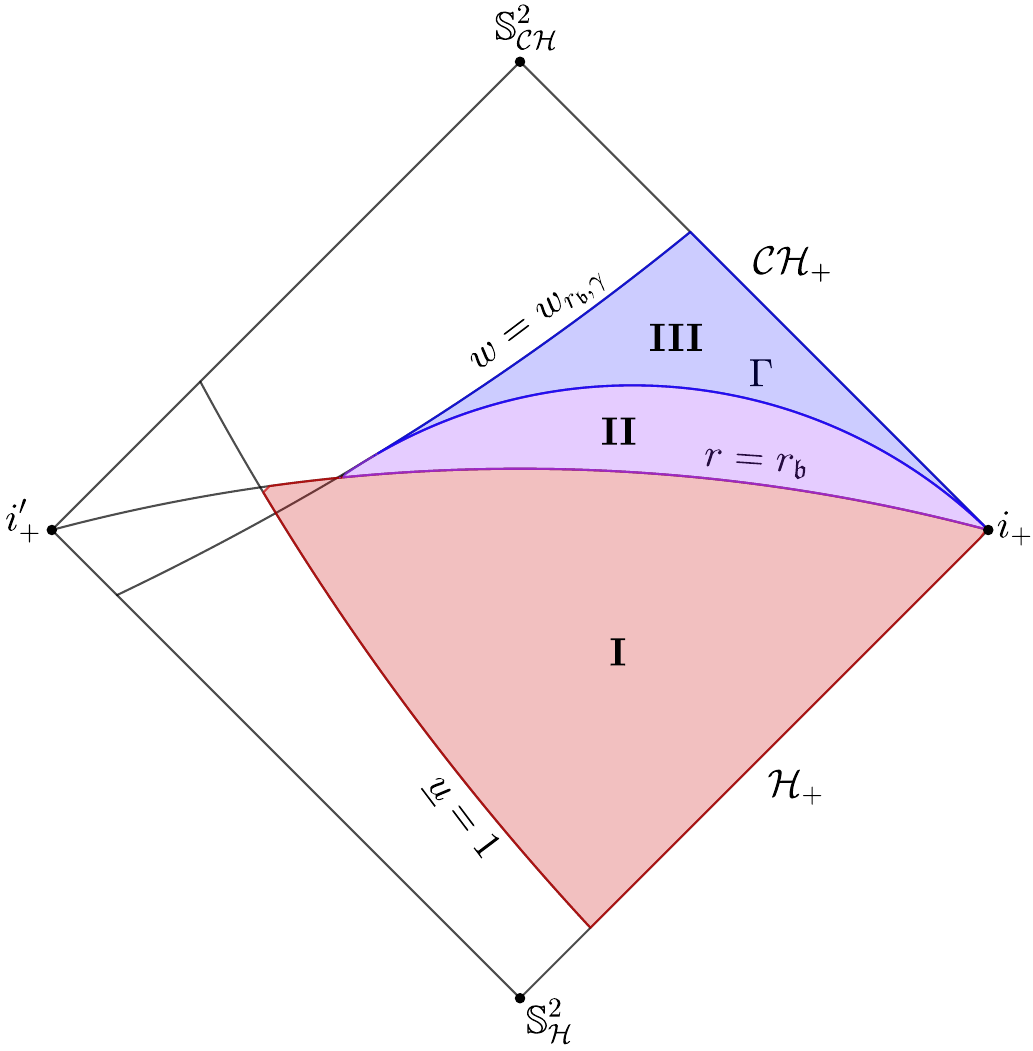}
    \caption{Subregions $\un,\deux$ and $\trois$ of the Kerr black hole interior.}
    \label{fig:regions-2}
\end{figure}

Region $\un$ is the redshift region close to $\eh$, region $\trois$ is the blueshift region close to $\ch$ and region $\deux$ is an intermediate region where the blueshift effect is already present. Notice that with our choice of $\rb,\gamma$, we get that $w_{\rb,\gamma}\ll -1$ is negative and large in absolute value. This implies $u\ll -1$ in $\deux\cup\trois$.
\begin{lem}\label{lem:usimubarII}
    We have $|u|\sim\ubar$ in $\deux$ and $\ubar\geq|u|$ in $\trois$.
\end{lem}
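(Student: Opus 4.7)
\textbf{Proof plan for Lemma \ref{lem:usimubarII}.}

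My starting point is the identity $u + \ubar = 2r^*$ coming directly from the definitions $u = r^*-t$, $\ubar = r^*+t$, which gives $u = 2r^* - \ubar$. Since $\rb$ is chosen close to $r_-$, the asymptotics $r^* = \frac{1}{2\kappa_-}\ln(r-r_-)+h_-(r)$ with $\kappa_-<0$ show that $r^* \geq \rb^* > 0$ throughout regions $\deux$ and $\trois$ (where $r \in [r_-,\rb]$). In particular $2r^* \geq 0$ in both regions.

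Next I would show that the constraint $w\leq w_{\rb,\gamma}$ forces $u$ to be large and negative. Indeed $w = u - r + r_-$ gives $u = w + (r-r_-) \leq w_{\rb,\gamma} + (\rb - r_-) = 2\rb^* - (2\rb^*)^{1/\gamma}$; since $\gamma$ is small and $2\rb^*>0$, the term $(2\rb^*)^{1/\gamma}$ dominates, so $u \ll -1$. In particular $u<0$ in $\deux\cup\trois$, and therefore
\[ |u| = -u = \ubar - 2r^*. \]
Combined with $2r^* \geq 0$ this immediately yields the bound $|u|\leq \ubar$ in both regions, which is precisely the claim in $\trois$.

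For the refined bound in $\deux$, I would exploit the additional constraint $2r^* \leq \ubar^\gamma$. First I observe that this constraint, combined with $r^*\geq \rb^*$, forces $\ubar \geq (2\rb^*)^{1/\gamma}$, which is very large (since $\gamma$ is small), so in particular $\ubar^\gamma \leq \tfrac{1}{2}\ubar$. Then
\[ |u| = \ubar - 2r^* \geq \ubar - \ubar^\gamma \geq \tfrac{1}{2}\ubar, \]
giving $|u| \gtrsim \ubar$. Together with $|u|\leq \ubar$ this yields $|u|\sim \ubar$ in $\deux$.

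The argument is essentially a bookkeeping exercise with the defining inequalities of the regions; there is no genuine obstacle. The only subtlety is verifying that the parameters $\rb$ and $\gamma$ can be chosen so that all the relevant inequalities (positivity of $r^*$, largeness of $\ubar$ in the relevant regions, smallness of $\ubar^\gamma$ relative to $\ubar$) hold simultaneously, but these are all guaranteed by the hypotheses that $\rb$ is close to $r_-$ and $\gamma$ is small, as stated at the start of the section.
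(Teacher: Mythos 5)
Your proof is correct and follows essentially the same route as the paper's: both arguments rest on the identity $u+\ubar=2r^*$, the lower bound $r^*\geq\rb^*\geq 0$ for $r\in[r_-,\rb]$, the fact that $w\leq w_{\rb,\gamma}$ forces $u\ll-1$, and the defining inequality $2r^*\leq\ubar^\gamma$ of region $\deux$ to absorb $\ubar^\gamma$ into $\ubar$ for $\ubar$ large. Your version is slightly more explicit about why $u<0$ and why $\ubar$ is large in $\deux$, but there is no substantive difference.
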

\begin{proof}
The first point comes from the definition of $\deux$: $u+\ubar=2r^*\leq\ubar^{\gamma}$ which implies $\ubar\leq |u|+\ubar^\gamma$ thus $\ubar\lesssim|u|$ for $\ubar\geq 1$ large enough. Moreover in $\deux$ we also have $2r^*\geq 2\rb^*$ which implies $|u|=-u\leq\ubar-2\rb^*\leq\ubar$, where we use $r^*\geq 0$ for $r$ close to $r_-$. The second fact comes from $|u|=-u=\ubar-2r^*\leq \ubar$ in $\trois$.
\end{proof}
\section{Statement of the main theorem}\label{section:mainthm}

\subsection{Assumptions on the event horizon}\label{section:assumptions}
Our main result will apply to a spin $-2$ scalar solution of the spin $-2$ Teukolsky equation, which is initially smooth and compactly supported on a spacelike hypersurface as in \cite{G24, MZ23}. In practice, we will only rely on Price's law results on the event horizon proven in \cite{MZ23} which hold under such assumptions and which we recall below.

We define the set of derivatives which are tangential to the event horizon: 
$$\nablabar:=\{\Phi,T,\drond,\drond'\}.$$
For a multi-index $k=(k_1,k_2,k_3,k_4)\in\mathbb{N}^4$, we define $|k|:=k_1+k_2+k_3+k_4$, and 
$$\nablabar^k=\Phi^{k_1}T^{k_2}\drond^{k_3}(\drond')^{k_4}.$$
Let $N_j$, $N_k$ be sufficiently large integers that we will choose later, see Remark \ref{rem:postthm}. We will consider a spin $-2$ scalar $\Psim$ solution of the spin $-2$ Teukolsky equation
$$\teukm\Psim=0$$
such that there are constants $Q_{m,2}$ for $|m|\leq 2$, $\delta>0$ small, such that for $j\leq N_j$ and $|k|\leq N_k$
\begin{align}\label{eq:hyppl}
    \left|e_3^{\leq 1}T^j\nablabar^k\left(\Psim-\ansatzm\right)\right|\lesssim\ubar^{-7-j-\delta},\quad\text{on}\:\eh\cap\{\ubar\geq 1\}.
\end{align}
\begin{rem}
    Price's law type assumption \eqref{eq:hyppl} was proven in \cite{MZ23} by Ma and Zhang\footnote{The decay of the tangential derivatives $\nablabar^k$ is not explicitely stated in \cite{MZ23}, but can be easily deduced from the fact that $\Phi$, $T$ and the Carter operator commute with the Teukolsky equation, and by applying the main theorem of \cite{MZ23}. The statement for the $e_3$ derivative comes from \cite[(4.101)]{MZ23}, recalling $e_3(\ubar)=e_3(\theta)=e_3(\phi_+)=0$.} for solutions of the spin $-2$ Teukolsky equation which are initially smooth and compactly supported on a spacelike hypersurface $\Sigma_0$ defined in \cite[Section 2.1]{MZ23}, where $Q_{m,2}=2^7\mathbb{Q}_{m,2}/5$ with $\mathbb{Q}_{m,2}$ defined in \cite[Lem. 5.7]{MZ23}. Their proof holds for $|a|\ll M$, and for $|a|<M$ conditionally on an energy-Morawetz estimate for the Teukolsky equation in subextremal Kerr. This estimate was then proven in the whole subextremal range $|a|<M$ in \cite{TDCSR2,TDCSR1}.
\end{rem}
Using the results in \cite{G24} for the spin $-2$ equation in the redshift region $\un$, in the present article we will be able to restrict most of the analysis to regions $\deux$ and $\trois$. The first example of this is the following result, which follows from \cite[Theo. 4.11]{G24}:
\begin{prop}\label{prop:lapremiere}
    Assume that $\Psim$ is a solution of the spin $-2$ Teukolsky equation in the Kerr black hole interior for $0<|a|<M$ that satisfies assumption \eqref{eq:hyppl}. Then, 
    $$\Psim=\ansatzm+\errm,$$
    where for $j\leq N_j-2$ and $2k_1+k_2\leq N_k-3$, we have in $\un$,
    \begin{align}
        |\drin^{\leq 1}T^j\mcq_{-2}^{k_1}\Phi^{k_2}\errm|\lesssim\ubar^{-7-j-\delta},\label{eq:borneI}
    \end{align}
    and, recalling the definition of the energy density in Definition \ref{defi:energy}, the energy bound on $\{r=\rb\}$,
\begin{align}
    \iint_{\left\{\underline{w}_1 \leq \underline{w} \leq w_2\right\} \cap\left\{r=r_b\right\}} \mathbf{e}\left[T^j \mathcal{Q}_{-2}^{k_1} \Phi^{k_2} \drin^{\leq 1} \errm\right] \mathrm{d} \nu \mathrm{d} \underline{u} \lesssim \underline{w}_1^{-2 (7+\delta)-2 j}+\int_{\underline{w}_1}^{\underline{w}_2} \underline{w}^{-2 (7+\delta)-2 j} \mathrm{~d} \underline{w} .\label{eq:enerI}
\end{align}
\end{prop}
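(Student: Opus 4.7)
The plan is to obtain Proposition \ref{prop:lapremiere} as a direct corollary of the redshift energy estimate \cite[Theo.~4.11]{G24}, applied to the error quantity $\errm := \Psim - \ansatzm$ and to its commutations with the Teukolsky symmetries $T$, $\Phi$, $\mcq_{-2}$. First I would observe that $\errm$ satisfies an inhomogeneous Teukolsky equation
\begin{align*}
\teuk_{-2}\errm \,=\, -\teuk_{-2}(\ansatzm) \,=:\, \mathfrak{S},
\end{align*}
and check by a short (routine) computation using the expression \eqref{eq:opteuk} of $\teuk_{-2}$ together with $T\ubar^{-7}=-7\ubar^{-8}$, $T\phi_+=0$, and the fact that $Y^{-2}_{m,2}(\cos\theta)e^{im\phi_+}$ is an eigenfunction of $\drond\drond'$, that $\mathfrak{S} = O(\ubar^{-8})$ in $\un$, with analogous $O(\ubar^{-8-j})$ bounds after further $T^j$ commutations.

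By the commutation identities \eqref{eq:commutateurs}, the operators $T$, $\Phi$, $\mcq_{-2}$ commute with $\teuk_{-2}$, so each commuted error $T^j\mcq_{-2}^{k_1}\Phi^{k_2}\errm$ also satisfies an inhomogeneous Teukolsky equation with a source term of the same size. The assumption \eqref{eq:hyppl} then provides simultaneously (i) trace bounds of the right order for $T^j\mcq_{-2}^{k_1}\Phi^{k_2}\errm$ and its $e_3$-derivative on $\eh\cap\{\ubar\geq 1\}$, using that $\mcq_{-2}$ is a quadratic combination of $T$, $\Phi$, $\drond$, $\drond'$ by \eqref{eq:carter}, all of which are covered by $\nablabar$; and (ii) control of the commuted source terms. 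These are exactly the ingredients required to feed the commuted errors into the redshift multiplier identity of \cite[Theo.~4.11]{G24}, which in $\un$ produces a coercive spacetime integral together with weighted energy fluxes on every constant-$\wbar$ hypersurface and, in particular, the flux bound \eqref{eq:enerI} on $\{r=\rb\}$, all controlled by the data on $\eh$ plus the source, both of which decay at least as $\ubar^{-2(7+\delta)-2j}$.

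To upgrade the resulting $L^2(S(u,\ubar))$ bound to the pointwise estimate \eqref{eq:borneI}, I would invoke the spherical Sobolev inequality of Proposition \ref{prop:sobolev}, which costs two extra $T$-derivatives and one additional Carter commutation; together with the quadratic content of each $\mcq_{-2}$, this explains the derivative losses $j\leq N_j-2$ and $2k_1+k_2\leq N_k-3$ appearing in the statement. The $\drin^{\leq 1}$ factor is handled via $\drin=-2e_3$ together with the fact that $|e_3\cdot|^2$ is already part of the redshift energy density (Definition \ref{defi:energy}), so it is directly controlled by the energy estimate in $\un$.

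The main (essentially only) obstacle is the careful bookkeeping of derivative orders and the verification that the commuted source $T^j\mcq_{-2}^{k_1}\Phi^{k_2}\mathfrak{S}$ decays strictly faster than $\ubar^{-7-j-\delta}$, so that the redshift estimate closes with the stated loss; both are immediate consequences of the explicit $O(\ubar^{-8-j})$ behavior of $\mathfrak{S}$ under time differentiation, and of the decay hypothesis \eqref{eq:hyppl}. Modulo this bookkeeping, the proposition is essentially a specialisation of \cite[Theo.~4.11]{G24} to the Price's law error profile subtracted off.
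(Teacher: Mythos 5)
Your proposal is correct and follows essentially the same route as the paper: verify the two hypotheses of \cite[Theo.~4.11]{G24} (the horizon trace bound, which is exactly \eqref{eq:hyppl}, and the $O(\ubar^{-8-j})$ bound on the commuted source $\teuk_{-2}(\ansatzm)$) and then invoke that theorem, which internally supplies the redshift energy estimate, the $\drin^{\leq 1}$ commutation, and the Sobolev step you describe. The only cosmetic difference is that the paper computes the source via the $(\partial_t,\partial_r)$ decomposition \eqref{eq:decteuk}, where $\teuk_{-2}^{[\partial_r]}$ manifestly annihilates the ansatz (since $\drin\ubar=\drin\phi_+=\drin\theta=0$ and $\drond\drond'$ kills $Y^{-2}_{m,2}e^{im\phi_+}$), which avoids having to track the cancellation of the $\Delta^{-1}$-singular terms present in \eqref{eq:opteuk}.
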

\begin{proof}
    We prove that $\errm$ satisfies the assumptions of \cite[Theo. 4.11]{G24} with $\beta=7+\delta$. Assumption \cite[(4.18)]{G24} is \eqref{eq:hyppl}. Now we check \cite[(4.19)]{G24}. Using the decomposition \eqref{eq:decteuk} of the Teukolsky operator, we have the explicit computation
    \begin{align}
        \left|\drin^{\leq 1}T^j\mcq_{-2}^{k_1}\Phi^{k_2}\teuk_{-2}\errm\right|&=\left|\drin^{\leq 1}T^j\mcq_{-2}^{k_1}\Phi^{k_2}\teuk_{-2}\left(\ansatzm\right)\right|\nn\\
        &=\left|\drin^{\leq 1}T^j\mcq_{-2}^{k_1}\Phi^{k_2}\teuk_{-2}^{[\partial_t]}\left(\ansatzm\right)\right|\nn\\
        &\lesssim\ubar^{-8-j},\label{eq:teukerr}
    \end{align}
    where we used \eqref{eq:teukdt} as well as $\drin \ubar=\drin \phi_+=\drin\theta=\drond\drond'Y_{m,2}^{-2}(\cos\theta)=0$ by \eqref{eq:annulemode} for the before-to-last step. We conclude the proof by applying \cite[Theo. 4.11]{G24}.
\end{proof}
\subsection{Precise version of the main theorem}
The following theorem is the main result of this paper.
\begin{thm}[Main theorem, precise version]\label{thm:main}
    Assume that $\Psim$ is a smooth spin $-2$ scalar which satisfies the spin $-2$ Teukolsky equation $\teukm\Psim=0$ in the Kerr black hole interior with $0<|a|<M$ and such that \eqref{eq:hyppl} holds on the event horizon $\mch_+$. Then denoting $\Psihatm=\Delta^2\Psim$, we have in region $\deux\cup\trois=\{r_-\leq r\leq \rb\}\cap\{w\leq w_{\rb,\gamma}\}$ the following asymptotic behavior:
    \begin{align}
        \Psihatm(u,r,\theta,\phi_-)=\frac{1}{u^8}\sum_{|m|\leq 2}C_mY_{m,2}^{-2}(\cos\theta)e^{im\phi_-}+O(|u|^{-8-\delta/2})+O(r-r_-),\label{eq:mainasympt}
    \end{align}
    where, recalling the definition of the constants $b_{m,\ell}^{-2}$ in Proposition \ref{prop:modecoupling},
    \begin{align}\label{eq:cm}
        C_m=\frac{7Q_{m,2}}{2(r_--M)+iam}\int_{r_-}^{r_+}\Delta^2(r')e^{2imr_{mod}(r')}(5r'+iam+2iab_{m,2}^{-2})\dee r'.
    \end{align}
\end{thm}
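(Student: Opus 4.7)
The plan is to implement the four-step strategy outlined in Section \ref{section:structure}. Proposition \ref{prop:lapremiere} already supplies the pointwise decay of $\errm = \Psim - \ansatzm$ throughout the redshift region $\un$, so the remaining work takes place in the blueshift region $\deux \cup \trois$ and is split into establishing rough upper bounds by energy estimates, then refining them to the precise $|u|^{-8}$ asymptotics by an ODE analysis.

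For the rough upper bounds, I would adapt the physical-space energy method of \cite{scalarMZ}, used in \cite{G24} for $\Psihatp$, to the rescaled spin $-2$ Teukolsky equation $\teukhat_{-2}\Psihatm = 0$. Commuting with the operators $T$, $\Phi$, $\mcq_{-2}$, which all commute with $\teukhat_{-2}$ by \eqref{eq:commutateurs}, and using the Poincar\'e inequality \eqref{eq:poincare} to absorb zeroth-order terms, the energy estimate initialized by the boundary bound \eqref{eq:enerI} on $\{r = \rb\}$ should yield an $L^2(S)$ decay that converts via Proposition \ref{prop:sobolev} into a pointwise estimate of the form
\begin{equation*}
\left| \drout^{\leq 1} T^j \mcq_{-2}^{k_1} \Phi^{k_2}\left(\Psihatm - \anshatm \right) \right| \lesssim |u|^{-7-j-\delta}
\end{equation*}
in $\deux \cup \trois$, provided $\gamma$ and $\rb$ are tuned so that the blueshift effect in $\trois$ is offset by the exponential decay of $\Delta$. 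This bound already implies $\Psihatm = \anshatm + O(|u|^{-7-\delta})$, but near $\ch$ the ansatz $\anshatm \sim \Delta^2/\ubar^7$ vanishes faster than the error term, so the estimate is not yet precise enough to extract the expected $|u|^{-8}$ leading behavior at the Cauchy horizon.

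The crux is then the ODE method of Step \ref{step:4}. Using the decomposition \eqref{eq:decteuk} of $\teuk_{-2}$, and the fact that $T(\ubar^{-7}) = -7\ubar^{-8}$, one computes explicitly that $\teuk_{-2}^{[\partial_t]}\left(\ansatzm\right)$ is, at leading order, a combination of spin $-2$ spherical harmonics weighted by $1/\ubar^8$; the Teukolsky equation in $\un$ can thus be rewritten as
\begin{equation*}
\teuk_{-2}^{[\partial_r]} \errm = -\teuk_{-2}^{[\partial_t]}\left(\ansatzm\right) + O(\ubar^{-8-\delta}),
\end{equation*}
the error being controlled via \eqref{eq:borneI}. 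Since mode projection commutes with $\teuk_{-2}^{[\partial_r]}$ by Lemma \ref{lem:commproj} and $\drond\drond' Y_{m,2}^{-2} = 0$ by \eqref{eq:annulemode}, the $\ell = 2$ projection reduces to a first-order linear ODE in $r$ for $\drin (\errm)^+_{m,2}$, whose coefficient $2iam + 6(r-M)$ produces the integrating factor $\Delta^3 e^{2im r_{mod}}$. Computing the source explicitly yields the expression $14\, Q_{m,2}\,(5r + iam + 2ia b_{m,2}^{-2})$, where the $b_{m,2}^{-2}$ contribution arises from projecting $\cos\theta \cdot Y_{m,2}^{-2}$ onto its $\ell=2$ component via Proposition \ref{prop:modecoupling}. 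An analogous ODE for $(\errhatm)^-_{m,2}$ is obtained in $\deux \cup \trois$ using $\teukhat_{-2}^{[\partial_r]}$ and $\drout$; matching the two solutions along $\{r = \rb\}$, using $\phi_+ - \phi_- = 2r_{mod}$, propagates the integration range to the full interval $(r_-, r_+)$ and produces the constant $C_m$ of \eqref{eq:cm}. The prefactor $7/(2(r_- - M) + iam)$ emerges from evaluating the integrating factor near $r = r_-$, where $\Delta'(r_-) = 2(r_- - M)$, together with the factor of $1/2$ coming from the conversion between $\ubar$ and $u$ variables.

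The remaining modes $\ell \geq 3$ give rise to a genuinely second-order ODE, since $\drond\drond' Y_{m,\ell}^{-2} = -(\ell-2)(\ell+3)Y_{m,\ell}^{-2} \neq 0$. I would treat $\ell = 3$ individually, and each $\ell \geq 4$ mode by a variation-of-parameters argument yielding the pointwise bound $|(\Psihatm)^-_{m,\ell}| \lesssim |u|^{-8-\delta/2} + (r-r_-)$; the $(r-r_-)$ contribution is the price to pay for the presence of the zeroth-order term in the ODE. The main obstacle will be the summation over $\ell \geq 4$, which must be performed so that the resulting constant is uniform in $\ell$, by exploiting the gain $\ell^{-2}$ furnished by Proposition \ref{prop:salva} together with additional Carter derivative commutations. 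Once the precise $\ell = 2$ ODE asymptotics and the higher-mode upper bounds are combined, the conclusion \eqref{eq:mainasympt} follows.
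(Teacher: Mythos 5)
Your overall strategy is the one the paper itself follows: redshift estimates in $\un$, the log-degenerate energy method in $\deux\cup\trois$ yielding $|T^j\mcq_{-2}^{k_1}\Phi^{k_2}\drout^{\leq 1}\errhatm|\lesssim|u|^{-7-j-\delta/2}$ (the loss is $\delta/2$ rather than $\delta$ because the degenerate blueshift estimate costs a factor $\wbar^{C_\alpha\gamma}$), then the $(\partial_t,\partial_r)$ decomposition reducing the mode-projected Teukolsky equation to ODEs in $r$, solved by variation of constants and matched at $\{r=\rb\}$. Your $\ell=2$ computation is correct, including the integrating factor $\Delta^3e^{2imr_{mod}}$ and the source $14Q_{m,2}(5r+iam+2iab_{m,2}^{-2})\ubar^{-8}$; the only slip is cosmetic (the factor $1/2$ in the prefactor of $C_m$ comes from $4(r_--M)+2iam=2\bigl(2(r_--M)+iam\bigr)$ in the expansion of $\int\Delta^{-3}e^{-2imr_{mod}}$ near $r_-$, not from a $\ubar\leftrightarrow u$ conversion).

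There is, however, one genuine gap: the $\ell=3$ mode. For $\ell\geq 4$ your claimed bound does follow, because the ansatz is pure $\ell=2$ and the angular couplings in $\teuk_{-2}^{[\partial_t]}$ (a $\cos\theta$ and a $\sin^2\theta$ factor) only reach $\ell\pm1$ and $\ell\pm2$, so the $1/\ubar^8$ part of the source vanishes for those modes. But for $\ell=3$ the term $-2ias\cos\theta\, T$ couples the $\ell=2$ ansatz into $\ell=3$ and produces a genuine source $28iab_{m,2}^{-2}Q_{m,2}\ubar^{-8}$. Running your variation-of-parameters and matching argument then yields, near $\ch$, a candidate leading term proportional to
\begin{equation*}
\frac{b_{m,2}^{-2}Q_{m,2}}{u^8}\int_{r_-}^{r_+}\Delta^2(r')e^{2imr_{mod}(r')}\left(r'-M+\frac{iam}{3}\right)\dee r'
\end{equation*}
attached to $Y_{m,3}^{-2}(\cos\theta)e^{im\phi_-}$, which is not covered by the $\ell=2$ harmonics appearing in \eqref{eq:mainasympt}. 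The theorem as stated is only true because this integral vanishes identically:
\begin{equation*}
\int_{r_-}^{r_+}\Delta^2 e^{2imr_{mod}}\left(r'-M+\frac{iam}{3}\right)\dee r'=\frac{1}{6}\Bigl[\Delta^3e^{2imr_{mod}}\Bigr]_{r_-}^{r_+}=0,
\end{equation*}
since $\Delta(r_\pm)=0$. Without identifying this cancellation, your assertion that $|(\Psihatm)_{\ell=3}|\lesssim|u|^{-8-\delta/2}+r-r_-$ is unjustified, and a priori one would expect an extra $1/u^8$ contribution on $\ell=3$ spherical harmonics; this is exactly why the paper must treat $\ell=3$ separately from both $\ell=2$ and $\ell\geq4$.
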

\begin{rem}\label{rem:postthm}We remark the following:
\begin{enumerate}
    \item As explained in Section \ref{section:assumptions}, Theorem \ref{thm:main} can be applied to a solution $\Psim$ which is initially smooth and compactly supported on a spacelike hypersurface $\Sigma_0$ as in \cite{MZ23}. In particular, we do not need to assume higher order decay of the higher modes $(\psim)_{\ell\geq 3}$ on $\mch_+$.
    \item The constants $C_m$ are generically non-zero. Indeed, for generic initial data, $Q_{m,2}\neq 0$ (see \cite[Lem. 5.7]{MZ23}), and except for isolated values of black hole parameters $(a,M)$,  
    $$\int_{r_-}^{r_+}\Delta^2(r')e^{2imr_{mod}(r')}(5r'+iam+2iab_{m,2}^{-2})\dee r'\neq 0.$$
    This can be seen most easily for $m=0$, as the real part of the above integral is positive.
    \item The asymptotic behavior \eqref{eq:mainasympt} implies that for a large enough constant $C>0$, on $\ch\cap\{u\leq-C\}$ we have generically
    $$\Psihatm\sim\frac{1}{u^8}\sum_{|m|\leq 2}C_mY_{m,2}^{-2}(\cos\theta)e^{im\phi_-}\neq 0.$$
    Together with the discussion in Section \ref{section:curvatureinstab}, this result suggests that gravitational perturbations of a Kerr black hole build up to form a coordinate-independent curvature singularity at the perturbed Cauchy horizon.
   
    \item Inspecting the proof, we find that \eqref{eq:mainasympt} holds choosing $N_k\geq 12$ and $N_j\geq 10$, see Remark \ref{rem:maxderlost}. We did not try to optimize the loss of derivatives. Also assuming that \eqref{eq:hyppl} holds for a higher number of $T$ and $\nablabar$ derivatives implies that the asymptotic behavior \eqref{eq:mainasympt} holds for a corresponding higher number of $T$, $\carterm$ and $\Phi$ derivatives.    
    \item The asymptotic behavior \eqref{eq:mainasympt} confirms the heuristic prediction by Ori in \cite{ori}.
\end{enumerate}
\end{rem}
\subsection{Proof of the main theorem}\label{section:proof}
Theorem \ref{thm:main} is a consequence of the three following results, for a smooth spin $-2$ scalar $\Psim$ such that $\teukm\Psim=0$ in the Kerr black hole interior with $0<|a|<M$, with $\Psihatm=\Delta^2\Psim$.
\begin{thm}\label{thm:precise2l}
    Assume that $\Psim$ satisfies \eqref{eq:hyppl}. Then we have in $\{r_-\leq r\leq \rb\}\cap\{w\leq w_{\rb,\gamma}\}$:
    $$(\Psihatm)_{\ell=2}=\frac{1}{u^8}\sum_{|m|\leq 2}C_mY_{m,2}^{-2}(\cos\theta)e^{im\phi_-}+O(|u|^{-8-\delta/2})+O(r-r_-),$$
    where the constants $C_m$ are defined by \eqref{eq:cm}.
\end{thm}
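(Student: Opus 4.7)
The plan is to execute the ODE strategy from Step 4 of Section \ref{section:structure}, specialized to the $\ell=2$ projection of $\Psihatm$. The structural reason this works is that both $\drond\drond'$ and $\drond'\drond$ annihilate the spin $-2$, $\ell=2$ harmonics $Y_{m,2}^{-2}(\cos\theta)e^{im\phi_\pm}$ by \eqref{eq:annulemode}. Therefore, projecting the Teukolsky equation via the decompositions \eqref{eq:decteuk}--\eqref{eq:decteukhat}, the radial operator reduces on each $\ell=2$, $m$-th mode to a first-order operator in $r$, and the PDE becomes a first-order linear ODE in $r$ with $\ubar$ (resp.\ $u$) as parameter. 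I would solve this ODE separately in $\un$ and in $\deux\cup\trois$, match at $\{r=\rb\}$, and read off the $C_m/u^8$ leading term.

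In $\un$, setting $F:=\drin(\errm)^+_{m,2}$ and using \eqref{eq:teukdr}, the projected equation reads
\begin{align*}
\Delta F' + \bigl(2iam + 6(r-M)\bigr)F = S(r,\ubar),
\end{align*}
with $S = -(\teukm^{[\partial_t]}\ansatzm)^+_{m,2} + O(\ubar^{-8-\delta})$, where the error comes from Proposition \ref{prop:lapremiere}. Using \eqref{eq:teukdt}, $\drin\ansatzm = 0$, $T(\ubar^{-7}) = -7\ubar^{-8}$, and the mode-coupling coefficient $b_{m,2}^{-2}$ of Proposition \ref{prop:modecoupling}, I would compute
\begin{align*}
S(r,\ubar) = \frac{14 Q_{m,2}}{\ubar^8}\bigl(5r + iam + 2iab_{m,2}^{-2}\bigr) + O(\ubar^{-8-\delta}).
\end{align*}
Since $6(r-M)=3\Delta'$ and $r_{mod}'=a/\Delta$, the appropriate integrating factor is $\Delta^3 e^{2imr_{mod}(r)}$: multiplying the ODE yields $\partial_r(\Delta^3 e^{2imr_{mod}}F) = \Delta^2 e^{2imr_{mod}} S$. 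Integrating from $r=r_+$ (where $\Delta^3(r_+)=0$ absorbs the boundary term, thanks to boundedness of $F$ on $\eh$ from Proposition \ref{prop:lapremiere}) gives a precise formula for $F$, and a further $r$-integration, using \eqref{eq:hyppl} as initial data at $r=r_+$, yields $(\errm)^+_{m,2}$ throughout $\un$.

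In $\deux\cup\trois$, I would repeat the argument for $h:=(\Psihatm)^-_{m,2}$ and $G:=\drout h$. From \eqref{eq:teukhatdr} with $s=-2$ (the radial coefficient $-2(r-M)$ has opposite sign to the $\un$ case because $s+1=-1$) and $\drond'\drond\to 0$ on $\ell=2$, the ODE is
\begin{align*}
\Delta G' - \bigl(2iam + 2(r-M)\bigr)G = S'(r,u),
\end{align*}
with $S' = -(\teukhat_{-2}^{[\partial_t]}\Psihatm)^-_{m,2}$ controlled by the upper bounds on $T^{\leq 1}\drout^{\leq 1}(\Psihatm - \anshatm)$ obtained in Propositions \ref{prop:upperboundII} and \ref{prop:almostsharpdrout}. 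The dominant piece of $S'$ takes the form $\Delta^2 Q_{m,2}e^{2imr_{mod}}/\ubar^8$ times a known factor, plus $O(|u|^{-8-\delta})$. The corresponding integrating factor is $\Delta^{-1}e^{-2imr_{mod}(r)}$, and the boundary data at $r=\rb$ come from the $\un$ analysis after converting between the $\phi_+$ and $\phi_-$ bases (via $e^{2imr_{mod}(\rb)}$), applying the rescaling $\Psihatm = \Delta^2\Psim$, and relating $\drin$ to $\drout$ at $r=\rb$. Integrating inward and then integrating $G$ once more in $r$ yields $h(r,u)$.

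The combined integration merges $\int_{r_+}^{\rb}$ from $\un$ with $\int_{\rb}^{r_-}$ from $\deux\cup\trois$ into the single integral $\int_{r_-}^{r_+}$ in \eqref{eq:cm}; the denominator $2(r_--M)+iam$ emerges from combining the regularity of $G$ at the singular point $r=r_-$ of the ODE with the final $r$-integration from $G$ to $h$. The main obstacle I anticipate is the precise bookkeeping of remainder terms through the matching at $\{r=\rb\}$ (the rescaling, basis change, and derivative conversion) together with the $\Delta^{-1}$ singularity of the integrating factor at $r_-$: one must show that all subleading pieces, most notably the tail $\Delta^2(\rb)Q_{m,2}/\ubar^7$ from the $\ansatzm$ evaluated at the matching hypersurface, collapse into the remainder $O(|u|^{-8-\delta/2}) + O(r-r_-)$.
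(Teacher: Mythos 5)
Your region-$\un$ analysis is essentially the paper's: your integrating factor $\Delta^3 e^{2imr_{mod}}$ applied to the first-order equation for $\drin(\errm)^+_{m,2}$ reproduces exactly the paper's variation-of-constants formula with the homogeneous basis $\{1,{}^{(2)}\!v_2\}$, the source $P_m(r)/\ubar^8$ is computed correctly, and the vanishing of $\Delta^3$ at $r_+$ is the same mechanism the paper uses to discard the singular homogeneous solution. The problem is in region $\deux\cup\trois$, and it is structural. Your premise that ``both $\drond\drond'$ and $\drond'\drond$ annihilate the spin $-2$, $\ell=2$ harmonics'' is false: by \eqref{eq:dronddrond'spin} (equivalently $\drond'\drond=\drond\drond'+2s$), one has $\drond'\drond\,Y^{-2}_{m,2}e^{im\phi_\pm}=-(\ell-s)(\ell+s+1)Y^{-2}_{m,2}e^{im\phi_\pm}=-4\,Y^{-2}_{m,2}e^{im\phi_\pm}$. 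Only $\drond\drond'$ (which appears in the \emph{ingoing} radial operator $\teuk^{[\partial_r]}_{-2}$) kills these modes; the \emph{outgoing} radial operator $\teukhat^{[\partial_r]}_{-2}$ of \eqref{eq:teukhatdr} contains $\drond'\drond$, so the projected equation for $h=(\Psihatm)^-_{m,2}$ is the genuinely second-order ODE $\Delta h''-(2iam+2(r-M))h'-4h=S'$ (the paper's \eqref{eq:odeIIl}), which does not reduce to a first-order equation for $G=\drout h$. Everything downstream of this point in your proposal — the integrating factor $\Delta^{-1}e^{-2imr_{mod}}$, the regularity discussion at $r=r_-$, and the extraction of $C_m$ — is therefore built on the wrong equation.

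This matters because the leading $C_m/u^8$ term at the Cauchy horizon is produced precisely by the second-order structure. The paper's homogeneous basis is $\widehat{v}_{1,2}=\Delta^2e^{2imr_{mod}}\,{}^{(2)}\!v_{1,2}$ (the conjugation $\teukhat_{-2}(\Delta^2\cdot)=\Delta^2\teuk_{-2}(\cdot)$), and the surviving contribution on $\ch$ comes from the coefficient of $\widehat{v}_2$, whose finite nonzero limit $\widehat{v}_2(r_-)=-1/(4(r_--M)+2iam)$ is the origin of the denominator in \eqref{eq:cm}; the coefficient of $\widehat{v}_1=O((r-r_-)^2)$ carries the imprecise $O(\ubar^{-7-\delta})$ boundary data and is killed by the $\Delta^2$ factor (Remark \ref{rem:importante}). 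Moreover, determining the $\widehat{v}_2$-coefficient requires matching \emph{two} pieces of data at $r=\rb$ ($h$ and $\partial_r h$), and the key cancellation uses the specific combination of $F_{m,2}(\rb,\cdot)$ and $\partial_rF_{m,2}(\rb,\cdot)$ coming from \eqref{eq:drbienla} and \eqref{eq:asympto}. None of this bookkeeping is available in a first-order framework, so the gap is not cosmetic: you need to redo the region-$\deux\cup\trois$ step as a two-dimensional variation-of-constants problem.
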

\begin{thm}\label{thm:decayl3}
    Assume that $\Psim$ satisfies \eqref{eq:hyppl}. Then we have in region $\{r_-\leq r\leq \rb\}\cap\{w\leq w_{\rb,\gamma}\}$, 
    $$|(\Psihatm)_{\ell=3}|\lesssim |u|^{-8-\delta/2}+r-r_-.$$
\end{thm}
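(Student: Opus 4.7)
The strategy follows the ODE method outlined in Step~\ref{step:4} of Section~\ref{section:structure}. Since the ansatz $\anshatm$ lives entirely in the $\ell=2$ sector, projecting the identity $\teukhat_{-2}\Psihatm=0$ on the $\ell=3$ mode gives an equation directly for $(\Psihatm)_{\ell=3}$, and the task is to upgrade the consequence $|(\Psihatm)_{\ell=3}|\lesssim|u|^{-7-\delta}$ of the upper bound \eqref{eq:controlstructure} to the sharper $|u|^{-8-\delta/2}+(r-r_-)$.

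\emph{Step 1: mode-projected ODE.} Using the decomposition \eqref{eq:decteukhat} together with the commutation $[\teukhat_{-2}^{[\partial_r]},\mathbf{P}_{\ell}]=0$ of Lemma~\ref{lem:commproj}, I rewrite the equation as
$$\teukhat_{-2}^{[\partial_r]}(\Psihatm)_{\ell=3}=-\mathbf{P}_{\ell=3}\bigl(\teukhat_{-2}^{[\partial_t]}\Psihatm\bigr).$$
Projecting further on $Y^{-2}_{m,3}(\cos\theta)e^{im\phi_-}$ and using \eqref{eq:annulemode} to evaluate $\drond'\drond$ on the $\ell=3$ mode, the coefficient $f_m(u,r):=((\Psihatm)_{\ell=3})^-_{m,3}$ satisfies, at fixed $u$, the second-order linear ODE in $r$
$$\Delta\,\partial_r^2f_m-2\bigl(iam+r-M\bigr)\partial_r f_m-6f_m=S_m(u,r),$$
where $S_m$ is the $(m,3)$-coefficient of $-\mathbf{P}_{\ell=3}\bigl(\teukhat_{-2}^{[\partial_t]}\Psihatm\bigr)$.

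\emph{Step 2: bounds on source and boundary data.} Since every term of \eqref{eq:teukhatdt} carries at least one factor of $T$, a $T$-derivative of \eqref{eq:controlstructure} (together with the accompanying $\drout$-estimate of Proposition~\ref{prop:almostsharpdrout} to handle the $T\drout$ term), combined with Proposition~\ref{prop:modecoupling} to untangle the $\sin^2\theta,\cos\theta$ weights, yields
$$|S_m(u,r)|\lesssim|u|^{-8-\delta}+\frac{\Delta^2}{\ubar^8}\lesssim|u|^{-8-\delta}\quad\text{in }\deux\cup\trois.$$
On the transition hypersurface $\{r=\rb\}\cap\{w\leq w_{\rb,\gamma}\}$, where $\ubar\sim|u|$, Proposition~\ref{prop:lapremiere} and the absence of $\ell=3$ content in $\ansatzm$ provide the Cauchy data
$$|f_m(u,\rb)|+|\partial_r f_m(u,\rb)|\lesssim|u|^{-7-\delta},$$
after trading $\drin$ for $\drout$ via the identity $\partial_r=\drin+(r^2+a^2)\Delta^{-1}T+a\Delta^{-1}\Phi$, whose extra $T$ and $\Phi$ weights remain within the derivative budget allowed by $N_j,N_k$.

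\emph{Step 3: integration and main obstacle.} I then integrate this ODE from $r=\rb$ down to $r\in[r_-,\rb]$ along each $\{u=\mathrm{const}\}$ slice. Exploiting that the principal coefficient $\Delta$ vanishes at $\ch$, the equation degenerates on $\ch$ to the algebraic relation
$$-2\bigl(iam+r_--M\bigr)\partial_r f_m(u,r_-)-6f_m(u,r_-)=S_m(u,r_-),$$
which pins $f_m(u,r_-)$ to the $|u|^{-8-\delta}$-sized source, provided $\partial_r f_m(u,r_-)$ is controlled. The $(r-r_-)$ correction in the final bound arises from the two homogeneous solutions of the ODE propagating boundary data from $\{r=\rb\}$ toward $\ch$; their contribution vanishes to first order at $r=r_-$ because the constant homogeneous mode is pinned by the algebraic identity on $\ch$. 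The main obstacle is precisely the control of $\partial_r f_m$ up to $\ch$: it requires running the same argument at the level of $\drout\Psihatm$ (commuted via Proposition~\ref{prop:commdrout}) before the algebraic relation can be invoked at $r=r_-$, and the small $\delta/2$ loss in the exponent reflects the decay absorbed while matching the upper bounds between $\un$ and $\deux\cup\trois$.
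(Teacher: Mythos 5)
Your Step 1 is the right starting point (modulo a sign/eigenvalue slip: $\teukhat_{-2}^{[\partial_r]}$ contains $\drond'\drond$, whose eigenvalue on $Y^{-2}_{m,3}$ is $-10$, not $-6$; the value $-6$ belongs to the ingoing operator $\teuk_{-2}^{[\partial_r]}$ used in region $\un$). But there are two genuine gaps. First, the source estimate in Step 2 is false and hides the crux of the matter. The $\cos\theta T$ term in $\teukhat_{-2}^{[\partial_t]}$ couples the $\ell=2$ ansatz into the $\ell=3$ equation (via Proposition \ref{prop:modecoupling}), producing a source $\frac{28ia\Delta^2e^{2imr_{mod}}}{\ubar^8}b^{-2}_{m,2}Q_{m,2}+O(|u|^{-8-\delta/2})$. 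In region $\deux$ one has $\Delta^2\sim 1$ near $r=\rb$ and $\ubar\sim|u|$, so this is of size $|u|^{-8}$, \emph{not} $O(|u|^{-8-\delta})$. Moreover, in the variation-of-constants formula this $\Delta^2/\ubar^8$ source is integrated against the homogeneous solution that grows like $\Delta^{-2}$, so it contributes a bona fide $C_m/u^8$ term to $f_m(u,r_-)$, which would violate the claimed bound $|u|^{-8-\delta/2}+(r-r_-)$. The theorem is true only because the resulting coefficient is proportional to $\int_{r_-}^{r_+}\Delta^2(r')e^{2imr_{mod}(r')}(r'-M+iam/3)\,\dee r'=\frac{1}{6}\bigl[\Delta^3e^{2imr_{mod}}\bigr]_{r_-}^{r_+}=0$ (the paper's identity \eqref{eq:remarkable}). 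This cancellation is the entire content of the $\ell=3$ case and is absent from your argument.

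Second, Step 3 does not close. The relation you obtain by setting $\Delta=0$ at $r=r_-$ is satisfied by \emph{every} solution of the ODE that is $C^1$ up to the Cauchy horizon — in particular by the homogeneous solution ${}^{(3)}\widehat{v_2}$, which has a nonzero finite limit there — so it cannot ``pin'' $f_m(u,r_-)$, and it certainly cannot rule out an $O(|u|^{-7-\delta})$ multiple of ${}^{(3)}\widehat{v_2}$ coming from your Cauchy data at $r=\rb$, which are only $O(|u|^{-7-\delta})$. The paper resolves this by first solving the analogous ODE in region $\un$, using regularity at $r_+$ and the Price-law data on $\eh$ to show that the \emph{specific combination} $\partial_rF_{m,3}-({}^{(3)}v_1'/{}^{(3)}v_1)F_{m,3}$ at $r=\rb$ (the one that kills the unknown $O(\ubar^{-7-\delta})$ homogeneous amplitude) equals an explicit $1/\ubar^8$ integral plus $O(\ubar^{-8-\delta})$; this is exactly the coefficient of ${}^{(3)}\widehat{v_2}$ in $\deux\cup\trois$. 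Your alternative — commuting with $\drout$ and ``running the same argument'' to control $\partial_rf_m$ at $\ch$ — is circular, since the a priori bound on $\drout\errhatm$ from Proposition \ref{prop:almostsharpdrout} is only $|u|^{-7-\delta/2}$ and improving it is equivalent to the original problem. You need both the region-$\un$ ODE analysis and the explicit cancellation \eqref{eq:remarkable} to obtain the stated bound.
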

\begin{thm}\label{thm:decayhigher}
    Assume that $\Psim$ satisfies \eqref{eq:hyppl}. Then we have in $\{r_-\leq r\leq \rb\}\cap\{w\leq w_{\rb,\gamma}\}$,
    $$|(\Psihatm)_{\ell\geq 4}|\lesssim|u|^{-8-\delta/2}+r-r_-.$$
\end{thm}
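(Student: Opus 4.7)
The argument follows Step \ref{step:4} of Section \ref{section:structure} mode-by-mode. Since $\anshatm$ is supported on $\ell=2$, we have $(\Psihatm)_{\ell \geq 4} = (\errhatm)_{\ell \geq 4}$ in $\deux \cup \trois$, and the upper bound \eqref{eq:controlstructure} from Section \ref{section:upperbounds} already yields $|(\Psihatm)_{\ell\geq 4}| \lesssim |u|^{-7-\delta}$. The goal is to gain one power of $|u|^{-1}$ at $\ch$ (up to an $O(r-r_-)$ correction away from $\ch$) by projecting the Teukolsky equation onto an ODE in $r$ via the $(\partial_t,\partial_r)$-decomposition and exploiting the $(\ell+2)(\ell-1)$ ``mass'' term produced by $\drond'\drond$.

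\textbf{Projection onto an ODE.} Fix $\ell\geq 4$ and $|m|\leq \ell$, and set $X_{m,\ell}(u,r):=(\Psihatm)_{m,\ell}^-$. Using the decomposition \eqref{eq:decteukhat}, Lemma \ref{lem:commproj}, and the eigenvalue $-(\ell+2)(\ell-1)$ of $\drond'\drond$ on $Y_{m,\ell}^{-2}$, the equation $\teukhat_{-2}\Psihatm=0$ projects onto
\begin{equation*}
\Delta\,\partial_r^2 X_{m,\ell} - 2(iam + r - M)\,\partial_r X_{m,\ell} - (\ell+2)(\ell-1)\,X_{m,\ell} = F_{m,\ell}(u,r),
\end{equation*}
with $F_{m,\ell}$ the $(m,\ell)$-projection of $-\teukhat_{-2}^{[\partial_t]}\Psihatm$. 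Because every summand of $\teukhat_{-2}^{[\partial_t]}$ in \eqref{eq:teukhatdt} contains a $T$-factor and each $T$ gains one order of $|u|^{-1}$ decay by \eqref{eq:controlstructure} combined with the $\drout$-derivative bounds from Proposition \ref{prop:almostsharpdrout}, we obtain $|F_{m,\ell}(u,r)| \lesssim |u|^{-8-\delta}$ uniformly in $(m,\ell)$; for $\ell=4$ an additional contribution from $\sin^2\theta\, T^2\anshatm$ (via the $\ell=2\to\ell=4$ coupling of Proposition \ref{prop:modecoupling}) is controlled by $\Delta^2/\ubar^9$ and absorbed in the same bound.

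\textbf{Pointwise bound at $\ch$.} At $r=r_-$ we have $\Delta=0$ and $iam+r_--M\neq 0$, so the ODE reduces to the algebraic relation
\begin{equation*}
(\ell+2)(\ell-1)\,X_{m,\ell}(u,r_-) = -2(iam+r_--M)\,\partial_r X_{m,\ell}(u,r_-) - F_{m,\ell}(u,r_-).
\end{equation*}
Applying the same construction to $TX_{m,\ell}$ (which satisfies the same ODE with source $TF_{m,\ell}$, $|TF_{m,\ell}|\lesssim |u|^{-9-\delta}$), we deduce $|TX_{m,\ell}(u,r_-)| \lesssim \ell^{-2}|u|^{-9-\delta/2}$; integrating in $u$ from $-\infty$ (where $X_{m,\ell}\to 0$ by the pointwise bounds of Section \ref{section:upperbounds} and the decay of $\anshatm$) gives $|X_{m,\ell}(u,r_-)| \lesssim \ell^{-2}|u|^{-8-\delta/2}$, and a Taylor expansion in $r$ around $r_-$ yields the $O(r-r_-)$ correction. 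The loss $\delta/2$ is inherited from the matching between $\deux\cup\trois$ and the redshift region $\un$, where the boundary data for the ODE at $\{r=\rb\}$ is supplied by the analogous ingoing ODE analysis for $\errm$ based on Proposition \ref{prop:lapremiere}.

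\textbf{Summation over modes.} The per-mode bound scales as $\ell^{-2}$, which does not sum in $(m,\ell)$ given the $\sim\ell$-dimensional $m$-space and the $L^\infty$ growth of spin-weighted spherical harmonics. To close, we iterate Proposition \ref{prop:salva}: for any $k$, $|(\Psihatm)_{m,\ell}^-| \lesssim \ell^{-2k}\|\mcq_{-2}^{\leq k}T^{\leq 2k}\Psihatm\|_{L^\infty(S)}$. Since $\mcq_{-2}$ and $T$ both commute with $\teukhat_{-2}$ by \eqref{eq:commutateurs}, the ODE analysis above applies equally to $\mcq_{-2}^j T^{2j}\Psihatm$, and choosing $k$ large enough (which fixes the values of $N_k$, $N_j$ in Remark \ref{rem:postthm}) makes the sum over $(m,\ell)$ converge to the stated bound. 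The principal obstacle, flagged in Section \ref{section:structure}, is precisely this mode-summation: the per-mode ODE estimates must be derived with constants uniform in $\ell$ so that, combined with the Sobolev-type iteration of Proposition \ref{prop:salva}, the series converges.
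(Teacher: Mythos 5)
Your setup — projecting $\teukhat_{-2}\Psihatm=0$ onto an outgoing $(m,\ell)$ mode via the $(\partial_t,\partial_r)$-decomposition \eqref{eq:decteukhat} to obtain the radial ODE $\Delta\partial_r^2X_{m,\ell}-2(iam+r-M)\partial_rX_{m,\ell}-(\ell+2)(\ell-1)X_{m,\ell}=F_{m,\ell}$ with $F_{m,\ell}$ controlled through \eqref{eq:teukhatdt} and the $T$-gain from Section \ref{section:upperbounds} — is exactly the paper's starting point. But your resolution of the ODE has a genuine gap. Evaluating at $r=r_-$ (where $\Delta=0$) yields the algebraic relation $(\ell+2)(\ell-1)X_{m,\ell}(u,r_-)=-2(iam+r_--M)\partial_rX_{m,\ell}(u,r_-)-F_{m,\ell}(u,r_-)$, and this relation is only as good as your bound on $\partial_rX_{m,\ell}(u,r_-)=(\drout T^{(\cdot)}\Psihatm)^-_{m,\ell}$ at the Cauchy horizon. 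The only available estimate is Proposition \ref{prop:almostsharpdrout}, namely $|T^j\drout\errhatm|\lesssim|u|^{-7-j-\delta/2}$; it is one power of $|u|$ short. Concretely, for $TX_{m,\ell}$ the relation gives $|TX_{m,\ell}(u,r_-)|\lesssim\ell^{-2}\bigl(|T\drout\Psihatm|+|TF_{m,\ell}|\bigr)\lesssim\ell^{-2}|u|^{-8-\delta/2}$, not the $|u|^{-9-\delta/2}$ you assert, and integrating in $u$ then returns only $|u|^{-7-\delta/2}$ — no improvement over what is already known. Iterating with more $T$'s does not help: the $\partial_rT^jX_{m,\ell}$ term always dominates at $|u|^{-7-j-\delta/2}$, so each extra $T$-derivative and each subsequent $u$-integration exactly cancel. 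Obtaining the missing bound $|T\drout\Psihatm|\lesssim|u|^{-9-\delta/2}$ at $\ch$ is at least as hard as the theorem itself.

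The paper avoids this trap by never using a pointwise relation at $r=r_-$. Instead it solves the full radial boundary-value problem: first in region $\un$, where regularity at $r=r_+$ kills the singular homogeneous solution and the variation-of-constants formula produces the key identity \eqref{eq:derrHmell} — the specific combination $\parr{}F_{m,\ell}-({}^{(m,\ell)}\!v_1'/{}^{(m,\ell)}\!v_1)F_{m,\ell}$ at $r=\rb$ decays like $\ubar^{-8-\delta}$ even though $F_{m,\ell}$ and $\parr{}F_{m,\ell}$ separately are only $O(\ubar^{-7-\delta})$; this combination is precisely what feeds the coefficient $\widehat{B}_{m,\ell}(u)$ of the bounded homogeneous solution ${}^{(m,\ell)}\!\widehat{v_2}$ in $\deux\cup\trois$, and that is where the extra power of $|u|^{-1}$ comes from. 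Carrying this out uniformly in $\ell$ requires the explicit polynomial homogeneous solutions of Proposition \ref{prop:v1ell}, in particular the monotonicity bound \ref{item:(iv)}, and the summation is then done in $L^2(S(u,\ubar))$ (where $\sum_{\ell,m}\ell^{-4}$ converges and the remaining projections resum to an $L^2$ norm of $\mcq_{-2}^kT^{j+1}\errhatm$) followed by the Sobolev embedding \eqref{eq:sobolev} — rather than by iterating Proposition \ref{prop:salva} to arbitrary order as you propose. Your mode-summation discussion is workable in spirit, but the central missing idea is the matched two-region ODE integration that converts the imprecise data into a precise coefficient; without it the argument does not close.
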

We are now ready to prove Theorem \ref{thm:main}.
\begin{proof}[Proof of Theorem \ref{thm:main}]
We decompose $\Psihatm$ as follows:
$$\Psihatm=(\Psihatm)_{\ell=2}+(\Psihatm)_{\ell=3}+(\Psihatm)_{\ell\geq 4}.$$
The proof of Theorem \ref{thm:main} then proceeds by combining Theorems \ref{thm:precise2l}, \ref{thm:decayl3}, and \ref{thm:decayhigher}.
\end{proof}
The rest of the paper is devoted to the proofs of Theorems \ref{thm:precise2l}, \ref{thm:decayl3}, and \ref{thm:decayhigher}. In Section \ref{section:upperbounds}, we prove energy estimates and pointwise bounds for the spin $-2$ Teukolsky equation in the whole Kerr black hole interior. These bounds are then used in Section \ref{section:ccun} and \ref{section:ccdeux} where we prove Theorem \ref{thm:precise2l}, in Sections  \ref{section:l=3I} and \ref{section:l=3II} where we prove Theorem \ref{thm:decayl3}, and finally in Sections \ref{section:higherI} and \ref{section:higherII} where we prove Theorem \ref{thm:decayhigher}.

\section{Upper bound for $\errhatm$ near $\ch$}
\label{section:upperbounds}
 From now on, $\Psim$ satisfies $\teukm\Psim=0$, and we denote:
$$\errhatm=\Delta^2\errm=\Psihatm-\anshatm.$$
\subsection{Energy method for $\teukhat_{-2}$ in region $\deux$}\label{section:energymethod}

We begin this section by introducing a modified Teukolsky operator $\teukhat_{-2}^{(c,V)}$. Similarly as in \cite{G24}, this definition will be used to unify the treatment of the PDEs satisfied by $\errhatm$ and $\drout\errhatm$ after commuting the Teukolsky equation with $\drout$.

\begin{defi}
    Let $V,c$ be real numbers. We define the following spin-weighted operator
\begin{align}
    \teukhat_{s}^{(c,V)}:=&-4(r^2+a^2)e_3e_4+ \mcu^2+\frac{1}{\sin\theta}\partial_\theta(\sin\theta\partial_\theta)-4 ias\cos\theta T+2is\frac{a^2\sin\theta\cos\theta}{(r^2+a^2)}\mcu\label{eq:teukmodifdef}\\
    &+2 r (e_4-\mu e_3)+4 cs[(r-M)\mu^{-1}e_4- rT]+\frac{2ar }{r^2+a^2}\Phi+ s-s^2\frac{a^4\sin^2\theta\cos^2\theta}{(r^2+a^2)^2}+V.\nn
\end{align}
\end{defi}
\begin{rem}
    We remark the following identities, that are deduced from \eqref{eq:teukhate3e4} and \eqref{eq:teukmodifdef}:
    \begin{align}
        &\teukhat_{s}^{(c,V)}=\teukhat_s+4s(c-1)[(r-M)\mu^{-1}e_4-rT]+V,\label{eq:teukmodif}\\
        &\teukhat_{s}=\teukhat_s^{(1,0)}.\label{eq:teukavecteukmodif}
    \end{align}
In this paper, we will only use a finite number of constants $c,V$, so that we do not make explicit the dependance in $c,V$ in the notations $\lesssim$ and $O$. Notice that \eqref{eq:teukmodif} above implies, together with \eqref{eq:commutateurs},
\begin{align}\label{eq:commutateursmodif}
[\teukhat_s^{(c,V)},T]=[\teukhat_s^{(c,V)},\Phi]=[\teukhat_s^{(c,V)},\mcq_s]=0.
\end{align}
\end{rem}
In this section, we will consider a spin $-2$ scalar $\Psihat$ satisfying, for $\beta\in\mathbb{R}$, $c>0$, $V\in\mathbb{R}$ and $0\leq j\leq N_j'$, $0\leq 2k_1+k_2\leq N_k'$:
\begin{align}
    \bullet\:&T^j\mcq_{-2}^{k_1}\Phi^{k_2}\Psihat=O(\ubar^{-\beta-j})\text{   on   }\{r=r_\mathfrak{b}\},\label{eq:hypborneII}\\
    \bullet\:&        \iint_{\left\{\underline{w}_1 \leq \underline{w} \leq \underline{w}_2\right\} \cap \{r=\rb\}} \mathbf{e}[T^j\carterm^{k_1}\Phi^{k_2}\psihat]\dee\nu\mathrm{d} \underline{u}\lesssim\wbar_1^{-2\beta-2j}+\int_{\underline{w}_1}^{\underline{w}_2} \underline{w}^{-2\beta-2j} \mathrm{d} \underline{w},\label{eq:hypenerII}\\
    \bullet\:&\teukhat_{-2}^{(c,V)}T^j\mcq_{-2}^{k_1}\Phi^{k_2}\Psihat=O(\ubar^{-\beta-j})\text{   in   }\deux.\label{eq:hypteukII}
\end{align}
Note that \eqref{eq:hypteukII} will be satisfied by $\errm$ with $\beta=7+\delta$, see Proposition \ref{prop:teukdec}. The rest of this section is dedicated to proving the almost-propagation of the polynomial decay \eqref{eq:hypborneII} from the spacelike hypersurface $\{r=\rb\}$ to the whole region $\deux$, see Proposition \ref{prop:propagII}.  We will follow the strategy of \cite{scalarMZ} which introduces a log-degenerate energy control in the interior of the Kerr black hole for the scalar wave equation. This section is an extension of this method to the Teukolsky equation. When following the energy method of \cite{scalarMZ}, we will see that the negative $-2$ spin for the modified Teukolsky operator $\teukhat_{-2}^{(c,V)}$ will give the positivity of the bulk term in the energy estimate. Contrary to what was done in \cite{G24} for the spin $+2$ Teukolsky operator $\teuk_{+2}^{(c,V)}$, we cannot directly control the non-degenerate energy of the solution, but a degenerate one, and there is a loss of precision of order $\gamma$ in the final polynomial bound.
\begin{defi}
For $\Psihat$ a spin-weighted scalar, and $\alpha\geq 0$, we define 
    $$\ener_\alpha[\Psihat]:=(-\mu)|{\log(-\mu)}|^{-\alpha}|e_3\Psihat|^2+|e_4\Psihat|^2+|\mcu\Psihat|^2+|\partial_\theta\Psihat|^2.$$
\end{defi}
\begin{rem} Let us note the following concerning $\ener_\alpha$: 
\begin{itemize}
    \item The weight $(-\mu)|{\log(-\mu)}|^{-\alpha}$ degenerates at the event and Cauchy horizons.
    \item Although the proof of the main theorem works with any fixed value of $\alpha>1$, in practice we will only use the value $\alpha=3/2$, so we still denote the $\alpha$-dependent bounds with $O$ and $\lesssim$. For example, $\mu^2\lesssim(-\mu)|{\log(-\mu)}|^{-\alpha}$ so $\ener_{deg}[\Psihat]\lesssim\ener_\alpha[\Psihat]$ in $\deux\cup\trois$.
\end{itemize}
     
\end{rem}

We first prove the following energy estimate:

\begin{prop}\label{prop:enerestdeuxx}
    Let $\psihat$ be a spin $-2$ scalar satisfying \eqref{eq:hypteukII} with $\beta,V\in\mathbb{R}$, $c>0$. Then for $p=p(a,M)>0$ sufficiently large, and $\rb=\rb(a,M)$ sufficiently close to $r_-$, we have, for $0\leq j\leq N_j'$, $0\leq 2k_1+k_1\leq N_k'$, $\alpha>1$, and $1\leq\wbar_1\leq\wbar_2$:
    \begin{align*}
 &\iint_{\left\{\underline{w}=\underline{w}_2\right\} \cap \mathbf{II}} \mathbf{e}_\alpha[T^j\carterm^{k_1}\Phi^{k_2}\Psihat](-\mu) \dee\nu \mathrm{d} u+\iiint_{\left\{\underline{w}_1 \leq \underline{w} \leq \underline{w}_2\right\} \cap \mathbf{II}} \wbar^{-\gamma}\mathbf{e}_{\alpha}[T^j\carterm^{k_1}\Phi^{k_2}\Psihat](-\mu) \dee\nu \mathrm{d} u \mathrm{d} \underline{u}  \\
&+\iint_{\Gamma\cap\left\{\underline{w}_1 \leq \underline{w} \leq \underline{w}_2\right\}} \mathcal{T}_\Gamma[T^j\carterm^{k_1}\Phi^{k_2}\Psihat] \dee\nu \mathrm{d} \underline{u}\lesssim  \iint_{\left\{\underline{w}=\underline{w}_1\right\} \cap \mathbf{II}} \mathbf{e}_\alpha[T^j\carterm^{k_1}\Phi^{k_2}\Psihat](-\mu) \dee\nu \mathrm{d} u\\
&+\iint_{\left\{\underline{w}_1 \leq \underline{w} \leq \underline{w}_2\right\} \cap \{r=\rb\}} \mathbf{e}[T^j\carterm^{k_1}\Phi^{k_2}\Psihat]\dee\nu\mathrm{d} \underline{u}+\int_{\underline{w}_1}^{\underline{w}_2} \underline{w}^{-2\beta-2j} \mathrm{d} \underline{u},
\end{align*}
where for any $\psihat$, denoting $f(r)=|\log(-\mu)|^{-\alpha}$ and $g(r)=(r^2+a^2)^p$,
\begin{align*}
    \mathcal{T}_\Gamma[\psihat]=&2(r^2+a^2)g(r)|\partial_\ubar\psihat|^2-\frac{1}{2}\mu f(r)(|\partial_\theta\psihat|^2+|\mcu\psihat|^2)-a\sin\theta \mu \mathfrak{R}\left(\overline{X(\psihat)}\mcu\psihat)\right)\\
    &+(1-\gamma\ubar^{\gamma-1})\left(2(r^2+a^2)f(r)|\ethat\psihat|^2-\frac{1}{2}\mu g(r)(|\partial_\theta\psihat|^2+|\mcu\psihat|^2)+a\sin\theta \mu \mathfrak{R}\left(\overline{X(\psihat)}\mcu\psihat)\right)\right),
\end{align*}
and $X(\psihat)=|{\log(-\mu)}|^{-\alpha}\ethat\psihat+(r^2+a^2)^p\partial_\ubar\psihat$.
\end{prop}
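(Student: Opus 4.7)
Since $T$, $\carterm$, and $\Phi$ all commute with $\teukhat_{-2}^{(c,V)}$ by \eqref{eq:commutateursmodif}, the commuted quantity $\widetilde{\psihat}:=T^j\carterm^{k_1}\Phi^{k_2}\psihat$ again satisfies a modified Teukolsky equation whose RHS is $O(\ubar^{-\beta-j})$ in $\deux$ by \eqref{eq:hypteukII}, and whose pointwise boundary decay on $\{r=\rb\}$ and bulk flux on $\{r=\rb\}$ come directly from \eqref{eq:hypborneII}-\eqref{eq:hypenerII}. It therefore suffices to prove the inequality in the base case $j=k_1=k_2=0$ for a spin $-2$ scalar $\psihat$ with source bounded by $\ubar^{-\beta}$ and appropriate boundary control.

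\textbf{Multiplier setup.} Following the strategy of \cite{scalarMZ} and its adaptation to the Teukolsky operator in \cite{G24}, I would test the modified Teukolsky equation against $\overline{X(\psihat)}$, with
\[
X(\psihat)= f(r)\ethat\psihat+g(r)\partial_\ubar\psihat,\qquad f(r)=|\log(-\mu)|^{-\alpha},\quad g(r)=(r^2+a^2)^p,
\]
and integrate the real part over $\{\wbar_1\leq\wbar\leq\wbar_2\}\cap\deux$ against the volume form. The $f\ethat$-part serves as the Cauchy-horizon-adapted multiplier whose $\wbar$-derivative will produce the coercive energy $\ener_\alpha$ (only degenerate in $e_3$ because of the $|\log(-\mu)|^{-\alpha}$ weight, in contrast to the non-degenerate spin $+2$ multiplier of \cite{G24}), while the large-$p$ power in $g$ is what makes the scheme close near $\{r=\rb\}$.

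\textbf{Integration by parts and boundary terms.} Using the expression \eqref{eq:teukmodifdef} and the identities \eqref{eqn:coordDNLout}, the principal part $-4(r^2+a^2)e_3 e_4+\mcu^2+\frac{1}{\sin\theta}\partial_\theta(\sin\theta\partial_\theta)$ tested against $\overline{X(\psihat)}$ yields, after integration by parts in $e_3$, $e_4$, $\partial_\theta$, and $\Phi$ (the last via Lemma \ref{lem:U}), boundary contributions on the four pieces of $\partial(\{\wbar_1\leq\wbar\leq\wbar_2\}\cap\deux)$: the two spacelike slices $\{\wbar=\wbar_i\}\cap\deux$ assemble precisely into the $\ener_\alpha$-energies written in the proposition; the $\{r=\rb\}$ boundary term is dominated by $\iint\ener[\psihat]\dee\nu\dee\ubar$ on that hypersurface, which is the second RHS term; the $\Gamma$ boundary term reduces, after grouping according to the conormal $\dee(2r^*-\ubar^\gamma)$, to $\mathcal{T}_\Gamma[\psihat]$ (the factor $1-\gamma\ubar^{\gamma-1}$ comes from the tangent direction along $\Gamma$). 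The remaining first-order and zeroth-order terms from \eqref{eq:teukmodifdef} combine with the derivative terms $f'\ethat$ and $g'\partial_\ubar$ into a bulk integrand.

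\textbf{Bulk positivity and closing the estimate (main obstacle).} The core difficulty is to show that, after these manipulations and absorbing the $V$-term and the lower-order coefficients, the bulk integrand is bounded below by a multiple of $(-\mu)\ener_\alpha[\psihat]$ times a weight which is at worst $\wbar^{-\gamma}$ in $\deux$ (here one uses $\ubar\sim\wbar$ in $\deux$ via Lemma \ref{lem:usimubarII}, so that $\ubar^{\gamma-1}\lesssim\wbar^{\gamma-1}$). Two ingredients make this work: (i) the sign of $-2\,c\,(r-M)\mu^{-1}|e_4\psihat|^2$ from the spin $-2$ first-order term combines favorably with $g'\cdot|e_4\psihat|^2$ once $p$ is taken large, using $r-M<0$, $\mu<0$, and the explicit form of $\mu'/\mu$ near $r_-$; (ii) the weight $|\log(-\mu)|^{-\alpha}$ with $\alpha>1$ absorbs the logarithmic loss from the $f'/f\sim\mu'/(\mu\log(-\mu))$ factor while still giving a positive contribution to $(-\mu)|e_3\psihat|^2$. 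With $\rb$ sufficiently close to $r_-$, the remaining lower-order terms are absorbed by the Poincar\'e inequality \eqref{eq:poincare}, while the source term is treated by Cauchy-Schwarz: one splits $|\Real(\overline{X(\psihat)}\cdot O(\ubar^{-\beta}))|\leq\varepsilon\ener_\alpha[\psihat](-\mu)+C\varepsilon^{-1}\ubar^{-2\beta}$, absorbs the first piece into the bulk and integrates the second in $u$ over $\deux\cap\{\wbar=\text{cst}\}$ (where $|u|\sim\wbar$) to produce the $\int\wbar^{-2\beta}\dee\wbar$ term. Choosing $p$ and $\rb$ in this order closes the estimate.
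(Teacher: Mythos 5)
Your proposal follows essentially the same route as the paper: the same multiplier $X(\psihat)=f(r)\ethat\psihat+g(r)\partial_\ubar\psihat$, the same integration by parts producing the three boundary terms and the bulk, the same two coercivity mechanisms (the favorable sign of the first-order term $4csg(r)(r-M)|\partial_\ubar\psihat|^2$ for $c>0$, $s=-2$, $r<M$ with $p$ large, and the $|\log(-\mu)|^{-\alpha}$ weight with $\alpha>1$), and the same Cauchy--Schwarz treatment of the source. The one bookkeeping point to watch is that the bulk's raw coercivity is $(-\mu)\ener_{\alpha+1}$, downgraded to $\wbar^{-\gamma}(-\mu)\ener_\alpha$ only via $|\log(-\mu)|\sim r^*\lesssim\wbar^\gamma$ in $\deux$; hence the Cauchy--Schwarz piece $\varepsilon(-\mu)|X(\psihat)|^2$ must be absorbed using $|X(\psihat)|^2\lesssim\ener_{\alpha+1}[\psihat]$ rather than into $\varepsilon(-\mu)\ener_\alpha$ as written, and the $\varepsilon^{-1}$ piece should retain its factor $(-\mu)$, which is what permits the change of variables from $u$ to $r$ and yields $\int\wbar^{-2\beta}\,\dee\wbar$ without a $\wbar^{\gamma}$ loss.
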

\begin{proof}
It is enough to prove the case $j=k_1=k_2=0$. We fix the spin $s=-2$ but we do the computations for a general spin $s$. Recall that defining $\ethat=-\mu e_3$ we have in outgoing Eddington-Finkelstein coordinates $\partial_u=\ethat+a/(r^2+a^2)\Phi$ and $\partial_\ubar=e_4$. Thus using \eqref{eq:teukmodifdef} we get in $\deux$:
\begin{align}
    \mu\teukhat_s^{(c,V)}&=4(r^2+a^2)\ethat\partial_\ubar+ \mu \mcu^2+\frac{\mu}{\sin\theta}\partial_\theta(\sin\theta\partial_\theta)-4 ias\mu\cos\theta T+2is\mu \frac{a^2\sin\theta\cos\theta}{(r^2+a^2)}\mcu\label{eq:laprem}\\
    &\quad+2 r\mu (\ethat+\partial_\ubar)+4 cs[(r-M)\partial_\ubar- r\mu T]+\frac{2ar\mu  }{r^2+a^2}\Phi+ s\mu -s^2\mu\frac{a^4\sin^2\theta\cos^2\theta}{(r^2+a^2)^2}+V,\nonumber\\
    &=4(r^2+a^2)\partial_\ubar \ethat+ \mu \mcu^2+\frac{\mu}{\sin\theta}\partial_\theta(\sin\theta\partial_\theta)-4 ias\mu\cos\theta T+2is\mu \frac{a^2\sin\theta\cos\theta}{(r^2+a^2)}\mcu\label{eq:ladeuz}\\
    &\quad+2 r\mu (\ethat+\partial_\ubar)+4 cs[(r-M)\partial_\ubar- r\mu T]-\frac{2ar\mu  }{r^2+a^2}\Phi+ s\mu -s^2\mu\frac{a^4\sin^2\theta\cos^2\theta}{(r^2+a^2)^2}+V.\nonumber
\end{align}
Note that we used here
$$4(r^2+a^2)[\ethat,\partial_\ubar]=-\frac{4ar\mu}{r^2+a^2}\Phi.$$
Next, similarly as in \cite{scalarMZ} for the scalar wave equation, we multiply the modified inhomogeneous Teukolsky equation \eqref{eq:hypteukII} by $\mu$ and by the complex conjugate of $X(\Psihat):=f(r)\ethat\Psihat+g(r)\partial_\ubar\Psihat,$ where $$f(r):=|{\log(-\mu)}|^{-\alpha},\quad g(r)=(r^2+a^2)^p,$$
with $p=p(a,M)\gg 1$ and $\alpha>1$ that will be chosen later, and $\rb=\rb(a,M)$ that will be chosen sufficiently close to $r_-$. We take the real part, and integrate on $S(u,\ubar)$ with respect to $\dee\nu$ to get
\begin{align}
    \intS \Real\left(f(r)\overline{\ethat\Psihat}\mu\teukhat_{s}^{(c,V)}\Psihat\right)+\Real\left(g(r)\overline{\partial_\ubar\Psihat}\mu\teukhat_{s}^{(c,V)}\Psihat\right)\dee\nu=\intS\mu\Real\left(\overline{X(\Psihat)}O(\ubar^{-\beta})\right)\dee\nu.\label{eq:unefois}
\end{align}
By Lemma \ref{lem:intS}, this can be rewritten 
\begin{align}\label{eq:dudubar}    \partial_\ubar\left(\int_{S(u,\ubar)}\mathbf{F}_\ubar[\Psihat]\dee\nu\right)+\partial_u\left(\int_{S(u,\ubar)}\mathbf{F}_u[\Psihat]\dee\nu\right)+\int_{S(u,\ubar)}\mathbf{B}[\Psihat]\dee\nu=\intS\mu\Real\left(\overline{X(\Psihat)}O(\ubar^{-\beta})\right)\dee\nu,
\end{align}
where $\mathbf{F}_\ubar[\Psihat]$, $\mathbf{F}_u[\Psihat]$ and $\mathbf{B}[\Psihat]$ are defined in Lemma \ref{lem:intS}. As in \cite{scalarMZ}, integrating \eqref{eq:dudubar} on $\{\wbar_1\leq\wbar\leq\wbar_2\}\cap\deux$ with respect to $\dee u\dee\ubar$ gives 
\begin{align}
 \iint_{\left\{\underline{w}=\underline{w}_2\right\} \cap \mathbf{II}} \mathcal{T}_{\underline{w}}[\Psihat] \dee\nu \mathrm{d} u&+\iint_{\Gamma\cap\left\{\underline{w}_1 \leq \underline{w} \leq \underline{w}_2\right\}} \mathcal{T}_\Gamma[\Psihat] \dee\nu \mathrm{d} \underline{u}+\iiint_{\left\{\underline{w}_1 \leq \underline{w} \leq \underline{w}_2\right\} \cap \mathbf{II}} \mathbf{B}[\Psihat] \dee\nu \mathrm{d} u \mathrm{d} \underline{u} \nonumber\\
& =\iint_{\left\{\underline{w}=\underline{w}_1\right\} \cap \mathbf{II}} \mathcal{T}_{\underline{w}}[\Psihat] \dee\nu \mathrm{d} u+\iint_{\{r=r_{\mathfrak{b}}\}  \cap\left\{\underline{w}_1 \leq \underline{w} \leq \underline{w}_2\right\}} \mathcal{T}_r[\Psihat] \dee\nu \mathrm{d} \underline{u}\nn\\
&\quad+\iiint_{\left\{\underline{w}_1 \leq \underline{w} \leq \underline{w}_2\right\} \cap \mathbf{II}} \mu\mathfrak{R}\left(\overline{X(\Psihat)}O(\ubar^{-\beta})\right) \dee\nu \mathrm{d} u \mathrm{d} \underline{u},\label{eq:justeavant}
\end{align}
where $\mathcal{T}_{\underline{w}}[\Psihat]=\left(1-\frac{1}{2} \mu\right) \mathbf{F}_{\underline{u}}[\Psihat]-\frac{1}{2} \mu \mathbf{F}_u[\Psihat]$, $\mathcal{T}_r[\Psihat]=-\frac{1}{2} \mu\left(\mathbf{F}_{\underline{u}}[\Psihat]+\mathbf{F}_u[\Psihat]\right)$, and $\mathcal{T}_\Gamma[\Psihat]=\mathbf{F}_u[\Psihat]+(1-\gamma\ubar^{\gamma-1})\mathbf{F}_{{\ubar}}[\Psihat]$.
Now we estimate the different quantities in \eqref{eq:justeavant} taking into account the choice of $f,g$.

\noindent\textbf{Step 1: control of the bulk terms.} We begin by proving that $\mathbf{B}[\Psihat]$ controls $\ener_{\alpha+1}[\Psihat]$ in $\{r_-\leq r\leq\rb\}$ using a blueshift energy estimate. This estimate is present for the spin $-2$ rescaled Teukolsky operator $\teukhat_{-2}^{(c,V)}$ near $\ch$. It is also present for the spin $+2$ Teukolsky operator near $\ch$, where it is stronger in the sense that in that case, the bulk term controls the non-degenerate energy $\ener[\psi]$, see \cite{G24}. By Lemma \ref{lem:bulkpos}, we get that for $\alpha>1$, $s=-2$, $c>0$, $V\in\mathbb{R}$, for $p(a,M)>0$ large enough, and $r_\mathfrak{b}(p,a,M)$ close enough to $r_-$, we have in $\{r_-\leq r\leq\rb\}$,
\begin{align}\label{eq:bulkpos}
    \intS\mathbf{B}[\Psihat]\dee\nu\gtrsim(-\mu)\intS\ener_{\alpha+1}[\Psihat]\dee\nu.
\end{align}
Next we write, for $\varepsilon>0$,
\begin{align}
    \Bigg|\iiint&_{\left\{\underline{w}_1 \leq \underline{w} \leq \underline{w}_2\right\} \cap \mathbf{II}} \mu\Real(\overline{X(\Psihat)}O(\ubar^{-\beta})) \dee\nu \mathrm{d} u \mathrm{d} \underline{u}\Bigg|\lesssim \label{eq:abso}\\
    &\varepsilon\iiint_{\left\{\underline{w}_1 \leq \underline{w} \leq \underline{w}_2\right\} \cap \mathbf{II}} |X(\Psihat)|^2(-\mu)\dee\nu \mathrm{d} u \mathrm{d} \underline{u}+\varepsilon^{-1}\iiint_{\left\{\underline{w}_1 \leq \underline{w} \leq \underline{w}_2\right\} \cap \mathbf{II}} \ubar^{-2\beta} (-\mu)\dee\nu \mathrm{d} u \mathrm{d} \underline{u}.\nonumber
\end{align}
Changing variables\footnote{On $\wbar=cst$, we have $\mu\dee u=(2-\mu)\dee r$.} from $u$ to $r$, and using compactness in $r$, we get 
$$\iiint_{\left\{\underline{w}_1 \leq \underline{w} \leq \underline{w}_2\right\} \cap \mathbf{II}} \ubar^{-2\beta} (-\mu)\dee\nu \mathrm{d} u \mathrm{d} \underline{u}\lesssim\iint_{\left\{\underline{w}_1 \leq \underline{w} \leq \underline{w}_2\right\} \cap \mathbf{II}} \ubar^{-2\beta} \mathrm{d} r \mathrm{d} \underline{u}\lesssim\int_{\wbar_1}^{\wbar_2}\wbar^{-2\beta}\dee\ubar.$$
As $\mu^2|{\log(-\mu)}|^{-2\alpha}\lesssim(-\mu)|{\log(-\mu)}|^{-\alpha-1}$ in $\{r_-\leq r\leq\rb\}$, we have
$$|X(\Psihat)|^2\lesssim \mu^2|{\log(-\mu)}|^{-2\alpha}|e_3\Psihat|^2+|\partial_\ubar\Psihat|^2\lesssim\ener_{\alpha+1}[\Psihat].$$ Combining this fact with \eqref{eq:bulkpos} and \eqref{eq:justeavant}, taking $\varepsilon>0$ small enough such that the first term on the RHS of \eqref{eq:abso} is absorbed in the LHS of \eqref{eq:justeavant} using Lemma \ref{lem:bulkpos}, we get 
\begin{align}
 &\iint_{\left\{\underline{w}=\underline{w}_2\right\} \cap \mathbf{II}} \mathcal{T}_{\underline{w}}[\Psihat] \dee\nu \mathrm{d} u+\iint_{\Gamma\cap\left\{\underline{w}_1 \leq \underline{w} \leq \underline{w}_2\right\}} \mathcal{T}_\Gamma[\Psihat] \dee\nu \mathrm{d} \underline{u}+\iiint_{\left\{\underline{w}_1 \leq \underline{w} \leq \underline{w}_2\right\} \cap \mathbf{II}} \ener_{\alpha+1}[\Psihat] \dee\nu \mathrm{d} u \mathrm{d} \underline{u} \nonumber\\
& \lesssim\iint_{\left\{\underline{w}=\underline{w}_1\right\} \cap \mathbf{II}} \mathcal{T}_{\underline{w}}[\Psihat] \dee\nu \mathrm{d} u+\iint_{\{r=r_{\mathfrak{b}}\}  \cap\left\{\underline{w}_1 \leq \underline{w} \leq \underline{w}_2\right\}} \mathcal{T}_r[\Psihat] \dee\nu \mathrm{d} \underline{u}+\int_{\wbar_1}^{\wbar_2}\wbar^{-2\beta}\dee\ubar.\label{eq:justeavant23}
\end{align}
As in \cite[(4.19)]{scalarMZ}, using the definition of region $\deux$ we get 
$$\ener_{\alpha+1}[\Psihat]\gtrsim|{\log(-\mu)}|^{-1}\ener_{\alpha}[\Psihat]\sim(r^*)^{-1}\ener_{\alpha}[\Psihat]\gtrsim\wbar^{-\gamma}\ener_{\alpha}[\Psihat]$$
in $\deux$ which implies, re-injecting in \eqref{eq:justeavant23},
\begin{align}
 &\iint_{\left\{\underline{w}=\underline{w}_2\right\} \cap \mathbf{II}} \mathcal{T}_{\underline{w}}[\Psihat] \dee\nu \mathrm{d} u+\iint_{\Gamma\cap\left\{\underline{w}_1 \leq \underline{w} \leq \underline{w}_2\right\}} \mathcal{T}_\Gamma[\Psihat] \dee\nu \mathrm{d} \underline{u}+\iiint_{\left\{\underline{w}_1 \leq \underline{w} \leq \underline{w}_2\right\} \cap \mathbf{II}} \wbar^{-\gamma}\ener_{\alpha}[\Psihat] \dee\nu \mathrm{d} u \mathrm{d} \underline{u} \nonumber\\
& \lesssim\iint_{\left\{\underline{w}=\underline{w}_1\right\} \cap \mathbf{II}} \mathcal{T}_{\underline{w}}[\Psihat] \dee\nu \mathrm{d} u+\iint_{\{r=r_{\mathfrak{b}}\}  \cap\left\{\underline{w}_1 \leq \underline{w} \leq \underline{w}_2\right\}} \mathcal{T}_r[\Psihat] \dee\nu \mathrm{d} \underline{u}+\int_{\wbar_1}^{\wbar_2}\wbar^{-2\beta}\dee\ubar.\label{eq:justeavant2}
\end{align}
\noindent\textbf{Step 2: control of the boundary terms.} 
The integrated term on $\wbar=cst$ is:
\begin{align*}
     \mathcal{T}_{\underline{w}}[\Psihat]&=\left(1-\mu/2\right)\mathbf{F}_{\underline{u}}[\Psihat]-\dfrac{1}{2} \mu \mathbf{F}_u[\Psihat]\\
     &=2(r^2+a^2)f(r)|\ethat\Psihat|^2-\dfrac{1}{2}\mu g(r)(|\partial_\theta\Psihat|^2+|\mcu\Psihat|^2)+a\sin\theta\mu \mathfrak{R}(X(\Psihat)\mcu\Psihat)\\
     &\quad\quad-\frac{\mu}{2}\Bigg[2(r^2+a^2)f(r)|\ethat\Psihat|^2+2(r^2+a^2)g(r)|\partial_\ubar\Psihat|^2-\dfrac{1}{2}\mu (f(r)+g(r))(|\partial_\theta\Psihat|^2+|\mcu\Psihat|^2)\Bigg]\\
     &\sim (-\mu)\ener_\alpha[\Psihat],
\end{align*}
where we absorbed the term $a\sin\theta\mu \mathfrak{R}(X(\Psihat)\mcu\Psihat)$ as in \cite{scalarMZ} using 
\begin{align*}
    |a\sin\theta \mu f(r) \mathfrak{R}(\overline{\ethat\Psihat}\mcu\Psihat)|&\leq a^2f(r)|\ethat\Psihat|^2+\frac{1}{4}\mu^2f(r)|\mcu\Psihat|^2,\\
    |a\sin\theta\mu g(r) \mathfrak{R}(\overline{\partial_\ubar\Psihat}\mcu\Psihat)|&\leq -\frac{3}{4}\mu a^2g(r)|\partial_\ubar\Psihat|^2-\frac{1}{3}\mu g(r)|\mcu\Psihat|^2.
\end{align*}
Then, we have 
$$\mathcal{T}_r[\Psihat]=-\dfrac{1}{2} \mu\left(\mathbf{F}_{\underline{u}}[\Psihat]+\mathbf{F}_u[\Psihat]\right)\sim \ener[\Psihat],\text{   on   }\{r=\rb\}.$$
 Combining this control of the boundary terms with \eqref{eq:justeavant2} concludes the proof of Proposition \ref{prop:enerestdeuxx}.
\end{proof}
\begin{prop}\label{prop:enerdeuxx}
    Assume that $\psihat$ is a spin $-2$ scalar satisfying \eqref{eq:hypenerII} and \eqref{eq:hypteukII}. Then for $\alpha>1$ and $\gamma>0$ small enough we have, for $0\leq j\leq N_j'$, $0\leq 2k_1+k_2\leq N_k'$:
    $$\iint_{\left\{\wbar={\wbar}_1\right\} \cap \mathbf{II}} \ener_\alpha[T^j\mcq_{-2}^{k_1}\Phi^{k_2}\Psihat](-\mu) \dee\nu \mathrm{d} u\lesssim \wbar_1^{-2\beta-2j+\gamma}.$$
\end{prop}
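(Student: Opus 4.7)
The plan is to reduce Proposition \ref{prop:enerdeuxx} to a straightforward Grönwall-style application of Lemma \ref{lem:decay}, with all the hard analytic work already done in Proposition \ref{prop:enerestdeuxx}. Define
\[
    f(\wbar) := \iint_{\{\wbar\}\cap\deux}\ener_\alpha[T^j\carterm^{k_1}\Phi^{k_2}\Psihat](-\mu)\dee\nu\dee u.
\]
The goal is to prove that $f(\wbar_1)\lesssim\wbar_1^{-2\beta-2j+\gamma}$. Since $T$, $\carterm$ and $\Phi$ preserve the spin weight and commute with $\teukhat_{-2}^{(c,V)}$ by \eqref{eq:commutateursmodif}, the spin $-2$ scalar $T^j\carterm^{k_1}\Phi^{k_2}\Psihat$ itself satisfies hypotheses \eqref{eq:hypborneII}--\eqref{eq:hypteukII} with exponent $\beta+j$, so it suffices to argue for the case $j=k_1=k_2=0$ with exponent $\beta$.

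First I would apply Proposition \ref{prop:enerestdeuxx} between $\wbar_1$ and $\wbar_2\geq\wbar_1$. The boundary term on $\{r=\rb\}$ appearing on the right-hand side is precisely controlled by the assumption \eqref{eq:hypenerII}:
\[
    \iint_{\{\wbar_1\leq\wbar\leq\wbar_2\}\cap\{r=\rb\}}\ener[\Psihat]\dee\nu\dee u\lesssim\wbar_1^{-2\beta}+\int_{\wbar_1}^{\wbar_2}\wbar^{-2\beta}\dee\wbar.
\]
Next, I would check that the $\Gamma$ flux $\mathcal{T}_\Gamma[\Psihat]$ is non-negative for $\ubar\geq 1$ and can thus be discarded from the left-hand side. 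For $\ubar\geq 1$ and $\gamma<1$ one has $1-\gamma\ubar^{\gamma-1}\geq 1-\gamma>0$, and the cross terms $a\sin\theta\mu\Real(\overline{X(\Psihat)}\mcu\Psihat)$ are absorbed by the positive coefficient of $|\ethat\Psihat|^2$, $|\partial_\ubar\Psihat|^2$ and by $-\mu|\mcu\Psihat|^2$ exactly as in the absorption step at the end of the proof of Proposition \ref{prop:enerestdeuxx}. Hence $\mathcal{T}_\Gamma\geq 0$.

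For the bulk term I would apply Fubini, slicing by constant-$\wbar$ hypersurfaces. On a slice $\{u=\mathrm{cst}\}$ the Jacobian is $\dee\ubar=(1-\mu/2)^{-1}\dee\wbar$, which is bounded above and below by constants, so
\[
    \iiint_{\{\wbar_1\leq\wbar\leq\wbar_2\}\cap\deux}\wbar^{-\gamma}\ener_\alpha[\Psihat](-\mu)\dee\nu\dee u\dee\ubar\gtrsim\int_{\wbar_1}^{\wbar_2}\wbar^{-\gamma}f(\wbar)\dee\wbar.
\]
Combining these observations with Proposition \ref{prop:enerestdeuxx} yields an inequality of the form
\[
    f(\wbar_2)+C_1\int_{\wbar_1}^{\wbar_2}\wbar^{-\gamma}f(\wbar)\dee\wbar\leq C_0 f(\wbar_1)+C_2\int_{\wbar_1}^{\wbar_2}\wbar^{-2\beta}\dee\wbar+C_3\wbar_1^{-2\beta}.
\]
Finally I would invoke Lemma \ref{lem:decay} with $p=2\beta$ (which is $>1$ in all applications to $\errhatm$ where $\beta=7+\delta$) and the small parameter $\gamma$, to conclude $f(\wbar_1)\lesssim\wbar_1^{-2\beta+\gamma}$, as desired.

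The two places requiring some care, rather than actual obstacles, are the sign of $\mathcal{T}_\Gamma$ on $\Gamma$ (which must be checked by reproducing the $X$-cross-term absorption) and the consistency of the Fubini-in-$\wbar$ step with the coordinate system on $\deux$. Once these are in place, the rest is a direct invocation of Lemma \ref{lem:decay} with the hypothesis-supplied polynomial source, so there is no genuine additional estimate to prove here.
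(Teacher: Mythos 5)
Your proposal is correct and follows essentially the same route as the paper: apply Proposition \ref{prop:enerestdeuxx}, control the $\{r=\rb\}$ flux by \eqref{eq:hypenerII}, discard the non-negative $\Gamma$ flux for $\gamma$ small, identify the bulk term with $\int\wbar^{-\gamma}f(\wbar)\,\dee\wbar$, and conclude by Lemma \ref{lem:decay} with $p=2\beta$. The only point you leave implicit is the finiteness of $f(\wbar_1)$ for fixed $\wbar_1$ (needed as the "$f(1)$" input of Lemma \ref{lem:decay}), which the paper justifies by compactness of $\{\wbar=\wbar_1\}\cap\deux$ and standard local theory; this is routine.
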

\begin{proof}
    Once again, it suffices to prove the case $j=k=0$. Proposition \ref{prop:enerestdeuxx} and the energy assumption \eqref{eq:hypenerII} on $\{r=\rb\}$ gives
\begin{align}
 \iint&_{\left\{\underline{w}=\underline{w}_2\right\} \cap \mathbf{II}} \mathbf{e}_\alpha[\Psihat](-\mu) \dee\nu \mathrm{d} u+\iiint_{\left\{\underline{w}_1 \leq \underline{w} \leq \underline{w}_2\right\} \cap \mathbf{II}} \wbar^{-\gamma}\mathbf{e}_{\alpha}[\Psihat](-\mu) \dee\nu \mathrm{d} u \mathrm{d} \underline{u} \label{eq:pourapres} \\
&\lesssim  \iint_{\left\{\underline{w}=\underline{w}_1\right\} \cap \mathbf{II}} \mathbf{e}_\alpha[\Psihat](-\mu) \dee\nu \mathrm{d} u+\wbar_1^{-2\beta}+\int_{\underline{w}_1}^{\underline{w}_2} \underline{w}^{-2\beta} \mathrm{d} \underline{u},\nonumber
\end{align}
where we used the fact that the term on $\Gamma$ can be chosen non negative as
$$\mathcal{T}_\Gamma[\Psihat]=\widehat{\mathbf{F}}_u[\Psihat]+(1-\gamma\ubar^{\gamma-1})\widehat{\mathbf{F}}_{{\ubar}}[\Psihat]\geq f(r)|\ethat\Psihat|^2+g(r)|e_4\Psihat|^2-\mu(f(r)+g(r))(|\partial_\theta\Psihat|^2+|\mcu\Psihat|^2)\geq 0$$
for $\gamma>0$ small enough. Using the fact that for any $\wbar_1\geq 1$, the energy 
$$\iint_{\left\{\underline{w}=\underline{w}_1\right\} \cap \mathbf{II}} \mathbf{e}_\alpha[\Psihat](-\mu) \dee\nu \mathrm{d} u\lesssim 1$$
is finite\footnote{In the notation of Lemma \ref{lem:decay}, this means that $f(1)$ is finite. This holds because $\{\wbar = \wbar_1\}\cap\mathbf{II}$ is a compact region inside a globally hyperbolic spacetime and by standard existence results for linear wave equations with smooth initial data.}, we conclude the proof using Lemma \ref{lem:decay} with $p=2\beta$.
\end{proof}

The following result will be used to deduce decay in $L^2(S(u,\ubar))$ from energy decay.
\begin{lem}\label{lem:integ}
    Let $\psi$ be a spin $\pm 2$ scalar. We define the scalar function $$f(\wbar,r):=\|\psi\|_{L^2(S(u(r,\wbar),\ubar(r,\wbar)))}.$$
    Then for $\wbar\geq 1$ and $r_2\leq r_1$ such that $(\wbar,r_1),(\wbar,r_2)\in\deux$,
    $$f(\wbar,r_2)\lesssim f(\wbar,r_1)+\wbar^{\frac{1}{2}\gamma(\alpha+1)}\left(\int_{r_2}^{r_1}\int_{S(\wbar,r')}\ener_\alpha[\psi]\dee\nu\dee r'\right)^{1/2}.$$
\end{lem}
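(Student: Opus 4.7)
The plan is to parametrize the hypersurface $\{\wbar = \text{const}\}$ by the radial coordinate $r$, apply Minkowski's inequality to reduce the claim to integrating $\|\mathcal{D}\psi\|_{L^2(S)}$ along this parametrization, and then use Cauchy-Schwarz with a carefully chosen weight $w(r)$ to match the integrand against the prescribed $\ener_\alpha[\psi]$ density. Working in ingoing Eddington-Finkelstein coordinates $(\ubar_\mathrm{in}, r_\mathrm{in}, \theta, \phi_+)$, the measure $d\nu = \sin\theta\, d\theta\, d\phi_+$ is independent of $r$, and $\wbar = \ubar - r + r_+$ forces the tangent vector along $\{\wbar = \text{const}\}$ with $(\theta, \phi_+)$ held fixed to be $\mathcal{D} := \partial_{r_\mathrm{in}} + \partial_{\ubar_\mathrm{in}} = -2e_3 + T$ by \eqref{eqn:coordEFin}. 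Writing $\psi(\wbar, r_2, \theta, \phi_+) = \psi(\wbar, r_1, \theta, \phi_+) - \int_{r_2}^{r_1} \mathcal{D}\psi\, dr$ and applying Minkowski's inequality for integrals on the sphere then yields
$$f(\wbar, r_2) \leq f(\wbar, r_1) + \int_{r_2}^{r_1} \|\mathcal{D}\psi\|_{L^2(S(u(r,\wbar),\ubar(r,\wbar)))}\, dr.$$

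The next step is to apply Cauchy-Schwarz in the $r$-integral above with weight $w(r) := |\log(-\mu)|^\alpha/(-\mu)$, chosen precisely so as to mimic the asymmetric weighting in $\ener_\alpha$. Expanding $T$ via \eqref{eq:expreT} gives the pointwise bound $|T\psi|^2 \lesssim \mu^2|e_3\psi|^2 + |e_4\psi|^2 + |\mcu\psi|^2 + |\psi|^2$, so $|\mathcal{D}\psi|^2 \lesssim |e_3\psi|^2 + |e_4\psi|^2 + |\mcu\psi|^2 + |\psi|^2$. After absorbing $\|\psi\|^2_{L^2(S)}$ into the energy through the Poincaré inequality \eqref{eq:poincare} (together with $\ener_{deg}[\psi] \lesssim \ener_\alpha[\psi]$ in $\deux$), and noting that $|e_3\psi|^2 \leq w(r)\,\ener_\alpha[\psi]$ by the very definition of $\ener_\alpha$, I would conclude
$$\int_S |\mathcal{D}\psi|^2 \, d\nu \lesssim w(r) \int_S \ener_\alpha[\psi]\, d\nu,$$
so that the second Cauchy-Schwarz factor collapses to $(\int_{r_2}^{r_1}\int_S \ener_\alpha[\psi]\, d\nu\, dr)^{1/2}$, which is exactly the factor appearing in the conclusion.

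All that remains is to estimate $\int_{r_2}^{r_1} w(r)\, dr$, and this is where the geometric constraint defining $\deux$ enters. Changing variables via $dr = \mu\, dr^*$ transforms this integral into $\int_{r_1^*}^{r_2^*} |\log(-\mu)|^\alpha\, dr^*$. Using the asymptotics $|\log(-\mu)| \sim r^*$ near $r_-$ from \eqref{eqn:rstar}, together with the defining inequality $2r^* \leq \ubar^\gamma$ of region $\deux$, one gets $|\log(-\mu)|^\alpha \lesssim \ubar^{\gamma\alpha}$ and $r_2^* - r_1^* \lesssim \ubar^\gamma$, so that the weight integral is $\lesssim \ubar^{\gamma(\alpha+1)}$. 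Converting $\ubar$ to $\wbar$ via the bounded shift $\wbar - \ubar = r_+ - r$ produces the prefactor $\wbar^{\gamma(\alpha+1)/2}$ and closes the argument. The only delicate point of this plan is the calibration of the weight $w(r) = |\log(-\mu)|^\alpha/(-\mu)$: it is essentially forced by the asymmetric form of $\ener_\alpha$, and any other choice would either fail to dominate $|e_3\psi|^2$ by the energy on the spherical side or fail to close against the $\{2r^* \leq \ubar^\gamma\}$ constraint on the radial side.
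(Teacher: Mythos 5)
Your proposal is correct and follows essentially the same route as the paper: parametrize the constant-$\wbar$ slice by $r$ in ingoing coordinates, reduce to $\int_{r_2}^{r_1}\|(T-2e_3)\psi\|_{L^2(S)}\,\dee r$, and apply a weighted Cauchy--Schwarz with weight $(-\mu)^{-1}|\log(-\mu)|^{\alpha}$ whose integral is controlled by $\wbar^{\gamma(\alpha+1)}$ using $|\log(-\mu)|\sim r^*\lesssim\ubar^{\gamma}$ in $\deux$. The only (immaterial) difference is that the paper splits the transversal derivative into its $T$ and $e_3$ parts, weighting only the $e_3$ piece, whereas you absorb both into a single weighted Cauchy--Schwarz using $w(r)\gtrsim 1$.
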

\begin{proof}
    In coordinates $(\ubar,r,\theta,\phi_+)$ we have $\partial_{r_\mathrm{in}}=-2e_3$, $\partial_{\ubar}=T$.
    Thus for a fixed $\wbar$,
    \begin{align}\label{eq:intapres}
        \frac{\partial}{\partial r}(f(\wbar,r))=\left[\frac{\partial}{\partial r}(\wbar+r-r_+)\partial_{\ubar_\mathrm{in}}+\frac{\partial r}{\partial r}\partial_{r_\mathrm{in}}\right]f(\ubar,r)=(\partial_{\ubar_\mathrm{in}}+\partial_{r_\mathrm{in}})f(\ubar,r).
    \end{align}
    Moreover, by a Cauchy-Schwarz inequality,
    \begin{align*}
        |(\partial_{r_\mathrm{in}}+\partial_{\ubar})(f^2)|&=\left|(\partial_{r_\mathrm{in}}+\partial_{\ubar})\left(\int_0^\pi\int_0^{2\pi}|\psi|^2(\ubar,r,\theta,\phi_+)\sin\theta\dee\theta\dee\phi_+\right)\right|\\
        &=\left|2\mathfrak{R}\left(\int_0^\pi\int_0^{2\pi}(\overline{\psi} (\partial_{r_\mathrm{in}}+\partial_{\ubar})\psi)(\ubar,r,\theta,\phi_+)\sin\theta\dee\theta\dee\phi_+\right)\right|\\
        &\lesssim 2f\|(\partial_{r_\mathrm{in}}+\partial_{\ubar})\psi\|_{L^2(S(\wbar,r))}=2f\|(T-2e_3)\psi\|_{L^2(S(\wbar,r))},
    \end{align*}
    which yields
    \begin{align}\label{eq:pourint1}
        |(\partial_{r_\mathrm{in}}+\partial_{\ubar})f|=\frac{1}{2f}|(\partial_{r_\mathrm{in}}+\partial_{\ubar})(f^2)|\lesssim\|(T-2e_3)\psi\|_{L^2(S(\wbar,r))}.
    \end{align}
    Integrating \eqref{eq:intapres}, together with the bound \eqref{eq:pourint1} gives
    \begin{align}
        f(\wbar,r_2)&\lesssim f(\wbar,r_1)+\int_{r_2}^{r_1}\left(\int_{S(\wbar,r')}|T\psi-2e_3\psi|^2\dee\nu\right)^{1/2}\dee r'\nn\\
        &\lesssim f(\wbar,r_1)+\int_{r_2}^{r_1}\left(\int_{S(\wbar,r')}|T\psi|^2\dee\nu\right)^{1/2}\dee r'+\int_{r_2}^{r_1}\left(\int_{S(\wbar,r')}|e_3\psi|^2\dee\nu\right)^{1/2}\dee r'.\label{eq:otherterm}
    \end{align}
    Next, we use the expression \eqref{eq:expreT} which gives $T=O(\mu)e_3+O(1)e_4+O(1)\mcu+O(1)$. Using $\mu^2\lesssim -\mu|{\log(-\mu)}|^{-\alpha}$ and the Poincaré inequality \eqref{eq:poincare} we get
    \begin{align}\label{eq:boboT}
        \int_{S(\wbar,r')}|T\psi|^2\dee\nu\lesssim \int_{S(\wbar,r')}\ener_\alpha[\psi]\dee\nu.
    \end{align}
    Now we treat the last term on the RHS of \eqref{eq:otherterm} which is more difficult to bound. Using a Cauchy-Schwarz inequality we get 
\begin{align}
    \int_{r_2}^{r_1}\Bigg(\int_{S(\wbar_1,r)}&|e_3\psi|^2\dee\nu\Bigg)^{1/2}\nn\\
    &\lesssim \left(\int_{r_2}^{r_1}(-\mu)^{-1}|{\log(-\mu)}|^{\alpha}\dee r\right)^{1/2}\left(\int_{r_2}^{r_1}\int_{S(\wbar,r')}(-\mu)|{\log(-\mu)}|^{-\alpha}|e_3\psi|^2\dee\nu\dee r'\right)^{1/2}\nn\\
    &\lesssim\left(\int_{r_2}^{r_1}(-\mu)^{-1}|{\log(-\mu)}|^{\alpha}\dee r\right)^{1/2}\left(\int_{r_2}^{r_1}\int_{S(\wbar,r')}\ener_\alpha[\psi]\dee\nu\dee r'\right)^{1/2}.\label{eq:elle}
\end{align}
All is left to do is to estimate the first integral on the RHS of \eqref{eq:elle}. We have in $\deux$, 
$$|{\log(-\mu)}|\sim r^*\lesssim\ubar^\gamma\lesssim\wbar^\gamma.$$
We also have 
$$|u_{r_2}(\wbar)-u_{r_1}(\wbar)|=|2r^*-r-2\rb^*+\rb|\lesssim\wbar^\gamma$$
in $\deux$. This gives 
$$\int_r^{\rb}(-\mu)^{-1}|{\log(-\mu)}|^{\alpha}\dee r\lesssim\int_{u_{\rb}(\wbar)}^{u_r(\wbar)}|{\log(-\mu)}|^{\alpha}\dee u\lesssim\wbar^{\gamma(\alpha+1)}$$
and concludes the proof of Lemma \ref{lem:integ}.
\end{proof}
\begin{prop}\label{prop:propagII}
    Let $\Psihat$ be a spin $-2$ scalar satisfying \eqref{eq:hypborneII}, \eqref{eq:hypenerII} and \eqref{eq:hypteukII}. Then for $\gamma>0$ small enough we have in $\deux$, for $0\leq j\leq N_j'-2$ and $0\leq 2k_1+k_2\leq N_k'-2$:
    $$|T^j\carterm^{k_1}\Phi^{k_2}\Psihat|\lesssim\ubar^{-\beta-j+C_\alpha\gamma},$$
    where $C_\alpha=\alpha/2+1$.
\end{prop}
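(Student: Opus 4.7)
The strategy is a standard Sobolev-from-energy argument, reducing the pointwise bound to (i) the $L^2(S(u,\ubar))$ bound obtained from the energy estimate on constant-$\underline{w}$ slices via Lemma \ref{lem:integ}, and (ii) a commutation with $T$ and $\carterm$ via \eqref{eq:commutateursmodif} to pay for the spin-weighted Sobolev embedding of Proposition \ref{prop:sobolev}.

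\textbf{Step 1: $L^2(S)$ bound.} Fix $j, k_1, k_2$ in the allowed range. By \eqref{eq:commutateursmodif}, the scalar $\Psihat':=T^j\carterm^{k_1}\Phi^{k_2}\Psihat$ satisfies the same hypotheses \eqref{eq:hypborneII}--\eqref{eq:hypteukII} (with $\beta$ replaced by $\beta+j$ on the $T^{j'}\carterm^{k_1'}\Phi^{k_2'}\Psihat'$ level, provided $j+j'\leq N_j'$ and $2(k_1+k_1')+(k_2+k_2')\leq N_k'$). Proposition \ref{prop:enerdeuxx} applied at this level yields
\begin{align*}
\iint_{\{\wbar=\wbar_1\}\cap\deux}\ener_\alpha[\Psihat'](-\mu)\,\dee\nu\,\dee u\;\lesssim\;\wbar_1^{-2\beta-2j+\gamma}.
\end{align*}
On a constant-$\wbar$ slice in $\deux$ we have $(-\mu)\,\dee u\sim\dee r$, so for any $r\in[r_-,\rb]$ with $(\wbar,r)\in\deux$,
\begin{align*}
\int_{r}^{\rb}\!\!\int_{S(\wbar,r')}\ener_\alpha[\Psihat']\,\dee\nu\,\dee r'\;\lesssim\;\wbar^{-2\beta-2j+\gamma}.
\end{align*}
Feeding this together with the boundary bound $\|\Psihat'\|_{L^2(S(\ubar,\rb))}\lesssim\ubar^{-\beta-j}$ from \eqref{eq:hypborneII} into Lemma \ref{lem:integ} produces
\begin{align*}
\|\Psihat'\|_{L^2(S(u,\ubar))}\;\lesssim\;\wbar^{-\beta-j}+\wbar^{\tfrac{1}{2}\gamma(\alpha+1)}\,\wbar^{-\beta-j+\gamma/2}\;\lesssim\;\wbar^{-\beta-j+C_\alpha\gamma},
\end{align*}
since $\tfrac{\gamma(\alpha+1)}{2}+\tfrac{\gamma}{2}=\gamma\bigl(\tfrac{\alpha}{2}+1\bigr)=C_\alpha\gamma$.

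\textbf{Step 2: Pointwise bound via Sobolev.} Apply Proposition \ref{prop:sobolev} to $\Psihat'$:
\begin{align*}
\|\Psihat'\|^2_{L^\infty(S(u,\ubar))}\;\lesssim\;\int_{S(u,\ubar)}\bigl(|T^{\leq 2}\Psihat'|^2+|\carterm\Psihat'|^2\bigr)\dee\nu.
\end{align*}
Using \eqref{eq:commutateursmodif} to move $T$ and $\carterm$ past the string $T^j\carterm^{k_1}\Phi^{k_2}$, each term on the right-hand side is controlled by the $L^2(S)$ bound of Step 1 applied at level $(j+2,k_1,k_2)$ and $(j,k_1+1,k_2)$ respectively. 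Both levels stay within $N_j'$ and $N_k'$ precisely thanks to the assumption $j\leq N_j'-2$ and $2k_1+k_2\leq N_k'-2$, giving
\begin{align*}
\|\Psihat'\|_{L^\infty(S(u,\ubar))}\;\lesssim\;\wbar^{-\beta-j+C_\alpha\gamma}.
\end{align*}
Since $\wbar\sim\ubar$ in the region where $\ubar\geq 1$ (which covers $\deux$), this yields the claimed pointwise bound in $\ubar$.

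\textbf{Main obstacle.} The key input is the energy estimate of Proposition \ref{prop:enerdeuxx}, which was the genuinely new ingredient and where the $\wbar^\gamma$ loss enters through the degeneracy $|\log(-\mu)|^{-\alpha}\sim\wbar^{-\gamma}$ of $\ener_\alpha$ relative to the non-degenerate energy. Here the only subtlety is to confirm the exponent bookkeeping: the $L^2(S)\to L^\infty(S)$ Sobolev step costs two $T$ derivatives and one $\carterm$ derivative (hence the restriction $j\leq N_j'-2$, $2k_1+k_2\leq N_k'-2$), and the conversion of the $\dee u$-integral on $\{\wbar=cst\}$ into a $\dee r$-integral in Lemma \ref{lem:integ} combines with the $\wbar^{\gamma(\alpha+1)/2}$ prefactor to give exactly the exponent $C_\alpha\gamma$ claimed in the statement.
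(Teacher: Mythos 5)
Your proposal is correct and follows essentially the same route as the paper: energy decay on constant-$\wbar$ slices from Proposition \ref{prop:enerdeuxx}, conversion to an $L^2(S(u,\ubar))$ bound via Lemma \ref{lem:integ} together with the boundary data \eqref{eq:hypborneII} on $\{r=\rb\}$, and finally the Sobolev embedding \eqref{eq:sobolev} at the cost of two $T$ and one $\carterm$ derivative. The exponent bookkeeping $\tfrac{\gamma(\alpha+1)}{2}+\tfrac{\gamma}{2}=C_\alpha\gamma$ matches the paper's.
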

\begin{proof}
    It is enough to treat the case $j=k_1=k_2=0$. The loss of two derivatives will be due to the use of the Sobolev embedding \eqref{eq:sobolev}. Using Lemma \ref{lem:integ} between $r(u,\ubar)$ and $\rb$, and Proposition \ref{prop:enerdeuxx} gives
\begin{align*}
    \|\Psihat\|_{L^2(S(u,\ubar))}\lesssim\|\Psihat\|_{L^2(S(u_{\rb}(\wbar),\ubar_{\rb}(\wbar))}+\wbar^{\frac{1}{2}\gamma(\alpha+1)}\left(\iint_{\left\{\wbar={\wbar(u,\ubar)}\right\} \cap \mathbf{II}} \ener_\alpha[T^j\mcq_s^k\Psihat](-\mu) \dee\nu \mathrm{d} u\right)^{1/2}.
\end{align*}

Thus denoting $C_\alpha=\alpha/2+1$ we have shown, using also \eqref{eq:hypborneII} to control the $L^2(S(u,\ubar))$ norm on $\{r=\rb\}$, 
\begin{align}\label{eq:elledeux}
    \|\Psihat\|_{L^2(S(u,\ubar))}\lesssim\wbar^{-\beta+C_\alpha\gamma}.
\end{align}
Together with the Sobolev embedding \eqref{eq:sobolev} and the $j,k_1\neq 0 $ version of \eqref{eq:elledeux}, we infer
$$|\Psihat|^2\lesssim\|\carterm\Psihat\|^2_{L^2(S(u,\ubar))}+\|T^{\leq 2}\Psihat\|^2_{L^2(S(u,\ubar))}\lesssim\wbar_1^{-2\beta+2C_\alpha\gamma},$$
as stated. This concludes the proof of Proposition \ref{prop:propagII}.
\end{proof}
\subsection{Upper bounds for $\errhatm$ in intermediate region $\deux$}
In this section, we prove a non-sharp upper bound for $T^j\Psihatm$ in region $\deux$. To this end, we use the estimates from Section \ref{section:energymethod} and the bounds \eqref{eq:borneI}, \eqref{eq:enerI} in $\un$ that were deduced from assumption \eqref{eq:hyppl} on the event horizon. The following result is a simple computation:
\begin{prop}\label{prop:teukdec}
    We have in $\deux\cup\trois$, for $j,k_1,k_2\geq 0$, 
    $$\left|\drout^{\leq 1}\teukhat_{-2}T^j\carterm^{k_1}\Phi^{k_2}\left(\anshatm\right)\right|\lesssim\ubar^{-7-\delta-j}.$$
\end{prop}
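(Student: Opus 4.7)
The plan is a direct computation. First, since $T$, $\Phi$, and $\carterm$ all commute with $\teukhat_{-2}$ by \eqref{eq:commutateurs}, and since they also commute with $\drout$—in outgoing Eddington-Finkelstein coordinates, $u$, $\theta$, and $\phi_-$ are independent of $r_{\mathrm{out}}$, and $\carterm$ involves only $T$ together with $\theta$-dependent angular operators—I may commute all of $T^j\carterm^{k_1}\Phi^{k_2}$ past both $\teukhat_{-2}$ and $\drout$. It therefore suffices to bound $|\drout^{\leq 1}\teukhat_{-2}(\anshatm)| \lesssim \ubar^{-8}$, after which the $T^j$ decay transfers directly using $T\ubar^{-k} = -k\ubar^{-k-1}$ together with $T\Delta^2 = 0$; the additional $\ubar^{-1}$ or $\ubar^{-2}$ factors produced by $\carterm$ (and the multiplication by $im$ produced by $\Phi$) can only improve the bound.

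Next I would exploit the rescaling identity \eqref{eq:rescaledequation} to write
$$\teukhat_{-2}(\anshatm) \;=\; \Delta^2\, \teuk_{-2}(\ansatzm),$$
and apply the $(\partial_t,\partial_r)$-decomposition \eqref{eq:decteuk}. The operator $\teuk_{-2}^{[\partial_r]}$ annihilates $\ansatzm$: each $\drin$ kills $\ubar$, $\theta$, and $\phi_+$ by \eqref{eqn:coordEFin}, while $\drond\drond' Y_{m,2}^{-2} = 0$ by \eqref{eq:annulemode} since $(\ell+s)(\ell-s+1)$ vanishes at $\ell=2$, $s=-2$. Hence only $\teuk_{-2}^{[\partial_t]}(\ansatzm)$ survives, and by \eqref{eq:teukdt} each summand applies at least one $T$ to $\ubar^{-7}$, producing $O(\ubar^{-8})$. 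Combined with $|\Delta^2| = O(1)$ in $\deux\cup\trois$, this gives $|\teukhat_{-2}(\anshatm)| \lesssim \ubar^{-8}$.

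Finally, for the $\drout$-version I would differentiate the explicit expression $\Delta^2\, \teuk_{-2}^{[\partial_t]}(\ansatzm)$ term by term. The elementary derivatives involved are $\drout\Delta^2 = 4(r-M)\Delta$, $\drout\ubar = 2\mu^{-1}$, and $\drout e^{im\phi_+} = 2ima\Delta^{-1}e^{im\phi_+}$, so the only potentially singular contributions are of the type $\Delta^2\mu^{-1}$ and $\Delta^2\Delta^{-1}$. Both are regular: $\Delta^2\mu^{-1} = \Delta(r^2+a^2) = O(1)$ and $\Delta^2\Delta^{-1} = \Delta = O(1)$. Thus $|\drout\teukhat_{-2}(\anshatm)| \lesssim \ubar^{-8}$ as well, and commuting in $T^j\carterm^{k_1}\Phi^{k_2}$ yields a final bound of $O(\ubar^{-8-j})$, which is strictly stronger than the claimed $\ubar^{-7-\delta-j}$.

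The only (very mild) obstacle is a bookkeeping one: verifying that every singular factor produced by $\drout$ is dominated by the $\Delta^2$ prefactor coming from the rescaling. No conceptual difficulty arises, and the argument parallels the spin $+2$ computation \eqref{eq:teukerr} used in the proof of Proposition \ref{prop:lapremiere}.
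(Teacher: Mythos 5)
Your proof is correct and follows essentially the same route as the paper: commute $T^j\carterm^{k_1}\Phi^{k_2}$ out, use the rescaling identity $\teukhat_{-2}(\Delta^2\,\cdot)=\Delta^2\teuk_{-2}(\cdot)$ together with the $(\partial_t,\partial_r)$-decomposition so that only $\teuk_{-2}^{[\partial_t]}(\ansatzm)=O(\ubar^{-8})$ survives, and then check that each singular factor ($\mu^{-1}$ or $\Delta^{-1}$) produced by $\drout$ is absorbed by the $\Delta^2$ prefactor. Your resulting bound $O(\ubar^{-8-j})$ is in fact slightly sharper than (and implies, since $\delta<1$) the stated $\ubar^{-7-j-\delta}$, which matches the paper's intermediate estimate $\lesssim-\Delta\,\ubar^{-8-j}$.
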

\begin{proof}
    We have, using the definition of the rescaled operator $\teukhat_{-2}$, 
    \begin{align*}
        &\left|\drout^{\leq 1}\teukhat_{-2}T^j\carterm^{k_1}\Phi^{k_2}\left(\anshatm\right)\right|\\
        =&\left|\drout^{\leq 1}\Delta^2\teuk_{-2}T^j\carterm^{k_1}\Phi^{k_2}\left(\ansatzm\right)\right|\\
        =&\left|\drout^{\leq 1}\Delta^2T^j\carterm^{k_1}\Phi^{k_2}\teuk_{-2}^{[\partial_t]}\left(\ansatzm\right)\right|\\
        \lesssim& -\Delta\ubar^{-8-j}\lesssim\ubar^{-7-j-\delta},
    \end{align*}
    where we used the computation in \eqref{eq:teukerr} in the third line.
\end{proof}
\begin{rem}
     Note that to get the bound in Proposition \ref{prop:teukdec}, we used the estimate $|\Delta|\lesssim 1$ in $\deux$, which is too imprecise to derive a sharp lower bound for $\Psihatm$. This section is in fact only devoted to proving a somewhat precise upper bound for the $T$ derivatives of $\errhatm$ (see Proposition \ref{prop:upperboundII} below), and the precise lower bound will be obtained in Section \ref{section:precise}. 
\end{rem}
\begin{prop}\label{prop:upperboundII}
    Assume that $\Psim$ satisfies \eqref{eq:hyppl}. For $\gamma>0$ small enough, we have in $\deux$, 
    \begin{align*}
        \Psihatm=\anshatm+\errhatm,
    \end{align*}
    where for $0\leq j\leq N_j-4$ and $0\leq 2k_1+k_2\leq N_k-5$, 
    $$|T^j\carterm^{k_1}\Phi^{k_2}\drout^{\leq 1}\errhatm|\lesssim\ubar^{-7-j-\delta/2}.$$
\end{prop}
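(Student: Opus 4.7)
The plan is to apply the energy-propagation result of Proposition~\ref{prop:propagII} to the quantity $\psihat := T^j\carterm^{k_1}\Phi^{k_2}\errhatm$ and to $\drout\psihat$, working in region $\deux$ with decay exponent $\beta = 7+\delta$. The conclusion of Proposition~\ref{prop:propagII} will read $|\psihat|\lesssim \ubar^{-7-j-\delta+C_\alpha\gamma}$, so by shrinking $\gamma$ so that $C_\alpha\gamma\leq \delta/2$, the claimed bound $\ubar^{-7-j-\delta/2}$ follows.

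For $\psihat$ (the $\drout^0$ case), the boundary hypotheses \eqref{eq:hypborneII}--\eqref{eq:hypenerII} on the spacelike hypersurface $\{r=\rb\}$ reduce directly to \eqref{eq:borneI}--\eqref{eq:enerI}, since $\Delta$ and its derivatives are bounded on $\{r=\rb\}$ and since $T,\carterm,\Phi$ commute with multiplication by $\Delta^2$, so that $\psihat = \Delta^2 T^j\carterm^{k_1}\Phi^{k_2}\errm$ there. The Teukolsky source hypothesis \eqref{eq:hypteukII} comes from
\begin{align*}
\teukhat_{-2}\psihat = T^j\carterm^{k_1}\Phi^{k_2}\teukhat_{-2}\errhatm = -T^j\carterm^{k_1}\Phi^{k_2}\teukhat_{-2}\!\left(\anshatm\right),
\end{align*}
using $\teukhat_{-2}\Psihatm=0$ and the commutation relations \eqref{eq:commutateurs}, together with Proposition~\ref{prop:teukdec}, which bounds the right-hand side by $\ubar^{-7-j-\delta}$. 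This disposes of the $\drout^0$ case.

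For $\drout\psihat$, I commute $\teukhat_{-2}$ with $\drout$ via Proposition~\ref{prop:commdrout} at $s=-2$ and rearrange to obtain
\begin{align*}
\teukhat_{-2}(\drout\psihat) + \bigl[4(r-M)\mu^{-1}e_4 - 4rT - 2\bigr]\drout\psihat = \drout\teukhat_{-2}\psihat - 6T\psihat.
\end{align*}
A direct comparison with \eqref{eq:teukmodif} identifies the left-hand side as $\teukhat_{-2}^{(c,V)}(\drout\psihat)$ with the specific choice $c=1/2$, $V=-2$, which is exactly the form for which the energy method of Section~\ref{section:energymethod} is developed. The right-hand side is $O(\ubar^{-7-j-\delta})$ by Proposition~\ref{prop:teukdec} for the first term and by the $\drout^0$ bound at level $j+1$ for the second. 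The boundary data on $\{r=\rb\}$ for $\drout\psihat$ is then obtained by decomposing $\drout = \drin + 2\mu^{-1}T + 2a\,(\mu(r^2+a^2))^{-1}\Phi$ (from $e_4 = T + \tfrac{a}{r^2+a^2}\Phi - \mu e_3$), whose coefficients are bounded on the compact hypersurface $\{r=\rb\}$, so that \eqref{eq:borneI}--\eqref{eq:enerI} again supply the required pointwise and energy estimates. A second application of Proposition~\ref{prop:propagII} with the modified operator $\teukhat_{-2}^{(1/2,-2)}$ then yields the bound on $\drout\psihat$.

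The main obstacle is precisely the identification $c=1/2$, $V=-2$ in the commutator step, which places the equation for $\drout\psihat$ within the family of modified Teukolsky operators for which the bulk positivity in the energy estimate (Lemma~\ref{lem:bulkpos} via Proposition~\ref{prop:enerestdeuxx}) has been proved; without it, the blueshift sign of the $s=-2$ Teukolsky bulk would be destroyed by the extra first-order terms generated by the commutator. Beyond this, the argument is essentially derivative bookkeeping: each use of Proposition~\ref{prop:propagII} costs two derivatives via the Sobolev embedding, and the $\drout$-commutation costs one extra $T$-derivative in the source term, which accounts for the final ranges $j\leq N_j-4$, $2k_1+k_2\leq N_k-5$.
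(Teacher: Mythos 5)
Your proposal is correct and follows essentially the same route as the paper: the $\drout^0$ case via Proposition \ref{prop:teukdec} and Proposition \ref{prop:propagII} with $c=1$, $V=0$, $\beta=7+\delta$, then commutation with $\drout$ via Proposition \ref{prop:commdrout} to identify the modified operator $\teukhat_{-2}^{(1/2,-2)}$, with boundary data on $\{r=\rb\}$ recovered from \eqref{eq:borneI}--\eqref{eq:enerI} by expressing $\drout$ in terms of $\drin$, $T$, $\Phi$. The key identification $c=1/2$, $V=-2$ and the absorption of the commutator source $-6T\psihat$ using the $\drout^0$ bound at level $j+1$ are exactly the steps the paper takes.
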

\begin{rem}
    The result of Proposition \ref{prop:upperboundII} only gives a sharp lower bound for $\Psihatm$ in any subregion of $\deux$ where $r$ is bounded away from $r_-$. This lower bound degenerates at $\Gamma$ where $\Delta^2/\ubar^7$ is much smaller than $\ubar^{-7-\delta/2}$ ($\Delta$ decays exponentially in $\ubar$ on $\Gamma$).
\end{rem}
\begin{proof}[Proof of Proposition \ref{prop:upperboundII}] We recall 
$$\errhatm=\Psihatm-\anshatm.$$
We have, combining the Teukolsky equation $\teukhat_{-2}\Psihatm=0$, \eqref{eq:commutateurs} and Proposition \ref{prop:teukdec}:    $$\teukhat_{-2}T^j\carterm^{k_1}\Phi^{k_2}\errhatm=O(\ubar^{-7-j-\delta}).$$
    Thus using the pointwise bound \eqref{eq:borneI}, the energy bound \eqref{eq:enerI} on $\{r=\rb\}$, \eqref{eq:teukavecteukmodif} and Proposition \ref{prop:propagII} with $c=1>0$, $V=0$, $\beta=7+\delta$, gives the result without the $\drout$ derivative, i.e. 
\begin{align}\label{eq:sansdr}
    |T^j\carterm^{k_1}\Phi^{k_2}\errhatm|\lesssim\ubar^{-7-j-\delta+C_\alpha\gamma}\lesssim\ubar^{-7-j-\delta/2}
\end{align}
where $\alpha=3/2$, for $\gamma>0$ small enough. Now we prove 
\begin{align}
    |T^j\carterm^{k_1}\Phi^{k_2}\drout\errhatm|\lesssim\ubar^{-7-j-\delta/2}.
\end{align}
Using Proposition \ref{prop:commdrout} for the expression of $[\teukhat_{-2},\drout]$, and Proposition \ref{prop:teukdec}
we find that $\drout\errhatm$ satisfies, in $\deux\cup\trois$: 
\begin{align}\label{eq:pde}
    \Big(\teukhat_{-2}+4[(r-M)\mu^{-1}e_4-rT]-2\Big)&T^j\carterm^{k_1}\Phi^{k_2}\drout\errhatm=\nn\\
    &-6T^j\carterm^{k_1}\Phi^{k_2}T\errhatm+O(\ubar^{-7-j-\delta}).
\end{align}
Thus using \eqref{eq:sansdr}, as well as the expression \eqref{eq:teukmodif} of the modified Teukolsky operator, we get 
\begin{align}\label{eq:LHS}
   \teukhat_{-2}^{(1/2,-2)}T^j\carterm^{k_1}\Phi^{k_2}\drout\errhatm=O(\ubar^{-7-j-\delta}).
\end{align}
We now prove that $\drout\errhatm$ satisfies \eqref{eq:hypborneII} and \eqref{eq:hypenerII}. This comes from the identity 
$$\drout=-2e_3+O(1)T+O(1)\Phi,\quad\text{ on }\{r=\rb\},$$
as well as the bounds \eqref{eq:borneI}, \eqref{eq:enerI}. We conclude the proof using Proposition \ref{prop:propagII} for $\drout\errhatm$ with $c=1/2>0$, $V=-2$, $\beta=7+\delta$ and $\gamma>0$ small enough.
\end{proof}
\subsection{Energy method for $\teukhat_{-2}$ in region $\trois$}\label{section:energymethodIII}
Sections \ref{section:energymethodIII} and \ref{section:upperboundIII} are dedicated to propagating the bounds of Proposition \ref{prop:upperboundII} from region $\deux$ to region $\trois$. This will be achieved using, in region $\trois$, a version of the energy method which is more naive than the one used in $\deux$. We now implement this energy method. In this section, $\alpha>1$ is fixed.

In this section we consider a spin $-2$ scalar $\Psihat$ such that there is $B\in\mathbb{R}$, $c>0$, $V\in\mathbb{R}$ such that:
\begin{align}
    \bullet\:&\iint_{\Gamma\cap\{\wbar_1\leq\wbar\leq\wbar_2\}}\mathcal{T}_\Gamma[\Psihatm]\dee\nu\dee\ubar\lesssim 1\text{  for any  }1\leq \wbar_1\leq\wbar_2,\label{eq:hypenerIII}\\
    \bullet\:&\teukhat_{-2}^{(c,V)}\Psihat=O(\ubar^{B})\text{   in   }\trois,\label{eq:hypteukIII}
\end{align}
recalling the expression of $\mathcal{T}_\Gamma[\Psihatm]$ from Proposition \ref{prop:enerestdeuxx}.
\begin{prop}\label{prop:enerborneIII}
    Let $\Psihat$ be a spin $-2$ scalar satisfying \eqref{eq:hypenerIII} and \eqref{eq:hypteukIII}. Then for $\wbar_1\geq 1$:
    \begin{align*}
        \iint_{\{\wbar=\wbar_1\}\cap\trois}\ener_\alpha[\Psihat]\dee\nu\dee r\lesssim 1+\int_{\wbar_{\rb}\!(w_{\rb,\gamma})}^{\wbar_1}\wbar^{2B}\dee\wbar;
\end{align*}
\end{prop}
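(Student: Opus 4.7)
The plan is to adapt the multiplier-based energy identity of Proposition \ref{prop:enerestdeuxx} to region $\trois$, using the same multiplier $X(\Psihat) = f(r)\ethat\Psihat + g(r)\partial_\ubar\Psihat$ with weights $f(r) = |{\log(-\mu)}|^{-\alpha}$ and $g(r) = (r^2+a^2)^p$ for $p$ large and $\rb$ close to $r_-$. Multiplying $\mu\teukhat_{-2}^{(c,V)}\Psihat$ by $\overline{X(\Psihat)}$, taking the real part, and integrating on $S(u,\ubar)$ produces the same sphere divergence identity
\begin{align*}
\partial_\ubar\left(\int_{S(u,\ubar)}\mathbf{F}_\ubar[\Psihat]\dee\nu\right)+\partial_u\left(\int_{S(u,\ubar)}\mathbf{F}_u[\Psihat]\dee\nu\right)+\int_{S(u,\ubar)}\mathbf{B}[\Psihat]\dee\nu=\int_{S(u,\ubar)}\mu\mathfrak{R}\!\left(\overline{X(\Psihat)}O(\ubar^B)\right)\dee\nu
\end{align*}
that was derived in the proof of Proposition \ref{prop:enerestdeuxx}, now with the source of size $O(\ubar^B)$ coming from \eqref{eq:hypteukIII}.

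Next I would integrate this identity over $\{\wbar_{\rb}(w_{\rb,\gamma})\leq\wbar\leq\wbar_1\}\cap\trois$ against $\dee u\dee\ubar$. The boundary splits into four pieces: the top slice $\{\wbar=\wbar_1\}\cap\trois$, giving $\int\mathcal{T}_\wbar[\Psihat]\dee\nu\dee u$; the interface $\Gamma$, contributing $\mathcal{T}_\Gamma[\Psihat]$ with orientation opposite to the one in Proposition \ref{prop:enerestdeuxx}, which can therefore be moved to the RHS and bounded by $\lesssim 1$ via hypothesis \eqref{eq:hypenerIII}; the past spacelike piece on $\{w=w_{\rb,\gamma}\}$, whose contribution is finite by smoothness of $\Psihat$ on this compact portion and is absorbed into $\lesssim 1$; and the Cauchy horizon side $\{r=r_-\}$, where the weights in $\mathbf{F}_\ubar$, $\mathbf{F}_u$ degenerate and no contribution arises. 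By Lemma \ref{lem:bulkpos}, the bulk $\iiint\mathbf{B}[\Psihat]\dee\nu\dee u\dee\ubar$ is non-negative and is simply discarded, since unlike in region $\deux$ we do not need a bulk estimate here.

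For the source I would use Cauchy--Schwarz,
\begin{align*}
\left|\iiint\mu\mathfrak{R}\!\left(\overline{X(\Psihat)}O(\ubar^B)\right)\dee\nu\dee u\dee\ubar\right|\lesssim\varepsilon\iiint|X(\Psihat)|^2(-\mu)\dee\nu\dee u\dee\ubar+\varepsilon^{-1}\iiint\ubar^{2B}(-\mu)\dee\nu\dee u\dee\ubar,
\end{align*}
with the first summand absorbed by the bulk for $\varepsilon>0$ small, and the second, after the change of variable $-\mu\dee u\sim\dee r$ along $\wbar=cst$ and compactness in $r\in[r_-,\rb]$, dominated by $\int_{\wbar_{\rb}(w_{\rb,\gamma})}^{\wbar_1}\wbar^{2B}\dee\wbar$. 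Finally the top-slice integrand satisfies $\mathcal{T}_\wbar[\Psihat]\sim(-\mu)\ener_\alpha[\Psihat]$ exactly as in the proof of Proposition \ref{prop:enerestdeuxx}, and the same identity $-\mu\dee u\sim\dee r$ on $\{\wbar=\wbar_1\}\cap\trois$ converts $\int\mathcal{T}_\wbar[\Psihat]\dee\nu\dee u$ into $\int\ener_\alpha[\Psihat]\dee\nu\dee r$, delivering the stated inequality. The derivative versions $T^j\carterm^{k_1}\Phi^{k_2}\Psihat$ follow by the same argument thanks to the commutation identities \eqref{eq:commutateursmodif}.

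The main subtlety I expect is the treatment of the past boundary on $\{w=w_{\rb,\gamma}\}$: this spacelike piece is not directly controlled by either \eqref{eq:hypenerIII} or \eqref{eq:hypteukIII}, so one must argue separately that its energy is $\lesssim 1$. This relies on the global smoothness of $\Psihat$ together with the fact that $\{w=w_{\rb,\gamma}\}\cap\{\wbar\leq\wbar_1\}\cap\trois$ stays a uniform distance away from the Cauchy horizon $\{r=r_-\}$ within our region of integration, so that the degenerate weights in $\ener_\alpha$ remain bounded on it.
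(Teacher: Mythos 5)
Your overall strategy is the paper's: reuse the multiplier identity of Proposition \ref{prop:enerestdeuxx}, integrate it over the truncated region, put the $\Gamma$ flux on the data side and control it by \eqref{eq:hypenerIII}, keep the bulk (non-negative by Lemma \ref{lem:bulkpos}) to absorb the Cauchy--Schwarz piece of the source, bound the remaining source integral by $\int\wbar^{2B}\dee\wbar$ after the change of variables $-\mu\,\dee u\sim\dee r$, and convert the top-slice flux $\mathcal{T}_{\wbar}\sim(-\mu)\ener_\alpha$ into the stated $\dee r$ integral. All of that matches.

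The one genuine problem is your treatment of the hypersurface $\{w=w_{\rb,\gamma}\}$, which you single out as the main subtlety and then handle incorrectly. Since $w=u-r+r_-$ satisfies $\partial_u w=1-\mu/2>0$ and $\partial_{\ubar}w=-\mu/2>0$, the function $w$ increases to the future, so $\{w=w_{\rb,\gamma}\}$ is the \emph{future} boundary of $\{w\leq w_{\rb,\gamma}\}$, not a past data surface. Its flux $\mathcal{T}_w[\Psihat]=(1-\mu/2)\mathbf{F}_u-(\mu/2)\mathbf{F}_{\ubar}$ therefore appears on the left-hand side of the energy identity, is non-negative (as the paper checks directly), and is simply dropped; no separate boundedness argument is needed. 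Your proposed justification would moreover not survive if the term really were on the data side: the full hypersurface $\{w=w_{\rb,\gamma}\}$ does \emph{not} stay a uniform distance from $\ch$ (on it, $u=w_{\rb,\gamma}+r-r_-$ stays bounded while $\ubar=2r^*-u\to+\infty$ as $r\to r_-$), so ``global smoothness on a compact portion'' would not control its energy. The same causal bookkeeping also disposes of your putative ``Cauchy horizon side'' boundary: on $\{\wbar\leq\wbar_1\}\cap\{w\leq w_{\rb,\gamma}\}$ both $u$ and $\ubar$ are bounded, hence $r$ is bounded away from $r_-$ and the truncated region never meets $\{r=r_-\}$. With the orientation corrected, your argument coincides with the paper's proof.
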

\begin{proof}
We use the same energy method as in region $\deux$, but this time we integrate \eqref{eq:dudubar} with $\beta=-B$ on $\{w\leq w_{\rb,\gamma},\:\wbar\leq\wbar_1\}\cap\trois$. Using Lemma \ref{lem:bulkpos} and the bound equivalent to \eqref{eq:abso} in $\{w\leq w_{\rb,\gamma},\:\wbar\leq\wbar_1\}\cap\trois$ for $\varepsilon>0$ small enough, we get similarly as in \eqref{eq:justeavant23}:
\begin{align}
        &\iint_{\{w\leq w_{\rb,\gamma},\:\wbar=\wbar_1\}\cap\trois}{\ener_\alpha}[\Psihat](-\mu)\dee\nu\dee u+\iint_{\{\wbar\leq \wbar_1,\:w=w_{\rb,\gamma}\}\cap\trois}\mathcal{T}_{w}[\Psihat](-\mu)\dee\nu\dee \ubar\nonumber\\
        &+\iiint_{\{w\leq w_{\rb,\gamma},\:\wbar\leq\wbar_1\}\cap\trois}\ener_{\alpha+1}[\Psihat](-\mu)\dee\nu\dee u\dee\ubar\lesssim \nonumber\\
        &\quad\quad\quad\quad\quad\quad\quad\iint_{\Gamma\cap\{\wbar_{\rb}\!(w_{\rb,\gamma})\leq\wbar\leq\wbar_1\}}\mathcal{T}_\Gamma[\Psihat]\dee\nu\dee\ubar+\int_{\wbar_{\rb}\!(w_{\rb,\gamma})}^{\underline{w}_1} \underline{w}^{2B} \mathrm{d} \underline{w},\label{eq:similarly}
\end{align}
where, as in \cite{scalarMZ}, the boundary term on $\{w=w_{\rb,\gamma}\}$ is 
$$\mathcal{T}_{w}[\Psihat]:=(1-\mu/2)\mathbf{F}_u[\Psihat]-(\mu/2)\mathbf{F}_\ubar[\Psihat]\sim|\partial_\ubar\Psihat|^2-\mu|{\log(-\mu)}|^{-\alpha}(|\ethat\Psihat|^2+|\partial_\theta\Psihat|^2+|\mcu\Psihat|^2)\geq 0.$$
Thus from \eqref{eq:hypenerIII} we deduce, dropping two non-negative terms on the LHS of \eqref{eq:similarly}, 
$$\iint_{\{\wbar=\wbar_1\}\cap\trois}{\ener_\alpha}[\Psihat](-\mu)\dee\nu\dee u\lesssim 1+\int_{\wbar_{\rb}\!(w_{\rb,\gamma})}^{\underline{w}_1} \underline{w}^{2B} \mathrm{d} \underline{w},$$
and we conclude the proof changing variables from $u$ to $r$ in the integral on the left.
\end{proof}
\subsection{Upper bound for $\errhatm$ in region $\trois$}\label{section:upperboundIII}

In this section, using the naive non-decaying energy estimate of Section \ref{section:energymethodIII}, we first find a polynomial upper bound for $\errhatm$ in $\trois$. Commuting the Teukolsky equation with $\drout$ and applying the energy method again, this will give a polynomial upper bound for $\drout\errhatm$ in $\trois$, which, integrated from $\Gamma$, will propagate the $O(|u|^{-7-j-\delta/2})$ decay of $T^j\errhatm$ from $\Gamma$ to $\trois$, using the exponential decay of $\Delta$ in $\ubar$ in $\trois$. Finally, we integrate a 1+1 wave equation extracted from the Teukolsky equation to prove the $O(|u|^{-7-j-\delta/2})$ decay of $T^j\drout\errhatm$ in $\trois$.

\begin{prop}\label{prop:result}Assume that $\Psim$ satisfies \eqref{eq:hyppl}. We have in $\trois$, for $j,k_1,k_2\geq 0$, and $\alpha>1$,
\begin{align*}
        \iint_{\{\wbar=\wbar_1\}\cap\trois}\ener_\alpha[T^j\carterm^{k_1}\Phi^{k_2}\errhatm]\dee\nu\dee r\lesssim 1.
\end{align*}
\end{prop}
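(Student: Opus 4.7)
The strategy is to apply the region $\trois$ energy estimate of Proposition \ref{prop:enerborneIII} to $\errhatm$, which reduces the task to verifying its two hypotheses \eqref{eq:hypenerIII} and \eqref{eq:hypteukIII}, with $c=1$, $V=0$ (so that $\teukhat_{-2}^{(c,V)}=\teukhat_{-2}$ by \eqref{eq:teukavecteukmodif}). The boundary hypothesis \eqref{eq:hypenerIII} on $\Gamma$ will in turn be obtained by applying the region $\deux$ estimate of Proposition \ref{prop:enerestdeuxx} to $\errhatm$, whose hypotheses on $\{r=\rb\}$ are supplied by Proposition \ref{prop:lapremiere}. It suffices to treat a fixed triple $(j,k_1,k_2)$ below the regularity thresholds of Propositions \ref{prop:lapremiere}, \ref{prop:enerestdeuxx} and \ref{prop:enerborneIII}, and to let these be implicit in the statement.

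First I would verify \eqref{eq:hypteukIII}. Since $\teukhat_{-2}\Psihatm=0$, we have $\teukhat_{-2}\errhatm=-\teukhat_{-2}\left(\anshatm\right)$. Using the commutation identities \eqref{eq:commutateurs} together with Proposition \ref{prop:teukdec}, one gets
\begin{align*}
\teukhat_{-2}T^j\carterm^{k_1}\Phi^{k_2}\errhatm = -T^j\carterm^{k_1}\Phi^{k_2}\teukhat_{-2}\!\left(\anshatm\right) = O(\ubar^{-7-j-\delta}),
\end{align*}
so \eqref{eq:hypteukIII} holds with $B=-7-j-\delta$. Note that $2B<-1$ for $j\geq 0$, so $\int_{\wbar_\rb(w_{\rb,\gamma})}^{\wbar_1}\wbar^{2B}\dee\wbar$ is uniformly bounded.

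Next I would verify \eqref{eq:hypenerIII} by invoking Proposition \ref{prop:enerestdeuxx} with $\Psihat=\errhatm$ and $\beta=7+\delta$. Its three input hypotheses are checked as follows. Hypothesis \eqref{eq:hypteukII} is the same bound established in the previous paragraph. Hypotheses \eqref{eq:hypborneII} and \eqref{eq:hypenerII} on $\{r=\rb\}$ follow from Proposition \ref{prop:lapremiere}: indeed $\errhatm=\Delta^2\errm$ and on $\{r=\rb\}$ the factor $\Delta^2$ together with its $e_3$, $e_4$, $\mcu$, $\partial_\theta$ derivatives are bounded (they depend only on $r$), so the pointwise bound \eqref{eq:borneI} and energy bound \eqref{eq:enerI} for $T^j\carterm^{k_1}\Phi^{k_2}\drin^{\leq 1}\errm$ translate (via the Poincar\'e inequality \eqref{eq:poincare} to absorb zeroth order terms and the identity $\drin=-2e_3$ to control $e_3$-derivatives) into the required pointwise and energy bounds for $T^j\carterm^{k_1}\Phi^{k_2}\errhatm$ with $\beta=7+\delta$. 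Proposition \ref{prop:enerestdeuxx} thus yields
\begin{align*}
\iint_{\Gamma\cap\{\wbar_1\leq\wbar\leq\wbar_2\}}\!\mathcal{T}_\Gamma[T^j\carterm^{k_1}\Phi^{k_2}\errhatm]\dee\nu\dee\ubar \;\lesssim\; 1 + \int_{\wbar_1}^{\wbar_2}\wbar^{-2(7+\delta)-2j}\dee\wbar \;\lesssim\; 1,
\end{align*}
uniformly in $\wbar_1,\wbar_2\geq 1$, which is precisely \eqref{eq:hypenerIII}. Applying Proposition \ref{prop:enerborneIII} now gives the claimed bound.

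The only real obstacle is a bookkeeping one: keeping track of the commutations (all three of $T$, $\Phi$ and $\carterm$ commute with $\teukhat_{-2}$ by \eqref{eq:commutateurs}, so no commutator errors arise), and carefully passing from the $\errm$-bounds of Proposition \ref{prop:lapremiere} to $\errhatm$-bounds on $\{r=\rb\}$ through the bounded multiplier $\Delta^2$. Everything else is an assembly of the two energy estimates proved in Sections \ref{section:energymethod} and \ref{section:energymethodIII}, with the decay $\ubar^{-7-j-\delta}$ of the Teukolsky source term providing the finite integrand $\wbar^{2B}$ that makes the region $\trois$ estimate close.
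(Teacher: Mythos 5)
Your proof is correct and follows essentially the same route as the paper's: verify \eqref{eq:hypteukIII} via the computation of Proposition \ref{prop:teukdec}, verify the boundary hypothesis \eqref{eq:hypenerIII} on $\Gamma$ by running the region $\deux$ estimate of Proposition \ref{prop:enerestdeuxx} with $\beta=7+\delta$ (fed by Proposition \ref{prop:lapremiere} on $\{r=\rb\}$), and then apply Proposition \ref{prop:enerborneIII}. The one small omission is that when you drop the term $\iint_{\{\wbar=\wbar_1\}\cap\deux}\ener_\alpha[T^j\carterm^{k_1}\Phi^{k_2}\errhatm](-\mu)\dee\nu\dee u$ from the right-hand side of Proposition \ref{prop:enerestdeuxx} to conclude that the $\Gamma$ flux is $\lesssim 1$, you need Proposition \ref{prop:enerdeuxx} (which the paper invokes precisely here) to guarantee that this initial-slice energy is uniformly bounded for all $\wbar_1\geq 1$, not only for $\wbar_1=1$.
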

\begin{proof}
Using Proposition \ref{prop:teukdec} and Proposition \ref{prop:enerestdeuxx} with $\Psihat=\errhatm$ and $\beta=7+\delta$, as well as Proposition \ref{prop:enerdeuxx}, we get that $T^j\carterm^{k_1}\Phi^{k_2}\errhatm$ satisfies \eqref{eq:hypenerIII} and \eqref{eq:hypteukIII} with $B=-7-\delta$. Thus, using Proposition \ref{prop:enerborneIII}, we get 
    \begin{align*}
        \iint_{\{\wbar=\wbar_1\}\cap\trois}\ener_\alpha[T^j\carterm^{k_1}\Phi^{k_2}\errhatm]\dee\nu\dee r\lesssim 1+\int_{\wbar_{\rb}\!(w_{\rb,\gamma})}^{\wbar_1}\wbar^{-2(7+\delta)}\dee\wbar\lesssim 1,
\end{align*}
which concludes the proof.
\end{proof}
The following result will be used to deduce polynomial upper bounds from the energy estimate of Proposition \ref{prop:result}.
\begin{lem}\label{lem:integIII}
    Let $\psi$ be a spin $\pm 2$ scalar. We define the scalar function $$f(\wbar,r):=\|\psi\|_{L^2(S(u(r,\wbar),\ubar(r,\wbar)))}.$$
    Then for $\wbar\geq 1$ and $r_2\leq r_1$ such that $(\wbar,r_1),(\wbar,r_2)\in\trois$,
    $$f(\wbar,r_2)\lesssim f(\wbar,r_1)+\wbar^{\frac{1}{2}(\alpha+1)}\left(\int_{r_2}^{r_1}\int_{S(\wbar,r')}\ener_\alpha[\psi]\dee\nu\dee r'\right)^{1/2}.$$
\end{lem}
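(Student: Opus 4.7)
The plan is to adapt verbatim the proof of Lemma \ref{lem:integ}, tracking only the place where the geometry of region $\trois$ replaces that of $\deux$. As before, I would use that in ingoing Eddington-Finkelstein coordinates $\partial_{r_\mathrm{in}} = -2 e_3$ and $\partial_{\ubar_\mathrm{in}} = T$, so along a level set $\{\wbar = \mathrm{const}\}$ one has $\partial_r f(\wbar,r) = (\partial_{r_\mathrm{in}} + \partial_{\ubar_\mathrm{in}}) f$. A Cauchy--Schwarz argument on the sphere then yields $|\partial_r f| \lesssim \|(T-2e_3)\psi\|_{L^2(S(\wbar,r))}$. Integrating in $r$ from $r_2$ to $r_1$ gives
\begin{align*}
f(\wbar,r_2) \lesssim f(\wbar,r_1) &+ \int_{r_2}^{r_1} \|T\psi\|_{L^2(S(\wbar,r'))}\dee r' + \int_{r_2}^{r_1} \|e_3\psi\|_{L^2(S(\wbar,r'))}\dee r'.
\end{align*}

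For the $T$ term, I would use the expansion \eqref{eq:expreT}, namely $T = O(\mu)e_3 + O(1)e_4 + O(1)\mcu + O(1)$, together with $\mu^2 \lesssim (-\mu)|{\log(-\mu)}|^{-\alpha}$ and the Poincaré inequality \eqref{eq:poincare}, to bound $\|T\psi\|_{L^2(S(\wbar,r'))}^2 \lesssim \int_{S(\wbar,r')}\ener_\alpha[\psi]\dee\nu$. For the $e_3$ term, I would insert a factor $((-\mu)|{\log(-\mu)}|^{-\alpha})^{1/2}$ and apply Cauchy--Schwarz to get
\begin{align*}
\int_{r_2}^{r_1} \|e_3\psi\|_{L^2(S(\wbar,r'))}\dee r' \lesssim \left(\int_{r_2}^{r_1}(-\mu)^{-1}|{\log(-\mu)}|^\alpha\dee r\right)^{1/2}\left(\int_{r_2}^{r_1}\int_{S(\wbar,r')}\ener_\alpha[\psi]\dee\nu\dee r'\right)^{1/2}.
\end{align*}

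The only place where the argument differs from Lemma \ref{lem:integ} is the estimate of $\int_{r_2}^{r_1} (-\mu)^{-1}|{\log(-\mu)}|^\alpha\dee r$, and this is where I expect the only real work to be. In region $\trois$ the constraint $2r^* \geq \ubar^\gamma$ gives only a lower bound on $r^*$, whereas along a fixed $\wbar = \mathrm{const}$ slice $r^*$ can grow essentially as large as $\wbar/2$ (since $u$ can be as negative as $-\wbar$ up to the spacelike hypersurface $\{w = w_{\rb,\gamma}\}$). Consequently $|{\log(-\mu)}| \sim r^* \lesssim \wbar$ in $\trois$, and similarly $|u_{r_2}(\wbar) - u_{r_1}(\wbar)| \lesssim \wbar$. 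Changing variables $\dee u = 2\mu^{-1}\dee r$ on $\{\wbar = \mathrm{const}\}$, this yields
\begin{align*}
\int_{r_2}^{r_1}(-\mu)^{-1}|{\log(-\mu)}|^\alpha\dee r \lesssim \int_{u_{r_1}(\wbar)}^{u_{r_2}(\wbar)} |{\log(-\mu)}|^\alpha\dee u \lesssim \wbar \cdot \wbar^\alpha = \wbar^{\alpha+1}.
\end{align*}
Taking the square root gives the factor $\wbar^{(\alpha+1)/2}$ in the statement (compared to $\wbar^{\gamma(\alpha+1)/2}$ in region $\deux$, which reflects the much smaller extent $r^* \lesssim \wbar^\gamma$ there). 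Assembling the three contributions concludes the proof.
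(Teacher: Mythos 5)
Your proposal is correct and follows essentially the same route as the paper: the paper's proof simply recalls the decomposition \eqref{eq:otherterm}--\eqref{eq:elle} and the bound \eqref{eq:boboT} already established for Lemma \ref{lem:integ}, and then, exactly as you do, replaces the region-$\deux$ estimates $|\log(-\mu)|\sim r^*\lesssim\wbar^\gamma$ and $|u_{r_2}(\wbar)-u_{r_1}(\wbar)|\lesssim\wbar^\gamma$ by their region-$\trois$ counterparts $|\log(-\mu)|\sim r^*\lesssim\wbar$ and $|u_{r_2}(\wbar)-u_{r_1}(\wbar)|\lesssim\wbar$, yielding $\int(-\mu)^{-1}|\log(-\mu)|^\alpha\dee r\lesssim\wbar^{\alpha+1}$. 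Your identification of this weight integral as the only point where the geometry of $\trois$ enters is exactly right.
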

\begin{proof}
    Recall \eqref{eq:otherterm} and \eqref{eq:elle} which yield
    \begin{align}
        f(\wbar,r_2)\lesssim&\:f(\wbar,r_1)+\int_{r_2}^{r_1}\left(\int_{S(\wbar,r')}|T\psi|^2\right)^{1/2}\dee r'\nn\\
        &+\left(\int_{r_2}^{r_1}(-\mu)^{-1}|{\log(-\mu)}|^{\alpha}\dee r\right)^{1/2}\left(\int_{r_2}^{r_1}\int_{S(\wbar,r')}\ener_\alpha[\psi]\dee\nu\dee r'\right)^{1/2}.\label{eq:othertermIII}
    \end{align}
By \eqref{eq:boboT}, it is only left to estimate the first integral on the second line of \eqref{eq:othertermIII}. We have in $\trois$, 
$$|{\log(-\mu)}|\sim r^*\sim u+\ubar\lesssim \wbar$$
for $\wbar\gtrsim 1$ large enough. We also have 
$$|u_{r_2}(\wbar)-u_{r_1}(\wbar)|=|2r^*-r-2\rb^*+\rb|\lesssim\wbar$$
in $\trois$ for $\wbar\gtrsim 1$ large enough. This gives 
$$\int_r^{\rb}(-\mu)^{-1}|{\log(-\mu)}|^{\alpha}\dee r\lesssim\int_{u_{\rb}(\wbar)}^{u_r(\wbar)}|{\log(-\mu)}|^{\alpha}\dee u\lesssim\wbar^{\alpha+1}$$
and concludes the proof.
\end{proof}
\begin{prop}\label{prop:Linf}
Assume that $\Psim$ satisfies \eqref{eq:hyppl}. Then, in $\trois$, for $0\leq j\leq N_j-4$ and $0\leq 2k_1+k_2\leq N_k-5$, and $\alpha>1$,
$$|T^j\carterm^{k_1}\Phi^{k_2}\errhatm|\lesssim \ubar^{\frac{1}{2}(\alpha+1)}.$$
\end{prop}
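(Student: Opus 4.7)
The strategy is to upgrade the $L^2$-on-hypersurface energy bound of Proposition \ref{prop:result} into a pointwise bound by combining three ingredients: the spin-weighted Sobolev inequality (Proposition \ref{prop:sobolev}), the transport Lemma \ref{lem:integIII} that controls the $L^2(S(u,\ubar))$ norm along a constant $\wbar$ slice, and the already-known pointwise bounds on the boundary hypersurface $\Gamma = \overline{\deux}\cap\overline{\trois}$ coming from Proposition \ref{prop:upperboundII}.

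First I would fix $j,k_1,k_2$ in the stated range and apply Proposition \ref{prop:sobolev} to $T^j\carterm^{k_1}\Phi^{k_2}\errhatm$. Since $T$, $\Phi$ and $\carterm$ all commute with $\teukhat_{-2}$ (and therefore with the process that produces the bounds), this reduces the pointwise bound to $L^2(S(u,\ubar))$ control of the derived quantities $T^{j+j'}\carterm^{k_1+k'_1}\Phi^{k_2}\errhatm$ with $j'\leq 2$, $k'_1\leq 1$, which still fall within the derivative budget $N_j-4$, $N_k-5$. For each such quantity, fix a point $(u,\ubar)\in\trois$, set $\wbar_1=\wbar(u,\ubar)$, and apply Lemma \ref{lem:integIII} along the slice $\{\wbar=\wbar_1\}\cap\trois$ integrating from $r_\Gamma(\wbar_1)\in\Gamma$ down to $r\leq r_\Gamma(\wbar_1)$:
\begin{align*}
\|T^{j+j'}\carterm^{k_1+k'_1}\Phi^{k_2}\errhatm\|_{L^2(S(u,\ubar))}
&\lesssim \|T^{j+j'}\carterm^{k_1+k'_1}\Phi^{k_2}\errhatm\|_{L^2(S\cap\Gamma)}\\
&\quad +\wbar_1^{(\alpha+1)/2}\Big(\iint_{\{\wbar=\wbar_1\}\cap\trois}\ener_\alpha[T^{j+j'}\carterm^{k_1+k'_1}\Phi^{k_2}\errhatm]\dee\nu\dee r\Big)^{1/2}.
\end{align*}

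The boundary term on $\Gamma$ is controlled by Proposition \ref{prop:upperboundII}, which applies on $\overline{\deux}\supset\Gamma$ and yields a pointwise bound $\lesssim\ubar^{-7-(j+j')-\delta/2}\lesssim 1$ (here $\ubar\geq 1$), hence an $L^2(S)$ bound $\lesssim 1$. The bulk integral is precisely what Proposition \ref{prop:result} estimates by $\lesssim 1$. Using $\wbar\sim\ubar$ in $\trois$ and plugging the resulting $L^2(S)$ bound $\lesssim\ubar^{(\alpha+1)/2}$ back into Sobolev yields the claimed pointwise bound. I do not expect any real obstacle: the proposition is essentially an assembly result. The only point requiring care is bookkeeping the derivative losses — two $T$ derivatives and one $\carterm$ from Sobolev — which is exactly why the hypothesis is stated with the margins $N_j-4$ and $N_k-5$ rather than $N_j$ and $N_k$.
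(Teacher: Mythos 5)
Your proposal is correct and is essentially the paper's own proof: the same three ingredients (Lemma \ref{lem:integIII} along the constant-$\wbar$ slice from $\Gamma$, Proposition \ref{prop:upperboundII} for the boundary term on $\Gamma$, Proposition \ref{prop:result} for the bulk energy), with only the cosmetic difference that you apply the Sobolev embedding before rather than after the $L^2(S(u,\ubar))$ estimate. The one bookkeeping point you get slightly wrong is the claim that the post-Sobolev quantities $T^{j+2}\carterm^{k_1+1}\Phi^{k_2}\errhatm$ still fit in the budget $j\le N_j-4$, $2k_1+k_2\le N_k-5$ — they do not if one invokes the \emph{pointwise} bound of Proposition \ref{prop:upperboundII} on $\Gamma$; the fix, which the paper records in the closing remark of its proof, is that only the $L^2(S(u,\ubar))$ norm of the $\Gamma$ term is needed, and the underlying $L^2$-level bound holds with two fewer $T$ and angular derivatives consumed, so the derivative count closes at $N_j-4$, $N_k-5$ as stated.
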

\begin{proof}
We begin by proving the bound in $L^2(S(u,\ubar))$ norm. We use Lemma \ref{lem:integIII} to get in $\trois$:
\begin{align*}
    \|&T^j\carterm^{k_1}\Phi^{k_2}\errhatm\|_{L^2(S(u,\ubar))} \\
    &\lesssim\|T^j\carterm^{k_1}\Phi^{k_2}\errhatm\|_{L^2(S(u_\Gamma(\wbar),\ubar_\Gamma(\wbar)))}+\wbar^{\frac{1}{2}(\alpha+1)}\left(\iint_{\{\wbar=\wbar_1\}\cap\trois}\ener_\alpha[T^j\carterm^{k_1}\Phi^{k_2}\errhatm]\dee\nu\dee r\right)^{1/2}\\
    &\lesssim\wbar^{\frac{1}{2}(\alpha+1)}\lesssim\ubar^{\frac{1}{2}(\alpha+1)},
\end{align*}
where we used Proposition \ref{prop:upperboundII} to bound the term on $\Gamma$ and Proposition \ref{prop:result} to bound the energy term. Next, using the Sobolev embedding \eqref{eq:sobolev}, we get 
$$|T^j\carterm^{k_1}\Phi^{k_2}\errhatm|^2\lesssim \int_{S(u,\ubar)}|T^{\leq j+2}\carterm^{k_1}\Phi^{k_2}\errhatm|^2\dee\nu+\int_{S(u,\ubar)}|T^j\carterm^{k_1+1}\Phi^{k_2}\errhatm|^2\dee\nu\lesssim \ubar^{\alpha+1},$$
which concludes the proof. The loss of derivatives is the same as the one in Proposition \ref{prop:upperboundII} because, although we use the Sobolev embedding \eqref{eq:sobolev}, we only use the decay of $\errm$ on $\Gamma$ in $L^2(S(u,\ubar))$, which saves two angular and $T$ derivatives.
\end{proof}
We now turn to getting the polynomial upper bound for $\drout\errhatm$ in $\trois$.
\begin{prop}\label{prop:enerdroutIII}Assume that $\Psim$ satisfies \eqref{eq:hyppl}. Then, we have in $\trois$, for $0\leq j\leq N_j-2$ and $0\leq 2k_1+k_2\leq N_k-3$, and $\alpha>1$,
\begin{align*}
        \iint_{\{\wbar=\wbar_1\}\cap\trois}\ener_\alpha[T^j\carterm^{k_1}\Phi^{k_2}\drout\errhatm]\dee\nu\dee r\lesssim\wbar_1^{\alpha+2}.
\end{align*}
\end{prop}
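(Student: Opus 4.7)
\textbf{Proof plan for Proposition \ref{prop:enerdroutIII}.} The strategy mirrors the structure used for Proposition \ref{prop:result}: commute the rescaled Teukolsky equation with $\drout$ to obtain a PDE of the form $\teukhat_{-2}^{(c,V)}(\cdot)=O(\ubar^B)$ satisfied by $\drout\errhatm$, verify the boundary energy hypothesis on $\Gamma$, then apply the naive blueshift energy estimate of Proposition \ref{prop:enerborneIII} with a loss $\wbar_1^{2B+1}$ coming from integration of the source. The key point is that unlike in $\deux$, in $\trois$ we cannot gain polynomial decay from $T\errhatm$, and we must instead feed in the growing $L^\infty$ bound from Proposition \ref{prop:Linf}.

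Concretely, by Proposition \ref{prop:commdrout} applied to $\teukhat_{-2}T^j\carterm^{k_1}\Phi^{k_2}\errhatm$, together with Proposition \ref{prop:teukdec} and the Teukolsky equation $\teukhat_{-2}\Psihatm=0$, we obtain in $\deux\cup\trois$
\begin{align*}
\Big(\teukhat_{-2}+4[(r-M)\mu^{-1}e_4-rT]-2\Big)T^j\carterm^{k_1}\Phi^{k_2}\drout\errhatm=-6T^{j+1}\carterm^{k_1}\Phi^{k_2}\errhatm+O(\ubar^{-7-j-\delta}),
\end{align*}
which, recalling the definition \eqref{eq:teukmodif} with $c=1/2$ and $V=-2$, rewrites as
\begin{align*}
\teukhat_{-2}^{(1/2,-2)}T^j\carterm^{k_1}\Phi^{k_2}\drout\errhatm=-6T^{j+1}\carterm^{k_1}\Phi^{k_2}\errhatm+O(\ubar^{-7-j-\delta}).
\end{align*}
Invoking Proposition \ref{prop:Linf} to bound the first source term on the RHS by $\ubar^{(\alpha+1)/2}$ in $\trois$ (for appropriate $j,k_1,k_2$), we conclude that hypothesis \eqref{eq:hypteukIII} holds for $\Psihat=T^j\carterm^{k_1}\Phi^{k_2}\drout\errhatm$ with $B=(\alpha+1)/2$.

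It remains to check the boundary hypothesis \eqref{eq:hypenerIII} on $\Gamma$. For this, we apply the energy identity of Proposition \ref{prop:enerestdeuxx} in region $\deux$ to $T^j\carterm^{k_1}\Phi^{k_2}\drout\errhatm$; this is precisely the computation already performed inside the proof of Proposition \ref{prop:upperboundII}, where it was verified that $\drout\errhatm$ satisfies \eqref{eq:hypborneII}, \eqref{eq:hypenerII}, \eqref{eq:hypteukII} with $c=1/2$, $V=-2$, $\beta=7+\delta$. Since the data on $\{r=\rb\}$ (by \eqref{eq:enerI}) and the integrated source $\int\wbar^{-2(7+\delta)-2j}\dee\wbar$ are both bounded, Proposition \ref{prop:enerestdeuxx} yields $\iint_{\Gamma\cap\{\wbar_1\leq\wbar\leq\wbar_2\}}\mathcal{T}_\Gamma[T^j\carterm^{k_1}\Phi^{k_2}\drout\errhatm]\dee\nu\dee\ubar\lesssim 1$, which is exactly \eqref{eq:hypenerIII}.

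Plugging these two inputs into Proposition \ref{prop:enerborneIII} then gives
\begin{align*}
\iint_{\{\wbar=\wbar_1\}\cap\trois}\ener_\alpha[T^j\carterm^{k_1}\Phi^{k_2}\drout\errhatm]\dee\nu\dee r\lesssim 1+\int_{\wbar_{\rb}\!(w_{\rb,\gamma})}^{\wbar_1}\wbar^{\alpha+1}\dee\wbar\lesssim\wbar_1^{\alpha+2},
\end{align*}
which is the desired bound. The main point of care is derivative counting: the commutation costs one $T$ derivative on the RHS through the term $-6T^{j+1}\cdots\errhatm$, which forces us to apply Proposition \ref{prop:Linf} at order $j+1$, accounting for the drop from $N_j-4$ in Proposition \ref{prop:Linf} to (effectively) $N_j-2$ in the present statement; the angular counting is analogous. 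The slight loss is absorbed in the tolerances $N_j-2$, $N_k-3$.
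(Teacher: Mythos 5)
Your proposal is correct and follows essentially the same route as the paper: commute with $\drout$ to get the modified equation $\teukhat_{-2}^{(1/2,-2)}T^j\carterm^{k_1}\Phi^{k_2}\drout\errhatm=O(\ubar^{(\alpha+1)/2})$ via Proposition \ref{prop:Linf}, verify \eqref{eq:hypenerIII} on $\Gamma$ from the region-$\deux$ energy estimates, and conclude with Proposition \ref{prop:enerborneIII}. Only your closing derivative-count remark is slightly off: applying Proposition \ref{prop:Linf} at order $j+1$ would \emph{shrink} the admissible range to $j\leq N_j-5$ rather than explain the stated $N_j-2$; the reconciliation is that the energy method only needs the source in $L^2(S(u,\ubar))$, which saves two $T$ and angular derivatives relative to the pointwise bound.
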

\begin{proof}
    Recall the equation \eqref{eq:pde} satisfied by $\drout\errhatm$. Using Proposition \ref{prop:Linf}, it rewrites\footnote{We save two angular and $T$ derivatives as in the energy method we only use the bound for the RHS of \eqref{eq:LHS2} in $L^2(S(u,\ubar))$ norm.}
\begin{align}\label{eq:LHS2}
    \teukhat_{-2}^{(1/2,-2)}T^j\carterm^{k_1}\Phi^{k_2}\drout\errhatm=O(\ubar^{\frac{1}{2}(\alpha+1)}).
\end{align}
    This proves that $T^j\carterm^{k_1}\Phi^{k_2}\drout\errhatm$ satisfies \eqref{eq:hypteukIII} with $B=(\alpha+1)/2$, $c=1/2>0$, $V=-2$. Moreover, $T^j\carterm^{k_1}\Phi^{k_2}\drout\errhatm$ satisfies \eqref{eq:hypenerIII} by Propositions \ref{prop:enerestdeuxx} and \ref{prop:enerdeuxx} with $\beta=7+j+\delta$. Thus, by Proposition \ref{prop:enerborneIII} we have
    \begin{align*}
        \iint_{\{\wbar=\wbar_1\}\cap\trois}\ener_\alpha[T^j\carterm^{k_1}\Phi^{k_2}\drout\errhatm]\dee\nu\dee r&\lesssim 1+\int_{\wbar_{\rb}\!(w_{\rb,\gamma})}^{\wbar_1}\wbar^{\alpha+1}\dee\wbar\\
        &\lesssim\wbar_1^{\alpha+2}
\end{align*}
for $\wbar_1\geq 1$ large enough.
\end{proof}
\begin{prop}\label{prop:bornepoldrout}
    Assume that $\Psim$ satisfies \eqref{eq:hyppl}. We have in $\trois$, for $0\leq j\leq N_j-4$ and $0\leq 2k_1+k_2\leq N_k-5$, and $\alpha>1$,
$$|T^j\carterm^{k_1}\Phi^{k_2}\drout\errhatm|\lesssim \ubar^{3/2+\alpha}.$$
\end{prop}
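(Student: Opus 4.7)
The plan is to closely mirror the proof of Proposition \ref{prop:Linf}, now applied to $\drout\errhatm$ in place of $\errhatm$, by combining the pointwise bound on $\Gamma$ from region $\deux$ (Proposition \ref{prop:upperboundII}) with the non-sharp energy estimate in $\trois$ for $\drout\errhatm$ (Proposition \ref{prop:enerdroutIII}), and then converting from $L^2(S(u,\ubar))$ to pointwise via the Sobolev embedding (Proposition \ref{prop:sobolev}).

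First, I would apply Lemma \ref{lem:integIII} to the spin $-2$ scalar $T^j\carterm^{k_1}\Phi^{k_2}\drout\errhatm$, integrating in $r$ between a point of $\trois$ and the hypersurface $\Gamma$ along a level set of $\wbar$. This produces
\begin{align*}
\|T^j\carterm^{k_1}\Phi^{k_2}\drout\errhatm\|_{L^2(S(u,\ubar))} &\lesssim \|T^j\carterm^{k_1}\Phi^{k_2}\drout\errhatm\|_{L^2(S(u_\Gamma(\wbar),\ubar_\Gamma(\wbar)))} \\
&\quad + \wbar^{\frac{1}{2}(\alpha+1)}\left(\iint_{\{\wbar=\wbar(u,\ubar)\}\cap\trois}\ener_\alpha[T^j\carterm^{k_1}\Phi^{k_2}\drout\errhatm]\dee\nu\dee r\right)^{1/2}.
\end{align*}
The boundary term on $\Gamma$ is controlled by Proposition \ref{prop:upperboundII} (since $\Gamma\subset\deux$), which gives decay $\lesssim\ubar^{-7-j-\delta/2}$ and is therefore $O(1)$. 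The bulk energy term is estimated by Proposition \ref{prop:enerdroutIII}, yielding a factor $\lesssim\wbar^{(\alpha+2)/2}$. Combining these gives
$$\|T^j\carterm^{k_1}\Phi^{k_2}\drout\errhatm\|_{L^2(S(u,\ubar))}\lesssim \wbar^{\frac{\alpha+1}{2}+\frac{\alpha+2}{2}}=\wbar^{\alpha+3/2}.$$

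Finally, applying the Sobolev inequality \eqref{eq:sobolev} to $T^j\carterm^{k_1}\Phi^{k_2}\drout\errhatm$ and using the same $L^2(S(u,\ubar))$ estimate after commuting with $T^{\leq 2}$ and $\carterm$ (these commute with $\drout\errhatm$ up to harmless factors from \eqref{eq:commutateurs} and Proposition \ref{prop:commdrout} already used in Proposition \ref{prop:enerdroutIII}), we obtain
$$|T^j\carterm^{k_1}\Phi^{k_2}\drout\errhatm|^2\lesssim \|T^{\leq j+2}\carterm^{k_1}\Phi^{k_2}\drout\errhatm\|^2_{L^2(S)}+\|T^j\carterm^{k_1+1}\Phi^{k_2}\drout\errhatm\|^2_{L^2(S)}\lesssim\wbar^{2\alpha+3}.$$
Since $\wbar\sim\ubar$ in $\trois$ for $\wbar\gtrsim 1$, this gives the claimed $\ubar^{3/2+\alpha}$ bound. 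The derivative thresholds $j\leq N_j-4$ and $2k_1+k_2\leq N_k-5$ match, since the Sobolev step costs two $T$-derivatives and one $\carterm$-derivative on top of the $N_j-2$, $N_k-3$ range available from Proposition \ref{prop:enerdroutIII}.

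There is essentially no hard step left: all the analytic work — the blueshift energy estimate for the modified Teukolsky operator $\teukhat_{-2}^{(1/2,-2)}$, the $\drout$-commutation identity via Proposition \ref{prop:commdrout}, and the propagation from $\Gamma$ into $\trois$ via Lemma \ref{lem:integIII} — has been packaged into the preceding propositions. The proof is therefore a short assembly of Lemma \ref{lem:integIII}, Proposition \ref{prop:upperboundII}, Proposition \ref{prop:enerdroutIII}, and the Sobolev inequality \eqref{eq:sobolev}, with the only care needed being to track the loss of derivatives and to use $\wbar\sim\ubar$ in $\trois$.
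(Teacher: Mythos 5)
Your proposal is correct and follows essentially the same route as the paper: Lemma \ref{lem:integIII} combined with the boundary control on $\Gamma$ from Proposition \ref{prop:upperboundII} and the energy bound of Proposition \ref{prop:enerdroutIII}, followed by the Sobolev embedding \eqref{eq:sobolev}. The exponent bookkeeping ($\wbar^{(\alpha+1)/2}\cdot\wbar^{(\alpha+2)/2}=\wbar^{\alpha+3/2}$) and the derivative losses match the paper exactly.
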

\begin{proof}
We use Lemma \ref{lem:integIII} and Proposition \ref{prop:enerdroutIII} to get
\begin{align*}
    \|T^j\carterm^{k_1}\Phi^{k_2}\drout\errhatm\|_{L^2(S(u,\ubar))}&\lesssim\|T^j\carterm^{k_1}\Phi^{k_2}\drout\errhatm\|_{L^2(S(u_\Gamma(\wbar),\ubar_\Gamma(\wbar)))}\\
    &\quad+\wbar^{\frac{1}{2}(\alpha+1)}\left(\iint_{\{\wbar=\wbar_1\}\cap\trois}\ener_\alpha[T^j\carterm^{k_1}\Phi^{k_2}\drout\errhatm]\dee\nu\dee r\right)^{1/2}\\
    &\lesssim\wbar^{3/2+\alpha}\lesssim\ubar^{3/2+\alpha},
\end{align*}
where we used Proposition \ref{prop:upperboundII} to bound the term on $\Gamma$. As before, we conclude with the Sobolev embedding \eqref{eq:sobolev}.
\end{proof}
We now deduce actual decay from the estimate in Proposition \ref{prop:bornepoldrout} and the identity $\partial_\ubar=(\mu/2)\drout$, as well as the fact that $\mu$ decays exponentially in $\ubar$ in $\trois$. Indeed, recall the identity $-\Delta\sim\exp(-2|\kappa_-|r^*)$ in $\{r_-\leq r\leq\rb\}$. The definition of region $\trois$ then implies
\begin{align}\label{eq:decayexpdelta}
    -\Delta\lesssim\exp(-|\kappa_-|\ubar^\gamma)\text{   in   }\trois.
\end{align}
\begin{prop}\label{prop:almostsharp}
    Assume that $\Psim$ satisfies \eqref{eq:hyppl}. We have in $\trois$, for $0\leq j\leq N_j-4$ and $0\leq 2k_1+k_2\leq N_k-5$,
    $$|T^j\carterm^{k_1}\Phi^{k_2}\errhatm|\lesssim |u|^{-7-j-\delta/2}.$$
\end{prop}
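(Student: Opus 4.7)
The strategy is to integrate the polynomial upper bound on $\drout\errhatm$ from Proposition \ref{prop:bornepoldrout} against the exponentially small coefficient $\mu$ along constant-$u$ null hypersurfaces, starting from the boundary $\Gamma$ where Proposition \ref{prop:upperboundII} already furnishes the desired $|u|^{-7-j-\delta/2}$ decay. The key identity is $\partial_\ubar=e_4=(\mu/2)\drout$ in outgoing double-null coordinates, together with the fact that $T$, $\Phi$, and $\carterm$ all commute with $\partial_\ubar$, with $\drout$, and with multiplication by the $r$-dependent function $\mu$ (the commutations are immediate since the coefficients of $e_4$ depend only on $r$, while $\carterm$ involves only $T$, $\partial_\theta$, $\Phi$ and $\theta$-dependent coefficients).

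Concretely, fix $(u,\ubar)\in\trois$ and let $\ubar_\Gamma(u)$ denote the $\ubar$-coordinate of the intersection of the constant-$u$ null line through $(u,\ubar)$ with $\Gamma$. The defining relation $u+\ubar_\Gamma(u)=\ubar_\Gamma(u)^\gamma$ gives $\ubar_\Gamma(u)\sim|u|$ for $|u|$ sufficiently large, and the entire segment $\{(u,\ubar')\,:\,\ubar_\Gamma(u)\leq\ubar'\leq\ubar\}$ lies in $\trois$ (since $u+\ubar'-(\ubar')^\gamma$ is increasing in $\ubar'\geq 1$ for $\gamma<1$, and $w=u-r+r_-$ is monotone in $\ubar'$ at fixed $u$). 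The fundamental theorem of calculus yields
\begin{align*}
    T^j\carterm^{k_1}\Phi^{k_2}\errhatm(u,\ubar)=T^j\carterm^{k_1}\Phi^{k_2}\errhatm(u,\ubar_\Gamma(u))+\int_{\ubar_\Gamma(u)}^{\ubar}\frac{\mu(\ubar')}{2}\,T^j\carterm^{k_1}\Phi^{k_2}\drout\errhatm(u,\ubar')\,\dee\ubar'.
\end{align*}
The boundary term is $O(|u|^{-7-j-\delta/2})$ directly from Proposition \ref{prop:upperboundII}, applied at the boundary point on $\Gamma\subset\deux$, since $\ubar_\Gamma(u)\sim|u|$. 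For the integral term, combining the pointwise bound $|T^j\carterm^{k_1}\Phi^{k_2}\drout\errhatm|\lesssim(\ubar')^{3/2+\alpha}$ from Proposition \ref{prop:bornepoldrout} with the exponential decay $-\mu\lesssim\exp(-|\kappa_-|(\ubar')^\gamma)$ from \eqref{eq:decayexpdelta} gives
\begin{align*}
    \Big|\int_{\ubar_\Gamma(u)}^{\ubar}\frac{\mu(\ubar')}{2}T^j\carterm^{k_1}\Phi^{k_2}\drout\errhatm(u,\ubar')\,\dee\ubar'\Big|\lesssim\int_{\ubar_\Gamma(u)}^{\infty}(\ubar')^{3/2+\alpha}\exp(-|\kappa_-|(\ubar')^\gamma)\,\dee\ubar',
\end{align*}
which decays faster than any polynomial in $\ubar_\Gamma(u)\sim|u|$, and is thus a fortiori $O(|u|^{-7-j-\delta/2})$. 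Adding the two contributions yields the claim, with the range of admissible $j,k_1,k_2$ inherited from Propositions \ref{prop:upperboundII} and \ref{prop:bornepoldrout}.

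No substantial obstacle is anticipated, as the analytic work has been done in the preceding two propositions. The essential mechanism is the dramatic mismatch between the polynomial growth $\ubar^{3/2+\alpha}$ allowed for $\drout\errhatm$ in $\trois$ (which is unavoidable since the energy estimate in $\trois$ is non-decaying) and the exponential suppression $\exp(-|\kappa_-|\ubar^\gamma)$ carried by the factor $\mu$ that converts $\drout$ into $\partial_\ubar$: this completely annihilates any error accumulated along the $\partial_\ubar$-integration from $\Gamma$ towards $\ch$, so that the sharp $|u|^{-7-j-\delta/2}$ decay on $\Gamma$ propagates essentially intact into all of $\trois$.
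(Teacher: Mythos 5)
Your proposal is correct and follows essentially the same route as the paper: convert $\partial_{\ubar}=(\mu/2)\drout$, integrate along constant $u$ from $\Gamma$, bound the boundary term by Proposition \ref{prop:upperboundII} with $\ubar_\Gamma(u)\sim|u|$, and absorb the integral using the exponential decay of $-\Delta$ against the polynomial bound of Proposition \ref{prop:bornepoldrout}. The only difference is cosmetic: where you assert that $\int_A^{\infty}x^{3/2+\alpha}\exp(-|\kappa_-|x^\gamma)\dee x$ beats any polynomial in $A$, the paper verifies this by an explicit integration by parts, but the conclusion is the same.
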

\begin{proof}
    Using Proposition \ref{prop:bornepoldrout} and $\partial_\ubar=e_4=(\mu/2)\drout$, defining $K=3/2+\alpha$, we get in $\trois$:
    $$|\partial_\ubar T^j\carterm^{k_1}\Phi^{k_2}\errhatm|\lesssim -\Delta\ubar^K\lesssim\exp(-|\kappa_-|\ubar^\gamma)\ubar^K.$$
    This yields, integrating from $\Gamma$ on $u=cst$, $\phi_-=cst$, $\theta=cst$: 
    \begin{align}
        |T^j\carterm^{k_1}\Phi^{k_2}\errhatm(u,\ubar,\theta,\phi_-)&-T^j\carterm^{k_1}\Phi^{k_2}\errhatm(u,\ubar_\Gamma(u),\theta,\phi_-)|\label{eq:impliestog}\\
        &\lesssim\int_{\ubar_\Gamma(u)}^{\ubar}\exp(-|\kappa_-|x^\gamma)x^K\dee x\lesssim \int_{\ubar_\Gamma(u)}^{+\infty}\exp(-|\kappa_-|x^\gamma)x^K\dee x.\nn
    \end{align}
Integrating by parts, we have for any $A\geq 1$, 
\begin{align*}
    \int_A^{+\infty}&\exp(-|\kappa_-|x^\gamma)x^K\dee x\\
    &=\frac{-1}{\gamma|\kappa_-|}\int_A^{+\infty}\frac{\dee}{\dee x}[\exp(-|\kappa_-|x^\gamma)]x^{K-\gamma+1}\dee x\\
    &=\frac{-1}{\gamma|\kappa_-|}\Bigg[-\exp(-|\kappa_-|A^\gamma)A^{K-\gamma+1}-(K-\gamma+1)\int_A^{+\infty}\exp(-|\kappa_-|x^\gamma)x^{K-\gamma}\dee x\Bigg].
\end{align*}
As $\gamma>0$, $\exp(-|\kappa_-|x^\gamma)x^{K-\gamma}=o(\exp(-|\kappa_-|x^\gamma)x^{K})$ as $x\to\infty$ thus 
$$\int_A^{+\infty}\exp(-|\kappa_-|x^\gamma)x^{K-\gamma}\dee x=o\left(\int_A^{+\infty}\exp(-|\kappa_-|x^\gamma)x^{K}\dee x\right)$$
as $A\to+\infty$, which proves 
$$\int_A^{+\infty}\exp(-|\kappa_-|x^\gamma)x^K\dee x\underset{A\to+\infty}{\sim}\frac{\exp(-|\kappa_-|A^\gamma)A^{K-\gamma+1}}{\gamma|\kappa_-|}.$$
This implies, together with  \eqref{eq:impliestog}, 
\begin{align*}
    |T^j\carterm^{k_1}\Phi^{k_2}\errhatm(u,\ubar,\theta,\phi_-)&-T^j\carterm^{k_1}\Phi^{k_2}\errhatm(u,\ubar_\Gamma(u),\theta,\phi_-)|\\
    &\lesssim\exp(-|\kappa_-|\ubar_\Gamma(u)^\gamma)\ubar_\Gamma(u)^{K-\gamma+1}\lesssim |u|^{-7-j-\delta/2},
\end{align*}
where we used Lemma \ref{lem:usimubarII} which gives $\ubar_\Gamma(u)\sim|u|$. Combining this bound with Proposition \ref{prop:upperboundII} and Lemma \ref{lem:usimubarII}, which proves that the term on $\Gamma$ is bounded by $|u|^{-7-j-\delta/2}$, concludes the proof.
\end{proof}
\begin{rem}
    The previous proof also implies that $\errhatm$ has a limit on $\ch$.
\end{rem}
The bound in Proposition \ref{prop:almostsharp} is the desired 
goal of this section, for $\errhatm$. More precisely, we will use Proposition \ref{prop:almostsharp} with $j\geq 1$, i.e. the fact that $T$ derivatives of $\errhatm$ decay at least like $|u|^{-8-\delta/2}$. The rest of this section is dedicated to showing that this bound also holds for $T$ derivatives of $\drout\errhatm$. This relies on the integration of a 1+1 wave equation extracted from the Teukolsky equation, similarly as what was done for $\Psihatp$ in \cite{G24}.
\begin{prop}\label{prop:almostsharpdrout}
Choosing $\gamma>0$ small enough, we have in $\trois$, 
    \begin{align*}
        \Psihatm=\anshatm+\errhatm,
    \end{align*}
    where for $0\leq j\leq N_j-5$ and $0\leq 2k_1+k_2\leq N_k-7$,
    $$|T^j\carterm^{k_1}\Phi^{k_2}\drout^{\leq 1}\errhatm|\lesssim|u|^{-7-j-\delta/2}.$$
\end{prop}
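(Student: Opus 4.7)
The $\drout^{0}$ case is immediate from Proposition \ref{prop:almostsharp}, so it suffices to bound $\drout T^j\carterm^{k_1}\Phi^{k_2}\errhatm$ by $|u|^{-7-j-\delta/2}$ in $\trois$ (using that $\drout$ commutes with $T$, $\carterm$, $\Phi$). The plan is to follow the strategy of Proposition \ref{prop:almostsharp}, pushed one order higher in $\drout$: first obtain a polynomial pointwise bound on $\drout^2\errhatm$ (and its $T,\carterm,\Phi$-derivatives), then convert this into an exponentially small bound on $\partial_\ubar(\drout\errhatm)=(\mu/2)\drout^2\errhatm$ via the exponential decay of $|\mu|$ in $\trois$ from \eqref{eq:decayexpdelta}, and finally integrate at fixed $u,\theta,\phi_-$ from $\Gamma$ to convert the boundary bound of Proposition \ref{prop:upperboundII} into the sharp estimate.

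For the polynomial bound, I commute the Teukolsky equation with $\drout$ a second time via Proposition \ref{prop:commdrout}. This produces a modified Teukolsky equation for $\drout^2 T^j\carterm^{k_1}\Phi^{k_2}\errhatm$ whose source is controlled by the polynomial bound of Proposition \ref{prop:bornepoldrout} on $\drout\errhatm$ together with the sharp bound of Proposition \ref{prop:almostsharp} on $\errhatm$. Running the energy method of Section \ref{section:energymethodIII} on this equation, choosing appropriate constants $c,V$ in the modified Teukolsky operator $\teukhat_{-2}^{(c,V)}$ so that Lemma \ref{lem:bulkpos} yields a positive bulk term, and then applying Lemma \ref{lem:integIII} together with the Sobolev embedding (Proposition \ref{prop:sobolev}), gives
$$|T^j\carterm^{k_1}\Phi^{k_2}\drout^2\errhatm|\lesssim\ubar^{K}$$
in $\trois$ for some $K=K(\alpha)>0$, at the cost of a controlled loss of derivatives.

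Since $\partial_\ubar=e_4=(\mu/2)\drout$ at fixed $u,\theta,\phi_-$, and $|\mu|\lesssim\exp(-|\kappa_-|\ubar^\gamma)$ in $\trois$ by \eqref{eq:decayexpdelta}, the previous step gives $|\partial_\ubar(\drout T^j\carterm^{k_1}\Phi^{k_2}\errhatm)|\lesssim\exp(-|\kappa_-|\ubar^\gamma)\,\ubar^K$. Integrating at fixed $u,\theta,\phi_-$ from $\Gamma$,
$$|\drout T^j\carterm^{k_1}\Phi^{k_2}\errhatm(u,\ubar)|\leq|\drout T^j\carterm^{k_1}\Phi^{k_2}\errhatm(u,\ubar_\Gamma(u))|+\int_{\ubar_\Gamma(u)}^{\ubar}\exp(-|\kappa_-|(\ubar')^\gamma)\,(\ubar')^K\,d\ubar'.$$
The boundary term is bounded by $|u|^{-7-j-\delta/2}$ via Proposition \ref{prop:upperboundII} combined with $\ubar_\Gamma(u)\sim|u|$ (Lemma \ref{lem:usimubarII}), while the integral is superpolynomially small by the same integration-by-parts asymptotic used at the end of the proof of Proposition \ref{prop:almostsharp}, namely $\int_A^{\infty}\exp(-|\kappa_-|x^\gamma)\,x^K\,dx\sim \exp(-|\kappa_-|A^\gamma)\,A^{K-\gamma+1}/(\gamma|\kappa_-|)$, and hence negligible compared to $|u|^{-7-j-\delta/2}$.

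The main obstacle is the first step: establishing the polynomial $L^\infty$ bound on $\drout^2\errhatm$ by extending the energy scheme of Propositions \ref{prop:enerdroutIII}--\ref{prop:bornepoldrout} by one further $\drout$-order. One must carefully select $c,V$ in $\teukhat_{-2}^{(c,V)}$ so that the bulk term remains positive after the commutator errors produced by applying Proposition \ref{prop:commdrout} twice are absorbed, and track the associated loss of derivatives through the Sobolev step. The ``1+1 wave equation extracted from the Teukolsky equation'' mentioned at the start of Section \ref{section:upperboundIII} is precisely the transport relation $\partial_\ubar(\drout\errhatm)=(\mu/2)\drout^2\errhatm$ exploited above, integrated against the exponentially small factor $|\mu|$ from $\Gamma$.
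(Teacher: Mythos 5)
Your reduction of the problem to a polynomial bound on $\drout^2\errhatm$ is where the proposal breaks down, and it is precisely the step the paper is engineered to avoid. To run the energy method of Section \ref{section:energymethodIII} on $\drout^2\errhatm$ you need the boundary input \eqref{eq:hypenerIII} on $\Gamma$, i.e.\ control of $\mathcal{T}_\Gamma[\drout^2\errhatm]$; that in turn requires the region-$\deux$ estimates of Propositions \ref{prop:enerestdeuxx}--\ref{prop:propagII} applied to $\drout^2\errhatm$, whose own boundary data on $\{r=\rb\}$ would have to come from a two-radial-derivative version of \eqref{eq:borneI} and \eqref{eq:enerI}. But Proposition \ref{prop:lapremiere} (via \cite[Theo. 4.11]{G24}) and Proposition \ref{prop:upperboundII} only ever control $\drin^{\leq 1}$, respectively $\drout^{\leq 1}$, of the error. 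Nothing established in the paper feeds a second radial derivative into your energy hierarchy, so the first step of your plan has no input. A second, independent problem: the constants $(c,V)$ in the modified operator are not free to ``choose'' --- they are forced by the commutation. Starting from \eqref{eq:pde} and commuting $\teukhat_{-2}^{(1/2,-2)}$ with $\drout$ once more produces, via \eqref{eq:teukmodif}, the operator $\teukhat_{-2}^{(0,\cdot)}$, i.e.\ $c=0$. This sits outside the hypothesis $c>0$ of Lemma \ref{lem:bulkpos}, and the crucial positive term \eqref{eq:positive}, which absorbs \eqref{eq:mostdifficult}, degenerates there; one would at minimum have to re-examine the bulk coercivity rather than cite the lemma.

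The paper's actual route never touches $\drout^2$. It uses the algebraic identity of Proposition \ref{prop:voila} to rewrite $\ethat\big(\Delta^{-1}\drout T^j\carterm^{k_1}\Phi^{k_2}\errhatm\big)$ purely in terms of $\carterm$, $T\Phi$ and $T$ derivatives of $\errhatm$ --- all already controlled by Propositions \ref{prop:almostsharp} and \ref{prop:upperboundII} --- plus the source from Proposition \ref{prop:teukdec}. This transport equation is then integrated along $\ethat=\partial_u$ at fixed $\ubar$ from $\{r=\rb\}$, where \eqref{eq:borneI} supplies the $\drout^{\leq 1}$ data, and the weight $\Delta^{-1}$ is handled by an integration by parts in $u$. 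So the ``1+1 wave equation'' announced at the start of Section \ref{section:upperboundIII} is the $\ethat e_4$ relation of Proposition \ref{prop:voila} integrated in the $u$ direction, not the relation $\partial_\ubar=(\mu/2)\drout$ integrated in $\ubar$ from $\Gamma$ as you suggest at the end. Your scheme could perhaps be completed by redoing the whole chain of estimates (region $\un$, region $\deux$, then $\trois$) with one extra radial derivative throughout and checking bulk positivity at $c=0$, but as written the key polynomial bound on $\drout^2\errhatm$ is unsupported.
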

\begin{proof}
    The bound without the $\drout$ derivative has been proven in Proposition \ref{prop:almostsharp}, so we prove the bound with the $\drout$. We begin by rewriting the Teukolsky equation $\teukhat_{-2}\errhatm=O(\ubar^{-8})$ as a 1+1 wave equation with a right-hand-side that we will control using Proposition \ref{prop:almostsharp}. Using Propositions \ref{prop:voila} and \ref{prop:teukdec} we get 
\begin{align*}
    \ethat((r^2+a^2)\Delta^{-2}e_4T^j\carterm^{k_1}\Phi^{k_2}\errhatm)=\frac{1}{4}&(-\mu)\Delta^{-2}(\carterm+2aT\Phi+6rT)[T^j\carterm^{k_1}\Phi^{k_2}\errhatm]\\
    &+O(-\Delta^{-1}\ubar^{-7-j-\delta}).
\end{align*}
We can rewrite this as 
\begin{align*}
    \ethat((r^2+a^2)&\Delta^{-2}e_4T^j\carterm^{k_1}\Phi^{k_2}\errhatm)\\
    &=-\frac{\Delta^{-1}}{4(r^2+a^2)}(\carterm+2aT\Phi+6rT)[T^j\carterm^{k_1}\Phi^{k_2}\errhatm]+O(-\Delta^{-1}\ubar^{-7-j-\delta}).
\end{align*}
Thus using $\drout=2\mu^{-1}e_4$ and Propositions \ref{prop:almostsharp} and \ref{prop:upperboundII}, as well as $\ubar\sim|u|$ in $\deux$ and $\ubar^{-7-j-\delta}\lesssim|u|^{-7-j-\delta}$ in $\trois$ by Lemma \ref{lem:usimubarII}, this implies in $\deux\cup\trois$:
\begin{align}\label{eq:1+1aint}
    \ethat(\Delta^{-1}\drout T^j\carterm^{k_1}\Phi^{k_2}\errhatm)=O(-\Delta^{-1}|u|^{-7-j-\delta/2}).
\end{align}
We integrate \eqref{eq:1+1aint} on integral curves of $\ethat$ from $\{r=\rb\}$ to $(u,\ubar,\theta,\phi_-)$ where $\ubar,\theta,\phi_+$ are constant. Using $\ethat=\partial_u$ in these coordinates, we obtain
\begin{align}\label{eq:revenirr}
    \Delta^{-1}(u,\ubar)\drout T^j\carterm^{k_1}\Phi^{k_2}\errhatm&(u,\ubar,\theta,\phi_-)=\nn\\
    &\Delta^{-1}(\rb)\drout T^j\carterm^{k_1}\Phi^{k_2}\errhatm(u_{\rb}(\ubar),\ubar,\theta,(\phi_-)_{\rb}(u,\ubar,\phi_-))\nn\\
    &\quad\quad+\int_{u_{\rb}(\ubar)}^u O(-\Delta^{-1}(u',\ubar)|u'|^{-7-j-\delta/2})\dee u'.
\end{align}
We now estimate the last term on the RHS. We have, on $\ubar=cst$,
$$\partial_u\Delta^{-1}=\ethat\Delta^{-1}=(\mu/2)\partial_r\Delta^{-1}=-\frac{r-M}{r^2+a^2}\Delta^{-1}.$$
This gives 
\begin{align*}
    \int_{u_{\rb}(\ubar)}^u\Delta^{-1}(u',\ubar)|u'|^{-7-j-\delta/2}\dee u'&\sim\int_{u_{\rb}(\ubar)}^u \partial_u\Delta^{-1}(u',\ubar)|u'|^{-7-j-\delta/2}\dee u'\\
    &\sim \Delta^{-1}(u,\ubar)|u|^{-7-j-\delta/2}-\Delta^{-1}(\rb)|u_{\rb}(\ubar)|^{-7-j-\delta/2}\\
    &\quad\quad+(7+j+\delta)\int_{u_{\rb}(\ubar)}^u\Delta^{-1}(u',\ubar)|u'|^{-8-j-\delta/2}\dee u',
\end{align*}
and thus for $|u|$ large enough we get 
$$\int_{u_{\rb}(\ubar)}^uO(-\Delta^{-1}(u',\ubar)|u'|^{-7-j-\delta/2})\dee u'\lesssim -\Delta^{-1}(u,\ubar)|u|^{-7-j-\delta/2}.$$
Going back to \eqref{eq:revenirr} and using \eqref{eq:borneI} to bound the term on $\{r=\rb\}$, this gives in $\trois$, 
$$\drout T^j\carterm^{k_1}\Phi^{k_2}\errhatm=O(-\Delta\ubar^{-7-j-\delta})+O(|u|^{-7-j-\delta/2})=O(|u|^{-7-j-\delta/2}),$$
which concludes the proof.
\end{proof}

\section{Precise asymptotics of $\Psihatm$ near $\ch$}\label{section:precise}

In this section, we begin by finding the precise asymptotics of the $\ell=2$ modes of $\Psihatm$, proving Theorem \ref{thm:precise2l}. Then, we continue this section by proving a $O(|u|^{-8-\delta}+r-r_-)$ upper bound for the $\ell\geq 3$ modes of $\Psihatm$ near $\ch$, separating the analysis between the modes $\ell=3$ and $\ell\geq 4$, which proves Theorems \ref{thm:decayl3} and \ref{thm:decayhigher}. The analysis is done by using the $(\partial_t,\partial_r)$-decomposition \eqref{eq:decteuk} of the Teukolsky operator to integrate simple ODEs in $r$ with source terms that can be controlled using the estimates of Section \ref{section:upperbounds}.

In Sections \ref{section:ccun}, \ref{section:ccdeux}, \ref{section:l=3I}, \ref{section:l=3II}, we write as follows the ingoing spin-weighted spherical harmonics decomposition, for $\ell=2,3$:
\begin{align}\label{eq:harmonicsdec}
    (\errm)_{\ell}=\sum_{|m|\leq \ell}F_{m,\ell}(r,\ubar)Y_{m,\ell}^{-2}(\cos\theta)e^{im\phi_+},
\end{align}
where the functions $F_{m,\ell}(r,\ubar)$ are smooth. Note that this implies, for $j\geq 0$,
\begin{align}\label{eq:Tharmonicsdec}
    (T^j\errm)_{\ell}=\sum_{|m|\leq \ell}\parubarj{F_{m,\ell}}(r,\ubar)Y_{m,\ell}^{-2}(\cos\theta)e^{im\phi_+}.
\end{align}
We also define the functions 
\begin{align}\label{eq:lien}
G_{m,\ell}(r,u):=\Delta^2e^{2imr_{mod}}F_{m,\ell}(r,2r^*-u),
\end{align}
such that we have the outgoing spin-weighted spherical harmonics decomposition
\begin{align}\label{eq:outgoingdec}
    (\errhatm)_{\ell}=\sum_{|m|\leq 2}G_{m,\ell}(r,u)Y_{m,2}^{-2}(\cos\theta)e^{im\phi_-}.
\end{align}
We also define for convenience the functions 
\begin{align}\label{eq:deffm}
    {}^{(j)}\!f_{m,\ell,\ubar}(r):=\parubarj{F_{m,\ell}}(r,\ubar),\quad g_{m,\ell,u}:=G_{m,\ell}(r,u).
\end{align}
\subsection{Precise asymptotics for $(\Psim)_{\ell=2}$ in region $\un$}\label{section:ccun}

We begin the proof of Theorem \ref{thm:precise2l} with the following result.
\begin{prop}\label{prop:odeI}
    Assume that $\Psim$ satisfies \eqref{eq:hyppl}. Then the functions $F_{m,2}(r,\ubar)$ satisfy in $\un$ the following equations, for $|m|\leq 2$:
\begin{align}\label{eq:odeIl}
    \Delta\parrdeux{}\left(\parubarj{F_{m,2}}\right)(r,\ubar)+(2iam+6(r-M))\parr{}\left(\parubarj{F_{m,2}}\right)(r,\ubar)=P_m(r)\parubarj{}\left(\frac{1}{\ubar^8}\right)+O(\ubar^{-8-j-\delta}),
\end{align}
where, recalling the definition of the constants $b_{m,\ell}^s$ in Proposition \ref{prop:modecoupling},
\begin{align}\label{eq:Pm(r)}
    P_m(r):=14Q_{m,2}(5r+iam+2iab_{m,2}^{-2}).
\end{align}
\end{prop}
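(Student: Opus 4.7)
The plan is to derive \eqref{eq:odeIl} by applying the $(\partial_r,\partial_t)$-decomposition \eqref{eq:decteuk} of the Teukolsky operator to the identity $\teukm\errm = -\teukm\ansatzm$ (which follows from $\teukm\Psim=0$), then projecting onto the $(m,2)$ spin-weighted spherical harmonic. The crucial first observation is that $\ansatzm$, written in ingoing coordinates $(\ubar,r,\theta,\phi_+)$, does not depend on $r$, so $\drin(\ansatzm)=0$; combined with $\drond\drond' Y_{m,2}^{-2}=0$ (which follows from \eqref{eq:annulemode} since $(\ell+s)(\ell-s+1)=0$ at $\ell=2,s=-2$), the expression \eqref{eq:teukdr} immediately gives $\teukm^{[\partial_r]}(\ansatzm)=0$. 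Hence
\begin{equation*}
    \teukm^{[\partial_r]}\errm = -\teukm^{[\partial_t]}\ansatzm - \teukm^{[\partial_t]}\errm,
\end{equation*}
and only the $(\partial_t)$-part of $\teukm$ remains to be analyzed.

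The next step is to compute the principal source term $-\teukm^{[\partial_t]}\ansatzm$ explicitly. Using \eqref{eq:teukdt} with $s=-2$, the facts that $T=\partial_\ubar$ and $\Phi$ acts as $im$ on the ansatz, and $\drin(\ansatzm)=0$, only the $2aT\Phi$ and $(10r+4ia\cos\theta)T$ terms contribute at order $\ubar^{-8}$, while the $a^2\sin^2\theta T^2$ term is a lower-order $O(\ubar^{-9})$. Projecting on $Y_{m,2}^{-2}e^{im\phi_+}$ and applying Proposition~\ref{prop:modecoupling} to the $\cos\theta$-coupling (which produces the coefficient $b_{m,2}^{-2}$) should yield
\begin{equation*}
    (-\teukm^{[\partial_t]}\ansatzm)^+_{m,2} = \frac{14 Q_{m,2}(5r+iam+2iab_{m,2}^{-2})}{\ubar^8} + O(\ubar^{-9}) = \frac{P_m(r)}{\ubar^8} + O(\ubar^{-9}).
\end{equation*}

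The error contribution is estimated using the fact that every term in \eqref{eq:teukdt} carries at least one $T$: from the bounds \eqref{eq:borneI} of Proposition~\ref{prop:lapremiere}, each of $T^{\leq 2}\errm$, $T\drin\errm$, and $T\Phi\errm$ is $O(\ubar^{-8-\delta})$ in $\un$, so $\teukm^{[\partial_t]}\errm = O(\ubar^{-8-\delta})$. Commuting with $T^j$, which commutes with both $\teukm^{[\partial_r]}$ and $\teukm^{[\partial_t]}$ by \eqref{eq:commutateurs}, I get $T^j\teukm^{[\partial_t]}\errm = O(\ubar^{-8-j-\delta})$, and similarly $T^j\teukm^{[\partial_t]}\ansatzm$ produces $-P_m(r)\partial_\ubar^j(1/\ubar^8) + O(\ubar^{-9-j})$ in its $(m,2)$ projection. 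The loss of derivatives needed to invoke \eqref{eq:borneI} on $T^{j+2}\errm$, $T^{j+1}\drin\errm$, etc., is absorbed in the range of $j$ allowed by the hypotheses.

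Finally, I project the identity $\teukm^{[\partial_r]}(T^j\errm) = -T^j\teukm^{[\partial_t]}(\ansatzm + \errm)$ onto the $(m,2)$ mode. By Lemma~\ref{lem:commproj}, $\teukm^{[\partial_r]}$ commutes with $\mathbf{P}_2$, so on the single harmonic $\partial_\ubar^j F_{m,2}(r,\ubar)\,Y_{m,2}^{-2}e^{im\phi_+}$, $\drin$ acts as $\partial_r$ (in ingoing Eddington-Finkelstein coordinates), $\Phi$ as $im$, and $\drond\drond'$ as zero, reducing $\teukm^{[\partial_r]}$ on this harmonic to the second-order operator $\Delta\partial_r^2 + (2iam+6(r-M))\partial_r$, which is exactly the LHS of \eqref{eq:odeIl}. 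Absorbing $O(\ubar^{-9-j})$ into $O(\ubar^{-8-j-\delta})$ (valid for $\delta<1$) completes the proof. The main delicate step is the $\cos\theta$ mode-coupling computation that produces the precise coefficient $2iab_{m,2}^{-2}$ in $P_m(r)$; everything else is routine bookkeeping using the commutation identities.
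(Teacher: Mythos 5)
Your proposal is correct and follows essentially the same route as the paper: both decompose the Teukolsky equation via \eqref{eq:decteuk}, note that $\teuk_{-2}^{[\partial_r]}$ annihilates the ansatz (so only $\teuk_{-2}^{[\partial_t]}$ produces the $1/\ubar^8$ source, with the $\cos\theta$ coupling handled by Proposition \ref{prop:modecoupling}), bound $T^j\teuk_{-2}^{[\partial_t]}\errm$ by $O(\ubar^{-8-j-\delta})$ using \eqref{eq:borneI}, and project onto the ingoing $(\ell=2,m)$ harmonic via Lemma \ref{lem:commproj} and \eqref{eq:annulemode} to reduce $\teuk_{-2}^{[\partial_r]}$ to $\Delta\partial_r^2+(2iam+6(r-M))\partial_r$.
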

\begin{rem}
    The notations $\frac{\partial}{\partial r}$ and $\frac{\partial}{\partial \ubar}$ applied to $F_{m,2}(r,\ubar)$ stand respectively for the derivative with respect to the first and second variables of $F_{m,2}(r,\ubar)$, and not to some Boyer-Lindquist or Eddington-Finkelstein coordinates vector fields.
\end{rem}
\begin{proof}[Proof of Proposition \ref{prop:odeI}]

Using \eqref{eq:borneI} we have in $\un$:
\begin{align}\label{eq:prendreteuk}
    \Psim=\ansatzm+\errm,
\end{align}
where, in particular:
\begin{align*}
    &|T^2(T^j\errm)|\lesssim\ubar^{-9-j-\delta},\quad |T(T^j\errm)|\lesssim\ubar^{-8-j-\delta},\\
    &|T\drin(T^j\errm)|\lesssim\ubar^{-8-j-\delta},\quad |T\Phi(T^j\errm)|\lesssim\ubar^{-8-j-\delta}.
\end{align*}

By the expression \eqref{eq:teukdt} of $\teuk_{-2}^{[\partial_t]}$ this implies in $\un$
\begin{align}\label{eq:teukdtborne}
    |\teuk_{-2}^{[\partial_t]}T^j\errm|\lesssim\ubar^{-8-j-\delta}.
\end{align}
Moreover, using \eqref{eq:prendreteuk}, the Teukolsky equation writes
$$\teuk_{-2}T^j\errm=-T^j\teuk_{-2}\left(\ansatzm\right),$$
thus using \eqref{eq:teukdtborne} and the $(\partial_t,\partial_r)$-decomposition \eqref{eq:decteuk} we get 
\begin{align}\label{eq:cabahaproj}
    \teuk^{[\partial_r]}_{-2}T^j\errm=-T^j\teuk_{-2}\left(\ansatzm\right)+O(\ubar^{-8-j-\delta}).
\end{align}
Now we compute using \eqref{eq:teukdr}, \eqref{eq:teukdt}, as well as $$\drin\ubar=\drin\phi_+=\drin\theta=\drond\drond'(Y_{m,2}^{-2}(\cos\theta)e^{im\phi_+})=0,$$
the source term:
\begin{align}
    -T^j\teuk_{-2}&\left(\ansatzm\right)=\nn\\
    &T^j\left(\frac{14}{\ubar^8}\right)\sum_{|m|\leq 2}Q_{m,2}(5r+2ia\cos\theta+iam)Y_{m,2}^{-2}(\cos\theta)e^{im\phi_+}+O(\ubar^{-9-j}).\label{eq:unedessources}
\end{align}
Using Lemma \ref{lem:commproj} and \eqref{eq:projcostheta}, projecting \eqref{eq:cabahaproj} onto the ingoing $\ell=2$ mode thus gives, in view of \eqref{eq:Pm(r)},
\begin{align}\label{eq:aprojl}
    \teuk_{-2}^{[\partial_r]}T^j(\errm)_{\ell=2}=T^j\left(\frac{1}{\ubar^8}\right)\sum_{|m|\leq 2}P_m(r)Y_{m,2}^{-2}(\cos\theta)e^{im\phi_+}+O(\ubar^{-8-j-\delta}).
\end{align}
Then using the expression \eqref{eq:teukdr} of the operator $\teuk_{-2}^{[\partial_r]}$, as well as \eqref{eq:annulemode}, \eqref{eq:aprojl} rewrites
\begin{align*}
    \sum_{|m|\leq 2}\Big[\Delta\parrdeux{}\left(\parubarj{F_{m,2}}\right)(r,\ubar)+(2iam+&6(r-M))\parr{}\left(\parubarj{F_{m,2}}\right)(r,\ubar)\Big]Y_{m,2}^{-2}(\cos\theta)e^{im\phi_+}\nonumber\\
    &=\parubarj{}\left(\frac{1}{\ubar^8}\right)\sum_{|m|\leq 2}P_m(r)Y_{m,2}^{-2}(\cos\theta)e^{im\phi_+}+O(\ubar^{-8-\delta})
\end{align*}
which, after projecting onto each $(\ell=2,m)$ mode, concludes the proof of the proposition.
\end{proof}
\subsubsection{Resolution of the ODE \eqref{eq:odeIl}}
In the setting of Proposition \ref{prop:odeI}, let us fix $\ubar\geq 1$, $|m|\leq 2$, $j\geq 0$. By Proposition \ref{prop:odeI}, the ODE satisfied by ${}^{(j)}\!f_{m,2,\ubar}$ (which is defined in \eqref{eq:deffm}) is 
\begin{align}\label{eq:odeodel}
    \Delta {}^{(j)}\!f_{m,2,\ubar}''(r)+(2iam+6(r-M)){}^{(j)}\!f_{m,2,\ubar}'(r)=S_m^{(j)}(r,\ubar),
\end{align}
where the source $S_m^{(j)}(r,\ubar)$ satisfies 
\begin{align}\label{eq:sourceIl}
    S_m^{(j)}(r,\ubar)=P_m(r)\parubarj{}\left(\frac{1}{\ubar^8}\right)+O(\ubar^{-8-j-\delta}).
\end{align}
We will determine ${}^{(j)}\!f_{m,2,\ubar}$ using the variation of constants method. To this end, we look for a basis of solutions to the homogeneous problem 
\begin{align}\label{eq:odeHl}
    \Delta v''(r)+(2iam+6(r-M))v'(r)=0.
\end{align}
An obvious solution to \eqref{eq:odeHl} is 
\begin{align}\label{eq:v1def}
    {}^{(2)}\!v_1(r):=1.
\end{align}
Another solution ${}^{(2)}\!v_2$ satisfies 
$${}^{(2)}\!v_2''=-\frac{2iam+6(r-M)}{\Delta}{}^{(2)}\!v_2'$$
thus up to a multiplicative constant, 
$${}^{(2)}\!v_2'=\exp\left(\int^r-\frac{2iam+6(r'-M)}{\Delta(r')}\dee r'\right)=\Delta(r)^{-3}e^{-2imr_{mod}},$$
so that $({}^{(2)}\!v_1,{}^{(2)}\!v_2)$ is a basis of solutions to \eqref{eq:odeHl}, with
\begin{align}
    {}^{(2)}\!v_2(r):=\int_{\rb}^r\Delta(r')^{-3}e^{-2imr_{mod}(r')}\dee r'.\label{eq:v2def}
\end{align}
We now state a result on the behavior of ${}^{(2)}\!v_2(r)$ near $r=r_\pm$.
\begin{lem}\label{lem:v2dvl}
    Let ${}^{(2)}\!v_2(r)$ be given by \eqref{eq:v2def}. We have the following asymptotic behaviors: 
    \begin{itemize}
        \item For $\rb\leq r\leq r_+$:
    $${}^{(2)}\!v_2(r)=\frac{-\Delta^{-2}(r)e^{-2imr_{mod}}}{4(r_+-M)+2iam}+O((r_+-r)^{-1}).$$
    \item For $r_-\leq r\leq\rb$:
    $${}^{(2)}\!v_2(r)=\frac{-\Delta^{-2}(r)e^{-2imr_{mod}}}{4(r_--M)+2iam}+O((r-r_-)^{-1}).$$
    \end{itemize}
\end{lem}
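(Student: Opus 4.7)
My strategy is to verify the expansion directly by differentiation, since ${}^{(2)}\!v_2$ is defined as an antiderivative. A natural ansatz for the leading singular part is $-\Delta^{-2}(r)e^{-2imr_{mod}(r)}/C_\pm$ with $C_\pm := 4(r_\pm - M) + 2iam$, motivated by the identity
$$\frac{d}{dr}\left[\Delta^{-2}(r)e^{-2imr_{mod}(r)}\right] = -\Delta^{-3}(r)\bigl(4(r-M)+2iam\bigr)e^{-2imr_{mod}(r)},$$
which follows from $\Delta'(r) = 2(r-M)$ and $r_{mod}'(r) = a/\Delta(r)$. Note that in the subextremal range $|a|<M$ one has $r_\pm \neq M$, so $C_\pm \neq 0$ for every $m \in \Z$.

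Fixing the sign $\pm$ and setting $R_\pm(r) := {}^{(2)}\!v_2(r) + \Delta^{-2}(r)e^{-2imr_{mod}(r)}/C_\pm$, I would then compute
$$R_\pm'(r) = \Delta^{-3}(r)e^{-2imr_{mod}(r)}\cdot\frac{4(r_\pm - r)}{C_\pm} = \frac{-4\, e^{-2imr_{mod}(r)}}{C_\pm\,(r-r_\mp)^3\,(r-r_\pm)^2},$$
where $r_\mp$ denotes the other root of $\Delta$. The key observation is that the factor $(r_\pm - r)$ reduces the singularity at $r=r_\pm$ from order $3$ (as in $\Delta^{-3}$) to order $2$; this gain is the whole point of choosing the specific ansatz $-\Delta^{-2}(r)e^{-2imr_{mod}(r)}/C_\pm$.

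It then remains to integrate from $\rb$ to $r$. In the $r_+$ case, $(r-r_-)^{-3}$ is bounded on $[\rb, r_+]$, and $\int_{\rb}^r (r_+-r')^{-2}\,\dee r' = (r_+-r)^{-1} - (r_+-\rb)^{-1} = O((r_+-r)^{-1})$; combined with the trivial bound $R_+(\rb) = \Delta^{-2}(\rb)e^{-2imr_{mod}(\rb)}/C_+ = O(1)$, this yields $R_+(r) = O((r_+-r)^{-1})$, which is exactly the first asserted expansion. The $r_-$ case is entirely analogous, with $(r-r_+)^{-3}$ bounded on $[r_-, \rb]$ and $\int_r^{\rb} (r'-r_-)^{-2}\,\dee r' = O((r-r_-)^{-1})$. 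No step presents a genuine obstacle; the only mild bookkeeping point is tracking the signs and the switch in orientation of the integration in the second case.
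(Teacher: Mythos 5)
Your proof is correct, and it takes a somewhat different route from the paper's. The paper integrates by parts in the defining integral (using $\frac{\dee}{\dee r}[\Delta^{-2}e^{-2imr_{mod}}]=-(4(r-M)+2iam)\Delta^{-3}e^{-2imr_{mod}}$ to generate the boundary term $\Delta^{-2}e^{-2imr_{mod}}$), obtains a self-referential identity for ${}^{(2)}\!v_2$, and solves for it; because that manipulation divides by $2iam$, the case $m=0$ has to be handled separately by direct integration of $\Delta^{-3}\sim(r-r_+)^{-3}$. Your argument exploits the same derivative identity but in the reverse direction: you posit the leading singular term, differentiate the remainder $R_\pm$, observe the exact cancellation $C_\pm-4(r-M)-2iam=4(r_\pm-r)$ that drops the singularity from order $3$ to order $2$, and integrate. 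This is uniform in $m$ (no case split), and the only hypotheses you need — $C_\pm\neq 0$, which holds since $r_\pm-M=\pm\sqrt{M^2-a^2}\neq 0$ in the subextremal range, and ${}^{(2)}\!v_2(\rb)=0$ so that $R_\pm(\rb)=O(1)\subseteq O((r_\pm-r)^{\mp 1})$ on the relevant interval — are all verified. Both approaches are essentially equivalent in content, but yours is slightly cleaner; no gap.
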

\begin{proof}
    We begin with the first statement with the case $m\neq 0$. Recalling
    $$\Delta'(r)=2(r-M),\quad r_{mod}'(r)=\frac{a}{\Delta},$$
    we have the following integration by parts
\begin{align*}
    \int_{\rb}^r\Delta^{-3}(r')e^{-2imr_{mod}(r')}\dee r'&=-\frac{1}{2iam}\Bigg(\Bigg[\Delta^{-2}e^{-2imr_{mod}}\Bigg]_{\rb}^r+\int_{\rb}^r4(r'-M)\Delta^{-3}(r')e^{-2imr_{mod}(r')}\dee r'\Bigg)\\
    &=\frac{-\Delta^{-2}(r)e^{-2imr_{mod}}}{2iam}-\frac{2(r-M)}{iam}\int_{\rb}^r\Delta^{-3}(r')e^{-2imr_{mod}(r')}\dee r'\\
    &\quad+O((r_+-r)^{-1}).
\end{align*}
This yields, as stated,
$$\int_{\rb}^r\Delta^{-3}(r')e^{-2imr_{mod}(r')}\dee r'=\frac{-\Delta^{-2}(r)e^{-2imr_{mod}}}{4(r_+-M)+2iam}+O((r_+-r)^{-1}).$$
The case $m=0$ can be treated in an easier way, using $\Delta^{-3}\sim (r-r_+)^{-3}$ in $\un$ and directly integrating. The second statement is proven by exchanging the roles of $r_+$ and $r_-$.
\end{proof}
Using the basis $({}^{(2)}\!v_1,{}^{(2)}\!v_2)$ of solutions of the homogeneous problem \eqref{eq:odeHl}, we now use the variation of constants method to find a particular solution ${}^{(j)}\!p_{m,2,\ubar}$ of \eqref{eq:odeodel} as 
$${}^{(j)}\!p_{m,2,\ubar}(r)={}^{(j)}\!M_{m,2,\ubar}(r){}^{(2)}\!v_1(r)+{}^{(j)}\!N_{m,2,\ubar}(r){}^{(2)}\!v_2(r),$$
where ${}^{(j)}\!M_{m,2,\ubar}(r),{}^{(j)}\!N_{m,2,\ubar}(r)$ are functions such that
\begin{align}\label{eq:systemIl}
    \begin{cases}
        {}^{(j)}\!M_{m,2,\ubar}'{}^{(2)}\!v_1+{}^{(j)}\!N_{m,2,\ubar}'{}^{(2)}\!v_2=0,\\
        {}^{(j)}\!M_{m,2,\ubar}'{}^{(2)}\!v_1'+{}^{(j)}\!N_{m,2,\ubar}'{}^{(2)}\!v_2'=\Delta^{-1}S_m^{(j)}(r,\ubar).
    \end{cases}
\end{align}
Solving \eqref{eq:systemIl} gives, in view of \eqref{eq:v1def}, \eqref{eq:v2def},
$${}^{(j)}\!N_{m,2,\ubar}'(r)=\Delta^2S_m^{(j)}(r,\ubar)e^{2imr_{mod}},\quad {}^{(j)}\!M_{m,2,\ubar}'(r)=-\Delta^2S_m^{(j)}(r,\ubar)e^{2imr_{mod}}{}^{(2)}\!v_2(r).$$
Using Lemma \ref{lem:v2dvl} we get that ${}^{(j)}\!M_{m,2,\ubar}'(r)$ converges as $r\to r_+$. Thus our particular solution of \eqref{eq:odeodel} can be chosen as
$${}^{(j)}\!p_{m,2,\ubar}(r)=-\int_{r_+}^r\Delta^2(r'){}^{(2)}\!v_2(r')S_m^{(j)}(r',\ubar)e^{2imr_{mod}(r')}\dee r'+{}^{(2)}\!v_2(r)\int_{r_+}^r\Delta^2(r')S_m^{(j)}(r',\ubar)e^{2imr_{mod}(r')}\dee r'.$$
Using the definition of ${}^{(2)}\!v_2(r)$ we get more precisely
\begin{align}\label{eq:luilal}
    {}^{(j)}\!p_{m,2,\ubar}(r)=\int_{r_+}^r\Delta^2(r')e^{2imr_{mod}(r')}S_m^{(j)}(r',\ubar)\left(\int_{r'}^r\Delta^{-3}(r'')e^{-2imr_{mod}(r'')}\dee r''\right)\dee r'.
\end{align}
Now, since ${}^{(j)}\!f_{m,2,\ubar}-{}^{(j)}\!p_{m,2,\ubar}$ satisfies the homogeneous equation \eqref{eq:odeHl}, there are constants $A_{m,2}^{(j)}(\ubar)$ and $B_{m,2}^{(j)}(\ubar)$ that depend only on $\ubar$, $m$ and $j$, such that for any $r\in[\rb,r_+]$, $\ubar\geq 1$, $|m|\leq 2$, and $j\geq 0$,
\begin{align}\label{eq:FMresolul}
    {}^{(j)}\!f_{m,2,\ubar}(r)=A_{m,2}^{(j)}(\ubar)+B_{m,2}^{(j)}(\ubar){}^{(2)}\!v_2(r)+{}^{(j)}\!p_{m,2,\ubar}(r).
\end{align}
\subsubsection{Computation of $A_{m,2}^{(j)}(\ubar)$, $B_{m,2}^{(j)}(\ubar)$}
Notice that ${}^{(j)}\!p_{m,2,\ubar}(r_+)=0$ and using Lemma \ref{lem:v2dvl} we have that ${}^{(2)}\!v_2$ diverges as $r\to r_+$, thus using the smoothness of $\Psim$ at $\mch_+$ yields necessarily $$B_{m,2}^{(j)}(\ubar)=0.$$
Next, evaluating \eqref{eq:FMresolul} at $r=r_+$ and using \eqref{eq:deffm} and the initial assumption \eqref{eq:hyppl} on $\mch_+$ implies
$$A_{m,2}^{(j)}(\ubar)={\parubarj{F_{m,2}}(r_+,\ubar)}=O(\ubar^{-7-j-\delta}).$$
Putting back into \eqref{eq:FMresolul} the expression \eqref{eq:sourceIl} of the source $S_m^{(j)}(r,\ubar)$, we find in $\un$
\begin{align}
    \parubarj{F_{m,2}}(r,\ubar)&=\parubarj{}\left(\frac{1}{\ubar^8}\right)\int_{r_+}^r\Delta^2(r')e^{2imr_{mod}(r')}P_m(r')\left(\int_{r'}^r\Delta^{-3}(r'')e^{-2imr_{mod}(r'')}\dee r''\right)\dee r'\nonumber\\
    &\quad+O(\ubar^{-8-j-\delta})+O(\ubar^{-7-j-\delta})\nn\\
    &=O(\ubar^{-7-j-\delta}).\label{eq:asympto}
\end{align}
\begin{rem}
    Recall that we expect an asymptotic in $u^{-8}$ for $\Psihatm$ near $\ch$. The asymptotic behavior \eqref{eq:asympto} in $\un$ might seem too imprecise at first, but this imprecision will actually be canceled when passing from $\Psim$ close to $\mch_+$ to $\Psihatm$ close to $\ch$. What will really influence the final asymptotic at $\ch$ is actually the precise asymptotic in $\ubar^{-8}$ of the radial derivative $\partial_rF_{m,2}(r,\ubar)$ in $\un$, that we obtain below.
\end{rem}
 Note that \eqref{eq:FMresolul}, together with $B_{m,2}^{(j)}(\ubar)=0$ and \eqref{eq:luilal}, yields  
\begin{align*}
    \parubarj{F_{m,2}}(r,\ubar)=A_{m,2}^{(j)}(\ubar)+\int_{r_+}^r\Delta^2(r')e^{2imr_{mod}(r')}S_m^{(j)}(r',\ubar)\left(\int_{r'}^r\Delta^{-3}(r'')e^{-2imr_{mod}(r'')}\dee r''\right)\dee r',
\end{align*}
thus we have in $\un$
\begin{align*}
    \parr{}\left(\parubarj{F_{m,2}}\right)(r,\ubar)=\frac{1}{\Delta^3(r)e^{2imr_{mod}}}\int_{r_+}^r\Delta^2(r')e^{2imr_{mod}(r')}S_m^{(j)}(r',\ubar)\dee r'.
\end{align*}
Re-injecting the definition \eqref{eq:sourceIl} of $S_m^{(j)}(r',\ubar)$ gives in $\un$ the following precise asymptotic behavior
\begin{align}\label{eq:drbienla}
    \parr{}\left(\parubarj{F_{m,2}}\right)(r,\ubar)=\frac{1}{\Delta^3(r)e^{2imr_{mod}}}\parubarj{}\left(\frac{1}{\ubar^8}\right)\int_{r_+}^r\Delta^2(r')e^{2imr_{mod}(r')}P_m(r')\dee r'+O(\ubar^{-8-j-\delta}).
\end{align}
We now propagate these asymptotics to $\deux\cup\trois=\{r_-\leq r\leq \rb\}\cap\{w\leq w_{\rb,\gamma}\}$ by a similar method. 
\subsection{Precise asymptotics of $(\Psihatm)_{\ell=2}$ in region $\deux\cup\trois$}\label{section:ccdeux}
We follow in region $\deux\cup\trois=\{r_-\leq r\leq \rb\}\cap\{w\leq w_{\rb,\gamma}\}$ a procedure analogous to the one in region $\un$ of Section \ref{section:ccun}. 
\begin{prop}\label{prop:odeII} Assume that $\Psim$ satisfies \eqref{eq:hyppl}. Then the functions $G_{m,2}(r,u)$, defined in \eqref{eq:outgoingdec}, satisfy in $\deux\cup\trois$ the following equations, for $|m|\leq 2$: 
    \begin{align}\label{eq:odeIIl}
    \Delta\parrdeux{G_{m,2}}(r,u)-(2iam+2(r-M))\parr{G_{m,2}}(r,u)-4G_{m,2}(r,u)=\frac{\Delta^2e^{2imr_{mod}}P_m(r)}{(2r^*-u)^8}+O(|u|^{-8-\delta/2}).
\end{align}
\end{prop}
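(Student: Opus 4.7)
The plan is to mirror Proposition \ref{prop:odeI} but on the rescaled equation $\teukhat_{-2}\Psihatm = 0$, so that the natural harmonic decomposition is the outgoing one \eqref{eq:outgoingdec}. Writing $\Psihatm = \anshatm + \errhatm$, the starting identity reads $\teukhat_{-2}\errhatm = -\teukhat_{-2}(\anshatm) = -\Delta^2 \teukm(\ansatzm)$. I would reuse the computation \eqref{eq:unedessources} at $j=0$ (after translating $e^{im\phi_+} = e^{2imr_{mod}}e^{im\phi_-}$) to obtain the source
\[
-\teukhat_{-2}(\anshatm) = \frac{14\Delta^2}{\ubar^8}\sum_{|m|\leq 2} Q_{m,2}(5r + 2ia\cos\theta + iam)\, e^{2imr_{mod}} Y_{m,2}^{-2}(\cos\theta)\, e^{im\phi_-} + O(\Delta^2 \ubar^{-9}).
\]
Handling the remainder uniformly on $\deux\cup\trois$ is the first mildly delicate point: in $\deux$ one uses $|u|\sim\ubar$ from Lemma \ref{lem:usimubarII}, while in $\trois$ the exponential decay $-\Delta\lesssim\exp(-|\kappa_-|\ubar^\gamma)$ from \eqref{eq:decayexpdelta} suppresses any polynomial $|u|$-loss; both bounds deliver $O(|u|^{-8-\delta/2})$.

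Next I would split $\teukhat_{-2} = \teukhat_{-2}^{[\partial_r]} + \teukhat_{-2}^{[\partial_t]}$ via \eqref{eq:decteukhat}. Since $\teukhat_{-2}^{[\partial_t]}\errhatm$ consists, by \eqref{eq:teukhatdt}, of a linear combination of $T\errhatm$, $T^2\errhatm$, $T\Phi\errhatm$, and $T\drout\errhatm$ with coefficients bounded on the region, Propositions \ref{prop:upperboundII} (in $\deux$) and \ref{prop:almostsharpdrout} (in $\trois$) at $j=1,2$ together give $|\teukhat_{-2}^{[\partial_t]}\errhatm|\lesssim |u|^{-8-\delta/2}$. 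Thus the radial part satisfies
\[
\teukhat_{-2}^{[\partial_r]}\errhatm = \frac{14\Delta^2}{\ubar^8}\sum_{|m|\leq 2} Q_{m,2}(5r + 2ia\cos\theta + iam)\, e^{2imr_{mod}} Y_{m,2}^{-2}(\cos\theta)\, e^{im\phi_-} + O(|u|^{-8-\delta/2}).
\]
I would then project onto the outgoing $\ell = 2$ mode, invoking Lemma \ref{lem:commproj} to commute $\mathbf{P}_{\ell=2}$ with $\teukhat_{-2}^{[\partial_r]}$ and \eqref{eq:projcostheta} to replace $\mathbf{P}_{\ell=2}(\cos\theta\, Y_{m,2}^{-2})$ by $b_{m,2}^{-2} Y_{m,2}^{-2}$; the projected source then collapses exactly to $\frac{\Delta^2}{\ubar^8}\sum_m e^{2imr_{mod}} P_m(r) Y_{m,2}^{-2} e^{im\phi_-}$ with $P_m$ as in \eqref{eq:Pm(r)}.

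Finally, I would read off the ODE from the explicit form \eqref{eq:teukhatdr} of $\teukhat_{-2}^{[\partial_r]}$ acting on a summand $G_{m,2}(r,u)\, Y_{m,2}^{-2}(\cos\theta)\, e^{im\phi_-}$ of \eqref{eq:outgoingdec}: in outgoing coordinates $\drout = \partial_r$, $\Phi$ acts as multiplication by $im$, and $\drond'\drond Y_{m,2}^{-2} = -4\, Y_{m,2}^{-2}$ by \eqref{eq:dronddrond'spin} with $\ell = 2$, $s = -2$; this recovers precisely \eqref{eq:odeIIl} after using $\ubar = 2r^* - u$. I expect the main obstacle to be the second step, namely the uniform absorption of errors across $\deux\cup\trois$: because $\ubar$ can be arbitrarily large compared to $|u|$ in $\trois$, any term carrying a bare $\ubar^{-k}$ weight must be compensated by the exponentially small factor $\Delta$, and the complementary natures of the $T$-derivative upper bounds for $\errhatm$ in the two subregions (energy-based in $\deux$, propagation of a $1+1$ wave equation along $\ethat$ in $\trois$) force a careful separate appeal to Propositions \ref{prop:upperboundII} and \ref{prop:almostsharpdrout}.
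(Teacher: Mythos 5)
Your proposal is correct and follows essentially the same route as the paper: decompose $\Psihatm=\anshatm+\errhatm$, absorb $\teukhat_{-2}^{[\partial_t]}\errhatm$ into the error using Propositions \ref{prop:upperboundII} and \ref{prop:almostsharpdrout} together with Lemma \ref{lem:usimubarII}, compute the source from \eqref{eq:unedessources} via $\teukhat_{-2}(\Delta^2\,\cdot)=\Delta^2\teuk_{-2}(\cdot)$, project onto the outgoing $\ell=2$ mode with Lemma \ref{lem:commproj} and \eqref{eq:projcostheta}, and read off the ODE from \eqref{eq:teukhatdr}. The only cosmetic difference is your appeal to the exponential decay of $\Delta$ in $\trois$ for the $O(\Delta^2\ubar^{-9})$ remainder, which is not actually needed since $\ubar\geq|u|$ there already gives $\ubar^{-9}\lesssim|u|^{-8-\delta/2}$.
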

\begin{proof}[Proof of Proposition \ref{prop:odeII}]
Recall that in $\{r_-\leq r\leq \rb\}\cap\{w\leq w_{\rb,\gamma}\}$ we have 
$$\Psihatm=\anshatm+\errhatm,$$
where $|T^j\carterm^{k_1}\Phi^{k_2}\drout^{\leq 1}\errhatm|\lesssim|u|^{-7-j-\delta/2}$, see Propositions \ref{prop:upperboundII} and \ref{prop:almostsharpdrout}, as well as Lemma \ref{lem:usimubarII}. Using \eqref{eq:teukhatdt}, this gives the estimate in region $\{r_-\leq r\leq \rb\}\cap\{w\leq w_{\rb,\gamma}\}$: 
\begin{align}\label{eq:premiereIIl}
    |\teukhat_{-2}^{[\partial_t]}\errhatm|\lesssim|u|^{-8-\delta/2}.
\end{align}
As before, the Teukolsky equation writes, using the $(\partial_t,\partial_r)$ decomposition of $\teukhat_{-2}$ and \eqref{eq:premiereIIl}, 
$$\teukhat_{-2}^{[\partial_r]}\errhatm=-\teukhat_{-2}\left(\anshatm\right)+O(|u|^{-8-\delta/2}).$$
Using the identity $\teukhat_{-2}(\Delta^2\:\cdot\:)=\Delta^2\teuk_{-2}(\:\cdot\:)$, see \eqref{eq:rescaledequation}, and \eqref{eq:unedessources}, we obtain 
\begin{align}
    \teukhat_{-2}^{[\partial_r]}\errhatm=\frac{14\Delta^2}{\ubar^8}\sum_{|m|\leq 2}Q_{m,2}(5r+2ia\cos\theta+iam)Y_{m,2}^{-2}(\cos\theta)e^{im\phi_+}+O(|u|^{-8-\delta/2}).
\end{align}
Just like in \eqref{eq:aprojl}, projecting onto the outgoing $\ell=2$ mode thus gives, using $\phi_+=\phi_-+2r_{mod}$,
\begin{align}\label{eq:aprojIIl}
    \teukhat_{-2}^{[\partial_r]}(\errhatm)_{\ell=2}=\frac{\Delta^2}{\ubar^8}\sum_{|m|\leq 2}P_m(r)e^{2imr_{mod}}Y_{m,2}^{-2}(\cos\theta)e^{im\phi_-}+O(|u|^{-8-\delta/2}),
\end{align}
where we used Lemma \ref{lem:commproj}, \eqref{eq:projcostheta}, and \eqref{eq:Pm(r)}. Then using the expression \eqref{eq:teukhatdr} of the operator $\teukhat_{-2}^{[\partial_r]}$, as well as \eqref{eq:dronddrond'spin}, \eqref{eq:aprojIIl} rewrites
\begin{align*}
    \sum_{|m|\leq 2}\Big[\Delta\parrdeux{G_{m,2}}(r,u)-(2iam+&2(r-M))\parr{G_{m,2}}(r,u)-4G_{m,2}(r,u)\Big]Y_{m,2}^{-2}(\cos\theta)e^{im\phi_-}\nonumber\\
    &=\frac{\Delta^2}{\ubar^8}\sum_{|m|\leq 2}P_m(r)e^{2imr_{mod}}Y_{m,2}^{-2}(\cos\theta)e^{im\phi_-}+O(|u|^{-8-\delta/2}),
\end{align*}
which concludes the proof after projecting on each $(\ell=2,m)$ mode and using $\ubar=2r^*-u$.
\end{proof}
\subsubsection{Resolution of the ODE \eqref{eq:odeIIl}}
In the setting of Proposition \ref{prop:odeII}, let us consider values of $u\ll -1$ such that the corresponding level sets are in $\{r_-\leq r\leq \rb\}\cap\{w\leq w_{\rb,\gamma}\}$. Recall the definition of $G_{m,2}$ in \eqref{eq:deffm}, which satisfies the ODE 
\begin{align}\label{eq:odeodeIIl}
    \Delta G_{m,2}''(r)-(2iam+2(r-M))G_{m,2}'(r)-4G_{m,2}(r)=\widehat{S}_m(r,u),
\end{align}
where the source $\widehat{S}_m(r,u)$ satisfies 
\begin{align}\label{eq:sourceIIl}
    \widehat{S}_m(r,u)=\frac{\Delta^2e^{2imr_{mod}}P_m(r)}{(2r^*-u)^8}+O(|u|^{-8-\delta/2}).
\end{align}
As before, we will determine $G_{m,2}$ using the variation of constants method. To this end, we look for a basis of solutions to the homogeneous problem 
\begin{align}\label{eq:odeHIIl}
    \Delta \widehat{v}''(r)-(2iam+2(r-M))\widehat{v}'(r)-4\widehat{v}(r)=0,
\end{align}
using the variation of constants method, and the initial data on $\{r=\rb\}$. To solve \eqref{eq:odeHIIl}, notice that if $v$ satisfies \eqref{eq:odeHl} then $\widehat{v}:=v\Delta^2e^{2imr_{mod}}$ satisfies \eqref{eq:odeHIIl}. Indeed, under these assumptions we have 
\begin{align*}
    \widehat{v}'&=[4(r-M)\Delta v+\Delta^2v'+2iam\Delta v]e^{2imr_{mod}}\\
    &=[\Delta(4(r-M)+2iam)v+\Delta^2v']e^{2imr_{mod}},
\end{align*}
and 
\begin{align*}
    \widehat{v}''&=[2(r-M)(4(r-M)+2iam)v+4\Delta v+\Delta(4(r-M)+2iam)v'+4(r-M)\Delta v'+\Delta^2v''\\
    &\quad\quad +2iam(4(r-M)+2iam)v+2iam\Delta v']e^{2imr_{mod}},\\
    &=[2(r-M)+2iam]\Delta v'e^{2imr_{mod}}+[(2(r-M)+2iam)(4(r-M)+2iam)+4\Delta]ve^{2imr_{mod}}
\end{align*}
by \eqref{eq:odeHl}, which gives as stated
\begin{align*}
    \Delta\widehat{v}''=(2(r-M)+2iam)\widehat{v}'+4\Delta^2ve^{2imr_{mod}}=(2(r-M)+2iam)\widehat{v}'+4\widehat{v}.
\end{align*}
Recalling the basis $({}^{(2)}v_1, {}^{(2)}v_2)$ of solutions of \eqref{eq:odeHl}, this yields a basis of solutions of \eqref{eq:odeHIIl} given by
\begin{align}
    &{}^{(2)}\!\widehat{v_1}(r):=\Delta^2(r)e^{2imr_{mod}(r)},\\
    &{}^{(2)}\!\widehat{v_2}(r):=\Delta^2(r)e^{2imr_{mod}(r)}\int_{\rb}^r\Delta^{-3}(r')e^{-2imr_{mod}(r')}\dee r'.
\end{align}
We now use the variation of constants method to find a particular solution 
$$\widehat{p}_{m,2,u}(r):=\widehat{M}_{m,2,u}(r){}^{(2)}\!\widehat{v_1}(r)+\widehat{N}_{m,2,u}(r){}^{(2)}\!\widehat{v_1}(r)$$
of \eqref{eq:odeodeIIl}, where $\widehat{M}_{m,2,u}(r)$ and $\widehat{N}_{m,2,u}(r)$ are functions such that
\begin{align}\label{eq:systemIIl}
    \begin{cases}
        \widehat{M}_{m,2,u}'{}^{(2)}\!\widehat{v_1}+\widehat{N}_{m,2,u}'{}^{(2)}\!\widehat{v_2}=0,\\
        \widehat{M}_{m,2,u}'{}^{(2)}\!\widehat{v_1}'+\widehat{N}_{m,2,u}'{}^{(2)}\!\widehat{v_2}'={\Delta^{-1}}{\widehat{S}_m(r,u)}.
    \end{cases}
\end{align}
To solve \eqref{eq:systemIIl}, let us compute the Wronskian $\widehat{W}_2$ of $({}^{(2)}\!\widehat{v_1}, {}^{(2)}\!\widehat{v_2})$. We have the identity
$${}^{(2)}\!\widehat{v_2}={}^{(2)}\!\widehat{v_1}\int_{\rb}^r\frac{\Delta(r')e^{2imr_{mod}(r')}}{{}^{(2)}\!\widehat{v_1}(r')^2}\dee r',$$
thus the Wronskian is 
$$\widehat{W}_2={}^{(2)}\!\widehat{v_1}'{}^{(2)}\!\widehat{v_2}-{}^{(2)}\!\widehat{v_2}'{}^{(2)}\!\widehat{v_1}=-\Delta e^{2imr_{mod}}.$$
This gives $\widehat{M}_{m,2,u}'\widehat{W}_2=\widehat{S}_m(r,u){}^{(2)}\!\widehat{v_2}/\Delta$ and $\widehat{N}_{m,2,u}'\widehat{W}_2=-\widehat{S}_m(r,u){}^{(2)}\!\widehat{v_1}/\Delta$, hence 
$$\widehat{M}_{m,2,u}'(r)=\frac{- {}^{(2)}\!\widehat{v_2}(r)\widehat{S}_m(r,u)}{\Delta^2e^{2imr_{mod}}},\quad \widehat{N}_{m,2,u}'(r)=\frac{{}^{(2)}\!\widehat{v_1}(r)\widehat{S}_m(r,u)}{\Delta^2e^{2imr_{mod}}}.$$
Re-injecting the definitions of ${}^{(2)}\!\widehat{v_2}(r)$ and ${}^{(2)}\!\widehat{v_1}(r)$, this gives 
\begin{align*}
    &\widehat{M}_{m,2,u}'(r)=-{}^{(2)}\!v_2(r)\widehat{S}_m(r,u)=-\widehat{S}_m(r,u)\int_{\rb}^r\Delta^{-3}(r')e^{-2imr_{mod}(r')}\dee r',\\
    &\widehat{N}_{m,2,u}'(r)=\widehat{S}_m(r,u).
\end{align*}
Our particular solution can thus be chosen as
\begin{align*}
    \widehat{p}_{m,2,u}(r)=-{}^{(2)}\!\widehat{v_1}(r)\int_{\rb}^r\widehat{S}_m(r',u)\left(\int_{\rb}^{r'}\Delta^{-3}(r'')e^{-2imr_{mod}(r'')}\dee r''\right)\dee r'+{}^{(2)}\!\widehat{v_2}(r)\int_{\rb}^{r}\widehat{S}_m(r',u)\dee r'.
\end{align*}
We conclude that there are constants $\widehat{A}_{m,2}(u)$, $\widehat{B}_{m,2}(u)$ such that in $\{r_-\leq r\leq \rb\}\cap\{w\leq w_{\rb,\gamma}\}$,
\begin{align}
    G_{m,2}(r,u)=&-{}^{(2)}\!\widehat{v_1}(r)\int_{\rb}^r\widehat{S}_m(r',u)\left(\int_{\rb}^{r'}\Delta^{-3}(r'')e^{-2imr_{mod}(r'')}\dee r''\right)\dee r'+{}^{(2)}\!\widehat{v_2}(r)\int_{\rb}^{r}\widehat{S}_m(r',u)\dee r'\nn\\
    &+\widehat{A}_{m,2}(u){}^{(2)}\!\widehat{v_1}(r)+\widehat{B}_{m,2}(u){}^{(2)}\!\widehat{v_2}(r).\label{eq:troizl}
\end{align}
\subsubsection{Computation of the constants $\widehat{A}_{m,2}(u)$, $\widehat{B}_{m,2}(u)$}
Evaluating \eqref{eq:troizl} on $r=\rb$ and using \eqref{eq:lien}, we get
$$\widehat{A}_{m,2}(u){}^{(2)}\!\widehat{v_1}(\rb)=G_{m,2}(\rb,u)=\Delta^2(\rb)e^{2imr_{mod}(\rb)}F_{m,2}(\rb,\ubar_{\rb}(u)),$$
where $\ubar_{\rb}(u)=2\rb^*-u$, which yields 
\begin{align}\label{eq:permanul}
    \widehat{A}_{m,2}(u)=F_{m,2}(\rb,\ubar_{\rb}(u))=O(|u|^{-7-\delta}).
\end{align}
We also have
\begin{align*}
    \parr{G_{m,2}}(\rb,u)&=\parr{}\:\Big|_{r=\rb}(\Delta^2e^{2imr_{mod}}F_{m,2}(r,2r^*-u))\\
    &=(4(\rb-M)+2iam)\Delta(\rb)e^{2imr_{mod}(\rb)}F_{m,2}(\rb,\ubar_{\rb}(u))\\
    &\quad+2(\rb^2+a^2)\Delta(\rb)e^{2imr_{mod}(\rb)}\parubar{F_{m,2}}(\rb,\ubar_{\rb}(u))+\Delta^2(\rb)e^{2imr_{mod}(\rb)}\parr{F_{m,2}}(\rb,\ubar_{\rb}(u)).
\end{align*}
Moreover, by \eqref{eq:troizl} and using ${}^{(2)}\!\widehat{v_2}(\rb)=0$, we obtain $$\parr{G_{m,2}}(\rb,u)=\widehat{A}_{m,2}(u){}^{(2)}\!\widehat{v_1}'(\rb)+\widehat{B}_{m,2}(u){}^{(2)}\!\widehat{v_2}'(\rb).$$
Thus, using \eqref{eq:permanul}, \eqref{eq:asympto} with $j=1$, as well as ${}^{(2)}\!\widehat{v_1}'(\rb)=(4(\rb-M)+2iam)\Delta(\rb)e^{2imr_{mod}(\rb)}$, yields
\begin{align*}
    \widehat{B}_{m,2}(u){}^{(2)}\!\widehat{v_2}'(\rb)=\Delta^2(\rb)e^{2imr_{mod}(\rb)}\parr{F_{m,2}}(\rb,\ubar_{\rb}(u))+O(|u|^{-8-\delta/2}).
\end{align*}
Using ${}^{(2)}\!\widehat{v_2}'(\rb)=\Delta^{-1}(\rb)$, as well as \eqref{eq:drbienla} we get 
\begin{align*}
    \widehat{B}_{m,2}(u)=\frac{1}{\ubar_{\rb}(u)^8}\int_{r_+}^{\rb}\Delta^2(r')e^{2imr_{mod}(r')}P_m(r')\dee r'+O(\ubar_{\rb}(u)^{-8-\delta/2}).
\end{align*}
Recalling $\ubar_{\rb}(u)=2\rb^*-u$, we obtain
\begin{align}\label{eq:K22(u)}
    \widehat{B}_{m,2}(u)=\frac{1}{u^8}\int_{r_+}^{\rb}\Delta^2(r')e^{2imr_{mod}(r')}P_m(r')\dee r'+O(|u|^{-8-\delta/2}).
\end{align}
\subsubsection{Precise asymptotic near $\ch$}
We now turn to estimating the terms in the RHS of \eqref{eq:troizl} approaching $\ch$. A simple double integration bound gives that the first term of the first line of the RHS of \eqref{eq:troizl} is bounded by $-\Delta(r)^{-1}|{}^{(2)}\!\widehat{v_1}(r)|\lesssim r-r_-$. We also have $|\widehat{A}_{m,2}(u){}^{(2)}\!\widehat{v_1}(r)|\lesssim (r-r_-)^2$ by \eqref{eq:permanul}. For the second term on the first line of the RHS of \eqref{eq:troizl}, we first write using \eqref{eq:sourceIIl},
\begin{align*}
    \Bigg|\int_{\rb}^{r}\widehat{S}_m(r',u)\dee r'&-\frac{1}{u^8}\int_{\rb}^r\Delta^2(r')e^{2imr_{mod}(r')}P_m(r')\dee r'\Bigg|\\
    &\lesssim\left|\int_{\rb}^r\left(\frac{1}{(u-2(r')^*)^8}-\frac{1}{u^8}\right)\Delta^2(r')e^{2imr_{mod}(r')}P_m(r')\dee r'\right|+|u|^{-8-\delta/2}\\
    &\lesssim \int_r^{\rb} \frac{2(r')^*\Delta(r')^2}{|u|^9}\dee r'+|u|^{-8-\delta/2}\\
    &\lesssim |u|^{-8-\delta/2}.
\end{align*}
Using this, as well as the value \eqref{eq:K22(u)} of $\widehat{B}_{m,2}(u)$ gives
\begin{align}
    G_{m,2}(r,u)=\frac{{}^{(2)}\!\widehat{v_2}(r)}{u^8}\int_{r_+}^r\Delta^2(r')e^{2imr_{mod}(r')}P_m(r')\dee r'+O(|u|^{-8-\delta/2})+O(r-r_-).
\end{align}
Next, notice that ${}^{(2)}\!\widehat{v_2}(r)$ converges on $r=r_-$. Indeed, using Lemma \ref{lem:v2dvl} we get, as $r\to r_-$, 
$$\int_{\rb}^r\Delta^{-3}(r')e^{-2imr_{mod}(r')}\dee r'=\frac{-\Delta^{-2}(r)e^{-2imr_{mod}}}{4(r_--M)+2iam}+O((r-r_-)^{-1}),$$
which proves 
$${}^{(2)}\!\widehat{v_2}(r_-)=-\frac{1}{4(r_--M)+2iam},$$
as well as $|{}^{(2)}\!\widehat{v_2}(r)-{}^{(2)}\!\widehat{v_2}(r_-)|\lesssim r-r_-$. Combining this with the definition \eqref{eq:Pm(r)} of $P_m(r)$ finally gives, for $r\in[r_-,\rb]$ and $w\leq w_{\rb,\gamma}$,
\begin{align*}
    G_{m,2}(r,u)=\frac{1}{u^8}&\frac{7Q_{m,2}}{2(r_--M)+iam}\int_{r_-}^{r_+}\Delta^2(r')e^{2imr_{mod}(r')}(5r'+iam+2iab_{m,2}^{-2})\dee r'\\
    &+O(|u|^{-8-\delta/2})+O(r-r_-).
\end{align*}
Finally, combining this with the identity
$$(\Psihatm)_{\ell=2}=\anshatm+(\errhatm)_{\ell=2}=\sum_{|m|\leq 2}G_{m,2}(r,u)Y_{m,2}^{-2}(\cos\theta)e^{im\phi_-}+O(r-r_-)$$
concludes the proof of Theorem \ref{thm:precise2l}.
\begin{rem}\label{rem:importante}
    The $O(\ubar^{-7-\delta})$ imprecise upper bound for $\errm$ in $\un$ does not appear anymore in the asymptotics of $\errhatm$ near $\ch$, because it gets canceled by factors $r-r_-$ when rescaling from $\Psim$ to $\Psihatm$. This allows us to deduce a precise $u^{-8}$ asymptotic behavior for $\errhatm$ near $\ch$ \textit{even though the precise asymptotic behavior of $\errm$ is unknown on $\eh$.}
\end{rem}
\subsection{Upper bound for $(\Psim)_{\ell=3}$ in region $\un$}\label{section:l=3I}
We now prove Theorem \ref{thm:decayl3}. We proceed exactly like for the modes $\ell=2$. The only difference will be the vanishing of the coefficient in front of $1/u^8$ in the leading-order term of $(\Psihatm)_{\ell=3}$. Recall the definition \eqref{eq:harmonicsdec} of the functions $F_{m,3}(r,\ubar)$.
\begin{prop}\label{prop:edoI3}
    Assume that $\Psim$ satisfies \eqref{eq:hyppl}. Then, defining by convention $Q_{\pm 3,2}=0$, the functions ${F_{m,3}}(r,\ubar)$ satisfy in $\un$ the equations, for $|m|\leq 3$,
    \begin{align}\label{eq:odeI}
    \Delta\parrdeux{}\left(\parubarj{F_{m,3}}\right)(r,\ubar)+(2iam+6(r-M))\parr{}&\left(\parubarj{F_{m,3}}\right)(r,\ubar)-6\parubarj{F_{m,3}}(r,\ubar)\nn\\
    &=28ia\parubarj{}\left(\frac{1}{\ubar^8}\right)b_{m,2}^{-2}Q_{m,2}+O(\ubar^{-8-j-\delta}).
\end{align}
\end{prop}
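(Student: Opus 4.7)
The proof mirrors that of Proposition \ref{prop:odeI}. The starting point is the decomposition $\Psim = \ansatzm + \errm$ in $\un$, together with the pointwise bounds \eqref{eq:borneI} on $\drin^{\leq 1} T^{\leq 1+j} \mcq_{-2}^{\leq 1} \Phi^{\leq 1} \errm$. Feeding these into the explicit expression \eqref{eq:teukdt} of $\teuk_{-2}^{[\partial_t]}$ yields $|\teuk_{-2}^{[\partial_t]} T^j \errm| \lesssim \ubar^{-8-j-\delta}$ in $\un$, exactly as in \eqref{eq:teukdtborne}. Combining this with $\teuk_{-2}\Psim = 0$ and the $(\partial_t,\partial_r)$-decomposition \eqref{eq:decteuk}, I obtain
\begin{align*}
    \teuk_{-2}^{[\partial_r]} T^j \errm = -T^j \teuk_{-2}(\ansatzm) + O(\ubar^{-8-j-\delta}),
\end{align*}
whose right-hand side has already been computed in \eqref{eq:unedessources}.

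The key new step is the projection of \eqref{eq:unedessources} onto the $\ell = 3$ mode. The ``radial'' contributions $(5r + iam) Y_{m,2}^{-2}$ lie purely in the $\ell = 2$ sector and are killed by $\mathbf{P}_3$. The only surviving term is the one involving $2ia\cos\theta\, Y_{m,2}^{-2}$, and applying the mode-coupling identity \eqref{eq:projcostheta} extracts exactly the coefficient $b_{m,2}^{-2}$. Using Lemma \ref{lem:commproj} to commute $\mathbf{P}_3$ with $\teuk_{-2}^{[\partial_r]}$, I deduce
\begin{align*}
    \teuk_{-2}^{[\partial_r]} T^j (\errm)_{\ell=3} = 28ia \, \partial_\ubar^j\!\left(\tfrac{1}{\ubar^8}\right) \sum_{|m|\leq 2} b_{m,2}^{-2} Q_{m,2} \, Y_{m,3}^{-2}(\cos\theta) e^{im\phi_+} + O(\ubar^{-8-j-\delta}).
\end{align*}

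Finally, unpacking the operator $\teuk_{-2}^{[\partial_r]}$ via \eqref{eq:teukdr} and applying it to the harmonic decomposition \eqref{eq:Tharmonicsdec}, the only substantive difference with the $\ell = 2$ case is that $\drond\drond' Y_{m,3}^{-2} = -(3-2)(3+2+1) Y_{m,3}^{-2} = -6\, Y_{m,3}^{-2}$ by \eqref{eq:annulemode}; this produces the new $-6\, \partial_\ubar^j F_{m,3}$ zeroth-order term that was absent in the $\ell=2$ case (where $\drond\drond'$ annihilates the spin $-2$ harmonics). Projecting on each $(\ell=3, m)$ mode, and adopting the convention $Q_{\pm 3, 2} = 0$ to handle the boundary values $|m| = 3$, yields the claimed ODE \eqref{eq:odeI}. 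No real obstacle arises: the whole argument is purely bookkeeping of the mode-coupling coefficients and of the Casimir eigenvalues, both of which are set up in Sections \ref{section:spinweighted} and \ref{section:decdtdr}.
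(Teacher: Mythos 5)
Your proof is correct and follows essentially the same route as the paper: the $(\partial_t,\partial_r)$-decomposition \eqref{eq:decteuk}, the bound $|\teuk_{-2}^{[\partial_t]}T^j\errm|\lesssim\ubar^{-8-j-\delta}$ from \eqref{eq:borneI}, the mode-coupling identity \eqref{eq:projcostheta} to extract $b_{m,2}^{-2}$ from the $4ia\cos\theta\, T^{j+1}$ term acting on the $\ell=2$ ansatz, and the eigenvalue $-6$ from \eqref{eq:annulemode}. The only (immaterial) organizational difference is that you project the already-computed source $-T^j\teuk_{-2}(\ansatzm)$ from \eqref{eq:unedessources} onto $\ell=3$, whereas the paper projects $-\teuk_{-2}^{[\partial_t]}T^j\Psim$ term by term; since the ansatz is pure $\ell=2$ and is annihilated by $\teuk_{-2}^{[\partial_r]}$, the two are identical.
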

\begin{proof}
Using \eqref{eq:borneI} we have in $\un$, $|T^j\carterm^{k_1}\Phi^{k_2}\drin^{\leq 1}\errm|\lesssim\ubar^{-7-j-\delta}$. As $(\Psim)_{\ell\geq 3}=(\errm)_{\ell\geq 3}$, we deduce the bounds:
\begin{align}\label{eq:premiere}
    &|(\sin^2\theta T^{j+2}\Psim)_{\ell=3}|\lesssim\ubar^{-9-j},\\
    &|(T^{j+1}\Psim)_{\ell= 3}|=|(T^{j+1}\errm)_{\ell= 3}|\lesssim\ubar^{-8-j-\delta},\\
    &|(T^{j+1}\drin\Psim)_{\ell= 3}|=|(T^{j+1}\drin\errm)_{\ell= 3}|\lesssim\ubar^{-8-j-\delta},\\
    &|(T^{j+1}\Phi\Psim)_{\ell=3}|=|(T^{j+1}\Phi\errm)_{\ell=3}|\lesssim\ubar^{-8-j-\delta}.
\end{align}
Using Proposition \ref{prop:modecoupling} we also have 
\begin{align}\label{eq:costhetamode}
    (\cos\theta T^{j+1}\Psim)_{\ell=3}=-\parubarj{}\left(\frac{7}{\ubar^8}\right)\sum_{|m|\leq 2}b_{m,2}^{-2}Q_{m,2}Y_{m,3}^{-2}(\cos\theta)e^{im\phi_+}+O(\ubar^{-8-j-\delta}).
\end{align}
Moreover, the Teukolsky equation writes, using the $(\partial_t,\partial_r)$ decomposition of the Teukolsky operator, 
\begin{align*}
    \teuk_{-2}^{[\partial_r]}T^j\Psim&=-\teuk_{-2}^{[\partial_t]}T^j\Psim\\
    &=-a^2\sin^2\theta T^{j+2}\Psim-2(r^2+a^2)T^{j+1}\drin\Psim-2aT^{j+1}\Phi\Psim\\
    &\quad-(10r+4ia\cos\theta)T^{j+1}\Psim.
\end{align*}
Projecting onto the $\ell=3$ modes thus gives, using \eqref{eq:premiere}--\eqref{eq:costhetamode}, \eqref{eq:projcostheta} and Lemma \ref{lem:commproj},
\begin{align}\label{eq:aproj}
    \teuk_{-2}^{[\partial_r]}(T^j\Psim)_{\ell=3}=28ia\parubarj{}\left(\frac{1}{\ubar^8}\right)\sum_{|m|\leq 2}b_{m,2}^{-2}Q_{m,2}Y_{m,3}^{-2}(\cos\theta)e^{im\phi_+}+O(\ubar^{-8-j-\delta}).
\end{align}
Then using the expression \eqref{eq:teukdr} of the operator $\teuk_{-2}^{[\partial_r]}$, as well as \eqref{eq:annulemode}, \eqref{eq:aproj} rewrites
\begin{align*}
    \sum_{|m|\leq 3}\Big[\Delta\parrdeux{}\left(\parubarj{F_{m,3}}\right)(r,\ubar)+(2iam+&6(r-M))\parr{}\left(\parubarj{F_{m,3}}\right)(r,\ubar)-6\parubarj{F_{m,3}}(r,\ubar)\Big]Y_{m,3}^{-2}(\cos\theta)e^{im\phi_+}\\
    &=28ia\parubarj{}\left(\frac{1}{\ubar^8}\right)\sum_{|m|\leq 2}b_{m,2}^{-2}Q_{m,2}Y_{m,3}^{-2}(\cos\theta)e^{im\phi_+}+O(\ubar^{-8-j-\delta}),
\end{align*}
which concludes the proof by projecting onto an ingoing $(\ell=3,m)$ mode with $|m|\leq 3$.
\end{proof}
\subsubsection{Resolution of the ODE \eqref{eq:odeI}}
In the setting of Proposition \ref{prop:edoI3}, let us fix $\ubar\geq 1$ and $j\geq 0$. Recall the functions ${}^{(j)}f_{m,3,\ubar}$ defined in \eqref{eq:deffm}. By \eqref{eq:odeI}, the ODE satisfied by ${}^{(j)}\!f_{m,3,\ubar}$ is 
\begin{align}\label{eq:odeode}
    \Delta {}^{(j)}\!f_{m,3,\ubar}''(r)+(2iam+6(r-M)){}^{(j)}\!f_{m,3,\ubar}'(r)-6{}^{(j)}\!f_{m,3,\ubar}(r)=R^{(j)}_m(r,\ubar),
\end{align}
where the source $R^{(j)}_m(r,\ubar)$ satisfies 
\begin{align}\label{eq:sourceI}
    R^{(j)}_m(r,\ubar)=28ia\parubarj{}\left(\frac{1}{\ubar^8}\right)b_{m,2}^{-2}Q_{m,2}+O(\ubar^{-8-j-\delta}).
\end{align}
As before, we will recover ${}^{(j)}\!f_{m,3,\ubar}$ using a basis of solutions of the homogeneous problem 
\begin{align}\label{eq:odeH}
    \Delta v''(r)+(2iam+6(r-M))v'(r)-6v(r)=0,
\end{align}
using the variation of constants method. An obvious solution of \eqref{eq:odeH} is 
\begin{align}
    {}^{(3)}\!v_1(r):=r-M+\frac{iam}{3}.\label{eq:h1def}
\end{align}
Let ${}^{(3)}\!v_2$ be another solution of \eqref{eq:odeH}, and define the Wronskian
$$W_3:={}^{(3)}\!v_1'{}^{(3)}\!v_2-{}^{(3)}\!v_2'{}^{(3)}\!v_1={}^{(3)}\!v_2-(r-M+aim/3){}^{(3)}\!v_2'.$$
We obtain, differentiating $W_1$,
\begin{align*}
    W'_1&=-(r-M+aim/3){}^{(3)}\!v_2''\\
    &=\frac{r-M+aim/3}{\Delta}((6(r-M)+2iam){}^{(3)}\!v_2'-6{}^{(3)}\!v_2)\\
    &=-\frac{6(r-M)+2iam}{\Delta}W_1.
\end{align*}
Thus, making an appropriate choice of free constant, we have
$$W_1=\exp\left(-\int^r\frac{6(r'-M)+2iam}{\Delta(r')}\dee r'\right)=-\Delta^{-3}e^{-2imr_{mod}}.$$
This gives 
$$\left(\frac{{}^{(3)}\!v_2}{{}^{(3)}\!v_1}\right)'=-\frac{W_1}{{}^{(3)}\!v_1^2}=\frac{\Delta^{-3}e^{-2imr_{mod}}}{{}^{(3)}\!v_1(r)^2}.$$
Thus we can choose the other member of the basis of solutions of \eqref{eq:odeH} to be 
\begin{align}\label{eq:h2def}
    {}^{(3)}\!v_2(r):=(r-M+aim/3)\int_{\rb}^r\frac{\Delta^{-3}(r')e^{-2imr_{mod}(r')}}{(r'-M+aim/3)^2}\dee r'.
\end{align}
\begin{rem}
    Note that even for $m=0$ where ${}^{(3)}\!v_1(r)$ vanishes at $r=M$, the RHS of \eqref{eq:h2def} admits a limit as $r\to M$, and is thus well defined on $(r_-,r_+)$.
\end{rem}
Now we use the variation of constants method to find a particular solution ${}^{(j)}\!p_{m,3,\ubar}$ of \eqref{eq:odeode} of the form
$${}^{(j)}\!p_{m,3,\ubar}(r):={}^{(j)}\!M_{m,3,\ubar}(r){}^{(3)}\!v_1(r)+{}^{(j)}\!N_{m,3,\ubar}(r){}^{(3)}\!v_2(r),$$
where ${}^{(j)}\!M_{m,3,\ubar}(r)$ and ${}^{(j)}\!N_{m,3,\ubar}(r)$ satisfy
\begin{align}\label{eq:systemI}
    \begin{cases}
        {}^{(j)}\!M_{m,3,\ubar}'{}^{(3)}\!v_1+{}^{(j)}\!N_{m,3,\ubar}'{}^{(3)}\!v_2=0,\\
        {}^{(j)}\!M_{m,3,\ubar}'{}^{(3)}\!v_1'+{}^{(j)}\!N_{m,3,\ubar}'{}^{(3)}\!v_2'={\Delta}^{-1}{R^{(j)}_m(r,\ubar)},
    \end{cases}
\end{align}
Solving \eqref{eq:systemI} gives ${}^{(j)}\!M_{m,3,\ubar}'W_1=R^{(j)}_m(r,\ubar){}^{(3)}\!v_2/\Delta$ and ${}^{(j)}\!N_{m,3,\ubar}'W_1=-R^{(j)}_m(r,\ubar){}^{(3)}\!v_1/\Delta$, hence 
$${}^{(j)}\!M_{m,3,\ubar}'(r)=-\Delta^2 {}^{(3)}\!v_2(r)R^{(j)}_m(r,\ubar)e^{2imr_{mod}},\quad {}^{(j)}\!N_{m,3,\ubar}'(r)=\Delta^2 {}^{(3)}\!v_1(r)R^{(j)}_m(r,\ubar)e^{2imr_{mod}}.$$
Our particular solution can thus be chosen as
\begin{align}\label{eq:luila}
    {}^{(j)}\!p_{m,3,\ubar}(r)=-&{}^{(3)}\!v_1(r)\int_{r_+}^r\Delta^2(r'){}^{(3)}\!v_2(r')R^{(j)}_m(r',\ubar)e^{2imr_{mod}(r')}\dee r'\nn\\
    &+{}^{(3)}\!v_2(r)\int_{r_+}^r\Delta^2(r'){}^{(3)}\!v_1(r')R^{(j)}_m(r',\ubar)e^{2imr_{mod}(r')}\dee r'.
\end{align}
Notice that a naive upper bound for ${}^{(3)}\!v_2$ is $(r_+-r)^{-2}$ as $r\to r_+$, thus the first integral on the RHS of \eqref{eq:luila} is well-defined. Note that using the expression \eqref{eq:h2def} of ${}^{(3)}\!v_2$ yields 
\begin{align*}
    {}^{(j)}\!p_{m,3,\ubar}(r)=&-{}^{(3)}\!v_1(r)\int_r^{r_+}\Delta^2(r')R^{(j)}_m(r',\ubar)e^{2imr_{mod}(r')}{}^{(3)}\!v_1(r')\int_{\rb}^{r'}\frac{\Delta^{-3}(r'')e^{-2imr_{mod}(r'')}}{{}^{(3)}\!v_1(r'')^2}\dee r''\dee r'\\
    &+{}^{(3)}\!v_1(r)\left(\int_{\rb}^{r}\frac{\Delta^{-3}(r')e^{-2imr_{mod}(r')}}{{}^{(3)}\!v_1(r')^2}\dee r'\right)\int_r^{r_+}\Delta^2(r')R^{(j)}_m(r',\ubar)e^{2imr_{mod}(r')}{}^{(3)}\!v_1(r')\dee r'.
\end{align*}
Thus defining the function 
\begin{align}\label{eq:Bmdef}
    K_m(r,r'):=\Delta^2(r')e^{2imr_{mod}(r')}(r'-M+aim/3)\int_{r'}^r\frac{\Delta^{-3}(r'')e^{-2imr_{mod}(r'')}}{(r''-M+aim/3)^2}\dee r'',
\end{align}
we obtain the expression
\begin{align}
    {}^{(j)}\!p_{m,3,\ubar}(r)=(r-M+aim/3)\int_{r_+}^rR^{(j)}_m(r',\ubar)K_m(r,r')\dee r'.
\end{align}
As ${}^{(j)}\!f_{m,3,\ubar}-{}^{(j)}\!p_{m,3,\ubar}$ satisfies the homogeneous equation \eqref{eq:odeH}, there are constants $A_{m,3}^{(j)}(\ubar)$ and $B_{m,3}^{(j)}(\ubar)$ that depend only on $\ubar$, $m$ and $j$ such that for any $r\in[\rb,r_+]$, any $|m|\leq 2$, and any $\ubar\geq 1$,
\begin{align}\label{eq:FMresolu}
    {}^{(j)}\!f_{m,3,\ubar}(r)=A_{m,3}^{(j)}(\ubar){}^{(3)}\!v_1(r)+B_{m,3}^{(j)}(\ubar){}^{(3)}\!v_2(r)+(r-M+aim/3)\int_{r_+}^rR^{(j)}_m(r',\ubar)K_m(r,r')\dee r'.
\end{align}
\subsubsection{Computation of $A_{m,3}^{(j)}(\ubar)$, $B_{m,3}^{(j)}(\ubar)$}
To compute the constants $A_{m,3}^{(j)}(\ubar)$, $B_{m,3}^{(j)}(\ubar)$ in terms of the initial data on $\mch_+$, we first discuss the regularity at $r=r_+$ of the different functions at play. We begin with the following result
\begin{lem}\label{lem:v2dv}
    The function ${}^{(3)}\!v_2(r)$ defined by \eqref{eq:h2def} diverges as $r\to r_\pm$. More precisely,
    \begin{itemize}
    \item For $\rb\leq r\leq r_+$:
    $${}^{(3)}\!v_2(r)=\frac{-\Delta^{-2}(r)e^{-2imr_{mod}}}{(4(r_+-M)+2iam)(r_+-M+aim/3)}+O((r_+-r)^{-1}).$$
    \item For $r_-\leq r\leq\rb$:
    $${}^{(3)}\!v_2(r)=\frac{-\Delta^{-2}(r)e^{-2imr_{mod}}}{(4(r_--M)+2iam)(r_--M+aim/3)}+O((r-r_-)^{-1}).$$
    \end{itemize}
\end{lem}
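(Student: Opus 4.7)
My plan is to generalize the integration-by-parts trick used in the proof of Lemma \ref{lem:v2dvl} to handle the extra factor $(r'-M+iam/3)^{-2}$ appearing in the integrand of \eqref{eq:h2def}. The key identity, which follows from $\Delta'(r') = 2(r'-M)$ and $r_{mod}'(r') = a/\Delta(r')$, is
\begin{align*}
\frac{d}{dr'}\left[\Delta^{-2}(r') e^{-2imr_{mod}(r')}\right] = -\bigl[4(r'-M)+2iam\bigr]\,\Delta^{-3}(r') e^{-2imr_{mod}(r')}.
\end{align*}
Substituting this into the integral defining ${}^{(3)}\!v_2(r)$ and integrating once by parts yields
\begin{align*}
\int_{\rb}^{r}\frac{\Delta^{-3}(r')e^{-2imr_{mod}(r')}}{(r'-M+iam/3)^2}\dee r' = \left[\frac{-\Delta^{-2}(r')e^{-2imr_{mod}(r')}}{[4(r'-M)+2iam](r'-M+iam/3)^2}\right]_{\rb}^{r} + \mathcal{R}(r),
\end{align*}
where $\mathcal{R}(r) = \int_{\rb}^{r} \Delta^{-2}(r') e^{-2imr_{mod}(r')}\,\frac{d}{dr'}\!\left[\frac{1}{[4(r'-M)+2iam](r'-M+iam/3)^2}\right]\dee r'$.

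For the first statement, I would focus on $r$ close to $r_+$. Multiplying the above identity by $(r-M+iam/3) = {}^{(3)}\!v_1(r)$, the boundary term at the upper limit collapses to
\begin{align*}
\frac{-\Delta^{-2}(r) e^{-2imr_{mod}(r)}}{[4(r-M)+2iam](r-M+iam/3)},
\end{align*}
and replacing $r$ by $r_+$ in the \emph{smooth} coefficient in the denominator produces an error of size $O(\Delta^{-1}(r)) = O((r_+-r)^{-1})$, giving the claimed leading term. The boundary term at $\rb$ is $O(1)$, hence subsumed in the error. For $\mathcal{R}(r)$, the integrand is $O(\Delta^{-2}(r')) = O((r_+-r')^{-2})$ in a neighborhood of $r_+$, so $\int_{r_0}^{r}(r_+-r')^{-2}\dee r' = O((r_+-r)^{-1})$ for any fixed $r_0 < r_+$, and multiplying by the bounded factor $(r-M+iam/3)$ preserves this bound. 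The second statement is obtained by the same argument localized near $r_-$, where the dominant boundary term at the upper limit (now interpreted with $r\to r_-$) gives the analogous expression and the remainder is $O((r-r_-)^{-1})$.

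The one technical point I expect to be slightly delicate is the case $m=0$, in which the antiderivative factor $1/[4(r'-M)+2iam] = 1/[4(r'-M)]$ is singular at the interior point $r'=M$. Since the asymptotic is required only near $r=r_\pm$, which are bounded away from $M$, I would resolve this by performing the integration by parts only on a sub-interval $[r_0,r]$ (for the $r\to r_+$ case) or $[r,r_0']$ (for the $r\to r_-$ case) with $r_0,r_0'\in(r_-,r_+)\setminus\{M\}$ chosen once and for all, and treat the contribution from the complementary interval as the appropriately regularized constant to which the Remark following \eqref{eq:h2def} refers; this constant is $O(1)$ and is absorbed into the error term after multiplication by $(r-M+iam/3)$.
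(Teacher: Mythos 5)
Your argument is correct, and it reaches the result by a slightly different route than the paper. The paper proves the lemma in two steps: it first freezes the slowly varying weight, estimating the difference between $\int_{\rb}^r\Delta^{-3}(r')e^{-2imr_{mod}(r')}(r'-M+iam/3)^{-2}\dee r'$ and $(r-M+iam/3)^{-2}\int_{\rb}^r\Delta^{-3}(r')e^{-2imr_{mod}(r')}\dee r'$ by $O((r_+-r)^{-1})$, and then invokes the asymptotics of the unweighted integral from Lemma \ref{lem:v2dvl}. You instead perform a single integration by parts using the exact antiderivative identity $\frac{\dee}{\dee r'}\bigl[\Delta^{-2}e^{-2imr_{mod}}\bigr]=-[4(r'-M)+2iam]\Delta^{-3}e^{-2imr_{mod}}$, which produces the leading boundary term and an $O((r_+-r)^{-1})$ remainder in one stroke; this effectively merges the comparison step of the present lemma with the integration by parts hidden inside the proof of Lemma \ref{lem:v2dvl}, and treats $m=0$ and $m\neq 0$ on the same footing up to the interior pole of the rational prefactor. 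Your explicit treatment of that $m=0$ pole at $r'=M$ (restricting the integration by parts to a subinterval avoiding $M$ and absorbing the complementary piece into an $O(1)$ multiple of ${}^{(3)}\!v_1$, which is legitimate since any two particular primitives in the Wronskian representation differ by a multiple of ${}^{(3)}\!v_1=O(1)$) is in fact more careful than the paper, whose written difference estimate is also non-integrable across $r'=M$ when $m=0$ and tacitly relies on the regularization discussed in the remark following \eqref{eq:h2def}. The paper's version is shorter because it reuses Lemma \ref{lem:v2dvl}; yours is self-contained. Both are sound.
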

\begin{proof}
We begin with the first point. We have the estimate
\begin{align*}
    \Bigg|\int_{\rb}^r\frac{\Delta^{-3}(r')e^{-2imr_{mod}(r')}}{(r'-M+aim/3)^2}\dee r'-\frac{1}{(r-M+aim/3)^2}&\int_{\rb}^r\Delta^{-3}(r')e^{-2imr_{mod}(r')}\dee r'\Bigg|\\
    &\lesssim \int_r^{\rb}(r_+-r')^{-2}\dee r'\lesssim (r_+-r)^{-1}.
\end{align*}
We conclude using Lemma \ref{lem:v2dvl}. The second point is proved in a similar way.
\end{proof}
Now, we prove ${}^{(j)}\!p_{m,3,\ubar}(r)\to 0$ as $r\to r_+$. Indeed, we have already seen that the first integral on the RHS of \eqref{eq:luila} is well-defined and thus goes to zero as $r\to r_+$. We can also easily bound the other term:
$$\left|{}^{(3)}\!v_2(r)\int_{r_+}^r\Delta^2(r'){}^{(3)}\!v_1(r')R^{(j)}_m(r',\ubar)e^{2imr_{mod}(r')}\dee r'\right|\lesssim(r_+-r)^{-2}\int_{r}^{r_+}(r_+-r')^2\dee r'\lesssim r_+-r,$$
which proves that indeed ${}^{(j)}\!p_{m,3,\ubar}(r)\to 0$ as $r\to r_+$. This implies, by \eqref{eq:FMresolu},
$$\left|{}^{(j)}\!f_{m,\ubar}(r)-A_{m,3}^{(j)}(\ubar){}^{(3)}\!v_1(r)-B_{m,3}^{(j)}(\ubar){}^{(3)}\!v_2(r)\right|\lesssim|{}^{(j)}\!p_{m,3,\ubar}(r)|\underset{r\to r_+}{\longrightarrow}0.$$
As $T^j\Psim$ is regular at $r=r_+$, $\parubarj{}F_{m,3}(r,\ubar)={}^{(j)}\!f_{m,3,\ubar}(r)$ is also regular at $r=r_+$. By Lemma \ref{lem:v2dv}, this implies necessarily $B_{m,3}^{(j)}(\ubar)=0$, and thus by \eqref{eq:FMresolu},
\begin{align}\label{eq:yeyeye}
    \parubarj{F_{m,3}}(r,\ubar)=A_{m,3}^{(j)}(\ubar){}^{(3)}\!v_1(r)+(r-M+aim/3)\int_{r_+}^rR^{(j)}_m(r',\ubar)K_m(r,r')\dee r'.
\end{align}
Evaluating at $r=r_+$ and using the initial assumption \eqref{eq:hyppl} on $\mch_+$, we find 
$$A_{m,3}^{(j)}(\ubar)=(r_+-M+aim/3)^{-1}{\parubarj{F_{m,3}}(r_+,\ubar)}=O(\ubar^{-7-j-\delta}).$$
Putting back into \eqref{eq:FMresolu} the expression \eqref{eq:sourceI} of the source $R^{(j)}_m(r,\ubar)$, we get in $\un$:
\begin{align}\label{eqn:bornesupj}
    \parubarj{F_{m,3}}(r,\ubar)=O(\ubar^{-7-j-\delta}).
\end{align}
Note that this bound is imprecise. Instead what we will use is the precise asymptotic of a linear combination of $\parr{}\left(\parubarj{F_{m,3}}\right)(r,\ubar)$ and $\parubarj{F_{m,3}}(r,\ubar)$. Differentiating \eqref{eq:yeyeye} with respect to $r$ we find in $\un$
\begin{align}
    \parr{}\left(\parubarj{F_{m,3}}\right)(r,\ubar)=&\:A_{m,3}^{(j)}(\ubar)+\int_{r_+}^rR^{(j)}_m(r',\ubar)K_m(r,r')\dee r'\label{eq:drFmI}\\
    &+\frac{\Delta^{-3}(r)e^{-2imr_{mod}(r)}}{r-M+aim/3}\int_{r_+}^rR^{(j)}_m(r',\ubar)\Delta^2(r')e^{2imr_{mod}(r')}(r'-M+aim/3)\dee r'.\nonumber
\end{align}
\subsection{Upper bound for $(\Psihatm)_{\ell=3}$ in region $\deux\cup\trois$}\label{section:l=3II}
We continue the analysis in region $\deux\cup\trois=\{r_-\leq r\leq \rb\}\cap\{w\leq w_{\rb,\gamma}\}$.
\begin{prop}\label{prop:edoII3} Assume that $\Psim$ satisfies \eqref{eq:hyppl}. Then, defining by convention $Q_{\pm3,2}=0$, the functions ${G_{m,3}(r,u)}$ satisfy the following equations in $\deux\cup\trois$, for $|m|\leq 3$:
    \begin{align}\label{eq:odeII}
    \Delta\parrdeux{G_{m,3}}(r,u)-(2iam+2(r-M))\parr{G_{m,3}}&(r,u)-10{G_{m,3}(r,u)}=\nn\\
    &\frac{28ia\Delta^2e^{2imr_{mod}}}{\ubar^8}b_{m,2}^{-2}Q_{m,2}+O(|u|^{-8-\delta/2}).
\end{align}
\end{prop}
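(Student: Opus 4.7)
The plan is to mirror exactly the derivation of the $\ell=2$ ODE in Proposition \ref{prop:odeII}, projecting onto the outgoing $\ell=3$ mode instead. Since the ansatz $\anshatm$ consists purely of $\ell=2$ spin-weighted spherical harmonics (in both ingoing and outgoing decompositions, via $e^{im\phi_+}=e^{2imr_{mod}}e^{im\phi_-}$), we have $(\Psihatm)_{\ell=3}=(\errhatm)_{\ell=3}$. First, the pointwise bounds $|T^j\carterm^{k_1}\Phi^{k_2}\drout^{\leq 1}\errhatm|\lesssim|u|^{-7-j-\delta/2}$ from Propositions \ref{prop:upperboundII} and \ref{prop:almostsharpdrout}, combined with the explicit expression \eqref{eq:teukhatdt} of $\teukhat_{-2}^{[\partial_t]}$, yield the estimate $|\teukhat_{-2}^{[\partial_t]}\errhatm|\lesssim|u|^{-8-\delta/2}$ in $\deux\cup\trois$ (the worst term, $-2(r^2+a^2)T\drout\errhatm$, decays like $|u|^{-8-\delta/2}$; all other contributions decay at least as fast).

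Next, I would use the Teukolsky equation $\teukhat_{-2}\Psihatm=0$ and the decomposition \eqref{eq:decteukhat} to write
\[\teukhat_{-2}^{[\partial_r]}\errhatm=-\teukhat_{-2}\anshatm+O(|u|^{-8-\delta/2}),\]
and compute the source via the rescaling identity $\teukhat_{-2}(\Delta^2\,\cdot\,)=\Delta^2\teuk_{-2}(\cdot)$ together with \eqref{eq:unedessources}:
\[-\teukhat_{-2}\anshatm=\frac{14\Delta^2}{\ubar^8}\sum_{|m|\leq 2}Q_{m,2}(5r+2ia\cos\theta+iam)Y_{m,2}^{-2}(\cos\theta)e^{im\phi_+}+O(\Delta^2\ubar^{-9}),\]
the error being $O(|u|^{-9})\subset O(|u|^{-8-\delta/2})$ since $\ubar\geq|u|$ in $\deux\cup\trois$ by Lemma \ref{lem:usimubarII}.

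The projection onto the outgoing $\ell=3$ mode is then straightforward: writing $e^{im\phi_+}=e^{2imr_{mod}}e^{im\phi_-}$, the $5r$ and $iam$ terms vanish by orthogonality of $Y_{m,2}^{-2}$ and $Y_{m,3}^{-2}$, while the $2ia\cos\theta$ term couples to $\ell=3$ through the identity $(\cos\theta Y_{m,2}^{-2})_{\ell=3}=b_{m,2}^{-2}Y_{m,3}^{-2}$ of Proposition \ref{prop:modecoupling}, producing precisely the factor $28ia\,b_{m,2}^{-2}Q_{m,2}e^{2imr_{mod}}/\ubar^8$ claimed on the right-hand side of \eqref{eq:odeII}. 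Finally, by Lemma \ref{lem:commproj} the projection $\mathbf{P}_{\ell=3}$ commutes with $\teukhat_{-2}^{[\partial_r]}$, and applying the explicit formula \eqref{eq:teukhatdr} to $G_{m,3}(r,u)Y_{m,3}^{-2}(\cos\theta)e^{im\phi_-}$---using $\drond'\drond Y_{m,3}^{-2}=-10\,Y_{m,3}^{-2}$ from \eqref{eq:dronddrond'spin} (with $(\ell-s)(\ell+s+1)=5\cdot 2$)---produces the radial operator $\Delta\partial_r^2-(2iam+2(r-M))\partial_r-10$ acting on $G_{m,3}$. Extending the sum to $|m|\leq 3$ with the convention $Q_{\pm 3,2}=0$ then concludes the proof.

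No substantive obstacle is expected: the argument is structurally identical to Proposition \ref{prop:odeII}, the only bookkeeping differences being the eigenvalue $-10$ (rather than $-4$) of $\drond'\drond$ on $\ell=3$ spin-weighted harmonics, and the appearance of the mode-coupling constant $b_{m,2}^{-2}$ as the $\cos\theta$ in the source becomes the only surviving contribution after projecting the $\ell=2$ ansatz onto $\ell=3$.
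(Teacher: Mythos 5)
Your proposal is correct and follows essentially the same route as the paper: both use the bounds of Propositions \ref{prop:upperboundII} and \ref{prop:almostsharpdrout} to absorb $\teukhat_{-2}^{[\partial_t]}\errhatm$ into the $O(|u|^{-8-\delta/2})$ error, isolate the $28ia\,b_{m,2}^{-2}Q_{m,2}$ source from the $\cos\theta$ term via Proposition \ref{prop:modecoupling}, and reduce $\teukhat_{-2}^{[\partial_r]}$ on the outgoing $(\ell=3,m)$ mode to the stated radial operator with eigenvalue $-10$. The only cosmetic difference is that you project the precomputed source $-\teukhat_{-2}(\anshatm)$ from \eqref{eq:unedessources} onto $\ell=3$, whereas the paper expands $-\teukhat_{-2}^{[\partial_t]}\Psihatm$ term by term and reads off the surviving contribution from $(\cos\theta\, T\Psihatm)_{\ell=3}$ --- these are algebraically identical.
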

\begin{proof}
    Recall that in $\deux\cup\trois=\{r_-\leq r\leq \rb\}\cap\{w\leq w_{\rb,\gamma}\}$, we have 
$$\Psihatm=\anshatm+\errhatm,$$
where $|T^j\carterm^{k_1}\Phi^{k_2}\drout^{\leq 1}\errhatm|\lesssim|u|^{-7-j-\delta/2}$, using Propositions \ref{prop:upperboundII} and \ref{prop:almostsharpdrout}, as well as Lemma \ref{lem:usimubarII}. This yields the estimates in region $\{r_-\leq r\leq \rb\}\cap\{w\leq w_{\rb,\gamma}\}$: 
\begin{align}\label{eq:premiereII}
    &|(\sin^2\theta T^2\Psihatm)_{\ell=3}|\lesssim|u|^{-9},\\
    &|(T\Psihatm)_{\ell= 3}|=|(T\errhatm)_{\ell= 3}|\lesssim|u|^{-8-\delta/2},\\
    &|(T\drout\Psihatm)_{\ell= 3}|=|(T\drout\errhatm)_{\ell= 3}|\lesssim|u|^{-8-\delta/2},\\
    &|(T\Phi\Psihatm)_{\ell=3}|=|(T\Phi\errhatm)_{\ell=3}|\lesssim|u|^{-8-\delta/2}.
\end{align}
Moreover, by \eqref{eq:costhetamode} with $j=0$ we get 
\begin{align}\label{eq:costhetamodeII}
    (\cos\theta T\Psihatm)_{\ell=3}=-\frac{7\Delta^2}{\ubar^8}\sum_{|m|\leq 2}b_{m,2}^{-2}Q_{m,2}Y_{m,3}^{-2}(\cos\theta)e^{im\phi_+}+O(|u|^{-8-\delta/2}).
\end{align}
The Teukolsky equation writes, using the $(\partial_t,\partial_r)$ decomposition \eqref{eq:decteuk} of $\teukhat_s$, 
$$\teukhat_{-2}^{[\partial_r]}\Psihatm=-\teukhat_{-2}^{[\partial_t]}\Psihatm=-a^2\sin^2\theta T^2\Psihatm+2(r^2+a^2)T\drout\Psihatm-2aT\Phi\Psihatm-(6r+4ia\cos\theta)T\Psim.$$
Projecting onto the modes $\ell=3$ thus gives, using \eqref{eq:premiereII}--\eqref{eq:costhetamodeII} and \eqref{eq:projcostheta},
\begin{align}\label{eq:aprojII}
    \teukhat_{-2}^{[\partial_r]}(\Psihatm)_{\ell=3}=\frac{28ia\Delta^2}{\ubar^8}\sum_{|m|\leq 2}b_{m,2}^{-2}Q_{m,2}Y_{m,3}^{-2}(\cos\theta)e^{im\phi_+}+O(|u|^{-8-\delta/2}).
\end{align}
Then, using the expression \eqref{eq:teukhatdr} of the operator $\teukhat_{-2}^{[\partial_r]}$, as well as \eqref{eq:dronddrond'spin}, \eqref{eq:aprojII} rewrites
\begin{align*}
    \sum_{|m|\leq 3}\Big[\Delta\partial_r^2{G_{m,3}(r,u)}-(2iam+&2(r-M))\partial_r{G_{m,3}(r,u)}-10{G_{m,3}(r,u)}\Big]Y_{m,3}^{-2}(\cos\theta)e^{im\phi_-}\nonumber\\
    &=\frac{28ia\Delta^2}{\ubar^8}\sum_{|m|\leq 2}b_{m,2}^{-2}Q_{m,2}Y_{m,3}^{-2}(\cos\theta)e^{im\phi_+}+O(|u|^{-8-\delta/2}),
\end{align*}
which concludes the proof, projecting on each outgoing $(\ell=3,m)$ mode with $|m|\leq 3$.
\end{proof}
\subsubsection{Resolution of the ODE \eqref{eq:odeII}}
In the setting of Proposition \ref{prop:edoII3}, let us consider values of $u\ll -1$ such that the corresponding level sets are in $\deux\cup\trois$. Recall the function $g_{m,3,u}(r)={G_{m,3}(r,u)}$. By \eqref{eq:odeII}, the ODE satisfied by $g_{m,3,u}$ is 
\begin{align}\label{eq:odeodeII}
    \Delta g_{m,3,u}''(r)-(2iam+2(r-M))g_{m,3,u}'(r)-10g_{m,3,u}(r)=\widehat{R}_m(r,u),
\end{align}
where the source $\widehat{R}_m(r,u)$ satisfies 
\begin{align}\label{eq:sourceII}
    \widehat{R}_m(r,u)=\frac{28ia\Delta^2e^{2imr_{mod}}}{(2r^*-u)^8}b_{m,2}^{-2}Q_{m,2}+O(|u|^{-8-\delta/2}).
\end{align}
As before, we use a basis of solutions to the homogeneous problem 
\begin{align}\label{eq:odeHII}
    \Delta \widehat{v}''(r)-(2iam+2(r-M))\widehat{v}'(r)-10\widehat{v}(r)=0,
\end{align}
To solve \eqref{eq:odeHII}, notice that if $v$ satisfies \eqref{eq:odeH}, then $\widehat{v}:=v\Delta^2e^{2imr_{mod}}$ satisfies \eqref{eq:odeHII}. Indeed, under these assumptions we have as before
\begin{align*}
    \widehat{v}'&=[\Delta(4(r-M)+2iam)v+\Delta^2v']e^{2imr_{mod}},\\
    \widehat{v}''&=[2(r-M)(4(r-M)+2iam)v+4\Delta v+\Delta(4(r-M)+2iam)v'+4(r-M)\Delta v'+\Delta^2v''\\
    &\quad\quad +2iam(4(r-M)+2iam)v+2iam\Delta v']e^{2imr_{mod}},\\
    &=[2(r-M)+2iam]\Delta v'e^{2imr_{mod}}+[(2(r-M)+2iam)(4(r-M)+2iam)+10\Delta]ve^{2imr_{mod}},
\end{align*}
which gives as stated
\begin{align*}
    \Delta\widehat{v}''=(2(r-M)+2iam)\widehat{v}'+10\Delta^2ve^{2imr_{mod}}=(2(r-M)+2iam)\widehat{v}'+10\widehat{v}.
\end{align*}
This proves, recalling \eqref{eq:h1def}, \eqref{eq:h2def}, that a basis $({}^{(3)}\!\widehat{v_1},{}^{(3)}\!\widehat{v_2})$ of solutions  of \eqref{eq:odeHII} can be chosen as 
\begin{align}
    &{}^{(3)}\!\widehat{v_1}(r):=(r-M+{iam}/{3})\Delta^2(r)e^{2imr_{mod}(r)},\\
    &{}^{(3)}\!\widehat{v_2}(r):=(r-M+aim/3)\Delta^2(r)e^{2imr_{mod}(r)}\int_{\rb}^r\frac{\Delta^{-3}(r')e^{-2imr_{mod}(r')}}{(r'-M+aim/3)^2}\dee r'.
\end{align}
Now we do the variation of constants method to find a particular solution $\widehat{p}_{m,3,u}(r)$ of \eqref{eq:odeodeII}. As before, let $\widehat{M}_{m,3,u}(r)$, $\widehat{N}_{m,3,u}(r)$ be functions such that
\begin{align}\label{eq:systemII}
    \begin{cases}
        \widehat{M}_{m,3,u}'{}^{(3)}\!\widehat{v_1}+\widehat{N}_{m,3,u}'{}^{(3)}\!\widehat{v_2}=0,\\
        \widehat{M}_{m,3,u}'{}^{(3)}\!\widehat{v_1}'+\widehat{N}_{m,3,u}'{}^{(3)}\!\widehat{v_2}'={\Delta}^{-1}{\widehat{R}_m(r,u)},
    \end{cases}
\end{align}
then $\widehat{p}_{m,3,u}=\widehat{M}_{m,3,u}{}^{(3)}\!\widehat{v_1}+\widehat{N}_{m,3,u}{}^{(3)}\!\widehat{v_2}$ is a solution of \eqref{eq:odeodeII}. To solve \eqref{eq:systemII}, let us compute the Wronskian $\widehat{W}_3$ of $({}^{(3)}\!\widehat{v_1}, {}^{(3)}\!\widehat{v_2})$. We have 
$${}^{(3)}\!\widehat{v_2}={}^{(3)}\!\widehat{v_1}\int_{\rb}^r\frac{\Delta(r')e^{2imr_{mod}(r')}}{{}^{(3)}\!\widehat{v_1}(r')^2}\dee r'$$
thus the Wronskian is 
$$\widehat{W}_3={}^{(3)}\!\widehat{v_1}'{}^{(3)}\!\widehat{v_2}-{}^{(3)}\!\widehat{v_2}'{}^{(3)}\!\widehat{v_1}=-\Delta e^{2imr_{mod}}.$$
This gives $\widehat{M}_{m,3,u}'\widehat{W}_3=\widehat{R}_m(r,u){}^{(3)}\!\widehat{v_2}/\Delta$ and $\widehat{N}_{m,3,u}'\widehat{W}_3=-\widehat{R}_m(r,u){}^{(3)}\!\widehat{v_1}/\Delta$, hence 
$$\widehat{M}_{m,3,u}'=\frac{- {}^{(3)}\!\widehat{v_2}(r)\widehat{R}_m(r,u)}{\Delta^2e^{2imr_{mod}}},\quad \widehat{N}_{m,3,u}'=\frac{{}^{(3)}\!\widehat{v_1}(r)\widehat{R}_m(r,u)}{\Delta^2e^{2imr_{mod}}}.$$
More precisely, this gives 
\begin{align*}
    &\widehat{M}_{m,3,u}'=-{}^{(3)}\!{v_2}(r)\widehat{R}_m(r,u)=-(r-M+aim/3)\widehat{R}_m(r,u)\int_{\rb}^r\frac{\Delta^{-3}(r')e^{-2imr_{mod}(r')}}{(r'-M+aim/3)^2}\dee r',\\
    &\widehat{N}_{m,3,u}'={}^{(3)}\!{v_1}(r)\widehat{R}_m(r,u)=(r-M+aim/3)\widehat{R}_m(r,u).
\end{align*}
Our particular solution can thus be chosen as
\begin{align*}
    \widehat{p}_{m,3,u}(r)=&-{}^{(3)}\!\widehat{v_1}(r)\int_{\rb}^r(r'-M+aim/3)\widehat{R}_m(r',u)\left(\int_{\rb}^{r'}\frac{\Delta^{-3}(r'')e^{-2imr_{mod}(r'')}}{(r''-M+aim/3)^2}\dee r''\right)\dee r'\\
    &+{}^{(3)}\!\widehat{v_2}(r)\int_{\rb}^{r}(r'-M+aim/3)\widehat{R}_m(r',u)\dee r'.
\end{align*}
Finally, we get that there are constants $\widehat{A}_{m,3}(u)$, $\widehat{B}_{m,3}(u)$ such that in $\deux\cup\trois$ we have
\begin{align}
    {G_{m,3}(r,u)}=&-{}^{(3)}\!\widehat{v_1}(r)\int_{\rb}^r(r'-M+aim/3)\widehat{R}_m(r',u)\left(\int_{\rb}^{r'}\frac{\Delta^{-3}(r'')e^{-2imr_{mod}(r'')}}{(r''-M+aim/3)^2}\dee r''\right)\dee r'\nn\\
    &+{}^{(3)}\!\widehat{v_2}(r)\int_{\rb}^{r}(r'-M+aim/3)\widehat{R}_m(r',u)\dee r'\nn\\
    &+\widehat{A}_{m,3}(u){}^{(3)}\!\widehat{v_1}(r)+\widehat{B}_{m,3}(u){}^{(3)}\!\widehat{v_2}(r).\label{eq:troiz}
\end{align}
\subsubsection{Computation of the constants $\widehat{A}_{m,3}(u)$, $\widehat{B}_{m,3}(u)$}
We have by \eqref{eq:troiz},
$$\widehat{A}_{m,3}(u){}^{(3)}\!\widehat{v_1}(\rb)=G_{m,3}(\rb,u)=\Delta^2(\rb)e^{2imr_{mod}(\rb)}{F_{m,3}}(\rb,\ubar_{\rb}(u))$$
which yields
\begin{align}\label{eq:cm1exp}
    \widehat{A}_{m,3}(u)&=\frac{{F_{m,3}}(\rb,\ubar_{\rb}(u))}{\rb-M+aim/3}.
\end{align}
We also have, using \eqref{eq:lien},
\begin{align*}
    \parr{G_{m,3}}(\rb,u)&=\parr{}\:\Big|_{r=\rb}(\Delta^2e^{2imr_{mod}}{F_{m,3}}(r,2r^*-u))\\
    &=(4(\rb-M)+2iam)\Delta(\rb)e^{2imr_{mod}(\rb)}{F_{m,3}}(\rb,\ubar_{\rb}(u))\\
    &\quad+2(\rb^2+a^2)\Delta(\rb)e^{2imr_{mod}(\rb)}\parubar{F_{m,3}}(\rb,\ubar_{\rb}(u))+\Delta^2(\rb)e^{2imr_{mod}(\rb)}\parr{F_{m,3}}(\rb,\ubar_{\rb}(u)).
\end{align*}
Moreover, by \eqref{eq:troiz}, using ${}^{(3)}\!\widehat{v_2}(\rb)=0$, $$\parr{G_{m,3}}(\rb,u)=\widehat{A}_{m,3}(u){}^{(3)}\!\widehat{v_1}'(\rb)+\widehat{B}_{m,3}(u){}^{(3)}\!\widehat{v_2}'(\rb),$$
thus using \eqref{eqn:bornesupj} with $j=1$, this gives 
\begin{align*}
    \widehat{A}_{m,3}(u){}^{(3)}\!\widehat{v_1}'(\rb)+\widehat{B}_{m,3}(u){}^{(3)}\!\widehat{v_2}'(\rb)=&(4(\rb-M)+2iam)\Delta(\rb)e^{2imr_{mod}(\rb)}{F_{m,3}}(\rb,\ubar_{\rb}(u))\\
    &\quad+\Delta^2(\rb)e^{2imr_{mod}(\rb)}\partial_r{F_{m,3}}(\rb,\ubar_{\rb}(u))+O(|u|^{-8-\delta/2}).
\end{align*}
Moreover, we have by \eqref{eq:cm1exp},
$$\widehat{A}_{m,3}(u){}^{(3)}\!\widehat{v_1}'(\rb)=\Bigg[(4(\rb-M)+2iam)\Delta(\rb)e^{2imr_{mod}(\rb)}+\frac{\Delta^2(\rb)e^{2imr_{mod}(\rb)}}{\rb-M+aim/3}\Bigg]{F_{m,3}}(\rb,\ubar_{\rb}(u)),$$
thus using \eqref{eq:drFmI} and \eqref{eq:yeyeye} (i.e. the special combination of $\partial_r{F_{m,3}}$ and ${F_{m,3}}$ which cancels the term with $A_{m,3}(\ubar)$), we get
\begin{align*}
    \widehat{B}_{m,3}(u){}^{(3)}\!\widehat{v_2}'(\rb)&=\Delta^2(\rb)e^{2imr_{mod}(\rb)}\Bigg[\parr{F_{m,3}}(\rb,\ubar_{\rb}(u))-\frac{{F_{m,3}}(\rb,\ubar_{\rb}(u))}{\rb-M+aim/3}\Bigg]+O(|u|^{-8-\delta/2})\\
    &=\frac{\Delta^{-1}(\rb)}{\rb-M+aim/3}\int_{r_+}^{\rb}R^{(0)}_m(r',\ubar_{\rb}(u))\Delta^2(r')e^{2imr_{mod}(r')}(r'-M+aim/3)\dee r'\\
    &\quad\quad+O(|u|^{-8-\delta/2}).
\end{align*}
Finally, using ${}^{(3)}\!\widehat{v_2}'(\rb)=\Delta^{-1}(\rb)/(\rb-M+aim/3)$, \eqref{eq:sourceI} with $j=0$, and $\ubar_{\rb}(u)=2\rb^*-u$, we find  
\begin{align}
    \widehat{B}_{m,3}(u)&=\int_{r_+}^{\rb}R^{(0)}_m(r',\ubar_{\rb}(u))\Delta^2(r')e^{2imr_{mod}(r')}(r'-M+aim/3)\dee r'+O(|u|^{-8-\delta/2})\nn\\
    &=\frac{28aib_{m,2}^{-2}Q_{m,2}}{u^8}\int_{r_+}^{\rb}\Delta^2(r')e^{2imr_{mod}(r')}(r'-M+aim/3)\dee r'+O(|u|^{-8-\delta/2}).\label{eq:K2(u)}
\end{align}
\subsubsection{More precise estimate near $\ch$}
    We now turn to estimating the terms on the RHS of \eqref{eq:troiz} near $\ch$. A simple double integration bound gives that the first line is bounded by $-\Delta(r)^{-1}|{}^{(3)}\!\widehat{v_1}(r)|\lesssim r-r_-$. We also have $|\widehat{A}_{m,3}(u){}^{(3)}\!\widehat{v_1}(r)|\lesssim (r-r_-)^2$. For the term on the second line of the RHS of \eqref{eq:troiz}, we have using \eqref{eq:sourceII},
\begin{align*}
    \Bigg|\int_{\rb}^{r}\widehat{R}_m(r',&u)(r'-M+aim/3)\dee r'-\frac{28aimb_{m,2}^{-2}Q_{m,2}}{u^8}\int_{\rb}^r\Delta^2(r')e^{2imr_{mod}(r')}(r'-M+aim/3)\dee r'\Bigg|\\
    &\lesssim\left|\int_{\rb}^r\left(\frac{1}{(u-2(r')^*)^8}-\frac{1}{u^8}\right)\Delta^2(r')e^{2imr_{mod}(r')}(r'-M+aim/3)\dee r'\right|+|u|^{-8-\delta/2}\\
    &\lesssim \int_r^{\rb} \frac{2(r')^*\Delta(r')^2}{|u|^9}\dee r'+|u|^{-8-\delta/2}\\
    &\lesssim |u|^{-8-\delta/2}.
\end{align*}
Next, notice that ${}^{(3)}\!\widehat{v_2}(r)$ converges on $r=r_-$. Indeed, using Lemma \ref{lem:v2dv}, we find 
$${}^{(3)}\!\widehat{v_2}(r)\underset{r\to r_-}{\longrightarrow}-\frac{1}{(4(r_--M)+2iam)(r_--M+aim/3)},$$
and $|{}^{(3)}\!\widehat{v_2}(r)-{}^{(3)}\!\widehat{v_2}(r_-)|\lesssim r-r_-$. Finally, taking into account the expression \eqref{eq:K2(u)} of $\widehat{B}_{m,3}(u)$, we get 
\begin{align*}
    {G_{m,3}(r,u)}=&\frac{28aib_{m,2}^{-2}Q_{m,2}{}^{(3)}\!\widehat{v_2}(r_-)}{u^8}\int_{r_+}^{r_-}\Delta^2(r')e^{2imr_{mod}(r')}(r'-M+aim/3)\dee r'\\
    &\quad+O(|u|^{-8-\delta/2})+O(r-r_-).
\end{align*}
Finally, notice that 
\begin{align}\label{eq:remarkable}
    \int_{r_+}^{r_-}\Delta^2(r')e^{2imr_{mod}(r')}(r'-M+aim/3)\dee r'=\frac{1}{6}\int_{r_+}^{r_-}\frac{\dee}{\dee r'}\Big[\Delta^3(r')e^{2imr_{mod}(r')}\Big]\dee r'=0,
\end{align}
which gives $|G_{m,3}(r,u)|\lesssim|u|^{-8-\delta/2}+r-r_-$ and thus, by \eqref{eq:outgoingdec}, concludes the proof of Theorem \ref{thm:decayl3}. 
\begin{rem}
     We saw that for $(\Psim)_{\ell=2}$, which decays like $\ubar^{-7}$ on $\eh$, the constant in front of $1/u^7$ in the asymptotic development of $(\Psihatm)_{\ell=2}$ near $\ch$ vanishes. Strikingly, this is also the case for $\ell=3$ where $(\Psim)_{\ell=3}$ is expected to decay like $\ubar^{-8}$ on $\eh$ \cite[Section 1.3, (2)]{MZ23}, while the constant in front of $1/u^8$ in the asymptotic development of $(\Psihatm)_{\ell=3}$ near $\ch$ vanishes by \eqref{eq:remarkable}. We expect that this cancellation occurs for every fixed $\ell$ mode of $\Psihatm$ near $\ch$.
\end{rem}
\subsection{Upper bound for $(\Psim)_{\ell\geq 4}$ in region $\un$}\label{section:higherI}
The rest of this section is dedicated to proving Theorem \ref{thm:decayhigher}. We proceed exactly like for the modes $\ell=2$ and $\ell=3$, but this time the source terms of the ODEs will be directly treated as an error term. {Here we consider $k\in\{0,1\}$, $j\in\{0,1,2\}$, and\footnote{The ODE method will give control of $\|(\psihat)_{\ell\geq 4}\|_{L^2(S(u,\ubar))}$ near $\ch$, and we will recover a pointwise bound for $|(\Psihatm)_{\ell\geq 4}|$ by the Sobolev embedding \eqref{eq:sobolev}.} 
$$\psi:=\mcq_{-2}^kT^j\Psim,\quad\psihat:=\mcq_{-2}^kT^j\Psihatm.$$
We will consider the spin-weighted spherical harmonics decomposition
\begin{align}\label{eq:definedG}
    (\psi)_{\ell\geq 4}=\sum_{\ell\geq 4}\sum_{|m|\leq\ell}F_{m,\ell}(r,\ubar)Y_{m,\ell}^{-2}(\cos\theta)e^{im\phi_+},\quad(\psihat)_{\ell\geq 4}=\sum_{\ell\geq 4}\sum_{|m|\leq\ell}G_{m,\ell}(r,u)Y_{m,\ell}^{-2}(\cos\theta)e^{im\phi_-},
\end{align}
such that $G_{m,\ell}(r,u)=F_{m,\ell}(r,2r^*-u)\Delta^2e^{2imr_{mod}}$.}
\begin{prop}\label{prop:edoI3ell}
    Let $\ell\geq 4$. Assume that $\Psim$ satisfies \eqref{eq:hyppl}. Then, the functions ${F_{m,\ell}}(r,\ubar)$ satisfy in $\un$ the equations, for $|m|\leq\ell$,
    \begin{align}\label{eq:odeIell}
    \Delta\parrdeux{F_{m,\ell}}(r,\ubar)+(2iam+6(r-M))\parr{F_{m,\ell}}(r,\ubar)+\lambda_\ell{F_{m,\ell}}(r,\ubar)=O(\ell^{-2}\ubar^{-8-j-\delta}),
\end{align}
where $\lambda_\ell=-(\ell-2)(\ell+3)$.
\end{prop}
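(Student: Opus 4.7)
The proof will follow the same scheme as that of Propositions \ref{prop:odeI} and \ref{prop:edoI3}, using the decomposition \eqref{eq:decteuk} of $\teuk_{-2}$ to project the Teukolsky equation onto spin-weighted spherical harmonics. The crucial new ingredient will be an application of Proposition \ref{prop:salva} to extract the $\ell^{-2}$ factor in the source term, which will later be essential to sum the resulting modewise bounds over $\ell\geq 4$.

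First, since $T$ and $\mcq_{-2}$ commute with $\teuk_{-2}$ by \eqref{eq:commutateurs}, the spin $-2$ scalar $\psi=\mcq_{-2}^kT^j\Psim$ satisfies $\teuk_{-2}\psi=0$, which by \eqref{eq:decteuk} rewrites
$$\teuk_{-2}^{[\partial_r]}\psi=-\teuk_{-2}^{[\partial_t]}\psi.$$
Projecting on the $(\ell,m)$-mode for $\ell\geq 4$ and $|m|\leq\ell$, using that $\teuk_{-2}^{[\partial_r]}$ commutes with $\mathbf{P}_\ell$ (Lemma \ref{lem:commproj}), together with the explicit form \eqref{eq:teukdr} and the identities $\drin(Y_{m,\ell}^{-2}e^{im\phi_+})=0$, $\Phi(Y_{m,\ell}^{-2}e^{im\phi_+})=imY_{m,\ell}^{-2}e^{im\phi_+}$ and the eigenvalue relation $\drond\drond'(Y_{m,\ell}^{-2}e^{im\phi_+})=\lambda_\ell Y_{m,\ell}^{-2}e^{im\phi_+}$ (from \eqref{eq:annulemode} with $s=-2$), I would obtain
$$\Delta\partial_r^2 F_{m,\ell}+(2iam+6(r-M))\partial_r F_{m,\ell}+\lambda_\ell F_{m,\ell}=-(\teuk_{-2}^{[\partial_t]}\psi)^+_{m,\ell}.$$

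Second, I would apply Proposition \ref{prop:salva} to the spin $-2$ scalar $\teuk_{-2}^{[\partial_t]}\psi$ in order to gain the desired $\ell^{-2}$ factor:
$$|(\teuk_{-2}^{[\partial_t]}\psi)^+_{m,\ell}|\lesssim\ell^{-2}\|\mcq_{-2}^{\leq 1}T^{\leq 2}\teuk_{-2}^{[\partial_t]}\psi\|_{L^\infty(S(u,\ubar))}.$$
The third step would then be to control the right-hand side above by $\ubar^{-8-j-\delta}$, using the explicit form \eqref{eq:teukdt} of $\teuk_{-2}^{[\partial_t]}$, the decomposition $\Psim=\ansatzm+\errm$, and the pointwise bounds \eqref{eq:borneI} of Proposition \ref{prop:lapremiere}. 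The key observation is that every summand of $\teuk_{-2}^{[\partial_t]}$ contains at least one $T$ derivative: using $T\ubar=1$ this improves the decay of the ansatz contribution from $\ubar^{-7-j}$ to $\ubar^{-9-j}$, while using \eqref{eq:borneI} it improves the decay of the error contribution from $\ubar^{-7-j-\delta}$ to $\ubar^{-8-j-\delta}$. The outer derivatives $\mcq_{-2}^{\leq 1}T^{\leq 2}$ preserve this bound, since $T$ commutes with $\teuk_{-2}^{[\partial_t]}$ and the commutator $[\mcq_{-2},\teuk_{-2}^{[\partial_t]}]$ yields only lower-order terms of the same structure controlled in the same way.

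The main difficulty is the bookkeeping in this last step: one must track the total number of $T$, $\mcq_{-2}$, $\Phi$ and $\drin$ derivatives applied to $\errm$ via the nesting $\mcq_{-2}^{\leq 1}T^{\leq 2}\teuk_{-2}^{[\partial_t]}\mcq_{-2}^kT^j$ (for $k\in\{0,1\}$, $j\in\{0,1,2\}$) together with the commutator $[\mcq_{-2},\teuk_{-2}^{[\partial_t]}]$, and verify that these derivative counts lie within the range $(2k_1+k_2\leq N_k-3,\;j\leq N_j-2)$ for which \eqref{eq:borneI} is available. This will constrain the choice of $N_j,N_k$ in \eqref{eq:hyppl}, in line with Remark \ref{rem:postthm}.
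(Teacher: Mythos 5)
Your overall scheme---the $(\partial_t,\partial_r)$ decomposition, projection via Lemma \ref{lem:commproj} and the eigenvalue relation, and the use of Proposition \ref{prop:salva} to extract the $\ell^{-2}$ factor---is the same as the paper's. But your third step has a genuine gap in the treatment of the ansatz contribution. You claim that since every summand of $\teuk_{-2}^{[\partial_t]}$ carries at least one $T$ derivative, the ansatz contribution improves from $\ubar^{-7-j}$ to $\ubar^{-9-j}$. One $T$ derivative of $\ubar^{-7-j}$ only gives $\ubar^{-8-j}$: the terms $2aT\Phi$ and $(10r+4ia\cos\theta)T$ applied to $T^j\left(\ansatzm\right)$ produce contributions of size exactly $\ubar^{-8-j}$, which is \emph{not} $O(\ubar^{-8-j-\delta})$. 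Since you propose to bound $\|\mcq_{-2}^{\leq 1}T^{\leq 2}\teuk_{-2}^{[\partial_t]}\psi\|_{L^\infty(S(u,\ubar))}$ as a whole, these $\ubar^{-8-j}$ terms sit inside that norm, and your argument cannot produce the claimed right-hand side $O(\ell^{-2}\ubar^{-8-j-\delta})$.

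The missing idea is that the ansatz is a pure $\ell=2$ mode, so its projection onto an $(\ell,m)$ mode with $\ell\geq 4$ must be exploited \emph{before} passing to $L^\infty$ norms. By Proposition \ref{prop:modecoupling}, the single-$T$ terms ($T\Phi$, $rT$, $\cos\theta\,T$) applied to the ansatz couple $\ell=2$ only to $\ell\leq 3$ and hence project to zero for $\ell\geq 4$; the only surviving ansatz contribution comes from $a^2\sin^2\theta\,T^{j+2}$ (and from the $a^2\sin^2\theta\,T^2$ part of $\mcq_{-2}^k$ when $k=1$), which carries two $T$ derivatives and is therefore $O(\ubar^{-9-j})$, nonzero only for $\ell=4$ and thus expressible as $O(\ell^{-2}\ubar^{-9-j})$. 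Concretely, the paper splits the $(\ell,m)$ projection of each term of $\teuk_{-2}^{[\partial_t]}\psi$ into the $\errm$ part---handled by Proposition \ref{prop:salva} and \eqref{eq:borneI} exactly as you propose---plus the explicitly computed ansatz projection. Your derivative bookkeeping for the error part is otherwise fine.
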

\begin{proof}
Using \eqref{eq:borneI} we have in $\un$
$$\psi=\mcq_{-2}^kT^j\left(\ansatzm\right)+\mcq_{-2}^kT^j\errm,$$
where $|T^j\carterm^{k_1}\Phi^{k_2}\drin^{\leq 1}\errm|\lesssim\ubar^{-7-j-\delta}$. By Proposition \ref{prop:modecoupling}, we get for $k=0,1$, and for $\ell\geq 4$,
\begin{align*}
    (\drin^{\leq 1}\mcq_{-2}^kT^j\Psim)_{(m,\ell)}^+=(\drin^{\leq 1}\mcq_{-2}^kT^j\errm)_{(m,\ell)}^++O(\ubar^{-9-j}),
\end{align*}
where the term $O(\ubar^{-9-j})$ on the RHS is only non-zero for $\ell=4$, so we can express it as $O(\ell^{-2}\ubar^{-9-j})$. Thus by Proposition \ref{prop:salva} we deduce the following bounds in $\un$, for $|m|\leq \ell$:
\begin{align}\label{eq:premiereell}
    |(\sin^2\theta \mcq_{-2}^kT^{j+2}\Psim)_{\ell,m}^+|&\lesssim\ell^{-2}\|\mcq_{-2}^{\leq k+1}T^{\leq 2}(\sin^2\theta T^{j+2}\Psim)\|_{L^\infty(S(u,\ubar))}\lesssim \ell^{-2}{\ubar^{-9-j}},\\
    |(\mcq_{-2}^kT^{j+1}\Psim)_{\ell,m}^+|&=|(\mcq_{-2}^kT^{j+1}\errm)_{\ell,m}^+|+O(\ell^{-2}\ubar^{-9-j})\\
    &\lesssim\ell^{-2}\|\mcq_{-2}^{\leq k+1}T^{\leq 2}T^{j+1}\errm\|_{L^\infty(S(u,\ubar))}+\ell^{-2}\ubar^{-9-j}\lesssim\ell^{-2}\ubar^{-8-j-\delta},\nn\\
    |(\mcq_{-2}^kT^{j+1}\drin\Psim)_{\ell,m}^+|&=|(\mcq_{-2}^{\leq k+1}T^{j+1}\drin\errm)_{\ell,m}^+|+O(\ell^{-2}\ubar^{-9-j})\\\
    &\lesssim\ell^{-2}\|\mcq_{-2}^{\leq k+1}T^{\leq 2}T^{j+1}\drin\errm\|_{L^\infty(S(u,\ubar))}+\ell^{-2}\ubar^{-9-j}\lesssim\ell^{-2}\ubar^{-8-j-\delta},\nn\\
    |(\mcq_{-2}^kT^{j+1}\Phi\Psim)_{\ell,m}^+|&=|(\mcq_{-2}^kT^{j+1}\Phi\errm)_{\ell,m}^+|+O(\ell^{-2}\ubar^{-9-j})\\
    &\lesssim\ell^{-2}\|\mcq_{-2}^{\leq k+1}T^{\leq 2}T^{j+1}\Phi\errm\|_{L^\infty(S(u,\ubar))}+\ell^{-2}\ubar^{-9-j}\lesssim\ell^{-2}\ubar^{-8-j-\delta}.\nn
\end{align}
Using Proposition \ref{prop:modecoupling} again we also get 
\begin{align}\label{eq:costhetamodeell}
    |(\cos\theta \mcq_{-2}^kT^{j+1}\Psim)_{\ell,m}^+|&=|(\cos\theta T^{j+1}\errm)_{\ell,m}^+|+O(\ell^{-2}\ubar^{-9-j})\nn\\
    &\lesssim \ell^{-2}\|\mcq_{-2}^{\leq k+1}T^{\leq 2}(\cos\theta T^{j+1}\errm)\|_{L^\infty(S(u,\ubar))}+\ell^{-2}\ubar^{-9-j}\nn\\
    &\lesssim\ell^{-2}\ubar^{-8-j-\delta}.
\end{align}
Moreover, the Teukolsky equation writes, using the $(\partial_t,\partial_r)$ decomposition of the Teukolsky operator, 
\begin{align*}
    \teuk_{-2}^{[\partial_r]}\mcq_{-2}^kT^j\Psim&=-\teuk_{-2}^{[\partial_t]}\mcq_{-2}^kT^j\Psim\\
    &=-a^2\sin^2\theta \mcq_{-2}^kT^{j+2}\Psim-2(r^2+a^2)\mcq_{-2}^kT^{j+1}\drin\Psim-2a\mcq_{-2}^kT^{j+1}\Phi\Psim\\
    &\quad-(10r+4ia\cos\theta)\mcq_{-2}^kT^{j+1}\Psim.
\end{align*}
Projecting onto the ingoing $(\ell,m)$ mode thus gives, using \eqref{eq:premiereell}--\eqref{eq:costhetamodeell},
\begin{align}\label{eq:aprojell}
    (\teuk_{-2}^{[\partial_r]}\mcq_{-2}^kT^j\Psim)_{\ell,m}^+=O(\ell^{-2}\ubar^{-8-j-\delta}).
\end{align}
Then using the expression \eqref{eq:teukdr} of the operator $\teuk_{-2}^{[\partial_r]}$, as well as \eqref{eq:annulemode} and Lemma \ref{lem:commproj}, \eqref{eq:aprojell} rewrites as \eqref{eq:odeIell} for $|m|\leq\ell$, as stated.
\end{proof}
\begin{rem}
    We lose two angular derivatives here to obtain a factor $\ell^{-2}$ in RHS of \eqref{eq:odeIell}. This is done in order to avoid issues of summability with respect to $\ell$, see \eqref{eq:finalesti}.
\end{rem}
\subsubsection{Resolution of the ODE \eqref{eq:odeIell}}
 We fix $\ell\geq 4$, $|m|\leq \ell$ for the rest of the section. In the setting of Proposition \ref{prop:edoI3ell}, let us fix $\ubar\geq 1$. By \eqref{eq:odeIell}, the ODE satisfied by $f_{m,\ell,\ubar}={F_{m,\ell}}(r,\ubar)$ is 
\begin{align}\label{eq:odeodeell}
    \Delta f_{m,\ell,\ubar}''(r)+(2iam+6(r-M))f_{m,\ell,\ubar}'(r)+\lambda_\ell f_{m,\ell,\ubar}(r)=O(\ell^{-2}\ubar^{-8-j-\delta}),
\end{align}
As before, we will recover $f_{m,\ell,\ubar}$ using a basis of solutions of the homogeneous problem 
\begin{align}\label{eq:odeHell}
    \Delta v''(r)+(2iam+6(r-M))v'(r)+\lambda_\ell v(r)=0,
\end{align}
using the variation of constants method. 
\begin{prop}\label{prop:v1ell}
    Let $\ell\geq 4$ and $|m|\leq 4$. The function
    \begin{align}
        {}^{(m,\ell)}\!v_1(r):=(r-r_-)^{\ell-2}+\sum_{k=0}^{\ell-3}\left(\prod_{j=k}^{\ell-3}\frac{(j+1)((j+3)(r_+-r_-)-2iam)}{j(j+5)-(\ell-2)(\ell+3)}\right)(r-r_-)^k\label{eq:h1defell}
    \end{align}
    satisfies the following:
    \begin{enumerate}[(i)]
        \item\label{item:(i)} ${}^{(m,\ell)}\!v_1(r)$ is a solution of the ODE \eqref{eq:odeHell} on $[r_-,r_+]$,
        \item \label{item:(ii)}${}^{(m,\ell)}\!v_1(r) \neq 0$ near $r=r_\pm$,
        \item\label{item:(iii)} ${}^{(m,\ell)}\!v_1(2M-r)=(-1)^{\ell-2}\overline{{}^{(m,\ell)}\!v_1(r)}$ for $r\in [r_-,r_+]$,
        \item\label{item:(iv)} $|{}^{(m,\ell)}\!v_1(r)|\leq|{}^{(m,\ell)}\!v_1(r_-)|$ for $r\in [r_-,r_+]$.
    \end{enumerate}
    Moreover, there exists a function ${}^{(m,\ell)}\!v_2(r)$ defined on $(r_-,r_+)$ which satisfies the following:
    \begin{enumerate}[(a)]
        \item\label{item:(a)} ${}^{(m,\ell)}\!v_2(r)$ is a solution of the ODE \eqref{eq:odeHell} on $(r_-,r_+)$,
        \item\label{item:(b)} ${}^{(m,\ell)}\!v_2(r)$ is linearly independent from ${}^{(m,\ell)}\!v_1(r)$,
        \item\label{item:(c)} the Wronskian $W_{(m,\ell)}:={}^{(m,\ell)}\!v_1'{}^{(m,\ell)}\!v_2-{}^{(m,\ell)}\!v_2'{}^{(m,\ell)}\!v_1$ satisfies
        \begin{align}
            W_{(m,\ell)}=-\Delta^{-3}e^{-2imr_{mod}},
        \end{align}
        \item\label{item:(d)} the asymptotics of ${}^{(m,\ell)}\!v_2(r)$ at $r\to r_\pm$ are given as follows:
        \begin{itemize}
            \item For $\rb\leq r\leq r_+$:
            $${}^{(m,\ell)}\!v_2(r)=\frac{-\Delta^{-2}(r)e^{-2imr_{mod}}}{(4(r_+-M)+2iam){}^{(m,\ell)}\!v_1(r_+)}+O_{(m,\ell)}((r_+-r)^{-1}).$$
            \item For $r_-\leq r\leq\rb$:
            $${}^{(m,\ell)}\!v_2(r)=\frac{-\Delta^{-2}(r)e^{-2imr_{mod}}}{(4(r_--M)+2iam){}^{(m,\ell)}\!v_1(r_-)}+O_{(m,\ell)}((r-r_-)^{-1}),$$
    \end{itemize}
    where the implicit constants may depend on $(m,\ell)$.    \end{enumerate}
\end{prop}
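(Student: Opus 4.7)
\textbf{Plan for Proposition \ref{prop:v1ell}.} Item (\ref{item:(i)}) follows by direct substitution: inserting the polynomial ansatz $v(r)=\sum_{k=0}^{\ell-2}a_k(r-r_-)^k$ with $a_{\ell-2}=1$ into \eqref{eq:odeHell}, using $\Delta=(r-r_-)^2-(r_+-r_-)(r-r_-)$ and $6(r-M)=6(r-r_-)-3(r_+-r_-)$, and collecting powers of $(r-r_-)$ yields the two-term recurrence
\[
\bigl[k(k+5)-(\ell-2)(\ell+3)\bigr]\,a_k + (k+1)\bigl[2iam-(k+3)(r_+-r_-)\bigr]\,a_{k+1}=0
\]
for $0\leq k\leq\ell-3$. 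The bracket $k(k+5)-(\ell-2)(\ell+3)=(k-\ell+2)(k+\ell+3)$ is non-zero on this range, and iterating downward from $k=\ell-3$ reproduces exactly \eqref{eq:h1defell}. Item (\ref{item:(ii)}) at $r=r_-$ is then immediate: ${}^{(m,\ell)}\!v_1(r_-)=a_0$ is a finite product whose factors are all non-zero, as every numerator is of the form (positive real) $+$ (pure imaginary) and cannot vanish. Non-vanishing near $r_+$ will be a consequence of (\ref{item:(iii)}).

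For (\ref{item:(iii)}) I would argue by uniqueness of the polynomial solution. A direct substitution shows that the combined operation $r\mapsto 2M-r$ and complex conjugation preserves \eqref{eq:odeHell}, so $(-1)^{\ell-2}\overline{{}^{(m,\ell)}\!v_1(2M-r)}$ is another polynomial solution. The Frobenius exponents at $r_-$ are $0$ and $-2+2iam/(r_+-r_-)$; only the former gives analytic solutions, and running the recurrence \emph{upward} from $k=0$ shows that every analytic solution terminates at degree $\ell-2$ (the bracket multiplying $a_k$ vanishes precisely at $k=\ell-2$, forcing $a_{\ell-1}=a_\ell=\ldots=0$). The space of polynomial solutions is therefore one-dimensional. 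Expanding $(r_+-r)^{\ell-2}=((r_+-r_-)-(r-r_-))^{\ell-2}$ reveals that $(-1)^{\ell-2}\overline{{}^{(m,\ell)}\!v_1(2M-r)}$ has leading coefficient $1$ in the basis $\{(r-r_-)^k\}$, matching ${}^{(m,\ell)}\!v_1$ and forcing the scaling constant to be $1$.

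For ${}^{(m,\ell)}\!v_2$, I would use reduction of order. A direct computation gives $\exp\bigl(-\int(2iam+6(r-M))\Delta^{-1}\dee r\bigr)=\Delta^{-3}e^{-2imr_{mod}}$, so on any interval where ${}^{(m,\ell)}\!v_1$ has no zero---in particular, on neighbourhoods of $r_-$ and $r_+$, thanks to (\ref{item:(ii)})---one may set
\[
{}^{(m,\ell)}\!v_2(r):={}^{(m,\ell)}\!v_1(r)\int_{r_m}^r\frac{\Delta^{-3}(r')e^{-2imr_{mod}(r')}}{\bigl({}^{(m,\ell)}\!v_1(r')\bigr)^2}\,\dee r',
\]
with (\ref{item:(a)})--(\ref{item:(c)}) being immediate verifications. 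For (\ref{item:(d)}) I would use the exact identity
\[
\Delta^{-3}e^{-2imr_{mod}}=\frac{-1}{4(r-M)+2iam}\,\frac{\dee}{\dee r}\bigl(\Delta^{-2}e^{-2imr_{mod}}\bigr)
\]
and integrate by parts in the defining integral; the boundary term produces exactly the leading singular contribution $-\Delta^{-2}e^{-2imr_{mod}}/((4(r_\pm-M)+2iam)\,{}^{(m,\ell)}\!v_1(r_\pm))$, while the remainder is $O((r-r_\pm)^{-1})$ near $r_\pm$. A global ${}^{(m,\ell)}\!v_2$ on $(r_-,r_+)$ is then obtained by analytic continuation, the solution space being two-dimensional there.

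The principal obstacle is item (\ref{item:(iv)}). Setting $f:=|{}^{(m,\ell)}\!v_1|^2$, combining \eqref{eq:odeHell} with its complex conjugate (in which $m$ is replaced by $-m$) and taking the real part produces the identity
\[
\tfrac{1}{2}\Delta f''=\Delta|{}^{(m,\ell)}\!v_1'|^2-3(r-M)f'-\lambda_\ell f+2am\,\Imag\!\bigl({}^{(m,\ell)}\!v_1'\,\overline{{}^{(m,\ell)}\!v_1}\bigr).
\]
Since $\Delta<0$ and $\lambda_\ell<0$ on $(r_-,r_+)$, at a candidate interior maximum $r_0$ the conditions $f'(r_0)=0$ and $\Delta(r_0)f''(r_0)\geq 0$ reduce the problem to a quadratic inequality in the single real parameter $\sigma:=\Imag\!\bigl({}^{(m,\ell)}\!v_1'(r_0)\overline{{}^{(m,\ell)}\!v_1(r_0)}\bigr)$, which does not by itself rule out interior extrema. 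The cleanest route I see is to recognise that the substitution $x=2(r-M)/(r_+-r_-)$ maps \eqref{eq:odeHell} to the Jacobi equation on $[-1,1]$ with complex-conjugate parameters $\alpha=2+2iam/(r_+-r_-)$, $\beta=\overline{\alpha}$, identifying ${}^{(m,\ell)}\!v_1$ (up to a constant) with the Jacobi polynomial $P_{\ell-2}^{(\alpha,\beta)}$. Extending classical endpoint maximum-modulus estimates for Jacobi polynomials to this conjugate-parameter setting then yields (\ref{item:(iv)}); this is the principal analytic task of the proof.
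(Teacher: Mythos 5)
Items (\ref{item:(i)}), (\ref{item:(ii)}), (\ref{item:(iii)}) and (\ref{item:(a)})--(\ref{item:(d)}) of your proposal are correct and follow essentially the paper's route: the recurrence and its downward iteration, the non-vanishing of $a_0$, the symmetry $r\mapsto 2M-r$ combined with conjugation (the paper phrases the uniqueness step via the explicit singular second solution $h_{m,\ell}$ rather than your Frobenius-exponent count, but both give the one-dimensionality of the space of solutions analytic at $r_-$), and the reduction-of-order construction of ${}^{(m,\ell)}\!v_2$ with the integration by parts against $\frac{\dee}{\dee r}(\Delta^{-2}e^{-2imr_{mod}})$ for the endpoint asymptotics.

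The genuine gap is item (\ref{item:(iv)}). You correctly diagnose that multiplying by $\overline{{}^{(m,\ell)}\!v_1}$ (equivalently, working with $f=|{}^{(m,\ell)}\!v_1|^2$) leaves the uncontrollable term $2am\,\Imag({}^{(m,\ell)}\!v_1'\overline{{}^{(m,\ell)}\!v_1})$ and that a pointwise maximum principle does not close; but your fallback --- identifying ${}^{(m,\ell)}\!v_1$ with a Jacobi polynomial $P_{\ell-2}^{(\alpha,\overline{\alpha})}$ and invoking an endpoint maximum-modulus estimate for complex-conjugate parameters --- is not a classical result and is left entirely unproved, as you yourself flag. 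The paper closes this step with an elementary identity: multiply the ODE by $\overline{{}^{(m,\ell)}\!v_1'}$ rather than $\overline{{}^{(m,\ell)}\!v_1}$ and take the real part. The offending first-order coefficient then appears only through $\Real(2iam\,|{}^{(m,\ell)}\!v_1'|^2)=0$, so it drops out identically, leaving
\begin{align*}
0=\frac{1}{2}\frac{\dee}{\dee r}\left(\Delta\left|{}^{(m,\ell)}\!v_1'\right|^2\right)+5(r-M)\left|{}^{(m,\ell)}\!v_1'\right|^2-\frac{(\ell-2)(\ell+3)}{2}\frac{\dee}{\dee r}\left(\left|{}^{(m,\ell)}\!v_1\right|^2\right).
\end{align*}
Integrating from $r_-$ (where $\Delta$ vanishes) to $r\in[r_-,M]$ and using $\Delta\leq 0$ and $M-r'\geq 0$ on that interval gives $|{}^{(m,\ell)}\!v_1(r)|\leq|{}^{(m,\ell)}\!v_1(r_-)|$ there, and the reflection identity (\ref{item:(iii)}) transfers the bound to $[M,r_+]$. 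Without this (or some substitute), your proof of (\ref{item:(iv)}) is incomplete, and (\ref{item:(iv)}) is precisely the uniform-in-$(m,\ell)$ ingredient needed later for summability over the $\ell\geq 4$ modes, so it cannot be dropped.
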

\begin{proof}
    See Appendix \ref{section:v1ell}.
\end{proof}
\begin{rem}\label{rem:jaimebien} We remark the following : 
\begin{enumerate}
    \item The bound in \ref{item:(iv)} is crucial to ensure that the bounds for $(\Psihat)_{(m,\ell)}$ are summable with respect to $\ell\geq 4$, $|m|\leq\ell$, see \eqref{eq:conc1}. \label{item:itemun}
    \item With a bit more work, similarly as for the $\ell=3$ case we could have defined ${}^{(m,\ell)}\!v_2(r)$ by 
    $${}^{(m,\ell)}\!v_2(r):={}^{(m,\ell)}\!v_1(r)\int_{\rb}^r\frac{\Delta^{-3}(r')e^{-2imr_{mod}(r')}}{{}^{(m,\ell)}\!v_1(r')^2}\dee r',$$
    but this will not be needed here.
\end{enumerate}
\end{rem}
Now that we have a basis $({}^{(m,\ell)}\!v_1,{}^{(m,\ell)}\!v_2)$ of solutions of \eqref{eq:odeHell}, we use as before the variation of constants method to find a particular solution $p_{m,\ell,\ubar}$ of \eqref{eq:odeodeell} of the form
\begin{align}\label{eq:luilaell}
    p_{m,\ell,\ubar}(r)=-&{}^{(m,\ell)}\!v_1(r)\int_{r_+}^r\Delta^2(r'){}^{(m,\ell)}\!v_2(r')O(\ell^{-2}\ubar^{-8-j-\delta})e^{2imr_{mod}(r')}\dee r'\nn\\
    &+{}^{(m,\ell)}\!v_2(r)\int_{r_+}^r\Delta^2(r'){}^{(m,\ell)}\!v_1(r')O(\ell^{-2}\ubar^{-8-j-\delta})e^{2imr_{mod}(r')}\dee r'.
\end{align}
Notice that Proposition \ref{prop:v1ell}, \ref{item:(d)} ensures that the first integral on the RHS of \eqref{eq:luilaell} is well-defined. As $f_{m,\ell,\ubar}-p_{m,\ell,\ubar}$ satisfies the homogeneous equation \eqref{eq:odeHell}, we get $A_{m,\ell}(\ubar)$ and $B_{m,\ell}(\ubar)$ such that
\begin{align}\label{eq:FMresoluell}
    f_{m,\ell,\ubar}(r)=A_{m,\ell}(\ubar){}^{(m,\ell)}\!v_1(r)+B_{m,\ell}(\ubar){}^{(m,\ell)}\!v_2(r)+p_{m,\ell,\ubar}(r).
\end{align}
\subsubsection{Computation of $A_{m,\ell}(\ubar)$, $B_{m,\ell}(\ubar)$}
We also have, as before, $p_{m,\ell,\ubar}(r)\to 0$ as $r\to r_+$. This implies, by \eqref{eq:FMresoluell},
$$\left|f_{m,\ubar}(r)-A_{m,\ell}(\ubar){}^{(m,\ell)}\!v_1(r)-B_{m,\ell}(\ubar){}^{(m,\ell)}\!v_2(r)\right|\lesssim|p_{m,\ell,\ubar}(r)|\underset{r\to r_+}{\longrightarrow}0.$$
As $\mcq_{-2}^kT^j\Psim$ is regular at $r=r_+$, $F_{m,\ell}(r,\ubar)=f_{m,\ell,\ubar}(r)$ is also regular at $r=r_+$. By Proposition \ref{prop:v1ell}, \ref{item:(d)}, this implies necessarily $B_{m,\ell}(\ubar)=0$, and thus by \eqref{eq:FMresoluell},
\begin{align}\label{eq:yeyeyeell}
    {F_{m,\ell}}(r,\ubar)=A_{m,\ell}(\ubar){}^{(m,\ell)}\!v_1(r)+p_{m,\ell,\ubar}(r).
\end{align}
Evaluating at $r=r_+$ and using Proposition \ref{prop:salva} and the initial assumption \eqref{eq:hyppl} on $\mch_+$, we find 
$$A_{m,\ell}(\ubar)={}^{(m,\ell)}\!v_1(r_+)^{-1}{{F_{m,\ell}}(r_+,\ubar)}=O({}^{(m,\ell)}\!v_1(r_+)^{-1}\ell^{-2}\ubar^{-7-j-\delta}).$$
This yields, in $\un$, ${F_{m,\ell}}(r,\ubar)=O_{(m,\ell)}\left(\ubar^{-7-j-\delta}\right)$, near $r=\rb$. Once again, we will instead use a particular combination of $\parr{}{F_{m,\ell}}(r,\ubar)$ and ${F_{m,\ell}}(r,\ubar)$. Differentiating \eqref{eq:yeyeyeell} with respect to $r$ we find in $\un$, close to $r=\rb$,
\begin{align*}
    \parr{{F_{m,\ell}}}&=A_{m,\ell}(\ubar){}^{(m,\ell)}\!v_1'(r)+p_{m,\ell,\ubar}'(r)\\
    &=A_{m,\ell}(\ubar){}^{(m,\ell)}\!v_1'(r)+{}^{(m,\ell)}\!v_2'(r)\int_{r_+}^r\Delta^2(r'){}^{(m,\ell)}\!v_1(r')O(\ell^{-2}\ubar^{-8-j-\delta})e^{2imr_{mod}(r')}\dee r'\\
    &\quad\quad-{}^{(m,\ell)}\!v_1'(r)\int_{r_+}^r\Delta^2(r'){}^{(m,\ell)}\!v_2(r')O(\ell^{-2}\ubar^{-8-j-\delta})e^{2imr_{mod}(r')}\dee r'\\
    &=A_{m,\ell}(\ubar){}^{(m,\ell)}\!v_1'(r)+\frac{\Delta^{-3}e^{-2imr_{mod}}}{{}^{(m,\ell)}\!v_1(r)}\int_{r_+}^r\Delta^2(r'){}^{(m,\ell)}\!v_1(r')O(\ell^{-2}\ubar^{-8-j-\delta})e^{2imr_{mod}(r')}\dee r'\\
    &\quad\quad+\frac{{}^{(m,\ell)}\!v_1'(r)}{{}^{(m,\ell)}\!v_1(r)}\left({}^{(m,\ell)}\!v_2(r)\int_{r_+}^r\Delta^2(r'){}^{(m,\ell)}\!v_1(r')O(\ell^{-2}\ubar^{-8-j-\delta})e^{2imr_{mod}(r')}\dee r'-[{}^{(m,\ell)}\!v_1\leftrightarrow {}^{(m,\ell)}\!v_2]\right),
\end{align*}
where we used ${}^{(m,\ell)}\!v_1'{}^{(m,\ell)}\!v_2-{}^{(m,\ell)}\!v_2'{}^{(m,\ell)}\!v_1=-\Delta^{-3}e^{-2imr_{mod}}$. Thus, by \eqref{eq:yeyeyeell} and \eqref{eq:luilaell}, near $r=\rb$,
\begin{align}\label{eq:derrHmell}
    \parr{{F_{m,\ell}}}=\frac{{}^{(m,\ell)}\!v_1'(r)}{{}^{(m,\ell)}\!v_1(r)}{F_{m,\ell}}+\frac{\Delta^{-3}(r)e^{-2imr_{mod}(r)}}{{}^{(m,\ell)}\!v_1(r)}\int_{r_+}^r\Delta^2(r')e^{2imr_{mod}(r')}{}^{(m,\ell)}\!v_1(r')O(\ell^{-2}\ubar^{-8-j-\delta})\dee r'.
\end{align}
\subsection{Upper bound for $(\Psihatm)_{\ell\geq 4}$ in region $\deux\cup\trois$}\label{section:higherII}
We continue the analysis in region $\deux\cup\trois=\{r_-\leq r\leq \rb\}\cap\{w\leq w_{\rb,\gamma}\}$.
\begin{prop}\label{prop:edoII3ell} Assume that $\Psim$ satisfies \eqref{eq:hyppl}. Then the functions ${G_{m,\ell}(r,u)}$, defined in \eqref{eq:definedG}, satisfy in $\deux\cup\trois$ the following equations, for $|m|\leq \ell$:
    \begin{align}\label{eq:odeIIell}
    \Delta\parrdeux{G_{m,\ell}}(r,u)-(2iam+2(r-M))\parr{G_{m,\ell}}&(r,u)+\lambda_\ell'{G_{m,\ell}(r,u)}=O(\ell^{-2}|u|^{-8-\delta/2}),
\end{align}
where $\lambda'_\ell=-(\ell+2)(\ell-1)=\lambda_\ell-4$.
\end{prop}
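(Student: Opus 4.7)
The plan is to mirror the proof of Proposition \ref{prop:edoII3}, adapted to $\ell\geq 4$ modes with the key new ingredient being the extraction of an $\ell^{-2}$ factor via Proposition \ref{prop:salva}. This factor will be essential for the subsequent summability of $(\Psihatm)_{\ell\geq 4}$ over $\ell\geq 4$, $|m|\leq\ell$.

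First I will invoke Propositions \ref{prop:upperboundII} and \ref{prop:almostsharpdrout} together with Lemma \ref{lem:usimubarII} to get the pointwise bound $|\carterm^{k_1}T^{j_1}\Phi^{k_2}\drout^{\leq 1}\errhatm|\lesssim |u|^{-7-j_1-\delta/2}$ in $\deux\cup\trois$. Then, for $\psihat=\carterm^kT^j\Psihatm$ with $k\in\{0,1\}$ and $j$ within the admissible range, I bound each term appearing in $\teukhat_{-2}^{[\partial_t]}\psihat$, after projection onto the outgoing $(m,\ell)$ mode, by applying Proposition \ref{prop:salva}. For instance,
\begin{equation*}
|(\sin^2\theta\,\carterm^k T^{j+2}\Psihatm)_{m,\ell}^-|\lesssim \ell^{-2}\|\carterm^{\leq 1}T^{\leq 2}(\sin^2\theta\,\carterm^k T^{j+2}\Psihatm)\|_{L^\infty(S(u,\ubar))},
\end{equation*}
and the $L^\infty$ norm on the RHS splits into an $\anshatm$ contribution (bounded by $|u|^{-9-j}$, using $|\Delta^2/\ubar^7|\lesssim |u|^{-7}$ in $\deux\cup\trois$, which follows from $|u|\sim\ubar$ in $\deux$ and the exponential decay \eqref{eq:decayexpdelta} in $\trois$) and an $\errhatm$ contribution (bounded by $|u|^{-9-j-\delta/2}$). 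Commutators between $\carterm$ and $\sin^2\theta$ or $\cos\theta$ generate only $T$ and $\Phi$ derivatives which fall within the same bounds. Analogous bounds apply to the $T^{j+1}\drout\Psihatm$, $T^{j+1}\Phi\Psihatm$, and $\cos\theta\,T^{j+1}\Psihatm$ terms, yielding $|(\teukhat_{-2}^{[\partial_t]}\carterm^kT^j\Psihatm)_{m,\ell}^-| \lesssim \ell^{-2}|u|^{-8-j-\delta/2}$.

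Using $\teukhat_{-2}\Psihatm=0$, the commutation relations \eqref{eq:commutateurs}, and the decomposition \eqref{eq:decteukhat}, this translates to $(\teukhat_{-2}^{[\partial_r]}\carterm^kT^j\Psihatm)_{m,\ell}^- = O(\ell^{-2}|u|^{-8-j-\delta/2})$. Finally, Lemma \ref{lem:commproj} lets me commute $\teukhat_{-2}^{[\partial_r]}$ past $\mathbf{P}_\ell$, and I plug in expression \eqref{eq:teukhatdr}, the eigenvalue identity \eqref{eq:dronddrond'spin} giving $\drond'\drond(Y_{m,\ell}^{-2}e^{im\phi_-})=\lambda'_\ell Y_{m,\ell}^{-2}e^{im\phi_-}$ with $\lambda'_\ell = -(\ell+2)(\ell-1)$, together with $\Phi(Y_{m,\ell}^{-2}e^{im\phi_-})=im\,Y_{m,\ell}^{-2}e^{im\phi_-}$ and the identification $\drout=\partial_r$ when acting on the coefficient $G_{m,\ell}(r,u)$ in outgoing Eddington-Finkelstein coordinates, to obtain \eqref{eq:odeIIell} in the case $j=0$.

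The main obstacle is the careful tracking of the $\ell^{-2}$ factor across the $\sin^2\theta$ and $\cos\theta$ mode couplings of Proposition \ref{prop:modecoupling}. In particular, for $\ell=4$ the $\sin^2\theta$-coupling mixes in the $\ell=2$ mode of $\Psihatm$, which carries the non-decaying source $\anshatm$; one must verify that the $\ell^{-2}$ gain from Proposition \ref{prop:salva} survives, but since that Proposition bounds the $(m,\ell)$-projection directly by an $L^\infty$ norm of a few commuted derivatives, the uniform bound on $\sin^2\theta\,\carterm^{\leq 1}T^{\leq j+4}(\anshatm)$ suffices. A secondary bookkeeping issue is the derivative count: one loses up to four $T$-derivatives and one extra $\carterm$ via Proposition \ref{prop:salva}, but this stays within the $N_j$, $N_k$ budget from Remark \ref{rem:postthm}.
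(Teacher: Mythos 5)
Your proposal is correct and follows essentially the same route as the paper: the $(\partial_t,\partial_r)$ decomposition of $\teukhat_{-2}$, the pointwise bounds from Propositions \ref{prop:upperboundII} and \ref{prop:almostsharpdrout}, the $\ell^{-2}$ gain from Proposition \ref{prop:salva}, and the projection via Lemma \ref{lem:commproj} and \eqref{eq:teukhatdr}, \eqref{eq:dronddrond'spin}. The only cosmetic difference is that you absorb the ansatz contribution into the $L^\infty$ norm of Proposition \ref{prop:salva} directly, whereas the paper first isolates it as an $O(\ell^{-2}\ubar^{-9-j})$ remainder supported on $\ell=4$; both treatments are equivalent.
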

\begin{proof}
    Recall that in $\deux\cup\trois=\{r_-\leq r\leq \rb\}\cap\{w\leq w_{\rb,\gamma}\}$, we have 
$$\mcq_{-2}^k T^j\Psihatm=\mcq_{-2}^k T^j\left(\anshatm\right)+\mcq_{-2}^k T^j\errhatm,$$
where $|T^j\carterm^{k_1}\Phi^{k_2}\drout^{\leq 1}\errhatm|\lesssim|u|^{-7-j-\delta/2}$, using Propositions \ref{prop:upperboundII} and \ref{prop:almostsharpdrout}, as well as Lemma \ref{lem:usimubarII}. By the bound
\begin{align*}
    (\drout^{\leq 1}\mcq_{-2}^kT^j\Psihatm)_{(m,\ell)}^-=(\drout^{\leq 1}\mcq_{-2}^kT^j\errhatm)_{(m,\ell)}^-+O(\ell^{-2}\ubar^{-9-j}),
\end{align*}
and Proposition \ref{prop:salva}, this yields the following estimates in region $\{r_-\leq r\leq \rb\}\cap\{w\leq w_{\rb,\gamma}\}$: 
\begin{align}\label{eq:premiereIIell}
    |(\sin^2\theta \mcq_{-2}^k T^{j+2}\Psihatm)_{\ell,m}^-|&\lesssim\ell^{-2}\|\mcq_{-2}^{\leq k+1}T^{\leq j+2}(\sin^2\theta T^2\Psihatm)\|_{L^\infty(S(u,\ubar))}\lesssim\ell^{-2}|u|^{-9-j},\\
    |(\mcq_{-2}^k T^{j+1}\Psihatm)_{\ell,m}^-|&=|(\mcq_{-2}^k T^{j+1}\errhatm)_{\ell,m}^-|+O(\ell^{-2}\ubar^{-9-j})\\
    &\lesssim\ell^{-2}\|\mcq_{-2}^{\leq k+1}T^{\leq j+2}T\errhatm\|_{L^\infty(S(u,\ubar))}+\ell^{-2}\ubar^{-9-j}\lesssim\ell^{-2}|u|^{-8-j-\delta/2},\nn\\
    |(\mcq_{-2}^k T^{j+1}\drout\Psihatm)_{\ell,m}^-|&=|(\mcq_{-2}^k T^{j+1}\drout\errhatm)_{\ell,m}^-|+O(\ell^{-2}\ubar^{-9-j})\\
    &\lesssim\ell^{-2}\|\mcq_{-2}^{\leq k+1}T^{\leq j+2}T\drout\errhatm\|_{L^\infty(S(u,\ubar))}+\ell^{-2}\ubar^{-9-j}\lesssim\ell^{-2}|u|^{-8-j-\delta/2},\nn\\
    |(\mcq_{-2}^k T^{j+1}\Phi\Psihatm)_{\ell,m}^-|&=|(\mcq_{-2}^k T^{j+1}\Phi\errhatm)_{\ell,m}^-|+O(\ell^{-2}\ubar^{-9-j})\\
    &\lesssim\ell^{-2}\|\mcq_{-2}^{\leq k+1}T^{\leq j+2}T\Phi\errhatm|+\ell^{-2}\ubar^{-9-j}\lesssim\ell^{-2}|u|^{-8-j-\delta/2}.\nn
\end{align}
Moreover, using Proposition \ref{prop:modecoupling} again we get 
\begin{align}\label{eq:costhetamodeIIell}
    |(\cos\theta \mcq_{-2}^k T^{j+1}\Psihatm)_{\ell,m}^-|&=|(\cos\theta \mcq_{-2}^k T^{j+1}\errhatm)_{\ell,m}^-|+O(\ell^{-2}\ubar^{-9-j})\\
    &\lesssim\ell^{-2}\|\mcq_{-2}^{\leq k+1}T^{\leq j+2}(\cos\theta T\errhatm)\|_{L^\infty(S(u,\ubar))}+\ell^{-2}\ubar^{-9-j}\lesssim\ell^{-2}|u|^{-8-j-\delta/2}.
\end{align}
The Teukolsky equation writes, using the $(\partial_t,\partial_r)$ decomposition \eqref{eq:decteuk} of $\teukhat_s$, 
\begin{align*}
    \teukhat_{-2}^{[\partial_r]}\mcq_{-2}^kT^j\Psihatm&=-\teukhat_{-2}^{[\partial_t]}\mcq_{-2}^kT^j\Psihatm\\
    &=-a^2\sin^2\theta \mcq_{-2}^kT^{j+2}\Psihatm+2(r^2+a^2)\mcq_{-2}^kT^{j+1}\drout\Psihatm-2a\mcq_{-2}^kT^{j+1}\Phi\Psihatm\\
    &\quad-(6r+4ia\cos\theta)\mcq_{-2}^kT^{j+1}\Psihatm.
\end{align*}
Projecting onto an outgoing $(\ell,m)$ mode thus gives, using \eqref{eq:premiereIIell}--\eqref{eq:costhetamodeIIell},
\begin{align}\label{eq:aprojIIell}
    (\teukhat_{-2}^{[\partial_r]}\mcq_{-2}^kT^j\Psihatm)_{\ell,m}^-=O(\ell^{-2}|u|^{-8-j-\delta/2}).
\end{align}
Then, using the expression \eqref{eq:teukhatdr} of the operator $\teukhat_{-2}^{[\partial_r]}$, as well as \eqref{eq:dronddrond'spin} and Lemma \ref{lem:commproj}, \eqref{eq:aprojIIell} rewrites as \eqref{eq:odeIIell} for $|m|\leq \ell$, as stated.
\end{proof}
\begin{rem}\label{rem:maxderlost}
    Note that in order for the bounds \eqref{eq:premiereIIell}--\eqref{eq:costhetamodeIIell} to hold by Propositions \ref{prop:upperboundII} and \ref{prop:almostsharpdrout}, we need $N_k\geq 12$ and $N_j\geq 10$. This is where we lose the maximum number of derivatives in this paper. 
\end{rem}
\subsubsection{Resolution of the ODE \eqref{eq:odeIIell}}
In the setting of Proposition \ref{prop:edoII3ell}, let us consider values of $u\ll -1$ such that the corresponding level sets are in $\deux\cup\trois$. By \eqref{eq:odeIIell}, the ODE satisfied by $g_{m,\ell,u}=G_{m,\ell}(r,u)$ is 
\begin{align}\label{eq:odeodeIIell}
    \Delta g_{m,\ell,u}''(r)-(2iam+2(r-M))g_{m,\ell,u}'(r)+\lambda'_\ell g_{m,\ell,u}(r)=O(\ell^{-2}|u|^{-8-\delta/2}).
\end{align}
As before, we use a basis of solutions to the homogeneous problem 
\begin{align}\label{eq:odeHIIell}
    \Delta \widehat{v}''(r)-(2iam+2(r-M))\widehat{v}'(r)+\lambda'_\ell \widehat{v}(r)=0.
\end{align}
To solve \eqref{eq:odeHIIell}, notice that as in the case $\ell=3$, if $v$ satisfies \eqref{eq:odeHell}, then $\widehat{v}:=v\Delta^2e^{2imr_{mod}}$ satisfies \eqref{eq:odeHIIell}. This proves that a basis $({}^{(m,\ell)}\!\widehat{v_1},{}^{(m,\ell)}\!\widehat{v_2})$ of solutions  of \eqref{eq:odeHIIell} is
\begin{align}
    &{}^{(m,\ell)}\!\widehat{v_1}(r):={}^{(m,\ell)}\!v_1(r)\Delta^2(r)e^{2imr_{mod}(r)},\\
    &{}^{(m,\ell)}\!\widehat{v_2}(r):={}^{(m,\ell)}\!v_2(r)\Delta^2(r)e^{2imr_{mod}(r)}.
\end{align}
As before, by the variation of constants method, we find a particular solution $\widehat{p}_{m,\ell,u}(r)$ of \eqref{eq:odeodeIIell}:
\begin{align}\label{eq:pmucontent}
    \widehat{p}_{m,\ell,u}(r)=&-{}^{(m,\ell)}\!\widehat{v_1}(r)\int_{\rb}^r {}^{(m,\ell)}\!v_2(r')O(\ell^{-2}|u|^{-8-j-\delta/2})\dee r'\nn\\
    &+{}^{(m,\ell)}\!\widehat{v_2}(r)\int_{\rb}^{r}{}^{(m,\ell)}\!v_1(r')O(\ell^{-2}|u|^{-8-j-\delta/2})\dee r',
\end{align}
and we get constants $\widehat{A}_{m,\ell}(u)$, $\widehat{B}_{m,\ell}(u)$ such that in $\deux\cup\trois$ we have
\begin{align}
    {G_{m,\ell}(r,u)}=\widehat{A}_{m,\ell}(u){}^{(m,\ell)}\!\widehat{v_1}(r)+\widehat{B}_{m,\ell}(u){}^{(m,\ell)}\!\widehat{v_2}(r)+\widehat{p}_{m,\ell,u}(r).\label{eq:troizell}
\end{align}
\subsubsection{Computation of the constants $\widehat{A}_{m,\ell}(u)$, $\widehat{B}_{m,\ell}(u)$}
We have by \eqref{eq:troizell},
$$\widehat{A}_{m,\ell}(u){}^{(m,\ell)}\!\widehat{v_1}(\rb)=G_{m,\ell}(\rb,u)=\Delta^2(\rb)e^{2imr_{mod}(\rb)}{F_{m,\ell}}(\rb,\ubar_{\rb}(u))$$
which yields
\begin{align}\label{eq:cm1expell}
    \widehat{A}_{m,\ell}(u)&=\frac{{F_{m,\ell}}(\rb,\ubar_{\rb}(u))}{{}^{(m,\ell)}\!v_1(\rb)}=O_{(m,\ell)}(1).
\end{align}
We also have as before,
\begin{align*}
    \parr{G_{m,\ell}}(\rb,u)&=(4(\rb-M)+2iam)\Delta(\rb)e^{2imr_{mod}(\rb)}{F_{m,\ell}}(\rb,\ubar_{\rb}(u))\\
    &\quad+2(\rb^2+a^2)\Delta(\rb)e^{2imr_{mod}(\rb)}\parubar{F_{m,\ell}}(\rb,\ubar_{\rb}(u))+\Delta^2(\rb)e^{2imr_{mod}(\rb)}\parr{F_{m,\ell}}(\rb,\ubar_{\rb}(u)).
\end{align*}
Moreover, by \eqref{eq:troizell}, using ${}^{(m,\ell)}\!\widehat{v_2}(\rb)=0$, $$\parr{G_{m,\ell}}(\rb,u)=\widehat{A}_{m,\ell}(u){}^{(m,\ell)}\!\widehat{v_1}'(\rb)+\widehat{B}_{m,\ell}(u){}^{(m,\ell)}\!\widehat{v_2}'(\rb),$$
thus using $\partial_\ubar{F_{m,\ell}}(\rb,\ubar_{\rb}(u))=(\mcq_{-2}^kT^{j+1}\psihatm)_{(m,\ell)}(\rb,\ubar_{\rb}(u))$, this gives 
\begin{align*}
    \widehat{A}_{m,\ell}(u){}^{(m,\ell)}\!\widehat{v_1}'(\rb)+\widehat{B}_{m,\ell}&(u){}^{(m,\ell)}\!\widehat{v_2}'(\rb)=(4(\rb-M)+2iam)\Delta(\rb)e^{2imr_{mod}(\rb)}{F_{m,\ell}}(\rb,\ubar_{\rb}(u))\\
    &\quad+\Delta^2(\rb)e^{2imr_{mod}(\rb)}\partial_r{F_{m,\ell}}(\rb,\ubar_{\rb}(u))+O((\mcq_{-2}^kT^{j+1}\psihatm)_{(m,\ell)}(\rb,\ubar_{\rb}(u))).
\end{align*}
Moreover, we have by \eqref{eq:cm1expell},
$$\widehat{A}_{m,\ell}(u){}^{(m,\ell)}\!\widehat{v_1}'(\rb)=\Bigg[(4(\rb-M)+2iam)\Delta(\rb)e^{2imr_{mod}(\rb)}+\Delta^2(\rb)e^{2imr_{mod}(\rb)}\frac{{}^{(m,\ell)}\!v_1'(\rb)}{{}^{(m,\ell)}\!v_1(\rb)}\Bigg]{F_{m,\ell}}(\rb,\ubar_{\rb}(u)),$$
thus using \eqref{eq:derrHmell} and \eqref{eq:yeyeyeell}, we get
\begin{align*}
    \widehat{B}_{m,\ell}(u){}^{(m,\ell)}\!\widehat{v_2}'(\rb)&=\Delta^2(\rb)e^{2imr_{mod}(\rb)}\Bigg[\parr{F_{m,\ell}}(\rb,\ubar_{\rb}(u))-\frac{{}^{(m,\ell)}\!v_1'(\rb)}{{}^{(m,\ell)}\!v_1(\rb)}{F_{m,\ell}}(\rb,\ubar_{\rb}(u))\Bigg]+O(\ell^{-2}|u|^{-8-j-\delta/2})\\
    &=\frac{\Delta^{-1}(\rb)}{{}^{(m,\ell)}\!v_1(\rb)}\int_{r_+}^{\rb}\Delta^2(r')e^{2imr_{mod}(r')}{}^{(m,\ell)}\!v_1(r')O(\ell^{-2}(2(r')^*-u)^{-8-j-\delta})\dee r'\\
    &\quad\quad\quad +O((\mcq_{-2}^kT^{j+1}\psihatm)_{(m,\ell)}(\rb,\ubar_{\rb}(u))).
\end{align*}
Finally, using ${}^{(m,\ell)}\!\widehat{v_2}'(\rb)=\Delta^{-1}(\rb)/{}^{(m,\ell)}\!v_1(\rb)$, we find  
\begin{align}\label{eq:tuuuutu}
    \widehat{B}_{m,\ell}(u)&=\int_{r_+}^{\rb}\Delta^2(r')e^{2imr_{mod}(r')}{}^{(m,\ell)}\!v_1(r')O(\ell^{-2}(2(r')^*-u)^{-8-j-\delta})\dee r'\nn\\
    &\quad\quad+O({}^{(m,\ell)}\!v_1(\rb)(\mcq_{-2}^kT^{j+1}\psihatm)_{(m,\ell)}(\rb,\ubar_{\rb}(u)))\nn\\
    &=\int_{r_+}^{\rb}\Delta^2(r')e^{2imr_{mod}(r')}{}^{(m,\ell)}\!v_1(r')O(\ell^{-2}|u|^{-8-j-\delta})\dee r'\nn\\
    &\quad\quad+O({}^{(m,\ell)}\!v_1(\rb)(\mcq_{-2}^kT^{j+1}\psihatm)_{(m,\ell)}(\rb,\ubar_{\rb}(u))).
\end{align}
We can now estimate the different terms in \eqref{eq:troizell}. First, by \eqref{eq:cm1expell} we have 
\begin{align}
    \left|\widehat{A}_{m,\ell}(u){}^{(m,\ell)}\!\widehat{v_1}(r)\right|&\lesssim_{(m,\ell)}r-r_-.\label{eq:conc3}
\end{align}
Next, the first term in the RHS of \eqref{eq:pmucontent} can be bounded using Proposition \ref{prop:v1ell}, \ref{item:(d)}, as
\begin{align}
    \left|{}^{(m,\ell)}\!\widehat{v_1}(r)\int_{\rb}^r {}^{(m,\ell)}\!v_2(r')O(\ell^{-2}|u|^{-8-j-\delta/2})\dee r'\right|\lesssim_{(m,\ell)} \Delta^2\int_{r}^{\rb}\frac{\dee r'}{(r-r_-)^2}+r-r_-\lesssim_{(m,\ell)}r-r_-.\label{eq:conc2}
\end{align}
Finally, the last term in the RHS of \eqref{eq:troizell}, which is the sum of $\widehat{B}_{m,\ell}(u){}^{(m,\ell)}\!\widehat{v_2}(r)$ with the second line in \eqref{eq:pmucontent}, can be rewritten as 
\begin{align*}
    &{}^{(m,\ell)}\!\widehat{v_2}(r)\int_{r_+}^{r}{}^{(m,\ell)}\!v_1(r')O(\ell^{-2}|u|^{-8-j-\delta/2})\dee r'+O({}^{(m,\ell)}\!v_1(\rb)(\mcq_{-2}^kT^{j+1}\psihatm)_{(m,\ell)}(\rb,\ubar_{\rb}(u)))\\
    &=\frac{{}^{(m,\ell)}\!v_1(r_-)^{-1}}{(4(M-r_-)-2iam)}\left(\int_{r_+}^{r}{}^{(m,\ell)}\!v_1(r')O(\ell^{-2}|u|^{-8-j-\delta/2})\dee r'+O({}^{(m,\ell)}\!v_1(\rb)(\mcq_{-2}^kT^{j+1}\psihatm)_{(m,\ell)}(\rb,\ubar_{\rb}(u)))\right)\\
    &\quad\quad+O_{(m,\ell)}(r-r_-)
\end{align*}
by the expression \eqref{eq:tuuuutu} of $\widehat{B}_{m,\ell}(u){}^{(m,\ell)}$, and by Proposition \ref{prop:v1ell}, \ref{item:(d)}. Finally, by Proposition \ref{prop:v1ell}, \ref{item:(iv)}, we have the bound $|{}^{(m,\ell)}\!v_1(r')|/|{}^{(m,\ell)}\!v_1(r_-)|\leq 1$, thus
\begin{align}
    &\left|\frac{{}^{(m,\ell)}\!v_1(r_-)^{-1}}{(4(M-r_-)-2iam)}\left(\int_{r_+}^{r}{}^{(m,\ell)}\!v_1(r')O(\ell^{-2}|u|^{-8-j-\delta/2})\dee r'+O({}^{(m,\ell)}\!v_1(\rb)(\mcq_{-2}^kT^{j+1}\psihatm)_{(m,\ell)}(\rb,\ubar_{\rb}(u)))\right)\right|\nn\\
    &\lesssim \ell^{-2}|u|^{-8-j-\delta/2}+|(\mcq_{-2}^kT^{j+1}\errhatm)_{(m,\ell)}(\rb,\ubar_{\rb}(u))|,\label{eq:conc1}
\end{align}
where the implicit constant in the bound above does not depend on $(m,\ell)$, and where we used
$$|(\mcq_{-2}^kT^{j+1}\psihatm)_{(m,\ell)}(\rb,\ubar_{\rb}(u)))|\lesssim |(\mcq_{-2}^kT^{j+1}\errhatm)_{(m,\ell)}(\rb,\ubar_{\rb}(u)))|+\ubar^{-9-j},$$
where the term $\ubar^{-9-j}$ is non-zero only for $\ell=4$ by Proposition \ref{prop:modecoupling}, as $k\in\{0,1\}$. Finally, combining \eqref{eq:troizell}, \eqref{eq:conc3}, \eqref{eq:conc2}, \eqref{eq:conc1} yields the bound 
$$G_{m,\ell}(r,u)=O(\ell^{-2}|u|^{-8-j-\delta/2})+O(|(\mcq_{-2}^kT^{j+1}\errhatm)_{(m,\ell)}(\rb,\ubar_{\rb}(u))|)+O_{(m,\ell)}(r-r_-).$$
We deduce that on $\ch$, 
$$|G_{m,\ell}(r_-,u)|\lesssim\ell^{-2}|u|^{-8-j-\delta/2}+|(\mcq_{-2}^kT^{j+1}\errhatm)_{(m,\ell)}(\rb,\ubar_{\rb}(u))|,$$
where the implicit constant does not depend on $(m,\ell)$, and thus by \eqref{eq:definedG},
\begin{align}
    \|(\mcq_{-2}^kT^{j}\Psihatm)_{\ell\geq 4}\|_{L^2(S(u,+\infty))}^2&\lesssim \sum_{\ell\geq 4}\sum_{|m|\leq\ell}\Big(\ell^{-4}|u|^{-2(8+j+\delta/2)}+|(\mcq_{-2}^kT^{j+1}\errhatm)_{(m,\ell)}(\rb,\ubar_{\rb}(u))|^2\Big)\nn\\
    &\lesssim |u|^{-2(8+j+\delta/2)}+\|\mcq_{-2}^kT^{j+1}\errhatm\|_{L^2(S(u,\ubar_{\rb}(u)))}^2\nn\\
    &\lesssim |u|^{-2(8+j+\delta/2)}\label{eq:finalesti}
\end{align}
on $\ch$, by \eqref{eq:borneI}. This yields, by the Sobolev embedding \eqref{eq:sobolev},
\begin{align}\label{eq:lastlast}
    |(\Psihatm)_{\ell\geq 4}(r_-,u,\theta,\phi_-)|&\lesssim\|\mcq_{-2}T^{\leq 2}(\Psihatm)_{\ell\geq 4}\|_{L^2(S(u,+\infty))}\nn\\
    &\lesssim\|(\mcq_{-2}T^{\leq 2}\Psihatm)_{\ell\geq 4}\|_{L^2(S(u,+\infty))}+|u|^{-8-\delta/2} \nn\\
    &\lesssim |u|^{-8-\delta/2} 
\end{align}
on $\ch$. Note that here we used for the second bound the computation on $\ch$ (where $\Psihatm=\errhatm$),
\begin{align*}
    [\mcq_{-2}T^{\leq 2},\mathbf{P}_{\ell\geq 4}]\psihatm&=[\mcq_{-2},\mathbf{P}_{\ell\geq 4}]T^{\leq 2}\errhatm\\
    &=[a^2\sin^2\theta T^2+4ia\cos\theta T,\mathbf{P}_{\ell\geq 4}]T^{\leq 2}\errhatm\\
    &=[a^2\sin^2\theta,\mathbf{P}_{\ell\geq 4}]T^2T^{\leq 2}\errhatm +4ia[\cos\theta,\mathbf{P}_{\ell\geq 4}]TT^{\leq 2}\errhatm
\end{align*}
which is bounded by $|u|^{-8-\delta/2}$ in $L^2(S(u,+\infty))$ by Propositions \ref{prop:upperboundII} and \ref{prop:almostsharpdrout}. By \eqref{eq:lastlast}, to conclude the proof of Theorem \ref{thm:decayhigher}, it is only left to prove the following bound, in $\deux\cup\trois$: 
$$|(\Psihatm)_{\ell\geq 4}(r,u,\theta,\phi_-)-(\Psihatm)_{\ell\geq 4}(r_-,u,\theta,\phi_-)|\lesssim r-r_-.$$
To this end, we use Propositions \ref{prop:upperboundII} and \ref{prop:almostsharpdrout} again, from which we deduce in $\deux\cup\trois$
\begin{align*}
    |\drout(\Psihatm)_{\ell\geq 4}|&\lesssim\|\mcq_{-2}T^{\leq 2}\drout(\Psihatm)_{\ell\geq 4}\|_{L^2(S(u,\ubar))}\\
    &\lesssim \|(\drout^{\leq 1}\mcq_{-2}T^{\leq 2}\Psihatm)_{\ell\geq 4}\|_{L^2(S(u,\ubar))}+|u|^{-8-\delta/2} \\
    &\lesssim \|\drout^{\leq 1}\mcq_{-2}T^{\leq 2}\Psihatm\|_{L^2(S(u,\ubar))}+|u|^{-8-\delta/2}  \lesssim 1.
\end{align*}
Integrating the bound above in $\deux\cup\trois$ from $\ch$, and using \eqref{eq:lastlast} on $\ch$, we obtain the bound 
$$|(\Psihatm)_{\ell\geq 4}(r,u,\theta,\phi_-)|\lesssim |u|^{-8-\delta/2}+\int_{r_-}^r\dee r'\lesssim |u|^{-8-\delta/2}+r-r_-, $$
which concludes the proof of Theorem \ref{thm:decayhigher}.
\appendix
\section{Computations for the energy method}
\subsection{Integration by parts on the spheres}\label{appendix:ippsphII} 
\begin{lem}\label{lem:intS}
    Equation \eqref{eq:unefois} can be rewritten as follows,
\begin{align*} \partial_\ubar\left(\int_{S(u,\ubar)}\mathbf{F}_\ubar[\Psihat]\dee\nu\right)+\partial_u\left(\int_{S(u,\ubar)}\mathbf{F}_u[\Psihat]\dee\nu\right)+\int_{S(u,\ubar)}\mathbf{B}[\Psihat]\dee\nu=\intS\mu\Real\left(\overline{X(\Psihat)}O(\ubar^{-\beta})\right)\dee\nu,
\end{align*}
where 
$$\mathbf{F}_\ubar[\Psihat]:=2(r^2+a^2)f(r)|\ethat\Psihat|^2-\frac{1}{2}\mu g(r)(|\partial_\theta\Psihat|^2+|\mcu\Psihat|^2)+a\sin\theta \mu \mathfrak{R}\left(\overline{X(\Psihat)}\mcu\Psihat)\right),$$
$$\mathbf{F}_u[\Psihat]:=2(r^2+a^2)g(r)|\partial_\ubar\Psihat|^2-\frac{1}{2}\mu f(r)(|\partial_\theta\Psihat|^2+|\mcu\Psihat|^2)-a\sin\theta \mu \mathfrak{R}\left(\overline{X(\Psihat)}\mcu\Psihat)\right),$$
and the bulk term is given by
\begin{align*}
    \mathbf{B}[\Psihat]&=2(r\mu f(r)-\partial_\ubar((r^2+a^2)f(r)))|\ethat\Psihat|^2+2(r\mu g(r)-\partial_u((r^2+a^2)g(r)))|\partial_\ubar\Psihat|^2\\
    &\quad +\frac{1}{2}(\partial_u(\mu f(r))+\partial_\ubar(\mu g(r)))(|\partial_\theta\Psihat|^2+|\mcu\Psihat|^2)\\
    &\quad+\mu g(r)\frac{s\mu ra^2\sin\theta\cos\theta}{(r^2+a^2)^2}\mathfrak{I}(\overline{\Psihat} \mcu\Psihat)\\
    &\quad+4\mu g(r)as\cos\theta\mathfrak{I}(\overline{\partial_\ubar\Psihat}T\Psihat)-4sr\mu g(r)\mathfrak{R}(\overline{\partial_\ubar\Psihat}T\Psihat)+4g(r)cs(r-M)|\partial_\ubar\Psihat|^2\\
    &\quad+\frac{2g(r)ar\mu}{r^2+a^2}\mathfrak{R}(\overline{\partial_\ubar\Psihat}\Phi\Psihat)+g(r)\left(\mu s-s^2\mu\frac{a^4\sin^2\theta\cos^2\theta}{(r^2+a^2)^2}+\mu V\right)\mathfrak{R}(\overline{\partial_\ubar\Psihat}\Psihat)\\
    &\quad-2sg(r)\mu\frac{a^2\sin\theta\cos\theta}{(r^2+a^2)}\mathfrak{I}(\overline{\partial_\ubar\Psihat}\mcu\Psihat)+2r\mu g(r)\Real(\overline{\partial_\ubar\Psihat} \ethat\Psihat)\\
    &\quad+\mu f(r)\frac{s\mu ra^2\sin\theta\cos\theta}{(r^2+a^2)^2}\mathfrak{I}(\overline{\Psihat} \mcu\Psihat)-2sf(r)\mu\frac{a^2\sin\theta\cos\theta}{(r^2+a^2)}\mathfrak{I}(\overline{ \ethat \Psihat}\mcu\Psihat)\\
    &\quad+4\mu f(r)as\cos\theta\mathfrak{I}(\overline{\ethat\Psihat}T\Psihat)-4sr\mu f(r)\mathfrak{R}(\overline{\ethat\Psihat}T\Psihat)+f(r)(2r\mu+4cs(r-M))\mathfrak{R}(\overline{\ethat\Psihat}\partial_\ubar\Psihat)\\
    &\quad-\frac{2f(r)ar\mu}{r^2+a^2}\mathfrak{R}(\overline{ \ethat \Psihat}\Phi\Psihat)+f(r)\left(\mu s-s^2\mu\frac{a^4\sin^2\theta\cos^2\theta}{(r^2+a^2)^2}+\mu V\right)\mathfrak{R}(\overline{ \ethat \Psihat}\Psihat).
    \end{align*}
\end{lem}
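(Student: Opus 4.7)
The plan is to establish Lemma \ref{lem:intS} by a direct integration-by-parts computation on the spheres $S(u,\ubar)$, using the two equivalent expressions \eqref{eq:laprem} and \eqref{eq:ladeuz} for $\mu\teukhat_s^{(c,V)}$, which differ precisely by the commutator $4(r^2+a^2)[\ethat,\partial_\ubar] = -4ar\mu/(r^2+a^2)\Phi$ noted right after \eqref{eq:ladeuz}. Starting from $\mu\teukhat_s^{(c,V)}\Psihat = \mu\cdot O(\ubar^{-\beta})$, I would multiply by $\overline{X(\Psihat)} = f(r)\overline{\ethat\Psihat}+g(r)\overline{\partial_\ubar\Psihat}$, take real parts, and integrate against $\dee\nu$, treating the two pieces of $X$ with different expressions of $\mu\teukhat_s^{(c,V)}$ to arrange that each principal $e_3e_4$-term appears as a divergence.

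First I would handle the $g(r)\overline{\partial_\ubar\Psihat}$ multiplier using expression \eqref{eq:laprem}, where the leading term is $4(r^2+a^2)\ethat\partial_\ubar\Psihat$. Taking the real part produces $2(r^2+a^2)g(r)\,\ethat(|\partial_\ubar\Psihat|^2)$; since $\ethat = \partial_u - (a/(r^2+a^2))\Phi$ in outgoing DNL coordinates and $\Phi$-derivatives integrate to zero on $S(u,\ubar)$, this equals $\partial_u\!\bigl(\int 2(r^2+a^2)g(r)|\partial_\ubar\Psihat|^2\dee\nu\bigr)$ minus a $\partial_u((r^2+a^2)g(r))|\partial_\ubar\Psihat|^2$ bulk contribution. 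This produces the leading $g$-term of $\mathbf{F}_u$ and the corresponding bulk line. Symmetrically, for the $f(r)\overline{\ethat\Psihat}$ multiplier I would use expression \eqref{eq:ladeuz}, where $4(r^2+a^2)\partial_\ubar\ethat\Psihat$ now yields $2(r^2+a^2)f(r)\,\partial_\ubar(|\ethat\Psihat|^2)$, hence the leading $f$-term of $\mathbf{F}_\ubar$ and its bulk partner. Using the two expressions of $\mu\teukhat_s^{(c,V)}$ asymmetrically is what makes both principal contributions exact divergences; the $\pm(2ar\mu/(r^2+a^2))\Phi\Psihat$ terms that differ between \eqref{eq:laprem} and \eqref{eq:ladeuz} simply end up as the corresponding $\mathfrak{R}(\overline{\partial_\ubar\Psihat}\Phi\Psihat)$ and $\mathfrak{R}(\overline{\ethat\Psihat}\Phi\Psihat)$ bulk terms.

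Second I would handle the angular piece $\mu\mcu^2 + (\mu/\sin\theta)\partial_\theta(\sin\theta\partial_\theta)$. For each of the two multipliers, I would integrate the $\mcu^2$ contribution by parts once using Lemma \ref{lem:U}, turning $\int\overline{\partial_\ubar\Psihat}\mcu^2\Psihat\dee\nu$ into $T\!\bigl(\int a\sin\theta\overline{\partial_\ubar\Psihat}\mcu\Psihat\dee\nu\bigr)-\int\overline{\mcu\partial_\ubar\Psihat}\,\mcu\Psihat\dee\nu$, and I would integrate the $\partial_\theta$ contribution by parts in the standard way on $\sph^2$. Taking real parts combines each $T$ of a sphere integral with the companion $\partial_\ubar|\mcu\Psihat|^2$ and $\partial_\ubar|\partial_\theta\Psihat|^2$ that come from $\ethat\mcu\Psihat$ to produce the $\partial_\ubar(\mu g(r))$ and $\partial_u(\mu f(r))$ boundary terms in $\mathbf{F}_\ubar$ and $\mathbf{F}_u$, together with the $\pm a\sin\theta\mu\,\mathfrak{R}(\overline{X(\Psihat)}\mcu\Psihat)$ cross-terms sitting in the two fluxes (identifying $T$ with $(\partial_\ubar-\partial_u)/2$ after sphere integration). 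The remaining piece $(1/2)\bigl(\partial_u(\mu f(r))+\partial_\ubar(\mu g(r))\bigr)(|\partial_\theta\Psihat|^2+|\mcu\Psihat|^2)$ in $\mathbf{B}$ comes out of the $r$-dependence of $\mu f(r)$, $\mu g(r)$.

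Finally, every remaining term in \eqref{eq:laprem} and \eqref{eq:ladeuz} — namely $2r\mu(\ethat+\partial_\ubar)$, $4cs[(r-M)\partial_\ubar-r\mu T]$, $-4ias\mu\cos\theta\,T$, $2is\mu a^2\sin\theta\cos\theta/(r^2+a^2)\mcu$, the remaining $\Phi$-term, $s\mu - s^2\mu a^4\sin^2\theta\cos^2\theta/(r^2+a^2)^2$, and $V$ — is multiplied pointwise by the two factors $f(r)\overline{\ethat\Psihat}$ and $g(r)\overline{\partial_\ubar\Psihat}$ in turn, the real part is taken, and the result is read off directly line-by-line into $\mathbf{B}[\Psihat]$ without further integration by parts. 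The main obstacle is bookkeeping: keeping track of signs, of real versus imaginary parts (the imaginary coefficients in front of $T$, $\mcu$, and $\Phi$ contributions flip $\mathfrak{R}$ into $\mathfrak{I}$), and of which cross-terms belong to the fluxes $\mathbf{F}_u,\mathbf{F}_\ubar$ versus the bulk $\mathbf{B}$. No analytic estimate is required; the statement is purely an algebraic identity that is then verified termwise against the formulas displayed in the lemma.
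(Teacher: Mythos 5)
Your proposal follows essentially the same route as the paper's proof: multiply by $f(r)\overline{\ethat\Psihat}$ using \eqref{eq:ladeuz} and by $g(r)\overline{\partial_\ubar\Psihat}$ using \eqref{eq:laprem} so that each principal $e_3e_4$-term becomes an exact divergence, integrate the $\mcu^2$ and $\partial_\theta$ pieces by parts via Lemma \ref{lem:U} (picking up the $[\ethat,\mcu]$ commutator contributions $\mathfrak{I}(\overline{\Psihat}\mcu\Psihat)$), and read the remaining lower-order terms directly into $\mathbf{B}$. The only slip is the parenthetical identification $T=(\partial_\ubar-\partial_u)/2$: since $\ubar=r^*+t$ and $u=r^*-t$ one has $T=\partial_\ubar-\partial_u$ with no factor $1/2$, which is what produces the stated coefficients of the $a\sin\theta\,\mu\,\mathfrak{R}(\overline{X(\Psihat)}\mcu\Psihat)$ cross-terms in $\mathbf{F}_u$ and $\mathbf{F}_\ubar$.
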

\begin{proof}We compute 
$$\intS \Real\left(f(r)\overline{\ethat\Psihat}\mu\teukhat_{s}^{(c,V)}\Psihat\right)+\Real\left(g(r)\overline{\partial_\ubar\Psihat}\mu\teukhat_{s}^{(c,V)}\Psihat\right)\dee\nu$$
by integration by parts on $S(u,\ubar)$. First, using \eqref{eq:ladeuz} we get 
\begin{align*}
    \intS \Real\Big(f&(r)\overline{\ethat\Psihat}\mu\teukhat_s^{(c,V)}\Psihat\Big)\dee\nu\\
    =&4(r^2+a^2)f(r)\int_{S(u,\ubar)}\mathfrak{R}(\overline{\ethat\Psihat} \partial_\ubar \ethat\Psihat)\:\dee\nu+\mu f(r)\int_{S(u,\ubar)}\mathfrak{R}(\overline{\ethat\Psihat} \mcu^2\Psihat)\:\dee\nu\\
    &+\mu f(r)\int_{S(u,\ubar)}\mathfrak{R}\left(\overline{\ethat\Psihat}\frac{1}{\sin\theta}\partial_\theta(\sin\theta\partial_\theta\Psihat)\right)\dee\nu+4\mu f(r) as \int_{S(u,\ubar)}\cos\theta\mathfrak{I}(\overline{\ethat\Psihat} T\Psihat)\dee\nu\\
    &+f(r)(2r\mu+4s(r-M)) \int_{S(u,\ubar)}\mathfrak{R}(\overline{\ethat\Psihat} \partial_\ubar \Psihat)\dee\nu+ 2f(r)r\mu\int_{S(u,\ubar)}|\ethat\Psihat|^2\dee\nu\\
    &-4sr\mu f(r)\int_{S(u,\ubar)}\mathfrak{R}(\overline{\ethat\Psihat}T\Psihat)\dee\nu-\frac{2f(r)ar\mu }{r^2+a^2}\int_{S(u,\ubar)}\mathfrak{R}(\overline{\ethat\Psihat}\Phi\Psihat)\dee\nu\\
    &+f(r)\left(\mu s-s^2\mu\frac{a^4\sin^2\theta\cos^2\theta}{(r^2+a^2)^2}\right)\int_{S(u,\ubar)}\mathfrak{R}(\overline{\ethat\Psihat}\Psihat)\dee\nu\\
    &-2sf(r)\mu\frac{a^2\sin\theta\cos\theta}{(r^2+a^2)}\int_{S(u,\ubar)}\mathfrak{I}(\overline{\ethat\Psihat}\mcu\Psihat)\dee\nu.
\end{align*}
We have
\begin{align*}
    4(r^2+a^2)f(r)&\int_{S(u,\ubar)}\mathfrak{R}(\overline{\ethat\Psihat} \partial_\ubar \ethat\Psihat)\:\dee\nu=2(r^2+a^2)f(r)\partial_\ubar\left(\int_{S(u,\ubar)} |\ethat\Psihat|^2\:\dee\nu\right)\\
    &=\partial_\ubar\left(2(r^2+a^2)f(r)\int_{S(u,\ubar)} |\ethat\Psihat|^2\:\dee\nu\right)-2\partial_\ubar((r^2+a^2)f(r))\int_{S(u,\ubar)} |\ethat\Psihat|^2\:\dee\nu.
\end{align*}
Next, using Lemma \ref{lem:U}, as well as 
$$[\ethat,\mcu]=\frac{1}{2}\mu\partial_r\left(\frac{\Sigma}{r^2+a^2}\right)=\frac{is\mu ra^2\sin\theta\cos\theta}{(r^2+a^2)^2},$$
we get, in view of $T=\partial_\ubar-\partial_u$,
\begin{align*}
    \mu f(r)\int_{S(u,\ubar)}\mathfrak{R}&(\overline{\ethat\Psihat} \mcu^2\Psihat)\:\dee\nu\\
    &=T\left(\mu f(r)\int_{S(u,\ubar)}a\sin\theta\mathfrak{R}(\overline{\ethat\Psihat}\mcu\Psihat)\:\dee\nu\right)-\mu f(r)\int_{S(u,\ubar)}\mathfrak{R}(\overline{\mcu\ethat\Psihat} \mcu\Psihat)\:\dee\nu\\
    &=(\partial_\ubar-\partial_u)\left(\mu f(r)\int_{S(u,\ubar)}a\sin\theta\mathfrak{R}(\overline{\ethat\Psihat}\mcu\Psihat)\:\dee\nu\right)-\frac{1}{2}\partial_u\left(\mu f(r)\int_{S(u,\ubar)}|\mcu\Psihat|^2\:\dee\nu\right)\\
    &\quad\quad+\frac{1}{2}\partial_u(\mu f(r))\int_{S(u,\ubar)}|\mcu\Psihat|^2\:\dee\nu+\mu f(r)\frac{s\mu ra^2\sin\theta\cos\theta}{(r^2+a^2)^2}\int_{S(u,\ubar)}\mathfrak{I}(\overline{\Psihat} \mcu\Psihat)\:\dee\nu,
\end{align*}
where we used $\partial_u=\ethat+a/(r^2+a^2)\Phi$. We also have, using $[\ethat,\partial_\theta]=0$,
\begin{align*}
    \mu f(r)\int_{S(u,\ubar)}\mathfrak{R}\Bigg(\overline{\ethat\Psihat}\frac{1}{\sin\theta}\partial_\theta&(\sin\theta\partial_\theta\Psihat)\Bigg)\dee\nu\\
    &=-\mu f(r)\int_{S(u,\ubar)}\mathfrak{R}\left(\partial_\theta\overline{\ethat\Psihat}\partial_\theta\Psihat\right)\dee\nu\\
    &=-\frac{1}{2}\mu f(r)\int_{S(u,\ubar)}\ethat(|\partial_\theta\Psihat|^2)\dee\nu\\
    &=\partial_u\left(-\frac{1}{2}\mu f(r)\int_{S(u,\ubar)}|\partial_\theta\Psihat|^2\dee\nu\right)+\frac{1}{2}\ethat(\mu f(r))\int_{S(u,\ubar)}|\partial_\theta\Psihat|^2\dee\nu,
\end{align*}
and we put the remaining terms in the bulk term $\mathbf{B}[\Psihat]$. Now using the expression \eqref{eq:laprem} we have
\begin{align*}
    \intS\Real(g(&r)\overline{\partial_\ubar\Psihat}\mu\teukhat_s^{(c,V)}\Psihat)\dee\nu\\
    =&4(r^2+a^2)g(r)\int_{S(u,\ubar)}\mathfrak{R}(\overline{\partial_\ubar\Psihat}\ethat \partial_\ubar\Psihat)\:\dee\nu+\mu g(r)\int_{S(u,\ubar)}\mathfrak{R}(\overline{\partial_\ubar\Psihat} \mcu^2\Psihat)\:\dee\nu\\
    &+\mu g(r)\int_{S(u,\ubar)}\mathfrak{R}\left(\overline{\partial_\ubar\Psihat}\frac{1}{\sin\theta}\partial_\theta(\sin\theta\partial_\theta\Psihat)\right)\dee\nu+4\mu g(r) as \int_{S(u,\ubar)}\cos\theta\mathfrak{I}(\overline{\partial_\ubar\Psihat} T\Psihat)\dee\nu\\
    &+g(r)(2r\mu+4s(r-M)) \int_{S(u,\ubar)}|\partial_\ubar\Psihat|^2\dee\nu+ 2g(r)r\mu\int_{S(u,\ubar)}\mathfrak{R}(\overline{\partial_\ubar\Psihat}\ethat\Psihat)\dee\nu\\
    &-4sr\mu g(r)\int_{S(u,\ubar)}\mathfrak{R}(\overline{\partial_\ubar\Psihat}T\Psihat)\dee\nu+\frac{2g(r)ar\mu }{r^2+a^2}\int_{S(u,\ubar)}\mathfrak{R}(\overline{\partial_\ubar\Psihat}\Phi\Psihat)\dee\nu\\
    &+g(r)\left(\mu s-s^2\mu\frac{a^4\sin^2\theta\cos^2\theta}{(r^2+a^2)^2}\right)\int_{S(u,\ubar)}\mathfrak{R}(\overline{\partial_\ubar\Psihat}\Psihat)\dee\nu\\
    &-2sg(r)\mu\frac{a^2\sin\theta\cos\theta}{(r^2+a^2)}\int_{S(u,\ubar)}\mathfrak{I}(\overline{\partial_\ubar\Psihat}\mcu\Psihat)\dee\nu.
\end{align*}
As before, we have:
\begin{align*}
    4(r^2+a^2)g(r)&\int_{S(u,\ubar)}\mathfrak{R}(\overline{\partial_\ubar\Psihat}\ethat \partial_\ubar\Psihat)\:\dee\nu=2(r^2+a^2)g(r)\partial_u\left(\int_{S(u,\ubar)} |\partial_\ubar\Psihat|^2\:\dee\nu\right)\\
    &=\partial_u\left(2(r^2+a^2)g(r)\int_{S(u,\ubar)} |\partial_\ubar\Psihat|^2\:\dee\nu\right)-2\partial_u((r^2+a^2)g(r))\int_{S(u,\ubar)} |\partial_\ubar\Psihat|^2\:\dee\nu,
\end{align*}

\begin{align*}
    \mu g(r)\int_{S(u,\ubar)}\mathfrak{R}\Bigg(\overline{\partial_\ubar\Psihat}\frac{1}{\sin\theta}&\partial_\theta(\sin\theta\partial_\theta\Psihat)\Bigg)\dee\nu\\
    &=\partial_\ubar\left(-\frac{1}{2}\mu g(r)\int_{S(u,\ubar)}|\partial_\theta\Psihat|^2\dee\nu\right)+\frac{1}{2}\partial_\ubar(\mu g(r))\int_{S(u,\ubar)}|\partial_\theta\Psihat|^2\dee\nu,
\end{align*}
\begin{align*}
    \mu g(r)\int_{S(u,\ubar)}&\mathfrak{R}(\overline{\partial_\ubar\Psihat} \mcu^2\Psihat)\:\dee\nu\\
    &=(\partial_\ubar-\partial_u)\left(\mu g(r)\int_{S(u,\ubar)}a\sin\theta\mathfrak{R}(\overline{\partial_\ubar\Psihat}\mcu\Psihat)\:\dee\nu\right)-\frac{1}{2}\partial_\ubar\left(\mu g(r)\int_{S(u,\ubar)}|\mcu\Psihat|^2\:\dee\nu\right)\\
    &\quad\quad+\frac{1}{2}\partial_\ubar(\mu g(r))\int_{S(u,\ubar)}|\mcu\Psihat|^2\:\dee\nu+\mu g(r)\frac{s\mu ra^2\sin\theta\cos\theta}{(r^2+a^2)^2}\int_{S(u,\ubar)}\mathfrak{I}(\overline{\Psihat} \mcu\Psihat)\:\dee\nu.
\end{align*}
The remaining terms are put in the bulk term, which concludes the proof of Lemma \ref{lem:intS}.
\end{proof}
\subsection{Coercivity of the bulk term} \label{appendix:bulkII}    
\begin{lem}\label{lem:bulkpos}
    For $\alpha>1$, $s=-2$, $c>0$, $V\in\mathbb{R}$, for $p(a,M)>0$ large enough, and $r_\mathfrak{b}(p,a,M)$ close enough to $r_-$, we have in $\{r_-\leq r\leq\rb\}$,
\begin{align}\label{eq:bulkposapp}
    \intS\mathbf{B}[\Psihat]\dee\nu\gtrsim(-\mu)\intS\ener_{\alpha+1}[\Psihat]\dee\nu.
\end{align}
\end{lem}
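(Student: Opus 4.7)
The idea is to extract from $\mathbf{B}[\Psihat]$ three ``main'' positive quadratic forms -- one for each of $|\ethat\Psihat|^2$, $|\partial_\ubar\Psihat|^2$, and $|\partial_\theta\Psihat|^2+|\mcu\Psihat|^2$ -- that together reproduce $(-\mu)\ener_{\alpha+1}[\Psihat]$, and then to absorb all remaining mixed and zeroth-order terms by choosing $p$ large and $\rb$ close to $r_-$. Since $\partial_\ubar$ and $\partial_u$ both act as $(\mu/2)\partial_r$ on radial functions, a direct computation gives
\begin{align*}
2\bigl(r\mu f-\partial_\ubar((r^2+a^2)f)\bigr)&=-\mu(r^2+a^2)f'(r),\\
2\bigl(r\mu g-\partial_u((r^2+a^2)g)\bigr)+4cs(r-M)g&=-\mu(r^2+a^2)g'(r)-8c(r-M)g(r),\\
\tfrac12\bigl(\partial_u(\mu f)+\partial_\ubar(\mu g)\bigr)&=\tfrac{\mu}{4}\bigl(\mu'(f+g)+\mu(f'+g')\bigr).
\end{align*}
From $\mu'(r_-)=2(r_--M)/(r_-^2+a^2)<0$, $\mu(r_-)=0$, and $r-M<0$ on $\{r_-\le r\le\rb\}$ for $\rb$ close to $r_-$, each of these three coefficients is strictly positive, with quantitative sizes
\[
-\mu(r^2+a^2)f'\sim |\log(-\mu)|^{-\alpha-1},\qquad -8c(r-M)g\gtrsim 1,\qquad \tfrac12\bigl(\partial_u(\mu f)+\partial_\ubar(\mu g)\bigr)\sim (-\mu)g.
\]
Combined with the identity $|\ethat\Psihat|^2=\mu^2|e_3\Psihat|^2$, these are exactly the three lower bounds needed to control $(-\mu)\ener_{\alpha+1}[\Psihat]$. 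This is the blueshift mechanism specific to the rescaled spin $-2$ Teukolsky operator: the sign of $s=-2$ together with $c>0$ makes the $4cs(r-M)g$ term contribute positively at $r_-$.

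I would then bound all remaining pieces of $\mathbf{B}[\Psihat]$ -- the cross terms involving $\mathfrak{R}(\overline{\partial_\ubar\Psihat}\ethat\Psihat)$, $\mathfrak{R}(\overline{\ethat\Psihat}\Phi\Psihat)$, $\mathfrak{R}(\overline{\partial_\ubar\Psihat}T\Psihat)$, $\mathfrak{I}(\overline{\cdot}\mcu\cdot)$ and the zeroth-order $\mathfrak{R}(\overline{\partial_\ubar\Psihat}\Psihat)$, $\mathfrak{R}(\overline{\ethat\Psihat}\Psihat)$ -- by weighted Cauchy--Schwarz $|ab|\le \tfrac{\varepsilon}{2}|a|^2+\tfrac{1}{2\varepsilon}|b|^2$, placing $|a|^2$ on a main term with a small prefactor. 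Every mixed term carries either an explicit $\mu$, a $\mu^2$, or a $(r-M)$ factor, so the $\varepsilon^{-1}$ side becomes absorbable into the main terms after using $g^{-1}$ or powers of $-\mu$. The zeroth-order $|\Psihat|^2$ contributions produced on the $b$ side are converted into $|\mcu\Psihat|^2+|\partial_\theta\Psihat|^2$ via the Poincaré inequality \eqref{eq:poincare} on $S(u,\ubar)$, which explains the integration on the sphere in \eqref{eq:bulkposapp}.

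The main obstacle is that the coefficient of $|\ethat\Psihat|^2$ degenerates like $|\log(-\mu)|^{-\alpha-1}$, while the cross term $-2sf\mu\tfrac{a^2\sin\theta\cos\theta}{r^2+a^2}\mathfrak{I}(\overline{\ethat\Psihat}\mcu\Psihat)$ carries the heavier weight $f\sim |\log(-\mu)|^{-\alpha}$ on the $\ethat\Psihat$ side. I would handle this by a weighted Cauchy--Schwarz splitting with prefactor $|\log(-\mu)|^{-1}$ on $|\ethat\Psihat|^2$, absorbing the resulting $|\log(-\mu)|^{-\alpha-1}|\ethat\Psihat|^2$ into the main $|\ethat\Psihat|^2$ term, and using $(-\mu)^2|\log(-\mu)|^{1-\alpha}\lesssim 1$ in $\{r_-\le r\le\rb\}$ to absorb the conjugate $f\mu^2|\log(-\mu)||\mcu\Psihat|^2$ into the $|\mcu\Psihat|^2$ main term. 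The other mixed terms are easier and handled analogously.

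The parameters are fixed in the order: first $\alpha>1$ is arbitrary (taken $=3/2$ in the application); then $p=p(a,M)$ is chosen large enough that $g(r)$ dominates the constant-in-$r$ coefficients of all negative radial terms on $\{r_-\le r\le\rb\}$; finally $\rb(p,a,M)$ is chosen close enough to $r_-$ so that residual powers of $-\mu$ and $r-r_-$ are sufficiently small to complete the absorptions. Collecting the surviving main terms yields
\[
\intS\mathbf{B}[\Psihat]\,\dee\nu\gtrsim \intS\!\bigl(|\log(-\mu)|^{-\alpha-1}|\ethat\Psihat|^2+|\partial_\ubar\Psihat|^2+|\mcu\Psihat|^2+|\partial_\theta\Psihat|^2\bigr)\dee\nu\gtrsim (-\mu)\intS\ener_{\alpha+1}[\Psihat]\dee\nu,
\]
as claimed.
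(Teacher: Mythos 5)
Your proposal is correct and follows essentially the same route as the paper: compute the three principal coefficients, verify their positivity near $r_-$ (using the sign of $r-M-r\mu$ and, crucially, the sign of $4cs(r-M)g$ for $s=-2$, $c>0$), and absorb all cross and zeroth-order terms via weighted Cauchy--Schwarz and the Poincaré inequality after fixing $p$ large and then $\rb$ close to $r_-$. The only (harmless) organizational difference is that you fold the non-degenerate positive term $4cs(r-M)g|\partial_\ubar\Psihat|^2$ into the principal $|\partial_\ubar\Psihat|^2$ coefficient from the outset, whereas the paper keeps it aside as \eqref{eq:positive} and invokes it only at the end to absorb the $\varepsilon|\partial_\ubar\Psihat|^2$ contribution from the term \eqref{eq:mostdifficult}.
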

\begin{proof}
As in \cite{scalarMZ}, we compute 
\begin{align*}
    r \mu g(r)-\partial_u((r^2+a^2) g(r))  &=-\mu p r\left(r^2+a^2\right)^p, \\
r \mu f(r)-\partial_\ubar((r^2+a^2) f(r))  &=-\alpha(r-M-r\mu)|{\log(-\mu)}|^{-\alpha-1},\\
    \ethat(\mu f(r))+\partial_\ubar(\mu g(r))&=(r^2+a^2)^{-1}(r-M-r\mu)\mu(1+\alpha|{\log(-\mu)}|^{-1})|{\log(-\mu)}|^{-\alpha}\\
    &\quad+\mu (r-M+(p-1)r\mu)(r^2+a^2)^{p-1}.
\end{align*}
Thus, taking $\rb$ sufficiently close to $r_-$ such that 
$$M-r+r\mu\gtrsim 1,\quad r\in [r_-, \rb],$$
we then have 
\begin{align*}
r \mu f(r)-\partial_\ubar((r^2+a^2) f(r))  &\gtrsim |{\log(-\mu)}|^{-\alpha-1},\\
    \ethat(\mu f(r))+\partial_\ubar(\mu g(r))&\gtrsim -\mu|{\log(-\mu)}|^{-\alpha}-\mu (r^2+a^2)^{p}.
\end{align*}
We now define the principal bulk term as
\begin{align*}
    {\mathbf{B}}_{pr}[\Psihat]:=2(r\mu g(&r)-\partial_u((r^2+a^2)g(r)))|\partial_\ubar\Psihat|^2+2(r\mu f(r)-\partial_\ubar((r^2+a^2)f(r)))|\ethat\Psihat|^2\\
    &\quad +\dfrac{1}{2}(\partial_u(\mu f(r))+\partial_\ubar(\mu g(r)))(|\partial_\theta\Psihat|^2+|\mcu\Psihat|^2).
\end{align*}
We have shown that in $\{r_-\leq r\leq\rb\}$, 
\begin{align}
    {\mathbf{B}}_{pr}[\Psihat]\gtrsim (-\mu)p&(r^2+a^2)^p|\partial_\ubar\Psihat|^2+|{\log(-\mu)}|^{-\alpha-1}|\ethat\Psihat|^2\nonumber\\
    &\quad +(-\mu)(|{\log(-\mu)}|^{-\alpha}+(r^2+a^2)^{p})(|\partial_\theta\Psihat|^2+|\mcu\Psihat|^2).\label{eq:aobserver}
\end{align}
The only thing left to prove is that we can take $p$ large enough and $\rb$ sufficiently close to $r_-$ such that ${\mathbf{B}}[\Psihat]-{\mathbf{B}}_{pr}[\Psihat]$ can be absorbed in ${\mathbf{B}}_{pr}[\Psihat]$ after integrating on ${S(u,\ubar)}$. This is due to the following mix between weighted Cauchy-Schwarz of the type $|ab|\leq({\varepsilon a^2+\varepsilon^{-1}b^2})/{2}$ and the Poincaré inequality \eqref{eq:poincare}. We first deal with the terms with a factor $g(r)$. We have the bounds:
\begin{align*}
    \Bigg|\intS \mu g(r)\frac{sra^2\mu\cos\theta\sin\theta}{(r^2+a^2)^2}\mathfrak{I}(\overline{\Psihat} \mcu\Psihat)\dee\nu\Bigg|&\lesssim (-\mu)^2(r^2+a^2)^p\left(\intS\ener_{deg}[\Psihat]\dee\nu+\intS\hspace{-0.3cm}|\mcu\Psihat|^2\dee\nu\right)\\
    &\lesssim(-\mu)^2(r^2+a^2)^p\intS\ener_{deg}[\Psihat]\dee\nu,
\end{align*}
\begin{align*}
    \Bigg|\intS4\mu g(r)as\cos\theta\mathfrak{I}&(\overline{\partial_\ubar\Psihat}T\Psihat)-4srg(r)\mu\mathfrak{R}(\overline{\partial_\ubar\Psihat}T\Psihat)\dee\nu\Bigg|\lesssim\\
&(-\mu)(r^2+a^2)^p\varepsilon^{-1}\intS|\partial_\ubar\Psihat|^2\dee\nu+(-\mu)(r^2+a^2)^p\varepsilon\intS\ener_{deg}[\Psihat]\dee\nu,
\end{align*}
where we used \eqref{eq:expreT} which yields $T=O(\mu)e_3+O(1)\partial_\ubar+O(1)\mcu+O(1).$ We continue with 
$$|2r\mu g(r)\mathfrak{R}(\overline{\partial_\ubar\Psihat}\ethat\Psihat)|\lesssim(-\mu)(r^2+a^2)^p|\ethat\Psihat|^2+(-\mu)(r^2+a^2)^p|\partial_\ubar\Psihat|^2,$$
\begin{align*}
    \Bigg|\intS\dfrac{2g(r)ar\mu}{r^2+a^2}&\mathfrak{R}(\overline{\partial_\ubar\Psihat}\Phi\Psihat)\dee\nu\Bigg|\lesssim\\
    &(-\mu)(r^2+a^2)^p\varepsilon^{-1}\intS|\partial_\ubar\Psihat|^2\dee\nu+(-\mu)(r^2+a^2)^p\varepsilon\intS\ener_{deg}[\Psihat]\dee\nu,
\end{align*}
where we used \eqref{eq:exprephi} which yields $\Phi=O(\mu)e_3+O(1)\partial_\ubar+O(1)\mcu+O(1)$. Next, we bound 
\begin{align*}
    \Bigg|\intS g(r)\Bigg(\mu s-s^2\mu&\frac{a^4\sin^2\theta\cos^2\theta}{(r^2+a^2)^2}+\mu V\Bigg)\mathfrak{R}(\overline{\partial_\ubar\Psihat}\Psihat)\dee\nu\Bigg|\lesssim\\
    (-\mu)(r^2+a^2)^p&\varepsilon^{-1}\intS|\partial_\ubar\Psihat|^2\dee\nu+(-\mu)(r^2+a^2)^p\varepsilon\intS\ener_{deg}[\Psihat]\dee\nu,\\
    \Bigg|2sg(r)\mu\dfrac{a^2\sin\theta\cos\theta}{(r^2+a^2)}\mathfrak{I}(\overline{\partial_\ubar\Psihat}&\mcu\Psihat)\Bigg|\lesssim(-\mu)(r^2+a^2)^p\varepsilon^{-1}|\partial_\ubar\Psihat|^2+(-\mu)(r^2+a^2)^p\varepsilon|\mcu\Psi|^2.
\end{align*}

The only term with a factor $g(r)$ in the bulk term that remains to be controlled is $$4g(r)cs(r-M)|\partial_\ubar\Psihat|^2.$$ Note that with the choice of negative spin $s=-2$, for $c>0$ and $\rb\in(r_-,M)$ this term is non-negative in $\{r_-\leq r\leq\rb\}$, and even satisfies
\begin{align}\label{eq:positive}
    4g(r)cs(r-M)|\partial_\ubar\Psihat|^2\gtrsim(r^2+a^2)^p|\partial_\ubar\Psihat|^2.
\end{align}
Next, we control the terms of the bulk term with a factor $f(r)$. We have 
$$\Bigg|\intS\mu f(r)\frac{sra^2\mu\cos\theta\sin\theta}{(r^2+a^2)^2}\mathfrak{I}(\overline{\Psihat} \mcu\Psihat)\dee\nu\Bigg|\lesssim(-\mu)^2|{\log(-\mu)}|^{-\alpha}\intS\ener_{deg}[\Psihat]\dee\nu,$$
$$|2r\mu f(r)\mathfrak{R}(\overline{\ethat\Psihat}\partial_\ubar\Psihat)|\lesssim(-\mu)^2|\partial_\ubar\Psihat|^2+|{\log(-\mu)}|^{-2\alpha}|\ethat\Psihat|^2.$$
As before, using \eqref{eq:expreT} and \eqref{eq:exprephi} gives
\begin{align}
    \Bigg|\intS4\mu f(r)as\cos\theta\mathfrak{I}(\overline{\ethat\Psihat}T\Psihat)-4srf(r)&\mu\mathfrak{R}(\overline{\ethat\Psihat}T\Psihat)\dee\nu\Bigg|\nn\\
    &\lesssim(-\mu)^2\intS\ener_{deg}[\Psihat]\dee\nu+|{\log(-\mu)}|^{-2\alpha}\intS|\ethat\Psihat|^2\dee\nu,\nn\\
    \Bigg|\intS-\dfrac{2f(r)ar\mu}{r^2+a^2}\mathfrak{R}(\overline{\ethat\Psihat}\Phi\Psihat)\dee\nu\Bigg|&\lesssim(-\mu)^2\intS\ener_{deg}[\Psihat]\dee\nu+|{\log(-\mu)}|^{-2\alpha}\intS|\ethat\Psihat|^2\dee\nu,\nn
\end{align}
\begin{align}
    \Bigg|\intS f(r)\Bigg(\mu s-s^2\mu\dfrac{a^4\sin^2\theta\cos^2\theta}{(r^2+a^2)^2}&+\mu V\Bigg)\mathfrak{R}(\overline{\ethat\Psihat}\Psihat)\dee\nu\Bigg|\nn\\
    &\lesssim(-\mu)^2\intS\ener_{deg}[\Psihat]\dee\nu+|{\log(-\mu)}|^{-2\alpha}\intS|\ethat\Psihat|^2\dee\nu,\nn\\
    \Bigg|2sf(r)\mu\dfrac{a^2\sin\theta\cos\theta}{(r^2+a^2)}\mathfrak{I}(\overline{\ethat\Psihat}\mcu\Psihat)\Bigg|&\lesssim(-\mu)^2|\mcu\Psihat|^2+|{\log(-\mu)}|^{-2\alpha}|\ethat\Psihat|^2,\nn\\
    |4cs(r-M)f(r)\mathfrak{R}(\overline{\ethat\Psihat}\partial_\ubar\Psihat)|&\lesssim |{\log(-\mu)}|^{-2\alpha}\varepsilon^{-1}|\ethat\Psihat|^2+\varepsilon|\partial_\ubar\Psihat|^2.\label{eq:mostdifficult}
\end{align}
All these bounds show that we can first chose $p=p(a,M)$ large enough, and $\varepsilon>0$ small enough at each step, such that all the terms with a factor $g(r)$ in $\mathbf{B}[\Psihat]-\mathbf{B}_{pr}[\Psihat]$ can be absorbed in the RHS of \eqref{eq:aobserver}. Then, $p$ is fixed and the remaining bounds on the $f(r)$ terms, except \eqref{eq:mostdifficult}, can be absorbed in the RHS of \eqref{eq:aobserver}, chosing $\rb$ sufficiently close to $r_-$, where we use
\begin{align*}
    0<-\mu\ll 1,\quad0<-\mu\ll|{\log(-\mu)}|^{-\alpha-1},\quad|{\log(-\mu)}|^{-2\alpha}\ll|{\log(-\mu)}|^{-\alpha-1},
\end{align*}
for $r_-\leq r\leq \rb$ with $\rb$ close to $r_-$ and $\alpha>1$.
The last remaining term that is not directly absorbed in \eqref{eq:aobserver} using the inequalities above is \eqref{eq:mostdifficult}. To treat this term, we use the positive term \eqref{eq:positive}. More precisely, now that $p$ if fixed, choosing $\varepsilon>0$ small enough in \eqref{eq:mostdifficult}, we can absorb the term $\varepsilon|\partial_\ubar\Psihat|^2$ in \eqref{eq:positive}, and finally using $|{\log(-\mu)}|^{-2\alpha}\ll|{\log(-\mu)}|^{-\alpha-1}$ we can absorb the term
$$|{\log(-\mu)}|^{-2\alpha}\varepsilon^{-1}|\ethat\Psihat|^2$$
in \eqref{eq:aobserver}. This implies the lower bound \eqref{eq:bulkposapp}, and hence concludes the proof of Lemma \ref{lem:bulkpos}.\end{proof}

\section{Proof of Proposition \ref{prop:v1ell}}
\label{section:v1ell}

We search for a solution ${}^{(m,\ell)}\!v_1(r)$ of \eqref{eq:odeHell} as a polynomial of the type
$${}^{(m,\ell)}\!v_1(r)=(r-r_-)^{\ell-2}+\sum_{k=0}^{\ell-3}a_k(r-r_-)^k.$$
Note that for such a choice of ${}^{(m,\ell)}\!v_1(r)$, for any family $(a_k)$, the coefficient in front of $(r-r_-)^{\ell-2}$ in the LHS of \eqref{eq:odeHell} is
$$(\ell-2)(\ell-3)+6(\ell-2)+\lambda_\ell=(\ell-2)(\ell+3)+\lambda_\ell=0.$$
Then ${}^{(m,\ell)}\!v_1$ satisfies \eqref{eq:odeHell} if and only if the coefficients in front $(r-r_-)^k$, $0\leq k\leq\ell-3$, in the LHS of \eqref{eq:odeHell} all vanish. Namely, if and only if the following identities are satisfied: 
\begin{align*}
    &a_{\ell-3}=\dfrac{(\ell-2)(\ell(r_+-r_-)-2iam)}{(\ell-3)(\ell+2)-(\ell-2)(\ell+3)},\\
    &a_k=\dfrac{(k+1)((k+3)(r_+-r_-)-2iam)}{k(k+5)-(\ell-2)(\ell+3)}a_{k+1},\quad 0\leq k\leq \ell-4.
\end{align*}
Solving the system above gives \eqref{eq:h1defell}, and the estimate ${}^{(m,\ell)}\!v_1(r) \neq 0$ near $r=r_-$, as
$${}^{(m,\ell)}\!v_1(r_-)=\prod_{j=0}^{\ell-3}\frac{(j+1)((j+3)(r_+-r_-)-2iam)}{j(j+5)-(\ell-2)(\ell+3)}\neq 0.$$
Next, we prove \ref{item:(iii)}. Let $r_\ell\in(r_-,r_+)$ be sufficiently close to $r_-$ such that ${}^{(m,\ell)}\!v_1(r)\neq 0$ on $[r_\ell,r_-]$. We define, on $[r_\ell,r_-)$, 
$$h_{m,\ell}(r):={}^{(m,\ell)}\!v_1(r)\int_{r_\ell}^r\frac{\Delta^{-3}(r')e^{-2imr_{mod}(r')}}{{}^{(m,\ell)}\!v_1(r')^2}\dee r'.$$
Note that $h_{m,\ell}$ solves \eqref{eq:odeHell}, and that $h_{m,\ell}$ is linearly independent from  ${}^{(m,\ell)}\!v_1$ on $[r_\ell,r_-)$, so that $(h_{m,\ell},{}^{(m,\ell)}\!v_1)$ is a basis of solutions of \eqref{eq:odeHell} on $[r_\ell,r_-)$. Moreover, 
\begin{align}
    h_{m,\ell}(r)&=\frac{1}{{}^{(m,\ell)}\!v_1(r_-)}\int_{r_\ell}^r{\Delta^{-3}(r')e^{-2imr_{mod}(r')}}\dee r'+O_{(m,\ell)}((r_+-r)^{-1})\nn\\
    &=\frac{-\Delta^{-2}(r)e^{-2imr_{mod}}}{(4(r_--M)+2iam){}^{(m,\ell)}\!v_1(r_-)}+O_{(m,\ell)}((r-r_-)^{-1})\label{eq:usethis}
\end{align}
by Lemma \ref{lem:v2dv}. Now, we define 
\begin{align}\label{eq:defvtilde}
    \widetilde{v}(r):={}^{(m,\ell)}\!v_1(2M-r),
\end{align}
and using that ${}^{(m,\ell)}\!v_1$ satisfies \eqref{eq:odeHell}, we deduce that $\widetilde{v}$ satisfies the equation
$$\Delta\widetilde{v}''(r)+(6(r-M)-2iam)\widetilde{v}'(r)+\lambda_\ell\widetilde{v}(r)=0,$$
namely \eqref{eq:odeHell} with $m$ replaced by $-m$, so that there exist constants $c_1,c_2$ such that 
$$\widetilde{v}(r)=c_1 {}^{(-m,\ell)}\!v_1(r)+c_2 h_{-m,\ell}(r).$$
Using \eqref{eq:usethis} and the boundedness of $\widetilde{v}(r)$ at $r=r_-$ by \eqref{eq:defvtilde} and \eqref{eq:h1defell}, we get $c_2=0$. Moreover, ${}^{(-m,\ell)}\!v_1=\overline{{}^{(m,\ell)}\!v_1}$, so we get
\begin{align}\label{eq:match}
    \widetilde{v}(r)=c_1 \overline{{}^{(m,\ell)}\!v_1(r)}.
\end{align}
Using the fact that $\widetilde{v}(r)={}^{(m,\ell)}\!v_1(2M-r)$ and $\overline{{}^{(m,\ell)}\!v_1}$ are polynomials of degree $\ell-2$, and matching coefficients in front of $r^{\ell-2}$ in \eqref{eq:match}, we obtain $c_1=(-1)^{\ell-2}$, and thus 
$${}^{(m,\ell)}\!v_1(2M-r)=(-1)^{\ell-2}\overline{{}^{(m,\ell)}\!v_1(r)}.$$
In particular, this implies ${}^{(m,\ell)}\!v_1(r_+)=(-1)^{\ell-2}\overline{{}^{(m,\ell)}\!v_1(r_-)}\neq 0,$ and thus ${}^{(m,\ell)}\!v_1(r)\neq 0$ near $r=r_+$. We have proven for now \ref{item:(i)}, \ref{item:(ii)}, and \ref{item:(iii)}. We now prove \ref{item:(iv)}. Multiplying by $\overline{{}^{(m,\ell)}\!v_1'(r)}$ the ODE \eqref{eq:odeHell} satisfied by ${}^{(m,\ell)}\!v_1$ and taking the real part, we obtain
\begin{align*}
    0&=\Real\left(\left(\Delta {}^{(m,\ell)}\!v_1''(r)+(2iam+6(r-M)){}^{(m,\ell)}\!v_1'(r)+\lambda_\ell {}^{(m,\ell)}\!v_1(r)\right)\overline{{}^{(m,\ell)}\!v_1'(r)}\right)\\
    &=\frac{\Delta}{2}\frac{\dee}{\dee r}\left(\left|{}^{(m,\ell)}\!v_1'\right|^2\right)+6(r-M)\left|{}^{(m,\ell)}\!v_1'\right|^2-\frac{(\ell-2)(\ell+3)}{2}\frac{\dee}{\dee r}\left(\left|{}^{(m,\ell)}\!v_1\right|^2\right)\\
    &=\frac{1}{2}\frac{\dee}{\dee r}\left(\Delta\left|{}^{(m,\ell)}\!v_1'\right|^2\right)+5(r-M)\left|{}^{(m,\ell)}\!v_1'\right|^2-\frac{(\ell-2)(\ell+3)}{2}\frac{\dee}{\dee r}\left(\left|{}^{(m,\ell)}\!v_1\right|^2\right),
\end{align*}
where we used $\Delta'(r)=2(r-M)$. Integrating between $r_-$ and $r$, we get 
$$\frac{(\ell-2)(\ell+3)}{2}\left(\left|{}^{(m,\ell)}\!v_1(r_-)\right|^2-\left|{}^{(m,\ell)}\!v_1(r)\right|^2\right)=-\frac{\Delta}{2}\left|{}^{(m,\ell)}\!v_1'(r)\right|^2+5\int_{r_-}^r(M-r')\left|{}^{(m,\ell)}\!v_1'(r')\right|^2\dee r'.$$
Using $\Delta\leq 0$, we deduce $\left|{}^{(m,\ell)}\!v_1(r_-)\right|^2-\left|{}^{(m,\ell)}\!v_1(r)\right|^2\geq 0$ for $r\in[r_-,M]$, and thus 
$$\left|{}^{(m,\ell)}\!v_1(r_-)\right|\geq\left|{}^{(m,\ell)}\!v_1(r)\right|,\quad r\in[r_-,M].$$
By \ref{item:(iii)}, we get that the bound above holds also for $r\in[M,r_+]$, hence \ref{item:(iv)}. We now define the function ${}^{(m,\ell)}\!v_2$. Let $\varepsilon_{(m,\ell)}>0$ such that ${}^{(m,\ell)}\!v_1(r)\neq 0$ on $[r_-,r_--2\varepsilon)\cup [r_+-2\varepsilon,r_+]$. Let $r_0\in (r_-+\varepsilon,r_-+2\varepsilon)$, so that ${}^{(m,\ell)}\!v_1(r_0)\neq 0$. The scalar $\Delta$ is nowhere-vanishing on $I_{(m,\ell)}:=(r_-+\varepsilon,r_+-\varepsilon)$, so the ODE \eqref{eq:odeHell} admits a unique solution ${}^{(m,\ell)}\!v_2(r)$ defined on $I_{(m,\ell)}$ such that 
\begin{align*}
    {}^{(m,\ell)}\!v_2(r_0)=0,\quad {}^{(m,\ell)}\!v_2'(r_0)=\frac{-\Delta^3(r_0)e^{-2imr_{mod}(r_0)}}{v_1(r_0)}.
\end{align*}
Then, the differential equation satisfied by the Wronskian $W_{(m,\ell)}$ implies that on $I_{(m,\ell)}$, 
\begin{align}\label{eq:smooth1}
    W_{(m,\ell)}=-\Delta^3e^{-2imr_{mod}}.
\end{align}
We now extend smoothly ${}^{(m,\ell)}\!v_2(r)$ to $(r_-,r_-+\varepsilon)$ by
\begin{align}\label{eq:asympt1}
    {}^{(m,\ell)}\!v_2(r)={}^{(m,\ell)}\!v_1(r)\int_{r_0}^r\frac{\Delta^{-3}(r')e^{-2imr_{mod}(r')}}{{}^{(m,\ell)}\!v_1(r')^2}\dee r',\quad r\in (r_-,r_-+\varepsilon),
\end{align}
where the extension is smooth by \ref{item:(ii)} and \eqref{eq:smooth1}. Finally, let $r_1\in [r_+-2\varepsilon,r_+-\varepsilon)$. We extend ${}^{(m,\ell)}\!v_2$ to $[r_+-\varepsilon,r_+)$ by 
\begin{align}\label{eq:asympt2}
    {}^{(m,\ell)}\!v_2={}^{(m,\ell)}\!v_1(r)\int_{r_1}^r\frac{\Delta^{-3}(r')e^{-2imr_{mod}(r')}}{{}^{(m,\ell)}\!v_1(r')^2}\dee r'+\frac{v_2(r_1)}{v_1(r_1)}v_1(r),\quad r\in (r_+-\varepsilon,r_+),
\end{align}
where the extension is smooth by \ref{item:(ii)} and \eqref{eq:smooth1}. This definition ensures that \ref{item:(a)}, \ref{item:(b)}, \ref{item:(c)} hold. Moreover, \ref{item:(d)} holds by \eqref{eq:asympt1} and \eqref{eq:asympt2}, similarly as in \eqref{eq:usethis}. This concludes the proof of Proposition \ref{prop:v1ell}.

\setcounter{secnumdepth}{0}
\section{References}
\printbibliography[heading=none]

@article{scalarMZ,
    author = "S. Ma and L. Zhang.",
    title = "{Precise late-time asymptotics of scalar field in the interior of a subextreme Kerr black hole and its application in Strong Cosmic Censorship conjecture}",
    journal = "Trans. Am. Math. Soc.",
    volume = "\textbf{376}",
    number = "\:Art. 11",
    pages = "7815--7856",
    year = "2023"
}

@article{MZ23,
    author =       "S. Ma and L. Zhang.",
    title =        "Sharp decay for Teukolsky Equation in Kerr Spacetimes.",
    journal = "Comm. Math. Phys.",
    volume= "\textbf{401}",
    pages="433-434",
    year =         "2023"
}

@article{ori,
  title = {Evolution of linear gravitational and electromagnetic perturbations inside a Kerr black hole},
  author = {A. Ori},
  journal = {Phys. Rev. D},
  volume = {\textbf{61}},
  pages = {024001},
  year = {1999},
}

@article{sbierski,
  title = {Instability of the Kerr Cauchy Horizon under linearised gravitational perturbations},
  author = {J. Sbierski},
  journal = {Ann. PDE.},
 Volume = {\textbf{9}},
 Number = {\:Art. 7},
  year = {2023},
}

@article{stabC0,
    author = "M. Dafermos, and J. Luk",
    title = "{The interior of dynamical vacuum black holes I: The $C^0$-stability of the Kerr Cauchy horizon}",
    eprint={1710.01722},
    archivePrefix={arXiv},
    year="2017",
    addendum="Accepted for publication in Annals of Math"
}

@article{simunum,
  title = "{Cauchy-horizon singularity inside perturbed Kerr black holes}",
  author = {L. M. Burko and G. Khanna and A. Zenginoğlu},
  journal = {Phys. Rev. D},
  volume = {\textbf{93}},
  pages = {041501},
  numpages = {6},
  year = {2016},
  publisher = {American Physical Society},
}

@article{teukolsky,
       author = {Teukolsky, S. A.},
        title = "{Perturbations of a Rotating Black Hole. I. Fundamental Equations for Gravitational, Electromagnetic, and Neutrino-Field Perturbations}",
      journal = {Astrophys. J.},
         year = 1973,
        month = oct,
       volume = {\textbf{185}},
        pages = {635-648},

}

@article{Millet23,
       author = {Millet, P.},
        title = "{Optimal decay for solutions of the Teukolsky equation on the Kerr metric for the full subextremal range |a| < M}",
    eprint={2302.06946},
    archivePrefix={arXiv},
         year = 2023
}

@book{christo,
  title={The formation of black holes in general relativity},
  author={Christodolou, D.},
  year = {2009},
  publisher ={European Mathematical Society},
}

@book{chruscc,
  title={On Uniqueness in the Large of Solutions of Einstein's Equations},
  author={Chrusciel, P.},
  year = {1991},
  publisher ={Proceedings of the CMA 27},
}

@incollection{sccpenrose,
  author      = "Penrose, R. ",
  title       = "Singularities and time-asymmetry",
  booktitle   = "General Relativity: An Einstein Centenary Survey",
  publisher   = "Cambridge University Press",
  year        = "1979",
  pages       = "581-638",
  chapter     = 12,
}

@article{RNscalar,
title = {Proof of linear instability of the Reissner-Nordström Cauchy horizon under scalar perturbation},
journal = {Duke Math. J. \textbf{166}},
year = {2017},
Pages = {437--493},
author = {Luk, J. and Oh, S.-J},
}

@article{RNNLI,
title = {Strong cosmic censorship in spherical symmetry for two-ended asymptotically flat initial data I. The interior of the black hole region.},
journal = {Ann. Math.},
year = {2019},
volume="\textbf{190}",
pages="1-111",
author = {Luk, J. and Oh, S.-J},
}

@article{RNNLII,
title = {Strong cosmic censorship in spherical symmetry for two-ended asymptotically flat initial data II. The exterior of the black hole region.},
journal = {Ann. PDE.},
year = {2019},
volume={\textbf{5}},
number={\:Art. 6},
author = {Luk, J. and Oh, S.-J},
}

@article{dafscc,
title = {Stability and instability of the Cauchy
horizon for the spherically symmetric
Einstein-Maxwell-scalar field equations},
journal = {Ann. Math.},
year = {2003},
volume="\textbf{158}",
author = {Dafermos, M.},
pages="875-928",
}

@article{kerrwave,
title = {Instability results for the wave equation in the interior of Kerr black holes},
journal = {J. Funct. Anal.},
year = {2016},
 Volume = {\textbf{271}},
 Number = {\:Art. 7},
author = {Luk, J. and Sbierski, J.},
pages="1948-1995",
}

@article{hintzkerrwave,
title = {Boundedness and decay of scalar waves at the Cauchy horizon of the Kerr spacetime},
journal = {Commentarii Mathematici Helvetici 92},
year = {2017},
author = {Hintz, P.},
pages="801-837",
}

@article{vdminsta,
title = {Stability and instability of the sub-extremal Reissner-Nordström black hole interior for the Einstein-Maxwell-Klein-Gordon equations in spherical symmetry},
journal = {Comm. Math. Phys.},
year = {2018},
volume="\textbf{360}",
author = {Van de Moortel, M.},
pages="103-168",
}

@article{weakluk,
title = {Weak null singularities in general relativity},
journal = {J. Amer. Math. Soc.},
year = {2018},
volume="\textbf{31}",
author = {Luk, J.},
pages="1-63",
}

@article{weakvdm,
    author = "Van de Moortel, M.",
    title = "{The breakdown of weak null singularities inside black holes}",
    journal = "Duke Math. J.",
    volume = "\textbf{172}",
    pages = "2957--3012",
    year = "2023"
}

@article{barackori,
    author = "Barack, L. and Ori, A.",
    title = "{Late-time decay of gravitational and electromagnetic perturbations along the event horizon}",
    journal = "Phys. Rev. D",
    volume = "\textbf{60}",
    pages = "124005",
    year = "1999"
}

@article{mac,
  title = {Instability of Black Hole Inner Horizons},
  author = {McNamara, J.},
  journal = {Proc. R. Soc. Lond.},
  volume = {\textbf{358}},
  pages = {499-517},
  year = {1978},
}

@article{spin,
    author = "Geroch, R. and Held, A. and Penrose, R.",
    title = "{A space-time calculus based on pairs of null directions}",
    journal = "J. Math. Phys.",
    volume = "\textbf{14}",
    pages = "874--881",
    year = "1973"
}

@article{TDCSR1,
    author = "R. Teixeira da Costa and Y. Shlapentokh-Rothman",
    title = "{Boundedness and decay for the Teukolsky equation on Kerr in the full subextremal range |a|<M: physical space analysis}",
    eprint={2302.08916},
    archivePrefix={arXiv},
    year = "2023"
}

@article{TDCSR2,
    author = "R. Teixeira da Costa and Y. Shlapentokh-Rothman",
    title = "{Boundedness and decay for the Teukolsky equation on Kerr in the full subextremal range |a|<M: frequency space analysis}",
    eprint={2007.07211},
    archivePrefix={arXiv},
    year = "2020"
}

@article{millet2,
    author = "P. Millet",
    title = "{Geometric background for the Teukolsky equation revisited}",
    journal = "Rev. Math. Phys.",
    pages="2430003",
    year = "2024"
}

@article{daf3,
    author = "M. Dafermos and Y. Shlapentokh-Rothman",
    title = "Time-Translation Invariance of Scattering Maps and Blue-Shift Instabilities on Kerr Black Hole Spacetimes.",
    journal = "Commun. Math. Phys.",
    pages="985–1016 ",
    year = "2017",
    volume = "\textbf{350}"
}

@article{franzen,
    author = "A. Franzen",
    title = "Boundedness of Massless Scalar Waves on Reissner-Nordström Interior Backgrounds",
    journal = "Comm. Math. Phys.",
    pages="601–650 ",
    year = "2016",
    volume = "\textbf{343}"
}

@article{franzen2,
    author = "A. Franzen",
    title = "Boundedness of Massless Scalar Waves on Kerr Interior Backgrounds.",
    journal = "Ann. Henri Poincaré",
    pages="1045–1111 ",
    year = "2020",
    volume = "\textbf{21}"
}

@article{price1,
    author = "R. Gleiser and R. H. Price and J. Pullin",
    title = "Late-time tails in the Kerr spacetime",
    journal = "Classical. Quant. Grav.",
    pages="072001 ",
    year = "2008",
    volume = "\textbf{25}"
}

@article{price2,
    author = "R. H. Price",
    title = "Nonspherical perturbations of relativistic gravitational collapse. I. Scalar and
gravitational perturbations",
    journal = "Phys. Rev. D",
    pages="2419 ",
    year = "1972",
    volume = "\textbf{5}"
}

@article{price3,
    author = "R. H. Price",
    title = "Nonspherical perturbations of relativistic gravitational collapse. II. Integer-spin,
zero-rest-mass fields",
    journal = "Phys. Rev. D",
    pages="2439 ",
    year = "1972",
    volume = "\textbf{5}"
}

@article{price4,
    author = "R. H. Price and L. Burko",
    title = "Late time tails from momentarily stationary, compact initial
data in Schwarzschild spacetimes",
    journal = "Phys. Rev. D",
    pages="084039 ",
    year = "2004",
    volume = "\textbf{70}"
}

@article{lukk,
    author = "Luk, J. and Oh, S.-J. and Shlapentokh-Rothman, Y",
    title = "A Scattering Theory Approach to Cauchy Horizon Instability and Applications to Mass Inflation",
    journal = "Ann. Henri Poincaré",
    pages="363–411",
    year = "2023",
    volume = "\textbf{24}"
}

@book{oneill,
  title={The geometry of Kerr black holes},
  author={B. O'Neill},
  year = {1995},
  publisher ={Dover Publications},
}

@article{oriscalar,
  title = {Evolution of scalar-field perturbations inside a Kerr black hole},
  author = {A. Ori},
  journal = {Phys. Rev. D},
  volume = {\textbf{58}},
  pages = {084016},
  year = {1998},
}

@article{orikretschmann,
  title = {Oscillatory Null Singularity inside Realistic Spinning Black Holes},
  author = {A. Ori},
  journal = {Phys. Rev. Lett.},
  volume = {\textbf{83}},
  pages = {5423--5426},
  year = {1999}
}

@article{vdmmmm,
  title = {Mass Inflation and the -inextendibility of Spherically Symmetric Charged Scalar Field Dynamical Black Holes},
  author = {Van de Moortel, M.},
  journal = {Comm. Math. Phys.},
  year = {2021},
  volume= {\textbf{382}},
  pages={1263–1341}
}

@article{KSwaveeq,
    author =       "E. Giorgi and S. Klainerman and J. Szeftel",
    title =        "Wave equations estimates and the nonlinear stability of slowly rotating Kerr black holes",
    journal = "Pure Appl. Math. Q.",
    year={2024},
  volume= {\textbf{20}},
  number={\:Art. 7},
  pages={2865-3849}
}

@article{KS21,
    author =       "S. Klainerman and J. Szeftel",
    title =        "Kerr stability for small angular momentum",
  journal = {Pure Appl. Math. Q.},
  year = {2023},
  volume= {\textbf{19}},
  number={\:Art. 3},
  pages={791-1678}
}

@article{G24,
    author = "S. Gurriaran",
    title = "{Precise asymptotics of the spin +2 Teukolsky field in the Kerr black hole interior}",
    eprint={2409.02670},
    archivePrefix={arXiv},
    year = "2024"
}

@article{sbierskiinextdernier,
    author =       "J. Sbierski",
    title =        "Lipschitz inextendibility of weak null singularity from curvature blow-up",
      eprint={2409.18838},
      archivePrefix={arXiv},
    year =         "2024"
}
\end{document}